\definecolor{darkred}{RGB}{100,0,0}
\definecolor{darkgreen}{RGB}{0,100,0}
\definecolor{darkblue}{RGB}{0,0,150}
\newcommand{\ee}[1]{E\left[#1\right]}
\newcommand{\pp}[1]{\text{pr}\left(#1\right)}
\newcommand{\ppo}[1]{\text{pr}^*\left(#1\right)}
\newcommand{\E}{E}
\newcommand{\PP}{\text{pr}}
\newcommand{\var}{\text{var}}
\newcommand{\Sc}{\mathcal{S}}
\newcommand{\Z}{\mathcal{Z}}
\newcommand{\I}{\mathcal{I}}
\newcommand{\HH}{\mathcal{H}}
\newcommand{\A}{\mathcal{A}}
\newcommand{\B}{\mathcal{B}}
\newcommand{\D}{\mathcal{D}}
\newcommand{\convp}{\stackrel{p}{\rightarrow}}  %\xrightarrow[]{\ p\ }
\newcommand{\convd}{\stackrel{d}{\rightarrow}}  %\xrightarrow[]{\ d\ }
\newcommand{\TV}{\text{TV}}
\newcommand{\bigCI}{\mathrel{\text{\scalebox{1.07}{$\perp\mkern-10mu\perp$}}}}
\newtheorem{theorem}{Theorem}
\newtheorem{proposition}{Proposition}
\newtheorem{lemma}{Lemma}
\newtheorem{assumption}{Assumption}
\newenvironment{assumptionp}[1]{
  
  \assumptionalt
}{\endassumptionalt}
\newtheoremstyle{bfnote}%
 {}{}%
 {}{}%
{\bfseries}{.}%
{ }%
{\thmname{#1}\thmnumber{ #2}\thmnote{ (#3)}}
\theoremstyle{bfnote}
\newtheorem{remark}{Remark}
\title{\textbf{Doubly Robust Estimation under Covariate-Induced Dependent Left Truncation}}
\author[1]{Yuyao Wang}
\author[2]{Andrew Ying}
\author[1,3]{Ronghui Xu}
\affil[1]{Department of mathematics, University of California San Diego}
\affil[2]{Department of Statistics and Data Science, The Wharton School, University of Pennsylvania}
\affil[3]{Herbert Wertheim School of Public Health, and Halicioglu Data Science Institute, University of California San Diego}
\date{}
\begin{document}
\maketitle

\begin{abstract}
In prevalent cohort studies with follow-up, the time-to-event outcome is subject to left truncation leading to selection bias. For estimation of the distribution of time-to-event, conventional methods adjusting for left truncation tend to rely on the (quasi-)independence assumption that the truncation time and the event time are ``independent" on the observed region. This assumption is violated when there is dependence between the truncation time and the event time possibly induced by measured covariates. Inverse probability of truncation weighting leveraging covariate information can be used in this case, but it is sensitive to misspecification of the truncation model. In this work,  we apply the semiparametric theory to find the efficient influence curve of an expected  (arbitrarily transformed) survival time in the presence of covariate-induced dependent left truncation. We then use it to construct estimators that are shown to enjoy double-robustness properties. Our work represents the first attempt to construct doubly robust estimators in the presence of left truncation, which does not fall under the established framework of coarsened data where doubly robust approaches are developed. We provide technical conditions for the asymptotic properties that appear to not have been carefully examined in the literature for time-to-event data, and study the estimators via extensive simulation. We apply the estimators to two data sets from practice, with different right-censoring patterns.
\end{abstract}
{\it Keywords: Conditional quasi-independence; Efficient influence curve; Machine learning; Rate doubly robust; Selection bias; Semiparametric theory. }

\section{Introduction}

A time-to-event outcome in a prevalent cohort study is subject to left truncation because, starting from a well-defined time zero, such as the onset of a disease, often only subjects with time to events such as death greater than the enrollment times are included in the data set. Subjects with early or short event times therefore tend not to be captured in the data.
%such a phenomenon is often referred to as left truncation. 
For example, in prospective cohort pregnancy studies, women typically enroll after clinical recognition of their pregnancies, so women with early pregnancy losses tend not to be included in the data \citep{xu:cham, ying2020causal}. 
As another example, in aging studies, age is the time scale of interest, so the time origin is  birth. However, in many such studies participants are recruited into the study at different ages, and participants with early event times tend not to be captured due to the study design.

In the presence of left truncation, subjects with longer event times are preferentially selected.
%leading to selection bias and invalidating statistical analysis when not properly accounted for. 
In the estimation of the marginal distribution of the time to event, which is the focus of this paper, 
conventional methods typically rely on the random truncation assumption that the left truncation time and event time are independent in the full data, i.e.~without truncation \citep{woodroofe1985estimating, wang1986asymptotic, wang1989semiparametric, wang1991nonparametric, gross1996weighted, gross1996nonparametric, shen2010semiparametric}. 
This can also be weakened to a quasi-independence assumption \citep{tsai1990testing}. Such independence or quasi-independence assumptions may be violated in practice.  %which invalidates any statistical claims based on such assumptions. 
For instance, in the retirement center example of \citet{klein2003survival}, the life lengths of the individuals in the retirement center are left truncated because individuals must survive long enough to enter the retirement center. However, individuals' life lengths and entry times may be dependent because individuals who entered the retirement home earlier may have received better medical attention and therefore lived longer \citep{chaieb2006estimating}.

For dependent left truncation, copula models \citep{chaieb2006estimating, emura2011semi,emura2012nonparametric} and structural transformation models \citep{efron1994survival, chiou2019transformation} have been proposed to handle the dependence between the left truncation time and the event time under different assumptions. 
Copula models rely on strong modeling assumptions for the dependency, whilst structural transformation models specify a latent quasi-independent truncation time as a function of the observed dependent truncation time and the event time. 
There are also methods that incorporate the left truncation time as a covariate in the event time model \citep{mackenzie2012survival, cheng2015causal} which, as pointed out in \citet{vakulenko2022nonparametric}, is biologically unjustified if the left truncation time is study-specific and not related to the event process.
None of the above methods leverage covariate information if available;  a more detailed discussion of the above approaches can be found in \citet{vakulenko2022nonparametric}.

%\yuyao{Edited from here ..} 
When the dependence between the left truncation and the event times is captured by measured covariates in a regression setting, 
%Cox proportional hazards regression 
\citet{cox1972regression} model for example, can properly handle the left truncation via
%coupled with the partial likelihood approach with 
risk set adjustment \citep{andersen1993statistical}.
%and inverse probability of truncation weighting \citet{vakulenko2022nonparametric} can be used.  
%While the Cox regression approach heavily rely on the proportional hazards assumption of the event time model, inverse probability weighting approaches \citep{rosenbaum1983central} are known to be not only sensitive to model specification but also inefficient.
%\yuyao{.. to here.} \yuyao{There are also nonparametric methods that can handle left truncation, for example, random forest for left truncated and right censored data \citet{yao2020ensemble}. But those method have slower than root-$n$ convergence. So probably do not metion here?}
For marginal survival probabilities, 
 \citet{vakulenko2022nonparametric} recently proposed two inverse probability weighted (IPW) estimators  that account for the dependence  induced by the covariates. 
 IPW approaches \citep{rosenbaum1983central} are known to be not only sensitive to model specification but also inefficient. 
This motivated us to seek estimators that provide extra protection against model misspecification and are more efficient. To that end, we leverage semiparametric theory to find the efficient influence curve (EIC), which often suggests reasonable estimators in both the missing data  and the causal inference literature \citep{tsiatis2006semiparametric, hernan2020causal}. 
In our case, we show that the EIC leads to estimators that enjoy model double robustness and rate double robustness \citep{rotnitzky2021characterization, hou2021treatment}.
The model doubly robust estimator is consistent and asymptotically normal (CAN) when one of the two parametric/semiparametric nuisance models is correctly specified, but not necessarily both. In addition, when both models are correctly specified, the estimator achieves the semiparametric efficiency bound.
The rate doubly robust estimator is constructed from a cross-fitting procedure.
It is CAN and achieves the semiparametric efficiency bound when both nuisance parameters are consistently estimated and the product error rate of the two nuisance estimators is faster than root-$n$. This allows us to use flexible nonparametric or machine learning methods to estimate the nuisance parameters. 

We remark that there has been a growing literature on doubly robust estimators to handle missing or coarsened data including right censoring for time-to-event data, as well as non-randomized treatment assignment in causal inference \citep[for examples]{robins1995analysis, bang2005doubly, tsiatis2006semiparametric, tchetgen2010doubly, robins2005inverse, robins1997causal, robins1997marginal, robins1999association, van2003unified}. 
However, left truncation does not belong to the coarsened data framework. This is because for coarsened data, subjects in the observed data set are a random sample from the population of interest, while the data for each subject may be coarsened. For truncation problems, on the other hand,  subjects in the observed data set are biasedly sampled. Therefore, the existing techniques for coarsened data \citep{heitjan1991ignorability, tsiatis2006semiparametric} do not apply.
To the best of our knowledge, our paper represents the first attempt to apply semiparametric theory in order to explore doubly robust estimation in the presence of  left truncation, a common type of selection bias for time-to-event data. 
%\yuyao{Added `covariate-induced dependent' because \citet{bickel1993efficient} has established semiparametric theory under random left truncation. }

The rest of the paper is organized as follows. 
We introduce notation in the next section and review preliminaries on left truncation that are not widely known.  
In Section \ref{sec:AIPW}, we show that the estimand can be identified from inverse probability of truncation weighting, derive the EIC %\yuyao{should there be a comma here?}
and construct an estimating function that is doubly robust in the sense that it has expectation zero when either the time-to-event model or the truncation model is true. 
In Section \ref{sec:estimation} we construct a model doubly robust estimator and a rate doubly robust estimator, and carefully establish their asymptotic properties. 
We extend the estimators to two different right-censoring scenarios in Section \ref{sec:censoring}. 
Extensive simulations are conducted in Section \ref{Sec:simulation}, and
Section \ref{sec:application} contains applications to two data sets under the different right-censoring scenarios. 
Section \ref{sec:discussion} concludes with discussion.

\section{Notation and Preliminaries}\label{sec:notation_preliminary}

Let $Q$, $T$, $Z$ be random variables denoting the left truncation time, the event time of interest, and  baseline covariates, respectively. %\andrew{what does ``absolutely continuous random variables'' mean?}
Suppose that $Q$ and $T$  both have absolutely continuous distributions. Let $F$, $G$ and $H$ be the conditional cumulative distribution function (CDF) of $T$ given $Z$, $Q$ given $Z$, and the CDF of $Z$ in the full data, respectively; here by full data we mean  if there were no left truncation. Let $f$, $g$, and $h$ be the corresponding full data densities or probability functions in the case of discrete $Z$. 

Denote $O=(Q,T,Z)$. 
In the presence of left truncation, we observe $O$ only if $Q < T$. The observed data distribution of $O$ is the conditional full data distribution of $O$ given $Q<T$. In this paper, we will denote $\PP^*$ as the probability operator of the full data,  $\E^*$ as the expectation with respect to the full data distribution, `$\PP$' as the probability operator of the observed data, i.e.~$\PP(\cdot) = \PP^*(\cdot|Q<T)$, and $\E$ as the expectation  with respect to the observed data distribution. In addition, we will use $P$ and $p$ to denote the CDF and the density or probability function for the observed data; for example, $P_{Q|Z}$, $p_{Q|Z}$, $P_{Q,Z}$ and $p_{Q,Z}$, etc.

 In the following we characterize the distribution of $Q$ including its associated martingale, which will be useful later. 
 %for computation of the EIC. %\lily{Yuyao check please} \yuyao{looks good}
In particular, due to the asymmetry between $Q$ and $T$, it is often useful to consider the ``reverse time" for the distribution of $Q$. In fact, without covariates  the product-limit estimate of $G$ can be derived by moving backwards in time \citep{wang1991nonparametric, gross1996weighted}.
\citet{bickel1993efficient} considered the ``reverse time" counting processes, and here we extend them to the setting with covariates. 
For $t\geq 0$, let
\begin{align}
    \bar N_Q(t) = \mathbbm{1}(t\leq Q<T), \quad \bar N_T(t) = \mathbbm{1}(t\leq T),
\end{align}
with their natural history filtration $\{\mathcal{\bar F}_t\}_{t\geq 0}$ on the reversed time scale:
\begin{align}
\mathcal{\bar F}_t 
    &= \sigma \left\{Z, \mathbbm{1}(Q<T), \mathbbm{1}(s\leq T),\mathbbm{1}(s\leq Q<T): s\geq t  \right\}.  \label{eq:F_t}
\end{align}
Note that $\bar N_Q$ is a decreasing process (as $t$ increases) that starts at one and jumps to zero immediately after $t=Q$ if $Q<T$, and is always zero if $Q\geq T$.
Also $\{\mathcal{\bar F}_t\}_{t\geq 0}$ decreases as $t$ increases. 

Let 
$\alpha$ be the reverse time hazard function of $Q$ given $Z$ for the full data; 
specifically, 
\begin{align}
\alpha(q|z) &= \lim_{h\to 0+} \frac{\ppo{q-h< Q\leq q|Q\leq q, Z = z}}{h} %\\
%&= \lim_{h\to 0+} \frac{\ppo{q-h< Q\leq q|Z = z}}{h\ \ppo{Q\leq q|Z = z}}
= \frac{g(q|z)}{G(q|z)}. \label{eq:alpha_def}
\end{align}
It follows immediately that $G(q|z) = \ppo{Q\leq q|Z = z} = \exp \{ - \int_q^\infty \alpha(v|z) dv \}$.
In this way either $\alpha$ or $G$ alone characterizes the full data distribution of $Q$ given $Z$. 

Define the compensator $\bar A_Q(\cdot;G)$ for $\bar N_Q$, where
\begin{align}
    \bar A_Q(t;G) 
    &= \int_t^\infty \mathbbm{1}(Q\leq s <T) \alpha(s|Z) ds = \int_t^\infty \mathbbm{1}(Q\leq s <T) \frac{dG(s|Z)}{G(s|Z)}.
\end{align}
 In the Supplementary Material we showed that  
\begin{align}
    \bar M_Q(t; G) = \bar N_Q(t) - \bar A_Q(t;G)
\end{align}
is a {\it backwards martingale} with respect to $\{\mathcal{\bar F}_t\}_{t\geq 0}$ under the observed data law, if $G$ is the true CDF of $Q$ given $Z$ in the full data.
A stochastic process $\{Y_t\}_{t\geq 0}$ is called a backwards martingale with respect to a set of decreasing $\sigma$-algebras $\{\mathcal{\bar F}_t\}_{t\geq 0}$ 
if $\E(|Y_t|) <\infty$, $Y_t$ is $\mathcal{\bar F}_t$-measurable for all $t\geq 0$, and $\E(Y_s|\mathcal{\bar F}_t) = Y_t$ for all $0\leq s\leq t$. 
If we reverse the time scale and define 
    \begin{align}
         M_Q^\tau(t; G) &= \bar M_Q(\tau-t;G), \quad \forall\ t\geq 0, \label{eq:M^tau}\\
         \mathcal{G}^\tau_t &= \mathcal{\bar F}_{\tau-t},\quad \forall\ t\geq 0\label{eq:G_t},
    \end{align}
    for some $\tau>0$,
then $\{M_Q^\tau(t;G_0): 0\leq t\leq\tau\}$ is a martingale with respect to the filtration $\{\mathcal{G}^\tau_t\}_{0\leq t\leq\tau}$.

%% estimand
In this paper we focus on estimating the expectation of a transformed event time in the full data:
\begin{align}
\theta = \E^*\{ \nu(T) \}, \label{eq:theta}
\end{align}
where $\nu$ is a real-valued bounded function. % such that $\nu(T) < C$  for some constant $C>0$.
Expression \eqref{eq:theta} includes commonly considered estimands for time-to-event data. For example, when $\nu(t) = \mathbbm{1}(t>t_0)$ for some fixed $t_0>0$, $\theta = \PP^*(T>t_0)$ is the survival probability; %considered in \cite{vakulenko2022nonparametric}; 
when $\nu(t) = \min(t,t_0)$, $\theta$ is the restricted mean survival time (RMST). %\citep{royston2011use, royston2013restricted}

\section{Identification, influence curve and Double Robustness}
\label{sec:AIPW}

\subsection{Identification via inverse probability weighting }\label{sec:IPW}

We make the following conditional quasi-independence assumption.
\begin{assumption}[Conditional quasi-independence]\label{ass:quasi-indpendent}
	The observed data density for $(Q,T,Z)$ satisfies
	\begin{align}
	&p_{Q,T,Z}(q,t,z) 
	= \left\{
	\begin{array}{ll}
	f(t|z) g(q|z) h(z)/\beta, & \quad \text{if } t > q; \\
	0, & \quad \mbox{otherwise},
	\end{array} \label{eq:quasi-indpendent}
	\right.
	\end{align}
	where $\beta = \PP^*(Q<T) = \int \mathbbm{1}(q<t) f(t|z)g(q|z)h(z)\ dt\ dq\ dz$.
\end{assumption}
	Assumption \ref{ass:quasi-indpendent} is a generalization of the quasi-independence assumption in \citet{tsai1990testing} to settings where the event time and the left truncation time are dependent via covariates. It is weaker than the conditional independence assumption that $Q$ is independent of $T$ given $Z$ in the full data, which was assumed in \citet{ vakulenko2022nonparametric}, because Assumption \ref{ass:quasi-indpendent} only constraints the distribution of $(Q,T,Z)$ on the observed region $\{(q,t,z): q<t\}$.
    In addition, it does not imply that $Q$ is independent of $T$ given $Z$ in observed data since $p_{Q,T|Z} \neq p_{Q|Z} p_{T|Z}$, as shown in the Supplementary Material. 

We also need the following positivity assumption, which ensures that there are enough observed data to identify the parameter of interest.

\begin{assumption}[Positivity]\label{ass:beta*(z)>0}
	$\PP^*(Q<T|Z) > 0$ almost surely.
\end{assumption}

Under Assumptions \ref{ass:quasi-indpendent} and \ref{ass:beta*(z)>0}, the estimand $\theta$ can be identified from the observed data distribution. 
This is shown by an inverse probability weighting type of argument below, where the inverse probability weights are constructed using the truncation distribution.
%in the full\lily{observed?}\yuyao{No. The phrase `in the full data' may be confusing. What about changing it to `..., where the inverse probability weights are associated with the full data truncation distribution'? Or maybe `This is shown by an inverse probability of truncation weighting argument below.'?} data. 

\begin{lemma}[Identification]\label{lem:psi_identification}
	Under Assumptions \ref{ass:quasi-indpendent} and \ref{ass:beta*(z)>0},
	\begin{align}
	\theta = \left.\E\left\{\frac{\nu(T)}{G(T|Z)}\right\}\right/\E\left\{\frac{1}{G(T|Z)}\right\},\label{eq:identify_psi}
	\end{align}
where 
$G(q|z) = \exp \{ - \int_q^\infty \alpha(v|z) dv \}$, and 
$\alpha(q|z)= p_{Q|Z}(q|z)/\PP(Q\leq  q < T|Z=z)$ is identified  by the observed data distribution.  
\end{lemma}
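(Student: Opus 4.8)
The plan is to run the inverse probability weighting argument directly on the observed-data density supplied by Assumption \ref{ass:quasi-indpendent}, exploiting the fact that the estimand $\theta = \E^*\{\nu(T)\}$ from \eqref{eq:theta} involves only the marginal law of $(T,Z)$ in the full data. First I would integrate the observed joint density \eqref{eq:quasi-indpendent} over the truncation time: for fixed $(t,z)$, since $\int_0^t g(q|z)\,dq = G(t|z)$, this yields
\[
p_{T,Z}(t,z) = \frac{f(t|z)\,h(z)\,G(t|z)}{\beta}.
\]
Solving for the full-data product gives $f(t|z)\,h(z) = \beta\, p_{T,Z}(t,z)/G(t|z)$, which is exactly the reweighting that converts an observed expectation into a full-data expectation.

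The second step is to package this as a clean identity. For any integrable $\phi$ I would compute
\[
\E\left\{\frac{\phi(T,Z)}{G(T|Z)}\right\} = \int \frac{\phi(t,z)}{G(t|z)}\,p_{T,Z}(t,z)\,dt\,dz = \frac{1}{\beta}\int \phi(t,z)\,f(t|z)\,h(z)\,dt\,dz = \frac{1}{\beta}\,\E^*\{\phi(T,Z)\}.
\]
Taking $\phi\equiv\nu$ gives $\E\{\nu(T)/G(T|Z)\} = \theta/\beta$, and taking $\phi\equiv 1$ gives $\E\{1/G(T|Z)\} = 1/\beta$. Forming the ratio cancels the unknown normalizing constant $\beta$ and delivers \eqref{eq:identify_psi}. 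The key observation, worth stressing, is that $\beta = \PP^*(Q<T)$ is itself \emph{not} identified from the biased sample, so the self-normalized (ratio) form is essential rather than a direct Horvitz--Thompson weighting; positivity (Assumption \ref{ass:beta*(z)>0}) guarantees $\beta = \E^*\{\PP^*(Q<T|Z)\} > 0$ and that $G(t|z)>0$ on the observed support, so every integral above is well defined and the division is legitimate (boundedness of $\nu$ then makes the numerator finite as well).

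The remaining step is to verify the parenthetical claim that the weights $G(T|Z)$ are themselves observed-data functionals, i.e. that $\alpha$ is identified. Here I would compute from the same density \eqref{eq:quasi-indpendent} the two observed quantities in the stated formula: marginalizing over $T$ gives $p_{Q|Z}(q|z) \propto g(q|z)\,\bar F(q|z)$, while integrating over the region $\{Q\le q<T\}$ gives $\PP(Q\le q<T\mid Z=z)\propto G(q|z)\,\bar F(q|z)$, where $\bar F(\cdot|z)=1-F(\cdot|z)$ and both expressions carry the same $Z$-specific normalizer $\PP^*(Q<T|Z=z)$. Their ratio is precisely $g(q|z)/G(q|z) = \alpha(q|z)$ from \eqref{eq:alpha_def}, after which $G(q|z) = \exp\{-\int_q^\infty \alpha(v|z)\,dv\}$ recovers $G$, closing the identification.

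I do not anticipate a serious obstacle: at bottom the argument is a single marginalization together with Fubini. The only points requiring care are the bookkeeping of the constant $\beta$ and of the $Z$-conditional normalizer $\PP^*(Q<T|Z=z)$, and confirming via positivity and the boundedness of $\nu$ that all expectations are finite so that the ratio is meaningful; the pleasant feature is that $\beta$ need never be estimated because it cancels identically in the ratio.
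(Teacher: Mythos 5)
Your proposal is correct and follows essentially the same route as the paper's own proof: integrate out $q$ using $\int_0^t g(q|z)\,dq = G(t|z)$ to obtain the weighted identity $\E\{\phi(T,Z)/G(T|Z)\} = \E^*\{\phi(T,Z)\}/\beta$, specialize to $\phi=\nu$ and $\phi\equiv 1$ and take the ratio, then identify $\alpha$ (hence $G$) as the ratio of $p_{Q|Z}(q|z)$ to $\PP(Q\leq q<T|Z=z)$, both of which carry the same normalizer $\beta(z)$ that cancels. One aside in your write-up is wrong, though it does not affect the argument: $\beta$ \emph{is} identified from the observed data, since $G$ is an observed-data functional and $\beta = 1/\E\{1/G(T|Z)\}$; the self-normalized ratio is a convenience that avoids estimating $\beta$ separately, not a logical necessity forced by non-identifiability.
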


Expression \eqref{eq:identify_psi} suggests an inverse probability of truncation weighted estimator for $\theta$ from an observed sample $(Q_i,T_i,Z_i)_{i=1}^n$: 
\begin{align}
\hat\theta_{\text{IPW.Q}} = \left.\left\{\frac{1}{n}\sum_{i=1}^n \frac{\nu(T_i)}{\hat G(T_i|Z_i)}\right\}\right/\left\{\frac{1}{n}\sum_{i=1}^n \frac{1}{\hat G(T_i|Z_i)}\right\}, \label{eq:est_IPW.Q}
\end{align}
where  $\hat G$ is an estimator of $G$ that will be discussed in more details later. 
When $\nu(t) = \mathbbm{1}(t>t_0)$ for a given $t_0$, this estimator coincides with the `CW' estimator proposed in \cite{vakulenko2022nonparametric}, when $\hat G$ is estimated from a Cox proportional hazards model.  

As mentioned in the Introduction, the IPW estimator \eqref{eq:est_IPW.Q}  relies on the correct specification of the truncation model and is known to be inefficient. Augmented inverse probability weighted (AIPW) estimators are often developed in the literature  to improve upon the IPW estimators; they have been shown to provide extra robustness and to be more efficient \citep{robins1997marginal, van2003unified, tsiatis2006semiparametric}. In order to develop the improved estimators, we leverage the semiparametric theory below.

\subsection{Efficient influence curve}

Semiparametric theory has been developed and reviewed in the literature \citep{begun1983information, bickel1993efficient,  tsiatis2006semiparametric, van2000asymptotic,  newey1994asymptotic, kosorok2008introduction}.
In the following, we apply the semiparametric theory to first compute the efficient influence curve  of $\theta$, which in turn suggests estimators that have favorable properties such as double robustness. 

Consider the Hilbert space $\HH$ of all one-dimensional mean-zero measurable functions of $O=(Q, T, Z)$ with finite second moments, equipped with the covariance inner product under the observed data distribution $P$. 
We proceed as follows: we will first characterize the tangent space in $\HH$; 
we then introduce and find an influence curve ; the projection of the influence curve onto the tangent space gives the efficient influence curve \citep{tsiatis2006semiparametric, bickel1993efficient}. 

Consider a regular \citep{bickel1993efficient} parametric submodel $P_\epsilon$ indexed by a real-valued parameter $\epsilon$ for the observed data distribution of $O$ that satisfies Assumptions \ref{ass:quasi-indpendent} and \ref{ass:beta*(z)>0}, and equals the true distribution of the observed data at $ \epsilon = 0$.  
Its {\it score function} is
\begin{align}
\Sc(O) = \left.\frac{\partial}{\partial \epsilon}\log p_\epsilon(O) \right|_{\epsilon = 0};
\end{align}
in the notation above as well as in the rest of the paper we suppress the dependence of $\Sc$ on the parametric submodel. 
The closure of the linear span of the score functions of all such regular parametric submodels is  the {\it tangent space}, denoted by $ \dot P $. 
%We have the following result.

Denote $L^2_0(P_{T,Z})$ the space of all measurable functions of $(T,Z)$ with mean zero  and finite second moment under the law $P_{T,Z}$, and $L^2_0(P_{Q,Z})$ the space of all measurable functions of $(Q,Z)$ with  mean zero and finite second moment  under the law $P_{Q,Z}$. 
The following Lemma is proved in the Supplementary Material. 

\begin{lemma}[Tangent space]\label{lem:tangent_space}
	The tangent space under the semiparametric model imposed by Assumptions \ref{ass:quasi-indpendent} and \ref{ass:beta*(z)>0} is
	\begin{align}
	\dot P = L^2_0(P_{T,Z}) + L^2_0(P_{Q,Z}).  \label{eq:tangent_space}
	\end{align}
	In addition, the orthogonal complement of the tangent space is
	\begin{align}
	(\dot P)^\perp  &= \left\{ \xi\in \HH:
	\begin{array}{l}
	\E\left\{\xi(Q,T,Z)a(Q,Z)\right\} = 0, \ \ \text{for all } a \in L^2_0(P_{Q,Z});\\
	\E\left\{\xi(Q,T,Z)b(T,Z)\right\} = 0, \ \ \text{for all } b \in L^2_0(P_{T,Z}).
	\end{array}
	\right\}.
	\end{align}
\end{lemma}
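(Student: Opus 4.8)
The plan is to characterize $\dot P$ by a two-sided inclusion starting from the factorized observed likelihood of Assumption~\ref{ass:quasi-indpendent}, and then to read off the orthogonal complement as a formal consequence. First I would take a regular parametric submodel $p_\epsilon$ whose components perturb the full-data conditional densities $f_\epsilon(t\mid z)$, $g_\epsilon(q\mid z)$, $h_\epsilon(z)$, with the normalizing constant $\beta_\epsilon$ determined by these components. Since the truncation indicator does not depend on $\epsilon$, differentiating $\log p_\epsilon(O)=\log f_\epsilon(T\mid Z)+\log g_\epsilon(Q\mid Z)+\log h_\epsilon(Z)-\log\beta_\epsilon$ at $\epsilon=0$ produces a score $\Sc(O)=s_f(T\mid Z)+s_g(Q\mid Z)+s_h(Z)-\E\{s_f(T\mid Z)+s_g(Q\mid Z)+s_h(Z)\}$, where $s_f,s_g,s_h$ are the component scores and the subtracted term is $\partial_\epsilon\log\beta_\epsilon|_{\epsilon=0}$. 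Grouping $a(Q,Z)=s_g(Q\mid Z)-\E\{s_g(Q\mid Z)\}$ and $b(T,Z)=s_f(T\mid Z)+s_h(Z)-\E\{s_f(T\mid Z)+s_h(Z)\}$ exhibits each score as $a+b$ with $a\in L^2_0(P_{Q,Z})$ and $b\in L^2_0(P_{T,Z})$, which gives the inclusion $\dot P\subseteq L^2_0(P_{T,Z})+L^2_0(P_{Q,Z})$.

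For the reverse inclusion I would show that every bounded $b\in L^2_0(P_{T,Z})$ is itself a submodel score, and symmetrically for $L^2_0(P_{Q,Z})$; since $\dot P$ is closed and bounded functions are dense, this yields $L^2_0(P_{T,Z})+L^2_0(P_{Q,Z})\subseteq\dot P$. The construction holds $g$ fixed and tilts only $f$ and $h$ exponentially: with $m(z)=\E^*\{b(T,Z)\mid Z=z\}$ the full-data conditional mean, the choices $s_f=b-m$ and $s_h=m-\E^*\{m(Z)\}$ obey the component-score constraints $\E^*\{s_f(T\mid Z)\mid Z\}=0$ and $\E^*\{s_h(Z)\}=0$, and the induced observed score $s_f+s_h-\E\{s_f+s_h\}$ collapses to $b-\E\{b\}=b$ once the constant $\E^*\{m(Z)\}$ cancels. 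Positivity (Assumption~\ref{ass:beta*(z)>0}) guarantees $\beta>0$, so these tilted submodels are well defined and regular. The orthogonal-complement formula is then immediate from the Hilbert-space identity $(A+B)^\perp=A^\perp\cap B^\perp$: a mean-zero $\xi$ is orthogonal to the sum exactly when $\E\{\xi\,a\}=0$ for all $a\in L^2_0(P_{Q,Z})$ and $\E\{\xi\,b\}=0$ for all $b\in L^2_0(P_{T,Z})$, which is the stated set.

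The step I expect to be the genuine obstacle is upgrading the reverse inclusion to the \emph{exact} equality $\dot P=L^2_0(P_{T,Z})+L^2_0(P_{Q,Z})$, rather than to the closure of the sum, which requires showing that this sum is already a closed subspace of $\HH$. The two summands meet in $L^2_0(P_Z)$, and closedness is governed by the Friedrichs angle between the reduced subspaces $L^2_0(P_{T,Z})\ominus L^2_0(P_Z)$ and $L^2_0(P_{Q,Z})\ominus L^2_0(P_Z)$; it holds precisely when this angle is positive, equivalently when the maximal correlation between $T$ and $Q$ given $Z$ under the observed law $P$ is strictly below one. The point needing care is that conditioning on $\{Q<T\}$ makes $Q$ and $T$ dependent given $Z$ even though Assumption~\ref{ass:quasi-indpendent} imposes a product form on the observed region, so $P$-orthogonality of the reduced subspaces fails and the angle bound is not automatic. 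I would secure it from the non-degeneracy of the observed conditional law of $Q$ given $(T,Z)$, which Assumption~\ref{ass:quasi-indpendent} identifies as $g(\cdot\mid Z)$ truncated to $(0,T)$; this yields a strict angle bound, hence closedness and the exact equality. The remaining ingredients---the score calculus and the explicit tilted submodels---are routine once these are in place.
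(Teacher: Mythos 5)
Your core argument is correct and follows essentially the same route as the paper's proof. For the inclusion $\dot P \subseteq L^2_0(P_{T,Z}) + L^2_0(P_{Q,Z})$ both arguments differentiate the factorized observed likelihood $f_\epsilon g_\epsilon h_\epsilon \mathbbm{1}(q<t)/\beta_\epsilon$ and regroup the centered component scores (your grouping puts $s_h$ with $s_f$; the paper centers all three and absorbs $L^2_0(P_Z)$ into $L^2_0(P_{T,Z})$ --- same thing). For the reverse inclusion the paper builds a single linear-perturbation submodel $f(1+\epsilon\tilde a)$, $g(1+\epsilon\tilde b)$, $h(1+\epsilon\tilde c)$ whose observed-data score is exactly $a(t,z)+b(q,z)$, whereas you build two separate tilted submodels (one perturbing $(f,h)$ to realize $b$, one perturbing $(g,h)$ to realize $a$) and invoke linearity of the span; your splitting of $b$ via the full-data conditional mean $m(z)=\E^*\{b(T,Z)\mid Z=z\}$ satisfies the component-score constraints, and the cancellation giving observed score $b$ indeed works because $\E\{b(T,Z)\}=0$ under the observed law. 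The complement formula is obtained identically in both, via $(A+B)^\perp = A^\perp\cap B^\perp$. These differences are cosmetic rather than substantive.

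The one place you depart from the paper is the closedness of $L^2_0(P_{T,Z}) + L^2_0(P_{Q,Z})$, which you correctly identify as what separates the exact equality $\dot P = A+B$ from the weaker $\dot P = \overline{A+B}$; the paper passes over this silently, so you have spotted a real subtlety rather than invented one, and your Friedrichs-angle criterion (positivity of the angle between $A\ominus L^2_0(P_Z)$ and $B\ominus L^2_0(P_Z)$, i.e.\ conditional maximal correlation of $Q$ and $T$ given $Z$ strictly below one) is the right abstract characterization. However, your proposed verification does not go through as stated: non-degeneracy of the observed conditional law of $Q$ given $(T,Z)$ does not imply a strict angle bound. Maximal correlation can equal one even when every conditional law is non-degenerate --- the dependence only needs to become arbitrarily close to deterministic on regions of small probability --- and under Assumptions \ref{ass:quasi-indpendent} and \ref{ass:beta*(z)>0} alone (the lemma does not assume the overlap Assumption \ref{ass:overlap}, which is what would give quantitative control of the densities) there is no obvious way to rule this out. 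Two mitigating observations: the complement formula is unaffected, since $(\overline{A+B})^\perp = A^\perp\cap B^\perp$ regardless of closedness, and this is what the subsequent efficiency argument actually uses; and in Proposition \ref{thm:EIF} the influence curve $\varphi$ is exhibited explicitly as $a(T,Z)+b(Q,Z)$, an element of the algebraic sum itself, so membership in $\dot P$ never relies on the sum being closed. In short: your proposal reproduces everything the paper proves, and the additional step you flagged as the genuine obstacle is precisely the step that your argument, as written, leaves unestablished.
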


We now introduce the influence curve.
The parameter of interest, $\theta = \theta(P)$, is {\it pathwise differentiable} with {\it influence curve} $\varphi \in \HH$ if, for any regular parametric submodel $\{P_\epsilon: \epsilon\in \mathbb{R}\}$,
\begin{align}\label{identity}
\left.\frac{\partial}{\partial \epsilon} \theta(P_\epsilon) \right|_{\epsilon = 0} = \int \varphi(o) \Sc(o) dP(o) = \E\left\{\varphi(O)\Sc(O)\right\}.
\end{align}
\begin{remark}
In the literature, there are two practically exchangeable terms: influence function (IF) and influence curve (IC). %\yuyao{Shall we changed to the short-hand IC since we have introduced it before?} 
Here we follow the convention as reviewed in \citet{kennedy2017semiparametric}, to use IF for estimators, and use IC for parameters. 
 Note that if $\varphi$ is the IF for an arbitrary regular asymptotically linear (RAL) estimator \citep{tsiatis2006semiparametric} of $\theta$, then the IF for any RAL estimator of $\theta$ must lie in the space $\{\varphi(O)+ (\dot P)^\perp\}$ \citep{tsiatis2006semiparametric, bickel1993efficient}. 
 RAL estimators and their IF's will be used later in the paper. The IF of a RAL estimator suggests the asymptotic variance of the estimator. 
\end{remark}

For the remainder of this paper, we consider the semiparametric model under Assumptions \ref{ass:quasi-indpendent} and \ref{ass:beta*(z)>0}, and assume that  the true distribution also satisfies the following overlap assumption, which is stronger than Assumption \ref{ass:beta*(z)>0}. We borrow the word `overlap' from the causal inference literature, which is also sometimes referred to as a strict positivity assumption.

\begin{assumption}[Overlap]\label{ass:overlap}
	There exist $0<\tau_1 <\tau_2 <\infty$ and constants $\delta_1, \delta_2>0$ such that $T\geq \tau_1$ a.s. and $Q\leq \tau_2$ a.s. in the full data;  
	$1-F(\tau_2|Z)\geq \delta_1$ a.s., and $G(\tau_1|Z)\geq \delta_2$ a.s.. 
 %\yuyao{Editted this assumption.}
\end{assumption}

% \begin{assumption}[Overlap]\label{ass:overlap}
% 	There exist $0<\tau_1 <\tau_2 <\infty$ such that $T\geq \tau_1$ a.s., $Q\leq \tau_2$ a.s. \textcolor{orange}{in the full data}.  
% 	Also
% 	$G(\tau_1|Z)\geq \delta_1$ a.s. and $F(\tau_2|Z)\leq 1-\delta_2$ a.s. for some constants $\delta_1, \delta_2>0$. 
% \end{assumption}

Assumption \ref{ass:overlap} makes sure that the minimum support of $T$ is greater than the minimum support of $Q$, and that the maximum support of $T$ is greater than the maximum support of $Q$. This assumption is commonly made for left truncated data, see for example  \citet{cheng2012estimating}. 
The existance of $\tau_1 $ and $\tau_2 $ 
%so that Assumption \ref{ass:overlap} is satisfied. 
is in line with most survival analysis approaches where a maximum follow-up time is assumed, so that the parameter space for a cumulative baseline hazard, for example, is bounded. It is also common for left truncated data, that any reasonable estimand, is conditional upon survival until a minimum truncation time when the data become available \citep{tsai1987note}. %\lily{add Tsai? et al. 1986}\yuyao{Probably wang et al 1986?}\lily{no, 1987 bka, see 284 website, A note on...}

Assumption \ref{ass:overlap} is simpler but stronger than Assumption \ref{ass:overlap2} below, which is in fact what is necessary. 
For the ease of exposition, we will continue with the Assumption \ref{ass:overlap}. 
For data examples, however, we will verify Assumption \ref{ass:overlap2} which is more likely to be satisfied in practice. 
Denote $\Z$  the support of $Z$ in the full data. % \yuyao{Remember to add the definition of $\Z$ later if we removed it here. }
% Denote `$a \wedge b$' the minimum of $a$ and $b$, and `$a \vee b$' the maximum of $a$ and $b$.

\begin{assumptionp}{3'}[overlap]\label{ass:overlap2}
    There exist constants $\delta_1, \delta_2>0$ and $0<\tau_1(z) <\tau_2(z) <\infty$ for each $z\in\Z$ 
    such that conditional on $Z=z$, %\lily{Yuyao check}\yuyao{looks good}
    $T\geq \tau_1(z)$ a.s. and $Q\leq \tau_2(z)$ a.s. in the full data; and with probability one, 
    $1-F\{\tau_2(Z)|Z\}\geq \delta_1$ or $1-F\{T|Z\}\geq \delta_1$, and 
	$G\{\tau_1(Z)|Z\}\geq \delta_2$ or $G\{Q|Z\}\geq \delta_2$.
\end{assumptionp}

%\yuyao{End edit}

% \yuyao{Another possible version of the above assumption:
% Denote $\tau_1(z) = \inf\{t:\ F(t|z)>0\}$ and $\tau_2(z) = \sup\{t:\ G(t|z)<1\}$.
% \begin{assumptionp}{3'}[overlap]\label{ass:overlap2}
%     There exists constants $\delta_1, \delta_2>0$ 
%     such that $1-F\{T\wedge\tau_2(Z)|Z\}\geq \delta_1$ a.s., and 
% 	$G\{Q\vee\tau_1(Z)|Z\}\geq \delta_2$ a.s..
% \end{assumptionp}
% }

\begin{lemma}\label{lem:IF}
	Under the semiparametric model imposed by Assumptions \ref{ass:quasi-indpendent} and \ref{ass:beta*(z)>0}, suppose that the true distributions $F$ and $G$ satisfy Assumption \ref{ass:overlap}, then
 \begin{align}
	\varphi(O;\theta, F,G,H) = \beta  \left[\frac{\nu(T)-\theta}{G(T|Z)} - \int_0^\infty \frac{\int_0^v \{\nu(t) -\theta\} dF(t|Z)}{ 1-F(v|Z) }  \frac{d\bar M_Q(v;G)}{G(v|Z)}  \right] \label{eq:EIC}
  \end{align}
	is an influence curve for $\theta$, where 
 $\beta=pr^*(Q<T)$ is defined in Assumption \ref{ass:quasi-indpendent}.
\end{lemma}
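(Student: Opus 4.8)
The plan is to verify the pathwise differentiability identity \eqref{identity} directly. For an arbitrary regular parametric submodel $\{P_\epsilon\}$ satisfying Assumptions \ref{ass:quasi-indpendent} and \ref{ass:beta*(z)>0}, I would compute $\partial_\epsilon\theta(P_\epsilon)|_{\epsilon=0}$ and show it equals $\E\{\varphi(O)\Sc(O)\}$ with $\varphi$ as in \eqref{eq:EIC}. Since Lemma \ref{lem:psi_identification} expresses $\theta$ as a functional of the observed law, I would write $\theta(P_\epsilon)=N(\epsilon)/D(\epsilon)$ with $N(\epsilon)=\E_\epsilon\{\nu(T)/G_\epsilon(T|Z)\}$ and $D(\epsilon)=\E_\epsilon\{1/G_\epsilon(T|Z)\}$, where $G_\epsilon$ is the truncation CDF recovered from $P_\epsilon$ through \eqref{eq:alpha_def}. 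The derivative then has two sources: the law $P_\epsilon$ itself, and the nuisance $G_\epsilon$ determined by it.

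First I would establish the normalization $D(0)=\E\{1/G(T|Z)\}=1/\beta$, obtained by integrating $Q$ out over $\{q<T\}$ in the quasi-independence density \eqref{eq:quasi-indpendent}, where $\int_0^t g(q|Z)\,dq=G(t|Z)$ cancels the inverse weight. The quotient rule then collapses to $\partial_\epsilon\theta(P_\epsilon)|_{\epsilon=0}=\beta\,\partial_\epsilon\E_\epsilon\{[\nu(T)-\theta]/G_\epsilon(T|Z)\}|_{\epsilon=0}$, which is the source of the overall factor $\beta$ in \eqref{eq:EIC}. Splitting this into the part where only the law is perturbed and the part where only $G_\epsilon$ is perturbed, the former is immediately $\beta\,\E\{[\nu(T)-\theta]\,G(T|Z)^{-1}\Sc(O)\}$, matching the first term of $\varphi$.

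The crux is the nuisance part $\beta\,\E\{[\nu(T)-\theta]\,\partial_\epsilon G_\epsilon(T|Z)^{-1}|_{\epsilon=0}\}$. Writing $G_\epsilon(t|z)=\exp\{-\int_t^\infty\alpha_\epsilon(v|z)\,dv\}$ gives $\partial_\epsilon G_\epsilon(t|z)^{-1}|_{\epsilon=0}=G(t|z)^{-1}\int_t^\infty\dot\alpha(v|z)\,dv$ with $\dot\alpha=\partial_\epsilon\alpha_\epsilon|_{\epsilon=0}$, and I would then express $\dot\alpha$ through the score. Because $\alpha(v|z)=p_{Q|Z}(v|z)/\PP(Q\le v<T|Z=z)$ is a ratio of observed-data quantities, $\dot\alpha/\alpha$ is a difference of two conditional expectations of $\Sc$: one localized at $\{Q=v\}$ from the numerator and one averaged over the risk set $\{Q\le v<T\}$ from the denominator. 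Interchanging the $v$-integral with the expectation over $(T,Z)$ by Fubini, and using the observed-data identity $\PP(Q\le v<T|Z=z)=G(v|z)\{1-F(v|z)\}/\beta(z)$ together with $\alpha(v|z)\,dv=dG(v|z)/G(v|z)$, I expect the inner weight to consolidate into $\{1-F(v|Z)\}^{-1}\int_0^v\{\nu(t)-\theta\}\,dF(t|Z)$ divided by $G(v|Z)$. The score localized at $\{Q=v\}$ should reproduce the counting-process increment $d\bar N_Q(v)$, and the risk-set average the compensator increment $d\bar A_Q(v;G)$, so that, after accounting for signs, the two assemble into $d\bar M_Q(v;G)$ and deliver the second term of \eqref{eq:EIC}.

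I expect the Fubini reorganization and the bookkeeping that packages $\bar N_Q$ and $\bar A_Q$ into the backwards martingale $\bar M_Q$ to be the main obstacle; the reverse-time martingale apparatus set up in Section \ref{sec:notation_preliminary} is precisely what makes this consolidation clean, and I would lean on it to keep the conditional-expectation algebra organized. Throughout, Assumption \ref{ass:overlap} guarantees that every integral is finite, since $\nu$ is bounded and $1/G(\cdot|Z)$ and $1/\{1-F(\cdot|Z)\}$ are bounded on the relevant support. Finally, because the martingale increment has conditional mean zero under the observed law, $\varphi$ lies in $\HH$; and since $\partial_\epsilon\theta(P_\epsilon)|_{\epsilon=0}=\E\{\varphi\Sc\}$ then holds for every regular submodel, $\varphi$ is an influence curve for $\theta$. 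I would stop here rather than project onto $\dot P$, as that projection (yielding the efficient influence curve) is a separate step.
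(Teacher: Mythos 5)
Your proposal is correct and follows essentially the same route as the paper's own proof: the quotient-rule reduction to $\beta\,\partial_\epsilon \E_\epsilon\{[\nu(T)-\theta]/G_\epsilon(T|Z)\}\big|_{\epsilon=0}$, the split into a law-perturbation term and a nuisance ($G_\epsilon$) term, the score expansion of $\dot\alpha$ into a part localized at $\{Q=v\}$ and a risk-set average, and the consolidation of these two pieces into $d\bar N_Q - d\bar A_Q = d\bar M_Q(v;G)$. The ``bookkeeping'' you defer is exactly where the paper invokes the backwards-martingale mean-zero property (Proposition \ref{prop:int_dM}): it is what annihilates the $Z$-conditional centering terms that arise when trading the conditional scores $\Sc_{Q|Z}$ and $\Sc_{Q,T|Z}$ for the full score $\Sc_{Q,T,Z}$, and it also yields $\E\{\varphi\}=0$ — the same tool you flagged, used in the same two places.
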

%\pf{section \ref{sec:proof_lem_IF}.}

It turns out that $\varphi$ given in \eqref{eq:EIC} lies in $\dot P$, and is therefore the efficient influence curve (EIF). We summarize this result formally  below.

\begin{proposition}[Efficient influence curve]\label{thm:EIF}
	Under the semiparametric model imposed by Assumptions \ref{ass:quasi-indpendent} and \ref{ass:beta*(z)>0} and at the law where Assumption \ref{ass:overlap} holds, the influence curve $\varphi$ given in \eqref{eq:EIC} is the efficient influence curve for $\theta$, and hence the semiparametric efficiency bound for estimating $\theta$ is $\E(\varphi^2)$.
\end{proposition}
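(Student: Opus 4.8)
The plan is to reduce the statement to a single membership claim and then verify it by an explicit decomposition. By the standard semiparametric theory invoked throughout the paper \citep{bickel1993efficient, tsiatis2006semiparametric}, among all influence curves for $\theta$ the efficient one is the unique element lying in the tangent space $\dot P$, equivalently the projection of any influence curve onto $\dot P$; its variance is then the efficiency bound. Since Lemma~\ref{lem:IF} already establishes that $\varphi$ in \eqref{eq:EIC} is an influence curve, and since $\E(\varphi)=0$, it suffices to prove the single fact that $\varphi\in\dot P$, after which the bound $\E(\varphi^2)$ is immediate. By Lemma~\ref{lem:tangent_space}, $\dot P=L^2_0(P_{T,Z})+L^2_0(P_{Q,Z})$, so the goal becomes exhibiting $\varphi$ as a sum of a mean-zero square-integrable function of $(T,Z)$ and one of $(Q,Z)$.

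First I would split $\varphi=\varphi_1+\varphi_2$, where $\varphi_1=\beta\{\nu(T)-\theta\}/G(T\mid Z)$ is the leading term and $\varphi_2$ is the martingale-integral term in \eqref{eq:EIC}. The term $\varphi_1$ is manifestly a function of $(T,Z)$; it has mean zero by the identification in Lemma~\ref{lem:psi_identification}, and it is square integrable because Assumption~\ref{ass:overlap} bounds $G(T\mid Z)\geq G(\tau_1\mid Z)\geq\delta_2$ away from zero and $\nu$ is bounded, so $\varphi_1\in L^2_0(P_{T,Z})$. For $\varphi_2=-\beta\int_0^\infty w(v,Z)\,d\bar M_Q(v;G)$ with $w(v,Z)=\{\int_0^v(\nu(t)-\theta)\,dF(t\mid Z)\}/\{(1-F(v\mid Z))G(v\mid Z)\}$, I would use $d\bar M_Q=d\bar N_Q-d\bar A_Q$. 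The reverse-time counting-process integral against $d\bar N_Q$ collapses to the single jump at $v=Q$ (recall $Q<T$ in the observed data), yielding $-w(Q,Z)$, a function of $(Q,Z)$; and the compensator integral $\int_{[Q,T)}w(v,Z)\alpha(v\mid Z)\,dv$ separates by the fundamental theorem of calculus as $W(T,Z)-W(Q,Z)$ with $W(v,Z)=\int_0^v w(u,Z)\alpha(u\mid Z)\,du$. The conceptual point is that every reverse-time martingale integral is therefore automatically of additive form $b(T,Z)+a(Q,Z)$.

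Combining the pieces, $\varphi_2$ has the additive form $b(T,Z)+a(Q,Z)$; since $\E(\varphi_2)=\E(\varphi)-\E(\varphi_1)=0$, the constants can be reapportioned between the two summands so that each is mean zero, placing $\varphi_2\in L^2_0(P_{T,Z})+L^2_0(P_{Q,Z})$ and hence $\varphi\in\dot P$, which finishes the argument. The main obstacle I anticipate is not the algebra of the split but the integrability bookkeeping that legitimizes it: I would verify that $W(T,Z)$ and $W(Q,Z)$ are finite and square integrable by again invoking Assumption~\ref{ass:overlap}, which on the effective range $[\tau_1,\tau_2]$ keeps $G(\cdot\mid Z)\geq\delta_2$ and $1-F(\cdot\mid Z)\geq\delta_1$ bounded away from zero, so that $w$ is bounded and $\int_{\tau_1}^{\tau_2}\alpha(v\mid Z)\,dv=\log\{G(\tau_2\mid Z)/G(\tau_1\mid Z)\}$ is finite. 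As a consistency check I would confirm that the resulting decomposition is orthogonal to $(\dot P)^\perp$ in the sense of the two conditional-mean conditions characterized in Lemma~\ref{lem:tangent_space}, which is automatic once the additive form is established.
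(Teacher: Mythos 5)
Your proposal is correct and matches the paper's own argument essentially step for step: both reduce the claim to showing $\varphi\in\dot P$, split the backwards-martingale integral into its jump at $v=Q$ (a function of $(Q,Z)$) and its compensator part, which by the fundamental theorem of calculus separates as $C(T,Z)-C(Q,Z)$ (your $W$ is exactly the paper's $C$), and then recenter the two summands to place $\varphi$ in $L^2_0(P_{T,Z})+L^2_0(P_{Q,Z})$, invoking the standard projection characterization of the efficient influence curve. No gaps; the integrability bookkeeping you flag is handled exactly as you propose, via Assumption 3.
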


\begin{remark}
	\citet{chao1987influence} derived the influence function for the product-limit (PL) estimator %\yuyao{changed `estimate' to `estimator'} 
    of the survival function of $T$ under random left truncation, in the absence of covariates. Since the PL estimator %\yuyao{changed `estimate' to `estimator'}
    is the nonparametric maximum likelihood estimator (NPMLE), which is known to be asymptotically efficient, the IF derived in \citet{chao1987influence} is the EIF. 
    In the Supplementary Material we showed that when there are no covariates and $\theta = \E^*\{\mathbbm{1}(T>t_0)\} = 1- F(t_0)$, the EIC derived in this paper matches the (E)IF derived in \citet{chao1987influence}. 
    %\lily{I don't know why you're having so much problem citing our arxiv}
\end{remark}

\subsection{Doubly robust estimating function}

Leaving out the constant factor $\beta$ in \eqref{eq:EIC}, we have the following estimating function,  assuming that $F$ and $G$ are known for the moment:
\begin{align}
U(\theta; F,G) = \frac{\nu(T)-\theta}{G(T|Z)} - \int_0^\infty \frac{\int_0^v \{\nu(t) -\theta\} dF(t|Z) }{ 1-F(v|Z) } \cdot \frac{d\bar M_Q(v;G)}{G(v|Z)},  \label{eq:score}
\end{align}
where $\bar M_Q(\cdot;G)$ is the backwards martingale related to $Q$ that is defined in the preliminaries. 
The above can be viewed as the  inverse  probability of truncation weighted full data estimating function $\nu(T)- \theta$, plus an augmentation term. We note that the augmentation term  can also be written as 
\begin{align}
- \int_0^\infty \E^* \{ \nu(T)-\theta | T<v, Z \}
\frac{F(v|Z)}{1-F(v|Z)} \cdot
 \frac{d\bar M_Q(v;G)}{G(v|Z)}. \label{eq:augmentation}
\end{align}
The above  resembles other augmented IPW (AIPW) estimating functions. 
For example, for the augmented inverse probability censoring weighted (AIPCW) estimating functions \citep{tsiatis2006semiparametric}, the augmentation term involves  a weighted censoring time martingale integral, with the integrand of a full data conditional expectation. 
%for quantities involving the event time of interest conditional on the event haven't happened at a given time. 
In comparison, \eqref{eq:augmentation} is a weighted truncation time (backwards)
martingale integral, with the integrand also a full data conditional expectation, but with the additional  multiplicative factor  of the odds $ \{F(v|Z)\}/\{1-F(v|Z)\}$.
%and for the full data estimating function $\nu(T) - \theta$ conditional on $T < v$, i.e., the event have happened before time $v$. 
%\lily{learn to write at a high level Yuyao, skip some details!}
%To highlight the dependency of $\E^* \{ \nu(T)-\theta | T<v, Z \}$ on $F$, we use expression \eqref{eq:score} for $U$.

Denote $\theta_0, F_0, G_0$ the true values of $\theta, F, G$, respectively. The following theorem shows that $U$ is a doubly robust estimating function.

\begin{theorem}[Population Double robustness]\label{thm:DR}
	Under Assumption \ref{ass:quasi-indpendent}, suppose that $F$ and $G$ satisfy Assumption \ref{ass:overlap}, then 
	$\E\{ U(\theta_0; F, G) \} = 0$
	if either $F=F_0$ or $G=G_0$. 
\end{theorem}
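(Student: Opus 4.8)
The plan is to treat the two robustness scenarios separately, since the mechanism that forces the mean to vanish is different in each. Throughout, $\E$ denotes the observed-data expectation, so I will substitute the observed-data density from Assumption \ref{ass:quasi-indpendent}, namely $p_{Q,T,Z}(q,t,z) = f_0(t|z)g_0(q|z)h(z)\mathbbm{1}(t>q)/\beta$, and split $U = U_1 + U_2$, where $U_1 = \{\nu(T)-\theta_0\}/G(T|Z)$ is the inverse-probability-weighted term and $U_2 = -\int_0^\infty \psi(v)\, d\bar M_Q(v;G)/G(v|Z)$ is the augmentation, with $\psi(v) = \int_0^v\{\nu(t)-\theta_0\}\,dF(t|Z)/\{1-F(v|Z)\}$.

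When $G=G_0$, the augmentation $U_2$ is a stochastic integral against the backwards martingale $\bar M_Q(\cdot;G_0)$. The integrand $\psi(v)/G_0(v|Z)$ is a (backward-)predictable process that is bounded under Assumption \ref{ass:overlap}, since overlap bounds $1/G_0$ and $1/(1-F)$ on the relevant range $[\tau_1,\tau_2]$ and $\nu$ is bounded; hence by the backwards-martingale property established in the Supplementary Material, $\E(U_2)=0$ for any such $F$. For the leading term, rearranging the identification formula \eqref{eq:identify_psi} gives $\E\{\nu(T)/G_0(T|Z)\} = \theta_0\,\E\{1/G_0(T|Z)\}$, whence $\E(U_1) = \E[\{\nu(T)-\theta_0\}/G_0(T|Z)] = 0$. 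This disposes of the case $G=G_0$ without any requirement on $F$.

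When $F=F_0$ the process $\bar M_Q(\cdot;G)$ is no longer a martingale, so I would instead compute $\E(U)$ directly. First evaluate the stochastic integral pathwise: in the observed region $\{Q<T\}$ the counting part $d\bar N_Q$ contributes a single jump at $v=Q$, while the compensator contributes $d\bar A_Q = -\mathbbm{1}(Q\le v<T)\,dG(v|Z)/G(v|Z)$, so that $U_2 = \psi(Q)/G(Q|Z) - \int_{[Q,T)} \psi(v)\,dG(v|Z)/G(v|Z)^2$. Taking expectations and applying Fubini at each fixed $z$, integrating out $q$ and $t$ over the region $\{q\le v<t\}$ produces the factors $\int g_0\,dq = G_0(v|z)$ and $\int f_0\,dt = 1-F_0(v|z)$; the identity $\psi(v)\{1-F_0(v|z)\} = \int_0^v\{\nu(t)-\theta_0\}\,dF_0(t|z) =: B(v|z)$ then collapses the $F_0$-dependence. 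An integration by parts in $v$ on the compensator contribution, using $dG/G^2 = -d(1/G)$, cancels the surviving pieces of $\E(U_1)$ and of the $\bar N_Q$ contribution pairwise.

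The main obstacle is the boundary term from this integration by parts. The product $B(v|z)G_0(v|z)/G(v|z)$ vanishes at the lower endpoint, because $B(v|z)=0$ below the support of $T$, which is guaranteed by Assumption \ref{ass:overlap}; but at $v=\infty$ it equals $B(\infty|z) = \E^*\{\nu(T)\mid Z=z\}-\theta_0$, which is \emph{not} zero for a fixed $z$. The resolution is that this term enters only inside $\int h(z)\,(\cdot)\,dz$, and $\int h(z)\{\E^*(\nu(T)\mid Z=z)-\theta_0\}\,dz = \E^*\{\nu(T)\}-\theta_0 = 0$ by the very definition of $\theta_0$ in \eqref{eq:theta}. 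Hence $\E(U)=0$ also when $F=F_0$. Beyond this boundary bookkeeping, the only delicate points are keeping the signs straight when differentiating the decreasing processes $\bar N_Q$ and $\bar A_Q$ on the reversed time scale, and invoking Assumption \ref{ass:overlap} to justify integrability and the vanishing of the lower boundary term.
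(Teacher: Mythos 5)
Your proposal is correct, and it splits into two halves of unequal novelty. The $G=G_0$ half coincides with the paper's proof: the augmentation term dies by the backwards-martingale property (Proposition \ref{prop:int_dM}), since the integrand is bounded and vanishes below $\tau_1$ under Assumption \ref{ass:overlap}, and the IPW term dies by the identification identity from Lemma \ref{lem:psi_identification}. The $F=F_0$ half, however, takes a genuinely different route. The paper never integrates by parts: it first rewrites the IPW term via the telescoping identity $\{\nu(T)-\theta\}/G(T|Z) = \{\nu(T)-\theta\}/G(Q|Z) - \{\nu(T)-\theta\}\int_Q^T G(v|Z)^{-2}\,dG(v|Z)$, then conditions on $(Q,Z)$ --- exploiting that the observed-data law of $T$ given $(Q,Z)$ involves only $F_0$ --- and collapses the conditional expectation to the exact closed form $\E\{U(\theta;F_0,G)\mid Q,Z\} = \{\mu(Z;F_0)-\theta\}/\{1-F_0(Q|Z)\}$, using $\int_Q^\infty G^{-2}\,dG = 1/G(Q|Z)-1$; marginalizing against $p_{Q,Z}(q,z)\propto\{1-F_0(q|z)\}g_0(q|z)h(z)$ then yields $\E\{U(\theta;F_0,G)\}=(\theta_0-\theta)/\beta_0$, which vanishes at $\theta=\theta_0$. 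You instead integrate out $(q,t)$ by Fubini at each fixed $v$ and integrate by parts in $v$, so that the compensator integral cancels the IPW piece and the $\bar N_Q$ piece pairwise, leaving only the boundary term $\mu(z;F_0)-\theta_0$ at $v=\infty$, which is annihilated after averaging against $h(z)$. Both routes ultimately rest on the same two facts --- $\lim_{v\to\infty}G(v|z)=1$ and $\int h(z)\,\mu(z;F_0)\,dz=\theta_0$ --- and your handling of the delicate points is right: the lower boundary term vanishes because $\int_0^v\{\nu(t)-\theta_0\}\,dF_0(t|z)\equiv 0$ below $\tau_1$ while $G(\tau_1|z)\geq\delta_2$, and the upper boundary term is nonzero pointwise in $z$ but has mean zero. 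What the paper's conditioning buys is freedom from any boundary-term bookkeeping and the more informative intermediate identity $\E\{U(\theta;F_0,G)\}=(\theta_0-\theta)/\beta_0$, which displays the bias of the estimating function for every $\theta$, not just $\theta_0$; what your route buys is a self-contained, mechanical computation that needs neither the telescoping rewrite nor the observed-data conditional densities.
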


Theorem \ref{thm:DR} helps us to construct doubly robust estimators next.

%---------------------------------------------------------------------------------------------------------------------------%
\section{Estimation}\label{sec:estimation}

\subsection{Estimating equation} 

Let $\{O_i\}_{i=1}^n$ be a random sample of size $n$ from the observed data distribution, where $O_i = (Q_i,T_i,Z_i)$. 
From  \eqref{eq:score} a natural estimating equation for $\theta$ would be
$ \sum_{i = 1}^n U_i(\theta; F, G) = 0$,  
if $F$ and $G$ were known. 
%where $U_i$ is the $i$-th copy of $U$ with $O$ replaced by $O_i$.
In practice we would first estimate $F$ and $G$,  then solve for $\theta$ using
\begin{equation}
\sum_{i = 1}^n U_i(\theta; \hat F, \hat G) = 0 \label{eq:est_eq}.
\end{equation}
In particular, $F$ can be estimated from  $(Q_i,T_i,Z_i)_{i=1}^n$ using existing regression methods that handle  left truncation; and $G$ can be estimated similarly from  $(\tau-T_i, \tau-Q_i,Z_i)_{i=1}^n$ on the reversed time scale, where $\tau-Q_i$ is left truncated by $\tau-T_i$, and $\tau<\infty$ is an upper bound of time. 
Since $U(\theta;\hat F,\hat G)$ is linear in $\theta$, we have a closed-form solution for \eqref{eq:est_eq}: 
\begin{align}
\hat\theta &= \left(\sum_{i=1}^n \left[\frac{1}{\hat G(T_i|Z_i)} - \int_0^\infty \frac{\hat F(v|Z_i)}{\hat G(v|Z_i)\{1-\hat F(v|Z_i)\}} d\bar M_{Q,i}(v; \hat G) \right]\right)^{-1} \nonumber \\
&\quad\quad\quad\quad  \times \left(\sum_{i=1}^n \left[\frac{\nu(T_i)}{\hat G(T_i|Z_i)} - \int_0^\infty \frac{\int_0^v \{\nu(t) -\theta\} d\hat F(t|Z_i)}{\hat G(v|Z_i)\{1-\hat F(v|Z_i)\}} d\bar M_{Q,i}(v; \hat G) \right] \right). \label{eq:est_DR}
\end{align}

We have immediately the following special cases in which we use degenerative constant (and likely wrong) estimates for $\hat F$ or $\hat G$. The double robustness properties in the next subsection guarantees that they are CAN if conditions are met. 
More specifically, by setting $\hat F \equiv 0$, we obtain the IPW estimator $\hat\theta_{\text{IPW.Q}}$ in \eqref{eq:est_IPW.Q}.
On the other hand, by setting $\hat G \equiv 1$, we obtain a regression-based estimator:
\begin{align}
\hat\theta_{\text{Reg.T1}}
& = \left\{\sum_{i=1}^n \frac{1}{1-\hat F(Q_i|Z_i)}\right\}^{-1} 
\left[\sum_{i=1}^n \frac{\nu(T_i)\{1-\hat F(Q_i|Z_i)\} + \int_0^{Q_i} \nu(t)  d\hat F(t|Z_i) }{1-\hat F(Q_i|Z_i) }\right]. \label{eq:est_Reg.T1}
\end{align}
The reason we call it regression-based is shown 
in the Appendix: $\nu(T)\{1- \hat F(Q|Z)\} + \int_0^{Q} \nu(t) d\hat F(t|Z)$ estimates  %the full data conditional expectation 
$\E^*\left\{\nu(T)|Q, Z\right\}$, while the inverse probability weight $ {1}/\{1-\hat F(Q|Z)\} $  corrects for the selection bias caused by left truncation \citep{cheng2012estimating, ertefaie2014propensity}. 
Motivated by $\hat\theta_{\text{Reg.T1}}$, we also consider another regression-based estimator:
\begin{align}
\hat\theta_{\text{Reg.T2}} = \left\{\frac{1}{n} \sum_{i=1}^n \frac{1}{1- \hat F(Q_i|Z_i)}\right\}^{-1} \left[ \frac{1}{n}\sum_{i=1}^n \frac{ \int_0^{\infty} \nu(t)  d\hat F(t|Z_i)}{1-\hat F(Q_i|Z_i)} \right], \label{eq:est_Reg.T2}
\end{align}
where $\int_0^{\infty} \nu(t)  d\hat F(t|Z)$ estimates $\E^*\{\nu(T)|Z\}$. %\yuyao{Removed the subscript $i$'s to be consistent with the explanation after (8)}

For the rest of this section, we construct estimators with doubly robust properties from \eqref{eq:est_eq} under two scenarios.
In Section \ref{sec:est_mdr}, we consider the more classical case where both $\hat F$ and $\hat G$ are asymptotically linear. 
In Section \ref{sec:est_cf}, we consider nonparametric or machine learning methods and  estimate  $\theta$ using a cross-fitting procedure.

\subsection{Model double robustness under asymptotic linearity}\label{sec:est_mdr}

%Since the nuisance functions  are conditional distributions, regression models are naturally used and traditionally people consider
When parametric or semiparametric models like the Cox proportional hazards model are used to estimate $F$ and $G$, 
%Under these models, 
the estimators $\hat F$ and $\hat G$ are root-$n$ consistent and asymptotically linear. In this case, the observed data $\{O_i\}_{i=1}^n$ are used to obtain $\hat F$ and $\hat G$, and the same data are then used to solve equation \eqref{eq:est_eq}. We denote the resulting estimator $\hat \theta_{dr}$. 
We show that $\hat \theta_{dr}$ is model doubly robust, in  that it is consistent and asymptotically normal when one of the models for $F$ and $G$ is correctly specified, but not necessarily both.

Let us first introduce some norms. % needed for this subsection. 
%Let $\Z$ denote the support of $Z$. 
For a random function $X(t,z)$ with $t\in [\tau_1,\tau_2]$ and $z\in \Z$, define
$\|X(\cdot,Z)\|_{\text{sup},2}^2 =  \E\left\{\sup_{t\in[\tau_1,\tau_2]} \left|X(t,Z)\right|^2\right\}$ and $\|X(\cdot,Z)\|_{\TV,2}^2  =  \E\left\{\TV\{X(\cdot,Z)\}^2\right\}$, 
where 
$\TV\{X(\cdot,z)\} = \sup_{\mathcal{P}} \sum_{j=1}^J |X(x_{j+1},z) - X(x_j,z)|$ 
is the total variation of $X(\cdot,z)$ on the interval $[\tau_1,\tau_2]$, and $\mathcal{P}$ is the set of all possible partitions $\tau_1=x_0<x_1<...<x_J= \tau_2$ of $[\tau_1, \tau_2]$. 
We assume the following.

\begin{assumption}[Uniform Convergence]\label{assump:uniformcons1}
	There exist CDF's $F^\divideontimes$ and $G^\divideontimes$ such that
	\begin{align}
	\left\|\hat F(\cdot|Z) - F^\divideontimes(\cdot|Z)\right\|_{\sup, 2} = o(1), 
	\quad \ \left\|\hat G(\cdot|Z) - G^\divideontimes(\cdot|Z)\right\|_{\sup, 2} = o(1).
	\end{align}
\end{assumption}

\begin{assumption}[Asymptotic Linearity]\label{assump:if}
	For fixed $(t,z)\in[\tau_1,\tau_2]\times \Z$, $\hat F(t|z)$ and $\hat G(t|z)$ are regular and asymptotically linear estimators for $F(t|z)$ and $G(t|z)$ with influence functions  
 $\xi_1(t,z,O)$ and $\xi_2(t,z,O)$, respectively. In addition, denote
    \begin{align}
        R_1(t,z) & = \hat F(t|z) - F^\divideontimes(t|z) - \frac{1}{n}\sum_{i = 1}^n \xi_{1}(t, z, O_i) , \label{eq:F_hat_AL}\\
        R_2(t,z) & = \hat G(t|z) - G^\divideontimes(t|z) - \frac{1}{n}\sum_{i = 1}^n \xi_{2}(t, z, O_i). \label{eq:G_hat_AL}
    \end{align}
    Suppose $\left\|R_1(\cdot,Z)\right\|_{\sup, 2} = o(n^{-1/2})$, $\left\|R_2(\cdot,Z)\right\|_{\sup, 2} = o(n^{-1/2})$, and either $\left\|R_1(\cdot, Z)\right\|_{\TV, 2} = o(1)$ or $\left\|R_2(\cdot,Z)\right\|_{\TV, 2} = o(1)$. 
\end{assumption}

In general, the rate conditions in terms of $\|\cdot\|_{\TV,2}$ does not imply those under $\|\cdot\|_{\sup,2}$, and vice versa. 
%Smooth and invariant functions\yuyao{added `invariant'. Not sure what is the correct word to use here. I am thinking about functions that are smooth and the values do not change much} are more likely to achieve faster rates in terms of $\|\cdot\|_{\TV,2}$, while stepwise functions %, like NPMLEs are more likely to achieve faster rates in terms of $\|\cdot\|_{\sup,2}$. 
As an example, if $\hat F$ is the nonparametric maximum likelihood estimator (NPMLE) under the  proportional hazards model, % and proportional odds models, 
$\|\hat F - F^\divideontimes\|_{\sup,2}$ is usually of order $O(n^{-1/2})$, but even $o(1)$ rate  is not achievable for $\|\hat F - F^\divideontimes\|_{\TV,2}$. In fact, we can show that  for the simple case of estimating a smooth CDF % function $\zeta(t)$ 
by its empirical CDF, 
%$\hat \zeta_n(t)$, $\|\hat \zeta_n - \zeta\|_{\TV,2} = O(1)$. 
the $\|\cdot\|_{\TV,2}$ of its error  is $ O(1)$.
On the other hand, the rate conditions on the remainder terms  in Assumption \ref{assump:if} are reasonable; 
%assumptions for estimators from NPMLE's such as Cox proportional hazards models and proportional odds models. 
in particular, we verify in the Supplementary Material that Assumption  \ref{assump:if} is satisfied when the  proportional hazards model is used to estimate $F$ and $G$.
The rate condition %assumed in Assumption \ref{assump:if} 
in terms of $\|\cdot \|_{\TV,2}$ is needed to handle the involvement of time in both nuisance parameters $F$ and $G$. To the best of our knowledge, the technical complication involved in  the asymptotics of such doubly robust estimators have not been carefully studied in the previous literature.
Using integration by parts, we can show that only one of the two $\|\cdot \|_{\TV,2}$ conditions in Assumption \ref{assump:if} is needed.

In the following, we will use ``$\convp$" to indicate convergence in probability, and  ``$\convd$" to indicate convergence in distribution.
We have the following model double robustness property for $\hat\theta_{dr}$.

\begin{theorem}[Model double robustness]\label{thm:mdr}
	Under Assumptions \ref{ass:quasi-indpendent}, \ref{assump:uniformcons1}, and regularity Assumption \ref{ass:U_stat} in the Supplementary Material, assuming also that both ($F^\divideontimes$, $G^\divideontimes$) and  ($F_0$, $G_0$) satisfy Assumption \ref{ass:overlap},
 if either $F^\divideontimes = F_0$ or $G^\divideontimes = G_0$, we have: \\
	(i)  $\hat \theta_{dr} \convp \theta_0$;\\
	(ii) if in addition Assumptions \ref{assump:if} holds, then
	$\sqrt{n}(\hat \theta_{dr} - \theta_0) \convd N(0, \sigma^2)$.
	Furthermore, when both $F^\divideontimes = F_0$ and $G^\divideontimes = G_0$, $\hat \theta_{dr}$ achieves the semiparametric efficiency bound, and $\sigma^2$ can be consistently estimated by $\hat\sigma$, where 
 $$\hat\sigma^2 = \hat\beta^2 \ \frac{1}{n} \sum_{i=1}^n U_i^2( \hat\theta_{dr}, \hat F, \hat G),
 \quad \hat \beta = \left\{\frac{1}{n} \sum_{i=1}^n \frac{1}{\hat G(T_i|Z_i)} \right\}^{-1}.$$ 
\end{theorem}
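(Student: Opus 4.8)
The plan is to exploit that $U(\theta;F,G)$ is affine in $\theta$, so that the estimating equation $\sum_i U_i(\theta;\hat F,\hat G)=0$ has the explicit solution \eqref{eq:est_DR}. Writing $U(\theta;F,G)=U_a(F,G)-\theta\,U_b(F,G)$ with
\begin{align}
U_b(F,G)=\frac{1}{G(T|Z)}-\int_0^\infty \frac{F(v|Z)}{1-F(v|Z)}\,\frac{d\bar M_Q(v;G)}{G(v|Z)},
\end{align}
we get $\hat\theta_{dr}=\{n^{-1}\sum_i U_{a,i}(\hat F,\hat G)\}/\{n^{-1}\sum_i U_{b,i}(\hat F,\hat G)\}$. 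For part (i) I would first show that replacing $(\hat F,\hat G)$ by their sup-norm limits $(F^\divideontimes,G^\divideontimes)$ perturbs both averages by $o_p(1)$: under Assumption \ref{ass:overlap} the weights $1/G$ and $1/(1-F)$ are bounded on $[\tau_1,\tau_2]$, and since $\nu$ is bounded, integration by parts turns the Stieltjes integrals $\int\nu\,dF$ and the compensator part of $\int d\bar M_Q(\cdot;G)$ into ordinary integrals of $\hat F,\hat G$, so uniform consistency (Assumption \ref{assump:uniformcons1}) alone controls this substitution. A law of large numbers then gives $n^{-1}\sum_i U_{a,i}(F^\divideontimes,G^\divideontimes)\convp\E\{U_a(F^\divideontimes,G^\divideontimes)\}$ and likewise for $U_b$, and Theorem \ref{thm:DR} identifies the ratio of these limits as $\theta_0$ whenever $F^\divideontimes=F_0$ or $G^\divideontimes=G_0$; since the denominator limit is bounded away from zero, $\hat\theta_{dr}\convp\theta_0$.

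For part (ii), affinity and the estimating equation give
\begin{align}
\sqrt{n}\,(\hat\theta_{dr}-\theta_0)
=\Big\{\tfrac{1}{n}\sum_{i=1}^n U_{b,i}(\hat F,\hat G)\Big\}^{-1}\,
\tfrac{1}{\sqrt{n}}\sum_{i=1}^n U_i(\theta_0;\hat F,\hat G),
\end{align}
where the bracketed factor converges to a nonzero constant $c$ by part (i). Writing $\Psi_n(F,G)=n^{-1}\sum_i U_i(\theta_0;F,G)$ and $\Psi(F,G)=\E\{U(\theta_0;F,G)\}$ (the latter with the arguments held fixed, integrating only a fresh $O$), I would decompose the numerator as
\begin{align}
\tfrac{1}{\sqrt{n}}\sum_{i=1}^n U_i(\theta_0;\hat F,\hat G)
&=\tfrac{1}{\sqrt{n}}\sum_{i=1}^n U_i(\theta_0;F^\divideontimes,G^\divideontimes)
+\sqrt{n}\,\big\{\Psi(\hat F,\hat G)-\Psi(F^\divideontimes,G^\divideontimes)\big\}\nonumber\\
&\quad+\sqrt{n}\,\big\{\Psi_n(\hat F,\hat G)-\Psi_n(F^\divideontimes,G^\divideontimes)-\Psi(\hat F,\hat G)+\Psi(F^\divideontimes,G^\divideontimes)\big\}.
\end{align}
The first term is an i.i.d.\ sum that is mean zero by Theorem \ref{thm:DR} and obeys the central limit theorem. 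The third (empirical-process / degenerate) term is shown to be $o_p(1)$ by substituting the asymptotic-linear expansions of Assumption \ref{assump:if}, which turns it into centered V-statistics whose degenerate parts vanish under the supplementary regularity Assumption \ref{ass:U_stat} together with the sup-norm rates.

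The crux is the drift, i.e.\ the middle term. Here I would Taylor-expand the smooth functional $(F,G)\mapsto\Psi(F,G)$ about $(F^\divideontimes,G^\divideontimes)$ and insert $\hat F-F^\divideontimes=n^{-1}\sum_j\xi_{1,j}+R_1$, $\hat G-G^\divideontimes=n^{-1}\sum_j\xi_{2,j}+R_2$. The decisive structure is the Neyman-type orthogonality built into Theorem \ref{thm:DR}: since $\Psi(F,G_0)\equiv0$ for every $F$ and $\Psi(F_0,G)\equiv0$ for every $G$, the first-order Gateaux derivative in the \emph{misspecified} nuisance direction is identically zero, so only the correctly specified direction survives; paired with that nuisance's influence function it contributes an additional i.i.d.\ mean-zero correction. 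The genuinely delicate work, which I expect to be the main obstacle, is bounding the second-order part: the martingale integral in \eqref{eq:score} depends on $\hat G$ both through the weight $1/\hat G$ and through the compensator inside $\bar M_Q(\cdot;\hat G)$, producing products such as $\int(\hat F-F^\divideontimes)\,d(\hat G-G^\divideontimes)/\{G^\divideontimes(1-F^\divideontimes)\}$. Integration by parts moves the differential from one factor to the other, bounding these by $O_p(\|\hat F-F^\divideontimes\|_{\sup,2}\,\|\hat G-G^\divideontimes\|_{\TV,2})$ or the symmetric expression, which is exactly why only \emph{one} of the two $\|\cdot\|_{\TV,2}$ conditions in Assumption \ref{assump:if} is needed; with the $o(n^{-1/2})$ sup-control on $R_1,R_2$ this makes $\sqrt{n}\times(\text{second order})=o_p(1)$.

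Collecting the leading term and the correction writes $\sqrt{n}(\hat\theta_{dr}-\theta_0)=n^{-1/2}\sum_i\phi_i+o_p(1)$ for an explicit i.i.d.\ mean-zero $\phi$, so $\sqrt{n}(\hat\theta_{dr}-\theta_0)\convd N(0,\sigma^2)$ with $\sigma^2=\E(\phi^2)$ by the central limit theorem and Slutsky. When $F^\divideontimes=F_0$ and $G^\divideontimes=G_0$ both first-order derivatives vanish, both corrections disappear, and $\phi$ equals the efficient influence curve $\varphi$ of \eqref{eq:EIC} divided by its (now exactly $1/\beta$) normalizing constant, so $\sigma^2=\E(\varphi^2)$ attains the bound of Proposition \ref{thm:EIF}. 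In this doubly correct case one verifies $\E\{1/G_0(T|Z)\}=1/\beta$, hence $\hat\beta\convp\beta$ by uniform consistency, and consistency of $\hat\sigma^2$ follows from $\hat\theta_{dr}\convp\theta_0$, Assumption \ref{assump:uniformcons1}, and a law of large numbers applied to $U_i^2(\hat\theta_{dr},\hat F,\hat G)$, giving $\hat\sigma^2\convp\beta^2\,\E(U^2)=\E(\varphi^2)=\sigma^2$.
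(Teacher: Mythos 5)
Your architecture for the theorem is sound and, at the structural level, parallels the paper's proof: part (i) is essentially the paper's consistency argument (ratio representation, substitution of $(F^\divideontimes,G^\divideontimes)$ controlled by sup-norms and integration by parts, LLN plus Theorem \ref{thm:DR}), and your part (ii) split into an i.i.d.\ term at $(F^\divideontimes,G^\divideontimes)$, a population drift, and a stochastic-equicontinuity term is a reorganization of the paper's decomposition into $\B_1+\B_2+\B_3+\B_4$. Your observation that double robustness makes the Gateaux derivative of $\Psi$ in the \emph{misspecified} direction vanish identically, so that only the correctly specified nuisance contributes an i.i.d.\ mean-zero correction, is correct and is a clean way to see what the paper establishes concretely: its correction $h_1$ vanishes when $G^\divideontimes=G_0$ (by the backwards-martingale property, Proposition \ref{prop:int_dM}) and $h_2$ vanishes when $F^\divideontimes=F_0$ (the dual identity), which is also why the doubly-correct case attains the bound $\E(\varphi^2)$.

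However, there is a genuine gap at precisely the step you flag as the crux. You bound the second-order cross term by $O_p\bigl(\|\hat F-F^\divideontimes\|_{\sup,2}\,\|\hat G-G^\divideontimes\|_{\TV,2}\bigr)$ after integration by parts. This does not close the argument, for two reasons. First, Assumption \ref{assump:if} imposes the $o(1)$ total-variation condition on the asymptotic-linearity \emph{remainders} $R_1,R_2$, not on the full errors $\hat F-F^\divideontimes$, $\hat G-G^\divideontimes$; the paper explicitly notes that for standard estimators (the NPMLE under the Cox model, or an empirical CDF) the TV norm of the full error is $O(1)$ and an $o(1)$ rate is not achievable. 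Second, even granting $\|\hat F-F^\divideontimes\|_{\sup,2}=O(n^{-1/2})$, your bound is then $O(n^{-1/2})\cdot O(1)=O(n^{-1/2})$, so $\sqrt{n}\times(\text{cross term})=O_p(1)$, not $o_p(1)$. The repair — which is the paper's main technical contribution in this theorem — is to substitute the expansions $\hat F-F^\divideontimes=n^{-1}\sum_j\xi_{1,j}+R_1$ and $\hat G-G^\divideontimes=n^{-1}\sum_k\xi_{2,k}+R_2$ \emph{inside} the cross term before bounding anything: the pieces involving a remainder are killed by norm products (sup of a remainder, $o(n^{-1/2})$, times TV of an IF-average, $O_p(1)$, or sup of an IF-average, $O_p(n^{-1/2})$, times TV of a remainder, $o(1)$ — this is the actual reason only one TV condition is needed), while the IF-average-times-IF-average piece (the term $\B_{20}$ in the paper's proof) cannot be handled by any norm inequality. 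It is only $o_p(n^{-1/2})$ because of cancellation: the paper computes $\E(\B_{20}^2)$ by expanding the double sums into $n^6$ terms and showing, via the mean-zero property of influence functions, the martingale property, and integration by parts, that all but $O(n^4)$ of them vanish, yielding $\B_{20}=O_p(n^{-1})$. Without this degenerate-$U$-statistic/counting argument your drift and second-order step fails; note also that the same machinery, not just Assumption \ref{ass:U_stat} plus sup-norm rates, is what makes rigorous your claim that the stochastic-equicontinuity term is $o_p(1)$.
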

 The proof of the theorem is given in the Supplementary Material, where the consistency proof utilizes concentration inequalities, and the asymptotic normality proof further utilizes asymptotic results of $U$-Statistics.

\begin{remark}
    As mentioned earlier, since $\hat\theta_{\text{IPW.Q}}$ and $\hat\theta_{\text{Reg.T1}}$ are special cases of $\hat\theta_{dr}$, they are CAN when $\hat G$ and $\hat F$ are uniformly consistent and asymptotically linear, respectively. 
    In addition, when $\hat F$ is uniformly consistent and asymptotically linear, $\hat\theta_{\text{Reg.T1}}$ and $\hat\theta_{\text{Reg.T2}}$ can be shown to be asymptotically equivalent under the slightly stronger assumption that $Q$ is independent of $T$ given $Z$ in the full data.
\end{remark}

\subsection{Rate double robustness with cross-fitting} \label{sec:est_cf}

In the contemporary data science era, analysts may wish to consider more flexible nonparametric or machine learning methods in order to estimate $F$ and $G$. These estimators are known to converge slower than the root-$n$ rate, and are not asymptotically linear. To incorporate such methods, we utilize the cross-fitting procedure that are commonly considered in the literature \citep{hasminskii1978nonparametric, bickel1982adaptive, robins2008higher, chernozhukov2018double}.  

The cross-fitting procedure in Algorithm \ref{alg:cf} introduces independence 
between the estimated nuisance parameters and the data used to estimate $\theta$, 
%among the $U_i\{\theta, \hat F^{(-k)}, \hat G^{(-k)}\}$'s in the $k$-th fold when conditional on the out-of-$k$-fold data. Such conditional independence allows us to get rid of
thereby avoiding the asymptotic linearity assumption required in Section \ref{sec:est_mdr}. %\lily{high level Yuyao}

% \begin{algo}
% 	Estimation of $\theta$ via $K$-fold cross-fitting.
% 	%\vspace*{-12pt}
% 	\begin{tabbing}
% 		\qquad \enspace Split the data into $K$ folds of roughly equal size with the index sets $\I_1,...,\I_K$. \\
% 		\qquad \enspace For $k=1$ to $K$ \\
% 		\qquad \qquad Estimate $F$ and $G$ using the out-of-$k$-fold data indexed by $\I_{-k} = \{1,...,n\}\backslash \I_k$;\\
% 		\qquad \qquad  denote the estimates by $\hat F^{(-k)}$ and $\hat G^{(-k)}$. \\
% 		\qquad \enspace Obtain $\hat \theta_{cf}$ by solving
% 		$\sum_{k=1}^K \sum_{i\in \I_k} U_i\{\theta, \hat F^{(-k)}, \hat G^{(-k)}\} = 0$.
% 	\end{tabbing}
% 	\label{alg:cf}
% \end{algo}

\begin{algorithm}
\caption{Estimation of $\theta$ via $K$-fold cross-fitting.}\label{alg:cf}
\KwData{split the data into $K$ folds of roughly equal size with the index sets $\I_1,...,\I_K$.
}
\For{$k=1$ to $K$ }{
\medskip
Estimate $F$ and $G$ using the out-of-$k$-fold data indexed by $\I_{-k} = \{1,...,n\}\backslash \I_k$; \\
denote the estimates by $\hat F^{(-k)}$ and $\hat G^{(-k)}$.
}
\KwResult{Obtain $\hat \theta_{cf}$ by solving
\begin{equation}
    \sum_{k=1}^K \sum_{i\in \I_k} U\{O_i;\theta, \hat F^{(-k)}, \hat G^{(-k)}\} = 0.
\end{equation}
}
\end{algorithm}
 
We again introduce some norms here, and  show that $\hat\theta_{cf}$ enjoys a 
rate doubly robust property 
\citep{ rotnitzky2021characterization, hou2021treatment} as described below. 
Denote $\mathcal{O}$ the data used to obtain $\hat F$ and $\hat G$, %where $O_i = (Q_i, T_i, Z_i)$.
%\yuyao{Maybe remove `, where $O_i = (Q_i, T_i, Z_i)$' since we have introduced the notation $O_i$ at the beginning of 4.1?}
Denote $O_\dagger = (Q_\dagger, T_\dagger,Z_\dagger)$ an copy of the data that is independent of, but from the same distribution, as $\mathcal{O}$. 
%\yuyao{change `as $\mathcal{O}$' to `as $O_i$'?}
Let $\E_\dagger$ denote  expectation taken with respect to  $O_\dagger$ conditional on the original data $\mathcal{O}$, and $\E$  the expectation  with respect to $\mathcal{O}$. 
Define 
\begin{align}
\|\hat F - F_0\|_{\dagger, \sup,2}^2 & =  \E\left(\E_\dagger\left[\left\{\sup_{t\in[\tau_1,\tau_2]} \left|\hat F(t|Z_\dagger) - F_0(t|Z_\dagger)\right|\right\}^2\right]\right), \\
\|\hat G - G_0\|_{\dagger, \sup,2}^2 & =  \E\left(\E_\dagger\left[\left\{\sup_{t\in[\tau_1,\tau_2]} \left|\hat G(t|Z_\dagger) - G_0(t|Z_\dagger)\right|\right\}^2\right]\right), \\
\D_{\dagger}(\hat F,\hat G; F_0,G_0) 
& = \E\left(\E_\dagger\left[\left| \int_{\tau_1}^{\tau_2} \left\{a(v, Z_\dagger; \hat F) - a(v, Z_\dagger; F_0)\right\}\right.\right.\right.\\
&\quad\quad\quad\quad\quad\quad\quad \left.\left.\left.  \cdot\  Y_\dagger(v)\ d\left\{\frac{1}{\hat G(v|Z_\dagger)} - \frac{1}{G_0(v|Z_\dagger)} \right\}\right|\right]\right), 
\end{align}
where $Y_\dagger(t) = \mathbbm{1}(Q_\dagger\leq v<T_\dagger)$ and $a(v, Z; F) = \int_0^v\{\nu(t)-\theta\} dF(t|Z)/\{1-F(v|Z)\}$. 
We refer to $ \D_{\dagger}(\hat F,\hat G; F_0,G_0) $ as the {\it out-of-sample cross integral product}.

\begin{assumption}[Uniform Consistency]\label{assump:uniformcons2}
	$\|\hat F - F_0\|_{\dagger, \sup,2} = o(1)$ and 
	$\|\hat G - G_0\|_{\dagger, \sup,2} = o(1)$.
\end{assumption}

\begin{assumption}[Product rate condition] \label{ass:prodrate}
	$\D_{\dagger}(\hat F, \hat G; F_0, G_0) = o(n^{-1/2})$.
\end{assumption}

In the above we assume that the model classes for estimating $F$ and $G$ are large enough so that $\hat F$ and $\hat G$ are uniformly consistent for the truth. 
The rate conditions in Assumptions \ref{assump:uniformcons2} and \ref{ass:prodrate} are extensions of common rate conditions assumed for the rate doubly robust estimators in the literature \citep{rotnitzky2021characterization, hou2021treatment, rava2023doubly}, 
 again with particular attention to the handling of the fact that $F$ and $G$ are both functions of $t$. 
Previous literature \citep{chernozhukov2018double, rotnitzky2021characterization} typically imposes their product rate condition as a product of the error rates from  estimating the two nuisance parameters, instead of an integral form as in  Assumption \ref{ass:prodrate} here. This is because at most {one of} their nuisance parameters involves time $t$, which results in the `mixed bias property' studied in \citet{rotnitzky2021characterization}, and enables an easy application of the Cauchy-Schwartz inequality in the proofs. The time construct in both nuisance functions is inevitable in our study, and a more detailed discussion on this topic can be found in \citet{ying2023cautionary}.

A technical note is that we have assumed conditions on the expectations of the estimation errors and the out-of-sample cross integral product. These are slightly stronger than the conditions where we would have only the inner expectation $\E_\dagger$ and assume that the rate is $o_p(\cdot)$ \citep{rotnitzky2021characterization, hou2021treatment}. But they are equivalent if we have regularity condition that the quantities are bounded almost surely. Our assumption here simplifies the proofs.

We have the following rate double robustness property for $\hat\theta_{cf}$.

\begin{theorem}[Rate double robustness]\label{thm:rdr}
	Under Assumptions \ref{ass:quasi-indpendent} and \ref{assump:uniformcons2}, assuming also that $F_0$ and $G_0$ satisfy Assumption \ref{ass:overlap}, we have: \\
	(i) $\hat\theta_{cf} \convp \theta_0$;\\
	(ii) if in addition Assumptions \ref{ass:prodrate} holds, then 
	$n^{1/2}(\hat\theta_{cf} - \theta_0) \convd N(0, \sigma^2)$, where $\sigma^2 = \beta_0^2 \E\{U(\theta_0, F_0, G_0)^2\} = \E(\varphi^2)$. 
	Therefore, $\hat\theta_{cf}$ achieves the semiparametric efficiency bound derived in Proposition \ref{thm:EIF}.
 In addition, $\sigma^2$ can be consistently estimated by $\hat \sigma^2_{cf}$, where
 \begin{align}
     \hat\sigma_{cf}^2 = \hat\beta_{cf}^2 \ \frac{1}{n} \sum_{k=1}^K\sum_{i\in\I_k} U_i^2\{ \hat\theta_{cf}, \hat F^{(-k)}, \hat G^{(-k)}\} , \quad \hat \beta_{cf} = \left\{\frac{1}{n}\sum_{k=1}^K\sum_{i\in\I_k} \frac{1}{\hat G^{(-k)}(T_i|Z_i)} \right\}^{-1}.
 \end{align}
\end{theorem}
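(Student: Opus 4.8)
The plan is to exploit the linearity of $U$ in $\theta$ together with the sample-splitting structure. Writing $U(\theta;F,G)=A(F,G)-\theta\,B(F,G)$ with $A(F,G)=\nu(T)/G(T|Z)-\int_0^\infty \{\int_0^v\nu(t)\,dF(t|Z)/(1-F(v|Z))\}\,d\bar M_Q(v;G)/G(v|Z)$ and $B(F,G)=1/G(T|Z)-\int_0^\infty \{F(v|Z)/(1-F(v|Z))\}\,d\bar M_Q(v;G)/G(v|Z)$, the solution of the cross-fitting equation is the ratio $\hat\theta_{cf}=\{\sum_k\sum_{i\in\I_k}A(O_i;\hat F^{(-k)},\hat G^{(-k)})\}/\{\sum_k\sum_{i\in\I_k}B(O_i;\hat F^{(-k)},\hat G^{(-k)})\}$. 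Both claims then reduce to analysing fold-wise empirical averages of $A$, $B$, and $U$ evaluated at $(\hat F^{(-k)},\hat G^{(-k)})$; the crucial device is that, by construction, $(\hat F^{(-k)},\hat G^{(-k)})$ is independent of $\{O_i:i\in\I_k\}$, so I can condition on the out-of-$k$-fold data and treat the nuisances as fixed functions, with $\E_\dagger$ the resulting conditional expectation.

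For part (i), conditioning on the out-of-fold data, each in-fold average has conditional mean $\E_\dagger\{A(\hat F^{(-k)},\hat G^{(-k)})\}$ (resp.\ $B$) and a conditional variance of order $1/|\I_k|$, which vanishes; under the overlap Assumption~\ref{ass:overlap} the relevant integrands are bounded and the denominators $G$ and $1-F$ are bounded away from zero, so uniform consistency (Assumption~\ref{assump:uniformcons2}) and dominated convergence give $\E_\dagger\{A(\hat F^{(-k)},\hat G^{(-k)})\}\convp \E\{A(F_0,G_0)\}$ and likewise for $B$. Hence the numerator and denominator converge to $\E\{A(F_0,G_0)\}$ and $\E\{B(F_0,G_0)\}=\beta_0^{-1}\neq0$, and Theorem~\ref{thm:DR} (with $F=F_0$, $G=G_0$) identifies their ratio as $\theta_0$, giving $\hat\theta_{cf}\convp\theta_0$.

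For part (ii), I plug $\theta_0$ into the estimating function and decompose, for each fold,
\begin{align}
\frac{1}{n}\sum_{i\in\I_k} U(O_i;\theta_0,\hat F^{(-k)},\hat G^{(-k)})
&= \frac{1}{n}\sum_{i\in\I_k} U(O_i;\theta_0,F_0,G_0) + T_{1,k} + T_{2,k},
\end{align}
where $T_{1,k}$ is the difference $U(\hat F^{(-k)},\hat G^{(-k)})-U(F_0,G_0)$ centred at its conditional mean and $T_{2,k}=(|\I_k|/n)\,\E_\dagger\{U(O_\dagger;\theta_0,\hat F^{(-k)},\hat G^{(-k)})\}$, using $\E\{U(\theta_0;F_0,G_0)\}=0$. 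Summed over the fixed number $K$ of folds, the first terms give the oracle average $n^{-1}\sum_{i=1}^n U(O_i;\theta_0,F_0,G_0)$, whose $\sqrt n$-scaling is asymptotically $N(0,\E\{U^2(\theta_0,F_0,G_0)\})$ by the classical central limit theorem. The term $T_{1,k}$ is, conditionally on the out-of-fold data, an average of i.i.d.\ mean-zero summands, so its conditional variance is $O(1/n)$ times $\E_\dagger\{(U(\hat F^{(-k)},\hat G^{(-k)})-U(F_0,G_0))^2\}$, which tends to zero by Assumption~\ref{assump:uniformcons2}; conditional Chebyshev then yields $\sqrt n\,T_{1,k}=o_p(1)$, the standard Donsker-free cross-fitting argument.

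The main obstacle is the bias term $T_{2,k}$. Here I would re-run the computation behind the population double robustness of Theorem~\ref{thm:DR}, but retain the nuisance errors explicitly and drop the outer expectation. Because $\bar M_Q(\cdot;G_0)$ is a genuine backwards martingale whereas $\bar M_Q(\cdot;\hat G^{(-k)})$ carries a mis-specified compensator, I would write $d\bar M_Q(v;\hat G^{(-k)})=d\bar M_Q(v;G_0)+\{d\bar A_Q(v;G_0)-d\bar A_Q(v;\hat G^{(-k)})\}$; the martingale piece has conditional mean zero against the bounded predictable integrand, and combining the compensator-mismatch piece with the first term of $U$ should show that $\E_\dagger\{U(O_\dagger;\theta_0,\hat F^{(-k)},\hat G^{(-k)})\}$ equals exactly the integrand defining the out-of-sample cross integral product, a bilinear functional of $a(\cdot,\cdot;\hat F^{(-k)})-a(\cdot,\cdot;F_0)$ and $1/\hat G^{(-k)}-1/G_0$ that vanishes whenever either factor is zero (the mixed-bias property underlying Theorem~\ref{thm:DR}). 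Taking absolute values and the outer expectation then bounds $|T_{2,k}|$ by $(|\I_k|/n)\,\D_\dagger(\hat F^{(-k)},\hat G^{(-k)};F_0,G_0)$, which is $o(n^{-1/2})$ by Assumption~\ref{ass:prodrate}, so $\sqrt n\,T_{2,k}=o_p(1)$ by Markov. Collecting the pieces and dividing by the denominator, which converges to $\beta_0^{-1}$ as in part (i), gives $\sqrt n(\hat\theta_{cf}-\theta_0)\convd N(0,\beta_0^2\,\E\{U^2\})=N(0,\E\{\varphi^2\})$, the efficiency bound of Proposition~\ref{thm:EIF}. Finally, the same conditioning-plus-uniform-consistency argument shows $\hat\beta_{cf}\convp\beta_0$ and $n^{-1}\sum_k\sum_{i\in\I_k}U_i^2\{\hat\theta_{cf},\hat F^{(-k)},\hat G^{(-k)}\}\convp\E\{U^2(\theta_0,F_0,G_0)\}$, whence $\hat\sigma_{cf}^2\convp\sigma^2$. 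The genuinely delicate step is establishing the exact product representation of the conditional bias while controlling the backwards-martingale integral under a mis-specified $G$.
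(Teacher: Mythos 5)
Your proposal is correct in substance and uses the same essential ingredients as the paper's proof: the conditional independence created by cross-fitting, the backwards-martingale property (Proposition 2 of the Supplementary Material), the population double robustness of $U$ (Theorem 1), the product-rate Assumption 8 for the cross term, and the CLT plus Slutsky with the denominator converging to $1/\beta_0$. The difference is organizational. The paper expands each fold-wise average into four terms $\B_1'+\B_2'+\B_3'+\B_4'$: the oracle term, the exact double difference (cross term), and the two single-nuisance error terms $U(\theta_0,\hat F',G_0)-U(\theta_0,F_0,G_0)$ and $U(\theta_0,F_0,\hat G')-U(\theta_0,F_0,G_0)$. The single-error terms need no centering because their conditional means are exactly zero (martingale property for the $F$-error term; integration by parts together with population double robustness for the $G$-error term), so each is killed by a law-of-total-variance plus Chebyshev argument under Assumption 7; only $\B_2'$ carries bias, and it is algebraically identical to the integrand of $\D_\dagger$, so Assumption 8 plus Markov finishes. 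You instead use the DML-style split into oracle $+$ centered fluctuation $T_{1,k}$ $+$ conditional bias $T_{2,k}$, which localizes all bias in one term but obliges you to prove the "exact product representation" of $\E_\dagger\{U(O_\dagger;\theta_0,\hat F^{(-k)},\hat G^{(-k)})\}$ — the step you flag as delicate and leave as "should show." That step closes immediately with the paper's machinery: write $U(\hat F,\hat G)=U(F_0,G_0)+\{U(\hat F,G_0)-U(F_0,G_0)\}+\{U(F_0,\hat G)-U(F_0,G_0)\}+\mathrm{DD}$; the conditional means of the first three terms vanish by applying Theorem 1 with the conditionally fixed nuisances (this tacitly requires $\hat F^{(-k)},\hat G^{(-k)}$ themselves to satisfy the overlap bounds, an assumption the paper also uses implicitly), and $\mathrm{DD}$ equals the cross integral, so $\E_\dagger\{U(\theta_0;\hat F,\hat G)\}=\E_\dagger(\mathrm{DD})$ is bounded by the inner expectation defining $\D_\dagger$. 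Your alternative route via the compensator mismatch $d\bar M_Q(v;\hat G)=d\bar M_Q(v;G_0)+d\bar A_Q(v;G_0)-d\bar A_Q(v;\hat G)$ also works, but it forces you to evaluate $\E_\dagger$ of the inverse-weighted term $\{\nu(T_\dagger)-\theta_0\}/\hat G(T_\dagger|Z_\dagger)$ under the observed-data law, i.e., to redo the change-of-measure algebra of Theorem 1 conditionally, which is strictly more work for the same conclusion; your handling of $T_{1,k}$ (conditional Chebyshev under uniform consistency) is exactly the paper's treatment of $\B_3'$ and $\B_4'$, merged into one term.
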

The proof of the Theorem is given in the Supplementary Material.

%\begin{remark}
%Roughly speaking, the product rate condition in \ref{ass:prodrate} 
%other one converges fast enough. 
%It can be shown that the product rate condition in Assumption \ref{ass:prodrate} is satisfied when one of $\hat F$ or $\hat G$ is asymptotically linear\lily{not sure how this is helpful}\yuyao{Maybe to show it is a weaker assumption than the asymptotic linearity assumption assumed in the previous subsection?}.  

%Nevertheless, under the setting 
%Without left truncation, random survival forest \citep{cui2017consistency} and splines \citep{kooperberg1995l2} are shown to achieve faster than $n^{-1/4}$ rate of convergence when the hyper-parameters involved are properly chosen. 
%\yuyao{Not sure if it is fine to mention $n^{-1/4}$ rate here to support the product error rate we need may be achieved, and then in the next remark say a lot of machine learning cannot achieve $n^{-1/4}$ to highlight the advantage of rate DR. }

%\end{remark}

\begin{remark}
   Neyman orthogonal scores \citep{neyman1959optimal} have long been considered in the context of  slower than root-$n$ estimation for  nuisance parameters   \citep{newey1994asymptotic, bickel1993efficient}. 
    Under  regularity conditions, it can be shown that all influence curves are Neyman orthogonal scores \citep{rava2023doubly}, so is the efficient influence curve and thus the estimating function $U$ in \eqref{eq:score}. In addition, it can be shown that all doubly robust estimating functions are Neyman orthogonal scores.    With cross-fitting Neyman orthogonal scores can suggest root-$n$ consistent estimators when the nuisance parameters  are estimated at faster than $n^{-1/4}$ rate \citep{newey1994asymptotic, rotnitzky2021characterization}.
  %  \yuyao{rewrite the following. }
However, as pointed out by \citet{bilodeau2022blair}, \citet{ogburn2022elizabeth} and \citet{tang2022yanbo}, the $n^{-1/4}$ rate requirement can rule out a number of data adaptive machine learning methods. 
%, although such a rate condition is commonly assumed in the debiased machine learning \citet{chernozhukov2018double} and targeted maximum likelihood estimation literature\citep{van2011targeted}. 
 With the help of rate double robustness, $\hat\theta_{cf}$ alleviates the $n^{-1/4}$ rate requirement; it allows  one of $\hat F$ and $\hat G$ to converge arbitrarily slow, as long as the the product error rate is faster than root-$n$.
 In practice few convergence rate results are known for nonparametric methods under left truncation, and in numerical studies later we will investigate empirically the performance of some of these methods.  
\end{remark}

%---------------------------------------------------------------------------------------------------------------------------%
\section{Estimation under right censoring}\label{sec:censoring}

%\yuyao{May need editting. Remove `extension' and `extended', etc.?}

\subsection{Censoring before truncation}\label{sec:c1}

Now we extend the above estimators to handle right censoring in addition to left truncation. 
Denote $C$ the censoring time,  
and let $X = \min(T,C)$ be the possibly censored event time, and $\Delta = \mathbbm{1}(T<C)$ the event indicator. 
As discussed in \citet{qian2014assumptions}, there are two scenarios, depending on whether censoring can occur before truncation. In this subsection, we consider the first scenario where censoring can happen before left truncation, that is, $\PP^*(C<Q)>0$. 
As an example, in the central nervous system (CNS) lymphoma data that  will be analyzed in the applications later, the time to death is left truncated by the time to relapse, and censoring is due to loss to follow-up which can happen before  a patient progresses to relapse.

In this first censoring scenario, subjects with $Q<X$ are included in the data, and 
$X$ plays the role of $T$ in the case without the censoring.
We observe $(Q,X,\Delta,Z)$ for each subjects. Denote $F_x$ the conditional  CDF of $X$ given $Z$ in the full data, and  $S_c(t) = \PP^*(C>t)$. 
%the survival function of $C$ in the full data. 
We  assume that $C$ is independent of $(Q,T,Z)$ in the full data, and assume the positivity condition that $S_{c}(T)>\delta_c$ a.s. for some $\delta_c>0$.

Due to the above data generating mechanism, we construct an estimating function by first applying inverse probability of censoring weights to the full data estimating function $\nu(T) - \theta$, and then adjusting for left truncation. 
As such, an extended estimating function is 
\begin{align}
    U_{c1}(\theta; F_x,G, S_c) = \frac{\Delta\{\nu(X) - \theta\}}{S_c(X)G(X|Z)} - \int_0^\infty \frac{\int_0^v \Delta \{\nu(x)-\theta\}/S_c(x) dF_x(x|Z)}{ 1-F_x(v|Z) } \cdot \frac{d\tilde M_{Q}(v;G)}{G(v|Z)}, \label{eq:score_c}
\end{align}
where $\tilde M_{Q}(t;G) = \tilde N_Q(t) - \int_0^t \mathbbm{1}(Q\leq v < X) \alpha(v|Z)dv$, and $\tilde N_Q(t) = \mathbbm{1}(t\leq Q<X)$. 
As we mentioned above, $X$ plays the role of $T$ in the case without right censoring, so it is not surprising that the nuisance parameters involved  are $F_x$ and $G$. 
Again we use subscript 0 to denote the true values of the parameters. 
We show in the Supplementary Material that $U_{c1}$ is doubly robust in the sense that $\E\left\{U_{c1}(\theta_0; F_x,G, S_{0c})\right\} = 0$ if $F_x = F_{0x}$ or $G = G_0$.

For estimation of $\theta$ we need to first estimate $F_x$, $G$ and $S_c$. 
In particular, $F_x$ can be estimated using $(Q, X, Z)$, where $X$ is left truncated by $Q$. 
Since $C$ is left truncated by $Q$ and right censored by $T$,
$ S_c$ can be estimated from  $(Q, X, 1-\Delta)$ by the product-limit estimator \citep{wang1991nonparametric, qian2014assumptions, vakulenko2022nonparametric}. 
And finally, similar to before, $G$ can be estimated from data $(\tau-X, \tau-Q, Z)$ on the reverse time scale. 
%where $\tau-Q$ is left truncated by $\tau-X$, and $\tau$ is a fixed time that is larger than the maximum of the observed event times. 
This way we can obtain a model doubly robust estimator and a rate doubly robust estimator, as well as extensions of the IPW and regression-based estimators, the expressions of which are all provided in the Supplementary Material. 
%\yuyao{added the following sentence}
For the rate doubly robust estimator, $S_c$ is estimated using the same cross-fitting data as for estimating $F_x$ and $G$.

\begin{remark}
    As discussed before the IPW estimator and the `CW' estimator are the same when there is no censoring. 
    In the Supplementary Material, the extended IPW estimator under censoring is 
    %a weighted average where the weights are the products of the inverse probability of truncation weights and the inverse probability of censoring weights. This estimator, 
    however  different from the extended `CW' estimator in \citet{vakulenko2022nonparametric} for this censoring scenario.
\end{remark}

\subsection{Censoring after truncation}\label{sec:c2}

The second censoring scenario is $\PP^*(Q<C) = 1$. 
This is the case for many studies where censoring is due to loss to follow-up after participants enter the study. 
The assumption $\PP^*(Q<C) = 1$ implies that the left truncation time and the censoring are dependent \citep{vakulenko2022nonparametric}, unless their supports do not overlap. 
Following the setup in \citet{qian2014assumptions}, we consider censoring on the residual time scale.
In particular, denote $D=C-Q$ the residual censoring time. Therefore, $C = Q+D$, $X = \min(T,Q+D)$, and $\Delta = \mathbbm{1}(T<Q+D)$.
Denote $S_D(t) = \PP(D>t)$.  
We assume that $D$ is independent of $(Q,T,Z)$  in the observed data, and  the positivity condition that $S_{D}(T-Q)>\delta_D$ a.s. for some $\delta_D>0$. 

Under this scenario, subjects are included in the data set only if $Q<T$, and subsequently the subjects are possibly right censored. Therefore, we extend the estimating function \eqref{eq:score} by first handling left truncation,  then applying inverse probability of censoring weighting:  
\begin{align}
    U_{c2}(\theta;F,G,S_D) = \frac{\Delta}{S_{D}(X-Q)} \left[\frac{\nu(X)-\theta}{G(X|Z)} - \int_0^\infty \frac{\int_0^v \{\nu(t) -\theta\} dF(t|Z) }{ 1-F(v|Z) } \cdot \frac{d\tilde M_Q(v;G)}{G(v|Z)} \right],
\end{align}
where $\tilde M_Q$ is defined in the previous subsection.
We show in the Supplementary Material that $U_{c2}$ is doubly robust in the sense that 
$\E\left\{U_{c2}(\theta_0;F,G,S_{0D})\right\} = 0$ if $F = F_0$ or $G = G_0$. 

As before, we estimate $\theta$ by first estimating $F$, $G$ and $S_D$. 
%\yuyao{changed the following}
In particular, $F$ can be estimated using existing regression methods for left truncated and right censored data;
$S_D$ can be estimated by the Kaplan-Meier estimator with the right censored data $(X-Q, 1-\Delta)$ on the residual time scale, where $(C-Q)$ is right censored by $(T-Q)$.
For estimateion of $G$ 
%directly from the original observed sample because 
after reversing the time scale, 
%$(\tau-Q)$ is left truncated by $(\tau-T)$, but the left truncation time 
$(\tau-T)$ is only observed for uncensored subjects. 
We therefore apply inverse probability of censoring weighting following \citet{vakulenko2022nonparametric}, and  restrict  to the uncensored subjects and assign them case weights $1/\hat S_D(X-Q)$. 
%The weighted uncensored sample can represent a random sample from the observed data distribution without censoring, with which $G$ can be estimated as in the case without censoring.
This way we can obtain doubly robust estimators, as well as extensions of the IPW and regression-based estimators. The expressions of these estimators are provided in the Supplementary Material. 
For the rate doubly robust estimator, $S_D$ is estimated using the same cross-fitting data as for estimating $F$ and $G$.

%---------------------------------------------------------------------------------------------------------------------------%

\section{Simulation}\label{Sec:simulation}

In this section, we study the finite sample performance of the proposed estimators.
We consider sample size $n=1000$ for the observed data sets, and 500 data sets are simulated, which gives a margin of error of about $+/- 1.91\%$ for the coverage probability of nominal 95\% confidence intervals. 
We focus on the setting without censoring here. Simulation results for the two censoring scenarios considered in Section \ref{sec:censoring} are shown in the Supplementary Material.

\subsection{Under semiparametric models with asymptotic linearity}\label{sec:simu_mdr}

We study the finite sample performance of $\hat\theta_{dr}$ when semiparametric models are used to estimate $F$ and $G$. In particular, we consider using the \citet{cox1972regression} proportional hazards model, under which Assumption \ref{assump:if} is satisfied as we mentioned in Section \ref{sec:est_mdr}.
We compare $\hat\theta_{dr}$ with the IPW estimator and the regression-based estimators. We also consider the naive estimator that is the simple average of the $\nu(T_i)$'s from the observed data without accounting for left truncation, and the oracle full data estimator that is the average of the $\nu(T_i)$'s from the full data. 

We generate the full data $(Q,T,Z)$, where $Z = (Z_1,Z_2)$ with 
$Z_1\sim \text{Uniform}(-1,1)$ and  
$Z_2\sim \text{Bernoulli}(0.5)-0.5$.
Following that $Q$ and $T$ are generated from seven different scenarios that are summarized in Table \ref{tab:scenarios}. The full details of the seven scenarios are described in the Supplementary Material.
%The subjects are included in the observed sample only if $Q < T$. 
For the constants involved in the seven scenarios, we take $\tau = 20$, $\tau_1 = 5$, and $\tau_2 = 8$, resulting in about 55\% truncation for Scenarios 1, 2, 4 and 6, 66\% truncation for Scenario 3, 75\% truncation for Scenario 5, and 85\% truncation for Scenario 7, after excluding  subjects with $Q > T$.

Our estimand $\theta = \PP^*(T>7)$  is 0.2370 for Scenarios 1, 2 and 3, 0.2441 for Scenarios 4 and 6, and 0.0976 for Scenarios 5 and 7, each computed from a full data sample with sample size $10^7$.

The nuisance parameters $F$ and $G$ are estimated using approaches described in Section \ref{sec:estimation}.
Specifically, we fit 
$\lambda_1(t|Z_1,Z_2) = \lambda_{01}(t) e^{\beta_{11}Z_1+\beta_{21}Z_2}$
for $T$, and fit 
$\lambda_{2}(t|Z_1,Z_2) = \lambda_{02}(t) e^{\beta_{12}Z_1+\beta_{22}Z_2}$
for $(\tau-Q)$, both with risk set adjustment for left truncation.
Obviously the proportional hazards model for $F$ is correctly specified for Scenarios 1, 2 and 3, moderately misspecified for Scenarios 4 and 6, and severely misspecified for Scenarios 5 and 7; the proportional hazards model for $G$ is correctly specified for Scenarios 1, 4 and 5, moderately misspecified for Scenarios 2 and 6, and severely misspecified for Scenarios 3 and 7. 

We report the bias, \%bias  $ = \text{(bias/truth)}\times 100\%$, empirical standard deviation (SD), average of the model-based standard errors (SE), average of the bootstrapped standard errors (boot SE), and the coverage probabilities (CP) of the nominal 95\% confidence intervals constructed using the model-based SE and the boot SE, respectively. The model-based SE's for $\hat\theta_{dr}$ is from the variance estimator in Theorem \ref{thm:mdr}.
The model-based SE for $\hat\theta_{\text{IPW.Q}}$ is from the sandwich variance estimator assuming that the weights are known. Bootstraps are carried out using resampling with replacement, with 100 bootstrap replicates, which are typically adequate for boostrapped variance estimates \citep[Page 52]{efron1994introduction}.

The simulation results are are shown in Figure \ref{fig:simu_7scenarios} as well as Table \ref{tab:simu_cox_naive_full}, with full details in the table. 
%summarized in Table \ref{tab:simu_cox_naive_full}, with additional visualization in Figure \ref{fig:simu_7scenarios} of the Supplementary Material. 
As expected, the full data estimator has the smallest bias and close to perfect coverage in all scenarios. 
The estimator $\hat\theta_{dr}$ has in general small bias and attain close to nominal coverage rate with boot SE when at least one of the models for $F$ and $G$ is correctly specified. %(Scenarios 1 - 5). 
When both models are correctly specified, % (Scenario 1), 
the average model-based SE is very close to the empirical SD and the coverage associated with the model-based SE is close to 95\%. 
When only one model is correctly specified,
%(Scenarios 2, 3, 4 and 5), 
the average model-based SE is smaller than SD and the corresponding coverage is lower. 
This is more apparently in Scenarios 3 and 5, where the model misspecifications are severe. 

For the other estimators, we see that $\hat \theta_{\text{IPW.Q}}$ has a large bias and poor coverage when the model for $G$ is misspecified, especially for Scenarios 3 and 7 where the model misspecification is severe; and similarly, $\hat \theta_{\text{Reg.T1}}$ and $\hat \theta_{\text{Reg.T2}}$ have large bias and poor coverage when the model for $F$ is misspecified, especially for Scenarios 5 and 7 where the model misspecification is severe. 
For all scenarios, the naive estimator substantially overestimates the parameter of interest, which coincides with the direction of left truncation bias.

\subsection{With machine learning methods}\label{sec:simu_rdr}

We study the finite sample performance of $\hat\theta_{cf}$ with machine learning methods to estimate $F$ and $G$.
In particular, we consider using the relative risk forest (`RF') for left truncated and right censored data \citep{yao2020ensemble}.
We compare it with $\hat\theta_{dr}$ that uses semiparametric models to estimate $F$ and $G$, as well as the IPW and regression-based estimators that use semiparametric models or machine learning methods to estimate $F$ or $G$, whichever is involved. 

We generate the full data $(Q,T,Z)$, where $Z = (Z_1,Z_2)$ is generated in the same way as in Section \ref{sec:simu_mdr}.
Following that $T$ and $(\tau_2-Q)$ are generated from the following  proportional hazards models, respectively:
$$\lambda_1(t|Z_1,Z_2) = \lambda_{01}(t) e^{0.3 Z_1+0.5 Z_2}, \quad \lambda_2(t|Z_1,Z_2) = \lambda_{02}(t) e^{0.3 Z_1+0.5 Z_2},$$
where the baseline hazards $\lambda_{01}(t) = 2e^{-1}(t-\tau_1)$ if $t\geq \tau_1$, and 0 otherwise; 
$\lambda_{02}(t)$ is the hazard function associated with the $\text{Uniform}(0,\tau_2)$ distribution. %\lily{Yuyao check}\yuyao{Rewrited.}
%$\lambda_{02}(t) = 1/\tau_2$ if $t\leq \tau_2$, and 0 otherwise. \yuyao{Maybe say `$\lambda_{02}(t)$ is the baseline hazard function that associate with $\text{Uniform}(0,\tau_2)$ distribution.}
We take $\tau_1 = 1$ and $\tau_2 = 4.5$, resulting in about 29.5\% truncation after excluding  subjects with $Q > T$.

Our estimand $\theta = \PP^*(T>3) = 0.576$, computed from a simulated full data set with sample size $10^7$.
Again the nuisance parameters $F$ and $G$ are estimated using approaches described in Section \ref{sec:estimation}, with $\tau = \max_i T_i +1$ when computing the reversed-time variables $(\tau-T)$ and $(\tau-Q)$.
The models we consider are: 
\begin{itemize}
\item 			Cox1: $\lambda(t|Z_1,Z_2) = \lambda_{0}(t) \exp\{\beta_{1}Z_1+\beta_{2}Z_2\}$ ;
\item			Cox2: $\lambda(t|Z_1,Z_2) = \lambda_{0}(t) \exp\{\beta_{1}Z_1^2+\beta_{2}Z_1Z_2\}$ ;
\item 			RF: \texttt{LTRCforests::ltrcrrf} with \texttt{ntree = 100} and \texttt{mtry = 2}. 
\end{itemize}
In particular, the relative risk forest (`RF') for left truncated and right censored data is implemented in the R package `\texttt{LTRCforests}', 
and the estimates $\hat G$ and $(1-\hat F)$ from `RF' are further bounded below by 0.05 to ensure the empirical overlap condition.

We consider the doubly robust estimators $\hat\theta_{dr}$ and $\hat\theta_{cf}$ (with  $K = 10$), denoted using the following convention;  
for instance, `dr-Cox1-Cox2' corresponds to $\hat\theta_{dr}$ with $F$ estimated from model `Cox1' and $G$ estimated from model `Cox2'. 
Obviously `Cox1' is correctly specified for both $F$ and $G$, while the  `Cox2' is misspecified for both $F$ and $G$. 
We also consider the IPW estimator $\hat\theta_{\text{IPW.Q}}$, and the regression-based estimators  $\hat\theta_{\text{Reg.T1}}$ and $\hat\theta_{\text{Reg.T2}}$, denoted using similar convention as above. 
In addition, we  include  the product-limit (PL) estimator \citep{wang1991nonparametric} that assumes random left truncation, as well as the naive estimator and the oracle full data estimator described in Section \ref{sec:simu_mdr}.

As before, we report the bias, SD, average of the model-based SE, average of the boot SE, and the CP of the nominal 95\% confidence intervals constructed using the model-based SE and the boot SE, respectively. The model-based SE for $\hat\theta_{cf}$ is from the variance estimator in Theorem \ref{thm:rdr}.
The model-based SE's for $\hat\theta_{dr}$ and $\hat\theta_{\text{IPW.Q}}$ as well as the boot SE's are computed in the same way as in Section \ref{sec:simu_mdr}. 

 The simulation results are summarized in Table \ref{tab:simu}, with additional visualization in Figure \ref{fig:simu} of the Supplementary Material. 
As expected, the full data estimator has the smallest bias and close to perfect coverage of the CI's. 
The estimator $\hat\theta_{dr}$ also has small biases and attain close to nominal coverage rates with boot SE, when at least one of the two models is correctly specified. 
When both models are correctly specified, the average model-based SE is very close to the empirical SD and the coverage associated with model-based SE is close to 95\%.   
When only one model is correctly specified, the average model-based SE is smaller than SD and the corresponding coverage is lower. 
We also see that `cf-RF-RF' has a small bias and good coverage rate associated with the model-based SE.

For the other estimators, we see that  when the model for $G$ is misspecified, $\hat \theta_{\text{IPW.Q}}$ has a large bias and poor coverage; and similarly,  when the model for $F$ is misspecified, $\hat \theta_{\text{Reg.T1}}$ and $\hat \theta_{\text{Reg.T2}}$ have large biases and poor coverages. 
In addition, the above three estimators have large biases when the nuisance parameter involved is estimated by `RF', which is known to have slower than root-$n$ convergence.
The product-limit estimator  has a large bias, which is expected because it ignores the dependency between $Q$ and $T$.
The naive estimator substantially overestimates the parameter of interest, which coincides with the direction of left truncation bias.

%---------------------------------------------------------------------------------------------------------------------------%
\section{Applications}\label{sec:application} 

We apply the proposed estimators to analyze the data from a study on central nervous system (CNS) lymphoma and the data from Honolulu Asia Aging Study. The two data sets contain the two different censoring scenarios discussed earlier. 

% \vskip .2in
% \noindent {\it CNS Lymphoma Data}

\subsection{CNS Lymphoma Data}

We analyze the data from a study on central nervous system (CNS) lymphoma 
\citep{wang2015relapse}, which is publicly available in the Supplementary Material of \citet{vakulenko2022nonparametric}. 
The data set initially contained 172 immunocompetent patients with primary CNS diffuse large B-cell lymphoma that achieved radiographic complete response
(CR) to induction therapy. Within these patients, 98 progressed to relapse, 8 died without relapse, and 66 did not relapse by the end of the follow-up.
The event time of interest is the overall survival time measured from the initiation of treatment at the time of diagnosis. 

Following \citet{vakulenko2022nonparametric}, we restrict the data set to the 98 patients that progressed to relapse, where the overall survival time was left truncated by the time to relapse. Since patients could be lost to follow-up before progressing to relapse, the appropriate censoring model was the one described in Section \ref{sec:c1}.
Among the 98 patients that progressed to relapse, 52 died during the follow-up. The patients who did not die by the end of follow-up were also right censored.
Since it was possible that patients did not progress to relapse by the end of follow-up, the appropriate censoring model must allow censoring prior to left truncation, that is, the one described in Section \ref{sec:c1}.

As discussed in \citet{vakulenko2022nonparametric}, it is plausible that time to death and time to relapse are dependent, and treatment is strongly associated with both. We therefore follow \citet{vakulenko2022nonparametric} and include the two binary treatment variables as covariates for the analysis: only MTX-based chemotherapy was used initially (yes/no), and initial radiation therapy was used (yes/no). 
We examine the overlap assumption for this data set in the Supplementary Material 
and show that it is satisfied empirically. 

We consider estimating the overall survival curve for 
%patients who have survived at least
$t \geq$ 8.8 months, the minimum event time in the data set. 
%following the convention of \citet{wang1989semiparametric}.
We consider the estimators $\hat\theta_{dr}$, $\hat\theta_{\text{IPW.Q}}$, $\hat\theta_{\text{Reg.T1}}$ and $\hat\theta_{\text{Reg.T2}}$, with the nuisance parameters $F_x$ and $G$  estimated from the  proportional hazards model. We also consider $\hat\theta_{cf}$ with 10-fold cross-fitting, and  $F_x$ and $G$ estimated from the relative risk forest with \texttt{mtry = 2} and \texttt{ntree = 2000}. The hyperparameter \texttt{ntree} is taken to be much larger than that in the simulation in order to obtain more stable estimates with the small sample size. 
As  comparisons, we consider the PL estimator 
%\yuyao{Shall we directly use the shorthand `PL estimator' since we have mentioned it in the simulation section?}
that assumes random left truncation, as well as the naive Kaplan-Meier estimator that ignores left truncation. 
In addition, we also compute the `CW' estimator from \citet{vakulenko2022nonparametric}. 

Figure \ref{fig:CNS} shows the estimated survival curves from different estimators.  The corresponding 95\% bootstrap confidence intervals are shown in the Supplementary Material. %\yuyao{Moved the following here}
Due to the small sample size, in certain bootstrap runs $\hat\theta_{dr}$ and $\hat\theta_{cf}$ can be greater than one  at early 
%\lily{how early?}\yuyao{Based on the plot I sent in Wechat, before (approximately) 40 months for $\hat\theta_{dr}$ and (approximately) 55 months for $\hat\theta_{cf}$.  }
times up to about 50 months and we force them to be one in those cases. 
%because of the possible positivity violation in the bootstrap samples. We set the upper bounds of the bootstrap confidence intervals to be 1 if they are larger than 1 because the estimands are survival probabilities, which take values in $[0,1]$.  \yuyao{End edit } 
In addition, the estimates and their 95\% confidence intervals at  36, 60, 120, and 180 months are also tabulated in the Supplementary Material.

From Figure \ref{fig:CNS}, we see that the regression-based estimates and $\hat\theta_{dr}$ are very close to each other, suggesting that the Cox model might be a reasonable fit to $F_x$.
Test of proportionality in the Supplementary Material using cumulative martingale residuals \citep{lin1993checking}  with risk set adjustment for left truncation detected no violations, although admittedly the test is probably under powered with such a small sample size. 
The PL estimate is visibly lower than the other estimates, consistent with the violation of random truncation assumption. The KM estimate is substantially larger than these other estimates, reflecting the known left truncation bias.
The two IPW estimators, $\hat\theta_{\text{IPW.Q}}$ and the `CW' estimator, are seen to be quite different for this data set.

% \vskip .2in
% \noindent {\it HAAS data}

\subsection{HAAS data}

We analyze the data collected between 1965 and 2012 from the Honolulu Heart Program (HHP, 1965-1990) and the subsequent Honolulu Asia Aging Study (HAAS, 1991-2012) \citep{p2012honolulu, zhang2022marginal}.
Age is the time scale of interest, and we are interested in the cognitive impairment-free survival, that is, the age to moderate cognitive impairment or death, whichever happened first.
Only subjects that were alive and did not have cognitive impairment at the start of HAAS were included in the data set, therefore their ages for impairment-free survival were left truncated by their ages at the start of HAAS. Among the 2560 subjects in the data set, 463 (18.1\%) were right censored due to loss to follow-up. Since loss to follow-up may only happen after subjects entered HAAS, this is censoring scenario 2 described in Section \ref{sec:c2}. 

We detect dependency between age for impairment-free survival and age when entering HAAS using the conditional Kendall's tau test \citep{tsai1990testing, martin2005testing}, with tau = 0.0426 and $p$-value of 0.0014. Therefore, we consider the risk factors collected during HHP that were thought to affect the outcomes including years of education, APOE (a `gene'  related to Alzheimer’s disease) positive (yes/no), mid-life alcohol consumption (heavy/not heavy), 
%\yuyao{updated the analysis with the new threshold for heavy drinker}
and mid-life cigarette consumption (packs per year). 
We examine the overlap assumption for this data set in the Supplementary Material and show that it is satisfied empirically. 

%As mentioned in the previous data analysis, the left tail of the survival function is not identifiable, so 
We focus on estimating the cognitive impairment-free survival 
%among subjects that have survived longer than
for $t\geq$ 72.6 years old, the minimum event time in the data set. 
%following the convention of \citet{wang1989semiparametric}. 
We consider the estimators $\hat\theta_{dr}$, $\hat\theta_{\text{IPW.Q}}$, $\hat\theta_{\text{Reg.T1}}$ and $\hat\theta_{\text{Reg.T2}}$, with the nuisance parameters $F$ and $G$  estimated from the  proportional hazards model with the four covariates above. We do not implement $\hat\theta_{cf}$ because to the best of our knowledge, there is no existing R package of random forests for left truncated data that can incorporates case weights, which is needed in order to estimate $G$  as discussed in Section \ref{sec:c2}. As  comparisons, we consider the PL estimator 
that assumes random left truncation, as well as the naive Kaplan-Meier estimator that ignores left truncation.

Figure \ref{fig:HAAS} shows the estimated survival curves on the age interval 84 to 96 from different estimators and the corresponding 95\% bootstrap confidence intervals. The entire survival curves are shown in the Supplementary Material. In addition, the estimates and their 95\% confidence intervals at ages 80, 85, 90, and 95 years old are also tabulated in the Supplementary Material.

From Figure \ref{fig:HAAS}, we see that $\hat\theta_{dr}$, $\hat\theta_{\text{IPW.Q}}$ and $\hat\theta_{\text{Reg.T1}}$ are close to each other. 
We also see that the $\hat\theta_{\text{Reg.T2}}$ is visibly larger than $\hat\theta_{\text{Reg.T1}}$, especially for the later ages. 
%suggesting the misspecification of the model for $F$, since otherwise the two estimators are asymptotic equivalent as discussed in Section \ref{sec:est_mdr}.
Test using cumulative martingale residuals (see Supplementary Material)  detected violation of proportional hazards assumption for education and cigarette consumption in the model for $F$  at significance level 0.05, and education in the model for $G$. 
Model misspecification in this case appears to affect $\hat\theta_{\text{Reg.T2}}$ more than $\hat\theta_{\text{Reg.T1}}$; the difference between the two lies in the estimation of $\E^*\{\nu(T)|Z\}$ in the numerators, 
by $\int_0^{Q_i} \nu(t)  d\hat F(t|Z_i) + \nu(T_i)\{1-\hat F(Q_i|Z_i)\} $ for $\hat\theta_{\text{Reg.T1}}$,  %is more robust against model misspecification than
and $\int_0^\infty\nu(t) d\hat F(t|Z)$ for $\hat\theta_{\text{Reg.T2}}$. 
The latter seems to rely more on the correct $F$ model. 

Finally the PL estimate is substantialy larger, most likely due to violation of the random truncation assumption. The KM estimate is obviously larger than all the other estimates, reflecting the known left truncation bias.
%since subjects with early cognitive impairment or death tend not to be captured by the observed data. 

%---------------------------------------------------------------------------------------------------------------------------%
\section{Discussion}\label{sec:discussion} 

It has been observed in the literature \citep{van2003unified, tsiatis2006semiparametric, dukes2019doubly}  that influence curves can yield doubly robust estimating functions. However, the derivation of influence curves and  construction of the corresponding estimators can be non-trivial.
In this paper, we have derived the EIC for the mean of an arbitrarily transformed survival time in the presence of covariate-induced dependent left truncation, and constructed model doubly robust and rate doubly robust estimators from an EIC-based doubly robust estimating function.
The rate doubly robust estimator is constructed via cross-fitting, while the model doubly robust estimator does not require cross-fitting and is therefore more computationally efficient. 

\citet{rava2023doubly} 
and \citet{luo2022doubly} have also considered constructing doubly robust estimators  with one of the nuisance parameters estimated from a parametric or semiparametric model and the other from a nonparametric model. This situation is not covered in this paper. However, estimators constructed this way are not guaranteed to be root-$n$ consistent unless the the parametric or semiparametric model is correctly specified. As such, these estimators appear to have few advantages compared with the IPW or regression-based estimators. 

The extension to right censoring assumes that the (residual) censoring variable is independent of the other variables, which may not be true in practice. A weaker assumption is that the censoring time and the left truncation and event times are independent conditional on measured baseline covariates. In the latter case inverse probability of censoring weighting may still be leveraged; the resulting estimator, however, relies on the correct specification of a censoring model given the covariates. It would be of interest to explore robust estimators that also have protection against censoring model misspecification, although this is beyond the scope of the current work.

Finally, the quasi-independence assumption requires the dependency between the left truncation time and the event time to be completely captured by measured baseline covariates, which may be unrealistic %and subject to failure
in practice. Alternative approaches include the ``proximal identification framework'' \citep{tchetgen2020introduction}, which leverages domain knowledge to avoid the ``conditional independence'' structure imposed by Assumption \ref{ass:quasi-indpendent}. This can be a promising approach and has been considered in a survival setting with right-censoring \citep{ying2022proximal2}.

The code for the simulation and data analysis involved in this paper is available at \\
\href{https://github.com/wangyuyao98/left_trunc_DR}{https://github.com/wangyuyao98/left\_trunc\_DR}.

\section*{Acknowledgement}
%Acknowledgements should appear after the body of the paper but before any appendices and be as brief as possible subject to politeness. Information, such as contract numbers, of no interest to readers, must be excluded.
%\lily{where do we put acknowledgement of my grant?}\yuyao{Here please. }

This research was partially supported by  NIH/NIA grant R03 AG062432.

% Bibliography --------------------------------------------------------------------------------------------------------------------------------------------------------------%
%\renewcommand{\arraystretch}{0.6}
\bibliographystyle{chicago}
\bibliography{left_trunc}

% tables --------------------------------------------------------------------------------------------------------------------------------------------------------------%

\newpage

\begin{table}[H]
    \centering
	%\fontsize{8pt}{8pt}  % change the font size
	\renewcommand{\arraystretch}{0.6}
	\caption{Summary of the data generating mechanisms for Scenarios 1-7.}\label{tab:scenarios}
 \resizebox{\textwidth}{!}{%
\begin{tabular}{ll}
\toprule
Scenarios & Data-generating mechanism\\ 
\midrule
1. Both models correct &  Cox: $T \sim 0.3 Z_1+0.5 Z_2$. \\
 & Cox: $(\tau-Q)\sim 0.3 Z_1+0.5 Z_2$. \\
 \midrule
2. $Q$ model moderately wrong & Cox: $T\sim 0.3 Z_1+0.5 Z_2$. \\
  & Cox: $(\tau-Q) \sim 0.3Z_1+0.5Z_2 +0.6(Z_1^2-1/3)+0.5Z_1Z_2$. \\
\midrule
3. $Q$ model severely wrong  & Cox: $T \sim 0.3 Z_1+0.5 Z_2$. \\
 & mixture: Cox: $(\tau-Q)\sim 0.3Z_1+0.5Z_2 +0.6(Z_1^2-1/3)+0.5Z_1Z_2$,\\
& \quad\quad\quad\quad\quad\quad\quad\quad\quad\quad  if $0.3Z_1+0.5Z_2 <0$.\\
& $\quad\quad\quad \ \  $ AFT: $(\tau_2-Q) \sim 0.3Z_1+0.5Z_2 +0.6(Z_1^2-1/3)+0.5Z_1Z_2$,\\
& \quad\quad\quad\quad\quad\quad\quad\quad\quad\quad  if $0.3Z_1+0.5Z_2 \geq 0$.\\
\midrule
4. $T$ model moderately wrong & Cox: $T\sim 0.3Z_1+0.5Z_2 +0.6(Z_1^2-1/3)+0.5Z_1Z_2$. \\
 & Cox: $(\tau-Q) \sim 0.3 Z_1+0.5 Z_2$. \\
\midrule
5. $T$ model severely wrong & mixture: AFT: $(T-\tau_1)\sim 0.3Z_1+0.5Z_2 +0.6(Z_1^2-1/3)+0.5Z_1Z_2$,\\
 & \quad\quad\quad\quad\quad\quad\quad\quad\quad\quad  if $0.3Z_1+0.5Z_2 <0$.\\
& $\quad\quad\quad \ \ \  $ Cox: $T \sim 0.3Z_1+0.5Z_2 +0.6(Z_1^2-1/3)+0.5Z_1Z_2$,\\
& \quad\quad\quad\quad\quad\quad\quad\quad\quad\quad  if $0.3Z_1+0.5Z_2 \geq 0$.\\
& Cox: $(\tau-Q) \sim 0.3 Z_1+0.5 Z_2$. \\
\midrule
6. both models moderately wrong & Cox: $T\sim 0.3Z_1+0.5Z_2 +0.6(Z_1^2-1/3)+0.5Z_1Z_2$ \\
 & Cox: $Q\sim 0.3Z_1+0.5Z_2 +0.6(Z_1^2-1/3)+0.5Z_1Z_2$ \\
\midrule
7. both models severely wrong & mixture: AFT: $(T-\tau_1)\sim 0.3Z_1+0.5Z_2 +0.6(Z_1^2-1/3)+0.5Z_1Z_2$,\\
 & \quad\quad\quad\quad\quad\quad\quad\quad\quad\quad  if $0.3Z_1+0.5Z_2 <0$.\\
 & $\quad\quad\quad \ \ \  $ Cox: $T \sim 0.3Z_1+0.5Z_2 +0.6(Z_1^2-1/3)+0.5Z_1Z_2$,\\
& \quad\quad\quad\quad\quad\quad\quad\quad\quad\quad  if $0.3Z_1+0.5Z_2 \geq 0$.\\
& mixture: Cox: $(\tau-Q)\sim 0.3Z_1+0.5Z_2 +0.6(Z_1^2-1/3)+0.5Z_1Z_2$,\\
& \quad\quad\quad\quad\quad\quad\quad\quad\quad\quad  if $0.3Z_1+0.5Z_2 <0$.\\
& $\quad\quad\quad \ \  $ AFT: $(\tau_2-Q) \sim 0.3Z_1+0.5Z_2 +0.6(Z_1^2-1/3)+0.5Z_1Z_2$,\\
& \quad\quad\quad\quad\quad\quad\quad\quad\quad\quad  if $0.3Z_1+0.5Z_2 \geq 0$.\\
\bottomrule
\end{tabular}}
\end{table}

\newpage

\begin{table}[H]
	\centering
	%\fontsize{6pt}{6pt}  % change the font size
	\renewcommand{\arraystretch}{0.6}
	\caption{Simulation results for Scenarios 1-7; SD: standard deviation, SE: standard error, CP: coverage probability. }
 \label{tab:simu_cox_naive_full}
 \resizebox{0.8\textwidth}{!}{%
		\begin{tabular}{llrrcrr}
  \toprule
    Scenarios & Estimator & bias  & \%bias & SD & SE/boot SE & CP/boot CP \\ 
        \midrule
    1. & dr & -0.0015 & -0.6 & 0.021 & 0.020/0.020 & 0.946/0.938 \\ 
   Both models & IPW.Q & -0.0004 & -0.2 & 0.020 & 0.015/0.019 & 0.864/0.946 \\ 
   correct & Reg.T1 & -0.0004 & -0.2 & 0.019 & -  /0.019 & -  /0.942 \\ 
   & Reg.T2 & -0.0005 & -0.2 & 0.019 & -  /0.019 & -  /0.942 \\ 
   & naive & 0.2414 & 101.9 & 0.016 & 0.016/0.016 & 0.000/0.000 \\ 
   & full & -0.0005 & -0.2 & 0.009 & 0.009/0.009 & 0.954/0.946 \\ 
   \midrule
   2. & dr & -0.0017 & -0.7 & 0.020 & 0.018/0.019 & 0.930/0.948 \\ 
   $G$ model & IPW.Q & 0.0106 & 4.5 & 0.017 & 0.014/0.017 & 0.802/0.894 \\ 
   moderately & Reg.T1 & -0.0007 & -0.3 & 0.019 & -  /0.019 & -  /0.944 \\ 
   wrong & Reg.T2 & -0.0008 & -0.3 & 0.019 & -  /0.019 & -  /0.946 \\ 
   & naive & 0.2355 & 99.4 & 0.016 & 0.016/0.016 & 0.000/0.000 \\ 
   & full & -0.0006 & -0.2 & 0.009 & 0.009/0.009 & 0.950/0.954 \\ 
   \midrule
   3. & dr & -0.0027 & -1.1 & 0.023 & 0.018/0.024 & 0.896/0.950 \\ 
   $G$ model & IPW.Q & 0.0528 & 22.3 & 0.018 & 0.015/0.018 & 0.064/0.176 \\ 
   severely & Reg.T1 & -0.0019 & -0.8 & 0.023 & -  /0.024 & -  /0.948 \\ 
   wrong & Reg.T2 & -0.0020 & -0.8 & 0.023 & -  /0.024 & -  /0.950 \\ 
   & naive & 0.2877 & 121.4 & 0.016 & 0.016/0.016 & 0.000/0.000 \\ 
   & full & -0.0001 & -0.1 & 0.008 & 0.008/0.008 & 0.956/0.946 \\ 
   \midrule
   4. & dr & -0.0015 & -0.6 & 0.021 & 0.019/0.020 & 0.930/0.948 \\ 
   $F$ model & IPW.Q & -0.0007 & -0.3 & 0.021 & 0.016/0.020 & 0.872/0.944 \\ 
   moderately & Reg.T1 & 0.0113 & 4.6 & 0.018 & -  /0.017 & -  /0.894 \\ 
   wrong & Reg.T2 & 0.0110 & 4.5 & 0.018 & -  /0.017 & -  /0.894 \\ 
   & naive & 0.2446 & 100.2 & 0.016 & 0.016/0.016 & 0.000/0.000 \\ 
   & full & -0.0006 & -0.2 & 0.009 & 0.009/0.009 & 0.948/0.944 \\ 
   \midrule
   5. & dr & -0.0020 & -2.0 & 0.011 & 0.015/0.012 & 0.988/0.964 \\ 
   $F$ model & IPW.Q & -0.0009 & -0.9 & 0.009 & 0.007/0.010 & 0.840/0.956 \\ 
   severely & Reg.T1 & -0.0288 & -29.5 & 0.008 & -  /0.008 & -  /0.070 \\ 
   wrong & Reg.T2 & -0.0293 & -30.0 & 0.007 & -  /0.008 & -  /0.058 \\ 
   & naive & 0.2329 & 238.7 & 0.014 & 0.015/0.015 & 0.000/0.000 \\ 
   & full & -0.0001 & -0.1 & 0.005 & 0.005/0.005 & 0.956/0.956 \\ 
   \midrule
   6. & dr & 0.0230 & 9.4 & 0.019 & 0.017/0.018 & 0.720/0.734 \\ 
    Both models & IPW.Q & 0.0263 & 10.8 & 0.018 & 0.014/0.017 & 0.546/0.662 \\ 
   moderately & Reg.T1 & 0.0262 & 10.7 & 0.018 & -  /0.017 & -  /0.672 \\ 
   wrong & Reg.T2 & 0.0260 & 10.6 & 0.018 & -  /0.017 & -  /0.678 \\ 
   & naive & 0.2433 & 99.7 & 0.016 & 0.016/0.016 & 0.000/0.000 \\ 
   & full & -0.0005 & -0.2 & 0.009 & 0.009/0.009 & 0.950/0.952 \\ 
   \midrule
   7. & dr & -0.0570 & -58.4 & 0.006 & 0.006/0.007 & 0.000/0.004 \\ 
   Both models & IPW.Q & -0.0635 & -65.0 & 0.006 & 0.003/0.006 & 0.000/0.000 \\ 
   severely & Reg.T1 & -0.0589 & -60.4 & 0.006 & -  /0.006 & -  /0.000 \\ 
   wrong & Reg.T2 & -0.0590 & -60.5 & 0.006 & -  /0.006 & -  /0.000 \\ 
   & naive & 0.2217 & 227.2 & 0.015 & 0.015/0.015 & 0.000/0.000 \\ 
   & full & 0.0000 & 0.0 & 0.004 & 0.004/0.004 & 0.944/0.946 \\
   \bottomrule
    \end{tabular}}
\end{table}

\newpage

\begin{table}[H]
\centering
	%\fontsize{8pt}{8pt}  % change the font size
	\renewcommand{\arraystretch}{0.6}
	\caption{Simulation results for  different estimators; the models  in red are misspecified. True $\theta  =  P^*(T>3) =0.576$. SD: standard deviation, SE: standard error, CP: coverage probability.} \label{tab:simu}
		\begin{tabular}{lrcrr}
        \toprule
			Estimator & bias  & SD & SE/boot SE & CP/boot CP \\ 
        \midrule
			dr-Cox1-Cox1 & -0.0016 & 0.021 & 0.020/0.020 & 0.948/0.946 \\ 
			dr-Cox1-{\color{red}Cox2} & -0.0014 & 0.020 & 0.019/0.020 & 0.930/0.944 \\ 
			dr-{\color{red}Cox2}-Cox1 & -0.0010 & 0.020 & 0.019/0.020 & 0.938/0.946 \\
			dr-{\color{red}Cox2}-{\color{red}Cox2} & 0.0184 & 0.019 & 0.018/0.019 & 0.838/0.836 \\ 
			cf-RF-RF & 0.0032 & 0.021 & 0.023/0.025 & 0.966/0.976 \\
			\midrule
			IPW.Q-Cox1 & -0.0004 & 0.020 & 0.018/0.020 & 0.924/0.944 \\ 
			IPW.Q-{\color{red}Cox2} & 0.0184 & 0.018 & 0.017/0.019 & 0.814/0.832 \\
			IPW.Q-RF & -0.0064 & 0.022 & 0.019/0.022 & 0.886/0.956 \\
			\midrule
			Reg.T1-Cox1 & -0.0008 & 0.020 & - /0.020 & - /0.944 \\ 
			Reg.T1-{\color{red}Cox2} & 0.0183 & 0.018 & - /0.019 & - /0.842 \\ 
			Reg.T1-RF & -0.0073 & 0.022 & - /0.022 & - /0.934 \\ 
			\midrule
			Reg.T2-Cox1 & -0.0010 & 0.020 & - /0.020 & - /0.942 \\ 
			Reg.T2-{\color{red}Cox2} & 0.0181 & 0.018 & - /0.019 & - /0.844 \\ 
			Reg.T2-RF & -0.0070 & 0.022 & - /0.022 & - /0.940 \\
			\midrule
			PL & 0.0193 & 0.018 & - /0.018 & - /0.824 \\ 
			naive & 0.1389 & 0.014 & 0.014/0.014 & 0.000/0.000 \\ 
			full data & -0.0007 & 0.013 & 0.013/0.013 & 0.956/0.944 \\ 
   \bottomrule
	\end{tabular}
\end{table}

% figures --------------------------------------------------------------------------------------------------------------------------------------------------------------%

\newpage

\begin{sidewaysfigure}[h]
	\centering
	\includegraphics[width=1\textwidth]{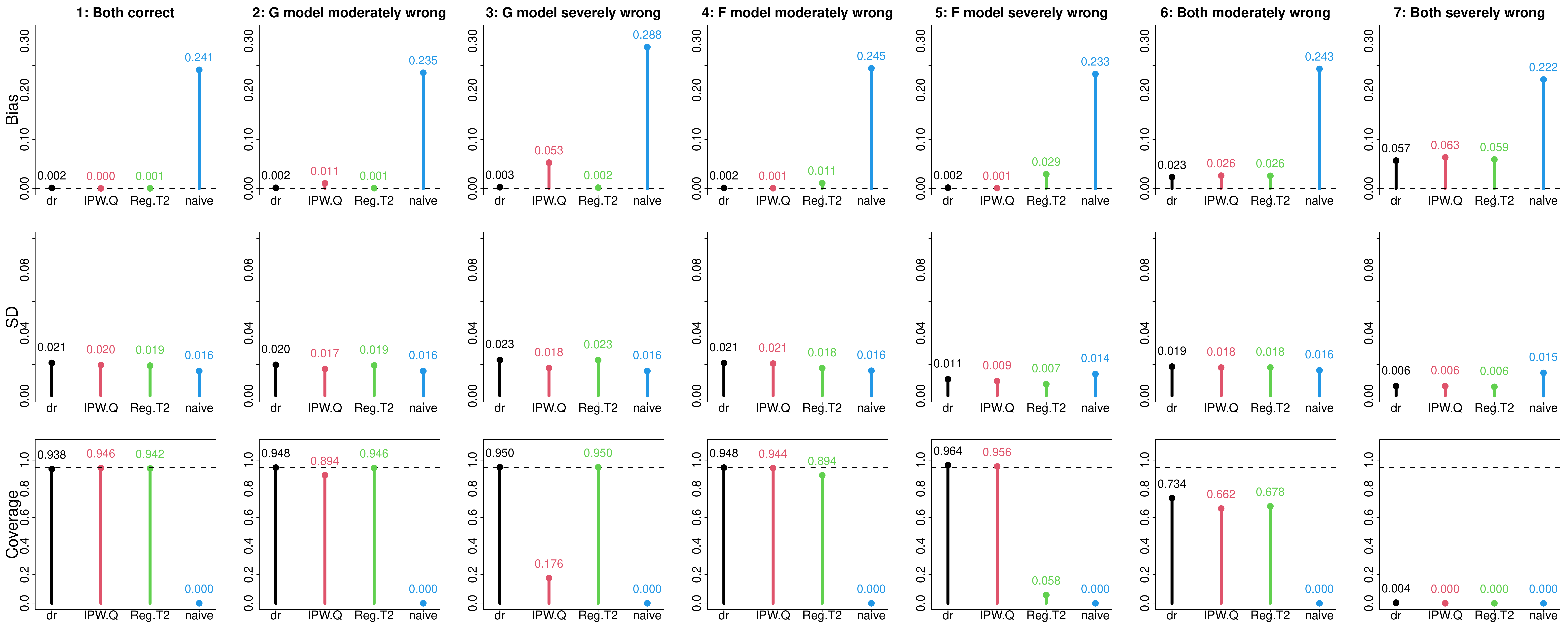}
	\caption{Absolute bias, empirical standard deviation, and coverage probability of the 95\% bootstrap CIs for different estimators, corresponding to the simulation for Scenarios 1 - 7 in Section \ref{sec:simu_mdr}.}
	\label{fig:simu_7scenarios}
\end{sidewaysfigure}

\clearpage

\begin{figure}[H]
	\centering
	\includegraphics[width=1\textwidth]{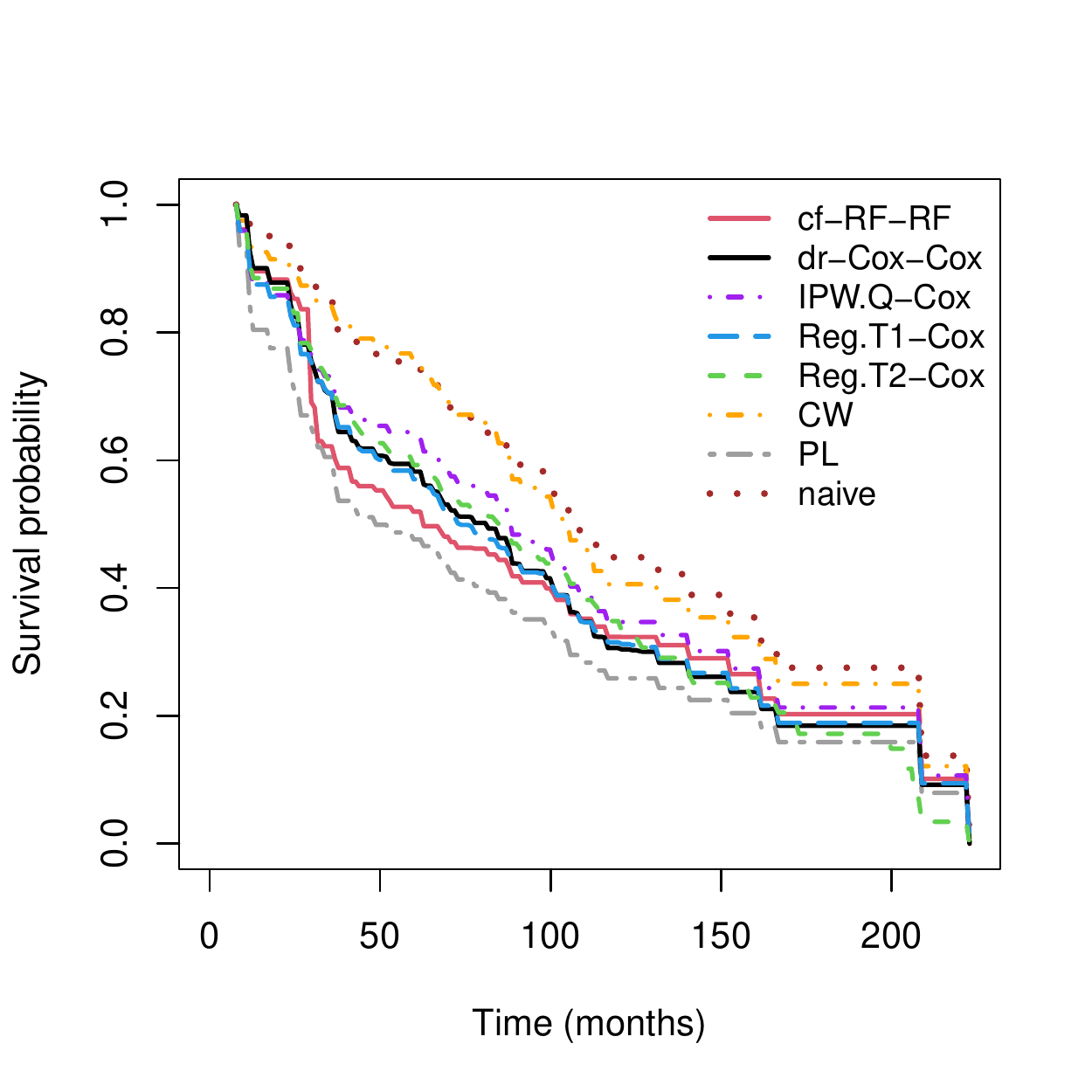}
	\caption{Estimates of the overall survival curve for the CNS lymphoma data.}
	\label{fig:CNS}
\end{figure}

\newpage

\begin{figure}[H]
	\centering
	\includegraphics[width=01\textwidth]{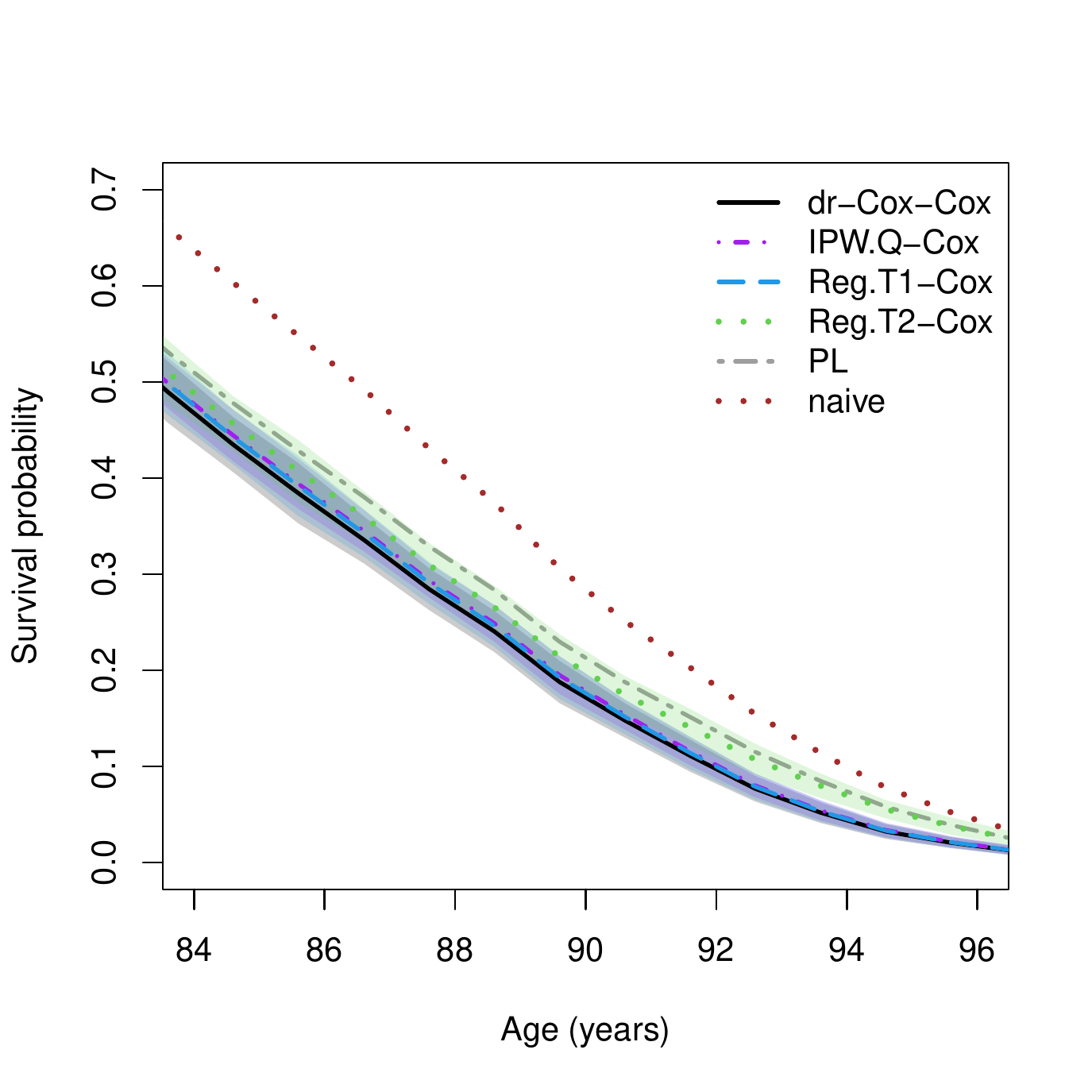}
	\caption{
 Estimated cognitive impairment-free survival between 84 and 96 years of age and their 95\% bootstrap confidence intervals (shaded) for the HAAS data.}
	\label{fig:HAAS}
\end{figure}

\newpage
\appendix
\section*{Appendix. Regression-based estimator}

Estimator \eqref{eq:est_Reg.T1} can be seen as  regression-based  because 
$\E^*\left\{\nu(T)|Q,Z \right\} = \E\left\{\nu(T)|Q,Z\right\} \{1-F(Q|Z)\} + \int_0^{Q} \nu(t) dF(t|Z)$.
This way $\nu(T)\{1- \hat F(Q|Z)\} + \int_0^{Q} \nu(t) d\hat F(t|Z)$ estimates the full data conditional expectation $\E^*\left\{\nu(T)|Q, Z\right\}$.

\newpage

{\centering\section*{Supplementary Material}}\label{SM}
The Supplementary Material here contains the proofs, details of the  estimators under right censoring, additional simulation results, and additional results for the analysis of the two  data sets. 
In particular, this supplement contains the following materials:
{\renewcommand{\arraystretch}{0.6}
\begin{enumerate}
    \item Preliminaries
    \item Martingale theory
    \item Identification of $\theta$
    \item Derivation of the EIC
    \item Double robustness of the estimating function
    \item Asymptotics
    \item Estimation under right censoring
    \item Additional materials for simulations
    \item Applications
    \item Equivalence with the IF in Chao (1987)
\end{enumerate}
}

\newpage
\section{Preliminaries}\label{app:quantities}

We will use $a \wedge b$ to denote the minimum of $a$ and $b$,  $a \vee b$ to denote the maximum of $a$ and $b$, and  `$\lesssim$' to denote less than or equal to up to a constant factor. 

If we do not specify the limit for integrals, the integrals are taken on the support of the corresponding variables.

%\subsection{Useful quantities }
Here we define the quantities used in the Supplementary Material. 

Let 
\begin{align}
    \beta(z) = P^*(Q<T|Z=z) =  \int_0^\infty \int_0^\infty \mathbbm{1}(q<t) f(t|z) g(q|z)\ dt\ dq. \label{eq:beta(z)}
\end{align}  
Based on the joint density for $(Q,T,Z)$ in the observed data given in Assumption \ref{ass:quasi-indpendent}, we can derive the following marginal  and conditional densities for observed data distribution in terms of $f,g$ and $h$:
\begin{align}
p_{Z}(z) &=  \frac{\beta(z) }{\beta}h(z), \\
p_{T,Z}(t,z) &=  \frac{G(t|z)}{\beta} f(t|z)h(z),\\
p_{Q,Z}(q,z) &= \frac{1-F(q|z)}{\beta} g(q|z)h(z), \label{eq:p(q,z)} \\
p_{T|Z}(t|z) & = f(t|z)  \frac{G(t|z)}{\beta(z)}, \label{eq:p(t|z)}\\
p_{Q|Z}(q|z) & = \frac{1-F(q|z)}{\beta(z)} g(q|z) , \label{eq:p(q|z)}\\
p_{Q,T|Z}(q,t|z) & = \frac{\mathbbm{1}(q<t)}{\beta(z)} f(t|z)g(q|z) , \label{eq:p(q,t|z)}\\
p_{Q|T,Z}(q|t,z) & =  \frac{\mathbbm{1}(q<t)}{G(t|z)} g(q|z),\label{eq:p(q|t,z)}\\
p_{T|Q,Z}(t|q,z) & = \frac{\mathbbm{1}(q<t)}{1-F(q|z)} f(t|z).  \label{eq:p(t|q,z)}
\end{align}

In addition, denote
\begin{align}
m(v,Z;F) &= \int_0^v \nu(t) dF(t|Z), \label{eq:m}\\
\mu(Z;F) &= \int_0^\infty \nu(t) dF(t|Z) =  m(\infty,Z;F), \label{eq:mu}\\
\tilde m(v,Z;F) &= \int_t^\infty \nu(t) dF(t|Z). \label{eq:m_tilde}
\end{align}

\newpage
\section{Martingale theory} \label{sec:martingale}
Recall that
\begin{align}
&\mathcal{\bar F}_t 
= \sigma \left\{Z, \mathbbm{1}(Q<T), \mathbbm{1}(s\leq T),\mathbbm{1}(s\leq Q<T): s\geq t  \right\},\\
&\bar M_Q(t; G) = \bar N_Q(t) - \bar A_Q(t;G),
\end{align}
where
\begin{align}
\bar N_Q(t) &= \mathbbm{1}(t\leq Q<T)\\
\bar A_Q(t;G) 
&= \int_t^\infty \mathbbm{1}(Q\leq s <T) \frac{dG(s|Z)}{G(s|Z)}.
\end{align}

\begin{lemma}\label{lem:M_Q}
	Under Assumption \ref{ass:quasi-indpendent} and \ref{ass:beta*(z)>0}, 
	$\{\bar M_Q(t; G)\}_{t\geq 0}$ is a backwards martingale with respect to $\{\mathcal{\bar F}_t\}_{t\geq 0}$ if $G$ is the true full data CDF of $Q$ given $Z$.
\end{lemma}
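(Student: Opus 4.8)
The plan is to verify directly the three defining properties of a backwards martingale for $\bar M_Q(\cdot;G)$ with respect to $\{\mathcal{\bar F}_t\}_{t\ge 0}$: $\mathcal{\bar F}_t$-measurability, integrability, and the identity $\E\{\bar M_Q(s;G)\mid\mathcal{\bar F}_t\}=\bar M_Q(t;G)$ for all $0\le s\le t$. Measurability is immediate, since $\bar N_Q(t)=\mathbbm{1}(t\le Q<T)$ and the integrand $\mathbbm{1}(Q\le v<T)$ appearing in $\bar A_Q(t;G)$ for $v\ge t$ are generated by the variables listed in \eqref{eq:F_t}; as $Q$ and $T$ are absolutely continuous, boundary events carry no mass and may be ignored. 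The substance of the lemma is the third property, which I would reduce to showing that the increment $\bar M_Q(s;G)-\bar M_Q(t;G)$ has conditional mean zero given $\mathcal{\bar F}_t$.

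First I would record the two pieces of the increment for $s\le t$: the counting part $\bar N_Q(s)-\bar N_Q(t)=\mathbbm{1}(s\le Q<t,\,Q<T)$, and the compensator part $\bar A_Q(s;G)-\bar A_Q(t;G)=\int_s^t\mathbbm{1}(Q\le v<T)\,dG(v|Z)/G(v|Z)$. On the event where $\mathcal{\bar F}_t$ already reveals $t\le Q<T$ both pieces vanish, as they do on $\{Q\ge T\}$, so the only contribution comes from $\{Q<t,\,Q<T\}$. The key device is then to enlarge the conditioning field to $\mathcal{\bar F}_t\vee\sigma(T,Z)$ and invoke the tower property, so that it suffices to show $\E\{\bar M_Q(s;G)-\bar M_Q(t;G)\mid\mathcal{\bar F}_t\vee\sigma(T,Z)\}=0$.

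Conditioning on $(T,Z)$ is what makes the computation tractable. By the observed-data conditional density \eqref{eq:p(q|t,z)}, given $(T,Z)$ and $Q<T$ the variable $Q$ has density $g(\cdot|Z)/G(T|Z)$ on $[0,T)$; intersecting with the $\mathcal{\bar F}_t$-information $\{Q<t\}$ restricts this to $[0,r)$ with $r=\min(t,T)$ and renormalizes to $g(\cdot|Z)/G(r|Z)$. Writing $\alpha(v|Z)=g(v|Z)/G(v|Z)$ as in \eqref{eq:alpha_def}, Fubini gives for the compensator part
\[
\E\Big\{\int_s^r\mathbbm{1}(Q\le v)\,\alpha(v|Z)\,dv\;\Big|\;T,Z,\,Q<r\Big\}=\int_s^r\frac{G(v|Z)}{G(r|Z)}\,\frac{g(v|Z)}{G(v|Z)}\,dv=\frac{G(r|Z)-G(s|Z)}{G(r|Z)},
\]
which equals $\PP(s\le Q<r\mid T,Z,\,Q<r)$, the conditional mean of the counting part. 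Hence the two means cancel and the increment has conditional mean zero; this single computation covers both $T\ge t$ and $T<t$ through the common quantity $r=\min(t,T)$. An analogous direct evaluation of $\E\{\bar A_Q(t;G)\mid T,Z\}=\{G(T|Z)-G(t|Z)\}/G(T|Z)\,\mathbbm{1}(t<T)\le 1$ yields the integrability $\E|\bar M_Q(t;G)|<\infty$ as a byproduct.

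The main obstacle I anticipate is precisely the mismatch between the filtration $\{\mathcal{\bar F}_t\}$, which need not reveal the value of $T$ when $T<t$, and the clean truncated conditional law of $Q$ that is available only once $(T,Z)$ is fixed; the enrichment-plus-tower step is what circumvents this. It is also the step where the two hypotheses enter essentially: Assumption \ref{ass:quasi-indpendent} supplies the product-form conditional density \eqref{eq:p(q|t,z)}, and Assumption \ref{ass:beta*(z)>0} guarantees $G(T|Z)>0$ so the truncated densities and the reverse-time hazard $\alpha$ are well defined almost surely.
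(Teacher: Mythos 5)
Your proof is correct, and it reaches the backwards-martingale property by a genuinely different route from the paper's. The paper verifies the conditional-expectation identity through test functions: it first establishes the representation $\mathcal{\bar F}_t=\sigma\{\mathbbm{1}(Q<T),\mathbbm{1}(Q<T)(Q\vee t),T\vee t,Z\}$ and then shows, for every non-negative measurable function $\rho$ of these generators, that $\E[\{\bar N_Q(s)-\bar N_Q(t)\}\rho]=\E[\{\bar A_Q(s;G)-\bar A_Q(t;G)\}\rho]$, computing each side as an explicit double integral against the observed-data joint density of $(Q,T)$ given $Z$ and tracking the contributions from $T\geq t$ and $T<t$ separately. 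You instead enlarge the conditioning field to $\mathcal{\bar F}_t\vee\sigma(T,Z)$, identify the conditional law of $Q$ there (on the only non-trivial event $\{Q<t,\,Q<T\}$) as the truncated density $g(\cdot|Z)/G(r|Z)$ on $[0,r)$ with $r=\min(t,T)$, and exhibit the cancellation of the counting and compensator increments in a single line, finishing with the tower property. Your route is shorter and more conceptual: it makes transparent that the lemma is precisely the statement that $\alpha$ is the reverse-time hazard of the truncated conditional law of $Q$ given $(T,Z)$, it absorbs the paper's case analysis into the single quantity $r$, and it delivers integrability as a byproduct. What the paper's test-function computation buys is that it never needs to construct a conditional distribution given an enlarged $\sigma$-field; it manipulates only expectations against explicit densities, which makes it the more elementary, if lengthier, verification. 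Two points to tighten when writing yours up: (i) when $T\leq s$ your displayed value $\{G(r|Z)-G(s|Z)\}/G(r|Z)$ would be negative, so either note that both increments vanish identically on that event (each requires $v<T$ and $v\geq s$ simultaneously) or write the common conditional mean as $\{G(r|Z)-G(s\wedge r|Z)\}/G(r|Z)$; (ii) measurability of $\bar A_Q(t;G)$ should be argued exactly, via countable unions as the paper does, e.g.\ $\{Q\leq t<T\}=\{T>t\}\cap\bigl(\{Q<T\}\setminus\bigcup_{n}\{t+1/n\leq Q<T\}\bigr)$, rather than by discarding null boundary events, since the filtration is not assumed complete.
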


\begin{proof}
	%[Proof of Lemma \ref{lem:M_Q}]
	In the following proof, we will first show that $\bar M_Q(t; G)$ is $\mathcal{\bar F}_t$-measurable and $\ee{|\bar M_Q(t; G)|}<\infty$ for all $t\geq 0$. Then we will show that $\ee{\bar M_Q(s; G)|\mathcal{\bar F}_t} = \bar M_Q(t;G)$ for all $0<s\leq t$ if $G$ is the true CDF of $Q$ given $Z$ in the full data.
	
	1) We first show that the at risk indicator $\mathbbm{1}(Q\leq t < T)$ is $\mathcal{\bar F}_{t+}$-measurable, where $\mathcal{\bar F}_{t+} = \bigcup_{s>t} \mathcal{\bar F}_s$. 
	For any $t\geq 0$,
	since
	\begin{align}
	\{T>t\} &= \cup_{n=1}^\infty \left\{T\geq t+\frac{1}{n}\right\}\in \mathcal{\bar F}_{t+}, \\
	\{Q<T, Q\leq t\} &= \cup_{n=1}^\infty \left\{Q<T, \ Q < t+\frac{1}{n}\right\}  \\
	&= \{Q<T\} \cap \left[\cup_{n=1}^\infty \left\{t+\frac{1}{n} \leq Q<T\right\}^c\right]
	\in \mathcal{\bar F}_{t+},
	\end{align}
	we have 
	\begin{align}
	\{Q \leq t< T\} = \{T> t\} \cap \{Q<T, Q\leq t\} \in \mathcal{\bar F}_{t+} \subseteq \mathcal{\bar F}_{t},
	\end{align}
	so $\bar A_Q(t;G)$ is $\mathcal{\bar F}_{t+}$-measurable, and $\bar M_Q(t;G)$ is $\mathcal{\bar F}_{t}$-measurable. 
	
	\bigskip
	2) We now show that for any $t\geq 0$, $\E\left\{|\bar M_Q(t; G)|\right\}<\infty$. It suffices to show that $\E\left\{\left.|\bar M_Q(t; G)|\right.|Z\right\}$ is bounded. By \eqref{eq:p(q,t|z)},
	\begin{align}
	\E\left\{\left.|\bar M_Q(t; G)|\right|Z\right\}
	& \leq \E\left\{\left.\bar N_Q(t)\right|Z\right\} + \E\left\{\left.\bar A_Q(t; G_0)\right|Z\right\} \\
	& \leq 1 + \E\left\{\left. \int_0^\tau \mathbbm{1}(Q\leq s <T) \alpha(s|Z) ds \right|Z\right\}\\
	& = 1+ \int_0^\infty \int_0^\infty \int_0^\infty \mathbbm{1}(q\leq s <t) \frac{dG(s|Z)}{G(s|Z)}  \frac{1}{\beta(Z)} dF(t|Z) dG(q|Z) \\
	& = 1+\frac{1}{\beta(Z)} \int_0^\infty \mathbbm{1}(s<t) dG(s|Z) dF(t|Z) \\
	& = 1+\frac{1}{\beta(Z)} \beta(Z) \\
	&= 2.
	\end{align}
	Therefore, 
	\begin{align}
	\E\left\{|\bar M_Q(t; G)|\right\} = \ee{\E\left\{\left.|\bar M_Q(t; G)|\right|Z\right\}} \leq 2 <\infty.
	\end{align}
	
	\bigskip
	3) Suppose that $G$ is the true CDF of $Q$ given $Z$ in the full data.
	We will show that for any $0<s\leq t$, $\E\left\{\bar N_Q(s) - \bar N_Q(t) |\mathcal{\bar F}_t\right\} = \E\left\{\bar A_Q(s;G) - \bar A_Q(t;G) | \mathcal{\bar F}_t\right\}$, which implies that $\E\left\{\bar M_Q(s;G)|\mathcal{\bar F}_t \right\} = \bar M_Q(t;G)$.
	Since the reversed time filtration 
	%\lily{explain to me, also keep in same order}
	\begin{align}
	\mathcal{\bar F}_t 
	& = \sigma \left\{\mathbbm{1}(Q<T), \mathbbm{1}(s\leq Q<T), \mathbbm{1}(s\leq T),Z: s\geq t\right\}\\
	&= \sigma\left\{\mathbbm{1}(Q<T), \mathbbm{1}(Q<T)(Q\vee t),  T\vee t, Z\right\}, \label{eq:m_1}
	\end{align}
 by the definition of backwards martingale and properties of conditional expectations \citep{durrett2019probability},
	it suffices %\lily{cite text why}\yuyao{added} 
	to show that for any non-negative % \lily{hyphen needed} 
 measurable function $\rho$
 %\lily{used for influence also? add cdot everywhere in front of it}\yuyao{changed to $\rho$}
 on $\{0,1\}\times\mathbb{R}_+\times \mathbb{R}_+ \times\mathbb{R}$ and any $0<s\leq t$,
	\begin{align}
	&\quad \ee{\{\bar N_Q(s) - \bar N_Q(t)\} \cdot \rho(\mathbbm{1}(Q<T), \mathbbm{1}(Q<T)(Q\vee t),  T\vee t, Z)} \\
	&= \ee{\{\bar A_Q(s;G) - \bar A_Q(t;G)\} \cdot \rho(\mathbbm{1}(Q<T), \mathbbm{1}(Q<T)(Q\vee t),  T\vee t, Z)}.
	\end{align}
	For any such function $\rho$, 
	%\lily{check if the following can be shortened or not}
	%on $\{0,1\}\times\mathbb{R}_+\times \mathbb{R}_+ \times \Z$, and any $0<s\leq t$,
	\begin{align}
	&\ee{\{\bar N_Q(s) - \bar N_Q(t)\} \cdot 
	\rho(\mathbbm{1}(Q<T), \mathbbm{1}(Q<T)(Q\vee t),  T\vee t, Z)|Z} \\
	=& \E\left\{\mathbbm{1}(s\leq Q< t)\mathbbm{1}(Q<T) \cdot
	\rho(\mathbbm{1}(Q<T), Q\vee t,  T\vee t, Z)|Z\right\} \\
	=& \E\left\{\mathbbm{1}(s\leq Q< t)\mathbbm{1}(Q<T) \cdot
	\rho(1, %\mathbbm{1}(Q<T), 
	t,  T\vee t, Z)|Z\right\} \\
	=& \E\left[ \mathbbm{1}(s\leq Q\leq t)\mathbbm{1}(Q<T) \cdot \{\mathbbm{1}(T\geq t)\cdot \rho(1, t, T,Z)+\mathbbm{1}(T< t)\cdot \rho(1, t, t, Z)\}|Z\right ] \\
	=& \E\left[ \mathbbm{1}(s\leq Q\leq t) %\mathbbm{1}(Q<T) 
	\cdot \{ \mathbbm{1}(T\geq t)\cdot\rho(1, t, T,Z) %|Z\right\}\\
	%& \quad\quad\quad + E\left\{ \mathbbm{1}(s\leq Q\leq t)
	+ \mathbbm{1}(Q<T<t) %\mathbbm{1}(T< t)
	\cdot \rho(1, t, t, Z) \}|Z\right] \\
	=& \frac{1}{\beta(Z)}  \int_s^t \left\{ \int_t^\infty 
	%\mathbbm{1}(s\leq v<t) \mathbbm{1}(u \geq t) 
	\rho(1,t,u, Z) 
	%f(u|Z) g(v|Z) \mathbbm{1}(v<u) \ du\ dv\\
%	&\quad\quad\quad + \frac{1}{\beta(Z)}  \int_s^t 
	+ \int_v^t %\mathbbm{1}(s\leq v<t)\mathbbm{1}(u <t) 
	\rho(1,t,t, Z)\right\} f(u|Z) g(v|Z) %\mathbbm{1}(v<u) 
	\ du\ dv \\
	=& \frac{1}{\beta(Z)}\{G({t-}|Z) - G({s-}|Z)\} \int_t^\infty %\mathbbm{1}(u \geq t) 
	\rho(1,t,u, Z) dF(u|Z)\\
	&\quad\quad\quad + \frac{\rho(1,t,t, Z)}{\beta(Z)} \int_0^\infty \int_0^\infty \mathbbm{1}(s\leq v< u < t)  dF(u|Z) dG(v|Z).
	\end{align}
	On the other hand, 
	\begin{align}
	&\ee{\{\bar A_Q(s;G) - \bar A_Q(t;G)\} \cdot \rho(\mathbbm{1}(Q<T), \mathbbm{1}(Q<T)(Q\vee t),  T\vee t, Z)|Z} \\
	=& \E\left\{\left.\int_s^t \mathbbm{1}(Q\leq x < T) \frac{dG(x|Z)}{G(x|Z)}   \cdot \rho(\mathbbm{1}(Q<T), \mathbbm{1}(Q<T)(Q\vee t),  T\vee t, Z) \right|Z\right\} \\
	=& \E\left\{\left.\int_{Q\vee s}^{T\wedge t} \frac{dG(x|Z)}{G(x|Z)} \cdot \rho(1, t,  T\vee t, Z)\right|Z\right\} \\
	=& \E\left\{\left.\int_{Q\vee s}^{T\wedge t} \frac{dG(x|Z)}{G(x|Z)}  \mathbbm{1}(T\geq t ) \cdot \rho(1, t, T, Z)\right|Z\right\} \\
    &+ \E\left\{\left.\int_{Q\vee s}^{T\wedge t} \frac{dG(x|Z)}{G(x|Z)}\mathbbm{1}(T< t) \cdot \rho(1, t, t,Z)\right|Z\right\} \\
	=&\frac{1}{\beta(Z)} \int_0^\infty \int_0^\infty \int_{v\vee s}^{u\wedge t}  \frac{dG(x|Z)}{G(x|Z)}\mathbbm{1}(u\geq t) \cdot \rho(1,t,u,Z) f(u|Z) g(v|Z)\mathbbm{1}(v<u) \ du\ dv\\
	&+\frac{1}{\beta(Z)} \int_0^\infty \int_0^\infty \int_{v\vee s}^{u\wedge t}   \frac{dG(x|Z)}{G(x|Z)}\mathbbm{1}(u<t) \cdot \rho(1,t,t,Z) f(u|Z) g(v|Z)\mathbbm{1}(v<u) \ du\ dv\\
	=&\frac{1}{\beta(Z)} \int_t^\infty \int_s^t \rho(1,t,u,Z) \left\{\int_0^x g(v|Z)\ dv\right\}\frac{dG(x|Z)}{G(x|Z)} f(u|Z) \ du\\
	&+\frac{\rho(1,s,s,Z)}{\beta(Z)} \int_0^\infty \int_0^\infty  \mathbbm{1}(s\leq x < u < t) \left\{\int_0^xg(v|Z)\ dv\right\}\frac{dG(x|Z)}{G(x|Z)} f(u|Z)\ du\\
	=& \frac{1}{\beta(Z)}\{G({t-}|Z) - G({s-}|Z)\} \int_t^\infty \rho(1,t,u,Z) dF(u|Z)\\
	&
	+\frac{\rho(1,t,t,Z)}{\beta(Z)} \int_0^\infty \int_0^\infty  \mathbbm{1}(s\leq x< u< t) dF(x|Z) dG(v|Z).
	\end{align}
	Therefore,
	\begin{align}
	& \ee{\left.\{\bar N_Q(s) - \bar N_Q(t)\} \rho(\mathbbm{1}(Q<T), \mathbbm{1}(Q<T)(Q\vee t),  T\vee t, Z) \right|Z} \\
	=& \ee{\left. \{\bar A_Q(s) - \bar A_Q(t)\} \rho(\mathbbm{1}(Q<T), \mathbbm{1}(Q<T)(Q\vee t),  T\vee t, Z) \right|Z},
	\end{align}
	so
	\begin{align}
	& \ee{\{\bar N_Q(s) - \bar N_Q(t)\} \rho(\mathbbm{1}(Q<T), \mathbbm{1}(Q<T)(Q\vee t),  T\vee t, Z)} \\
	=& \ee{\{\bar A_Q(s) - \bar A_Q(t)\} \rho(\mathbbm{1}(Q<T), \mathbbm{1}(Q<T)(Q\vee t),  T\vee t, Z)}.
	\end{align}
  
\end{proof}

\bigskip
The following proposition states the property of $\bar M_Q$ that is useful in the later proofs.
\begin{proposition}\label{prop:int_dM}
Under Assumptions \ref{ass:quasi-indpendent} and \ref{ass:overlap},
	suppose that $\{\bar H(t)\}_{t\geq 0}$ is a process that is bounded on $[0,\tau_2]$, and $\bar H(t)$ is $\mathcal{\bar F}_{t+}$-measurable for all $t\geq 0$. Then if $G$ is the true CDF of $Q$ given $Z$ in the full data, we have 
	\begin{align}
	\E\left\{\left.\int_0^\infty \bar H(t) d\bar M_Q(t;G) \right|Z\right\} = 0, \quad 
    \E\left\{\int_0^\infty \bar H(t) d\bar M_Q(t;G)\right\} = 0. \label{eq:m1}
	\end{align}
\end{proposition}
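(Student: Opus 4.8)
The plan is to reduce the integral to a compact interval and then invoke the backwards-martingale structure already established in Lemma~\ref{lem:M_Q}, reading off the conclusion from standard facts about integrals against martingales. First I would record that under Assumption~\ref{ass:overlap} the full-data density $g(\cdot|Z)$ is supported on $[0,\tau_2]$, since $Q\le\tau_2$ a.s.; hence $\alpha(t|Z)=g(t|Z)/G(t|Z)=0$ for $t>\tau_2$ (there $G(t|Z)=1$). Consequently the compensator $\bar A_Q(\cdot;G)$ is constant on $(\tau_2,\infty)$ and $\bar N_Q$ has no jumps there, so $\bar M_Q(\cdot;G)$ has no increments beyond $\tau_2$ and $\int_0^\infty \bar H\,d\bar M_Q=\int_0^{\tau_2}\bar H\,d\bar M_Q$. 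Because $\bar H$ is bounded on $[0,\tau_2]$ and $\bar M_Q$ has paths of bounded variation with $\E\{\bar A_Q(0;G)\mid Z\}\le 1$ (as in the proof of Lemma~\ref{lem:M_Q}), this integral is well-defined and integrable.

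Next I would pass to the reversed time scale. Fixing $\tau\ge\tau_2$ and using \eqref{eq:M^tau}--\eqref{eq:G_t}, the process $\{M_Q^\tau(t;G)\}_{0\le t\le\tau}$ is a genuine forward martingale with respect to the increasing filtration $\{\mathcal{G}_t^\tau\}$. Changing variables $t\mapsto\tau-t$ in the Stieltjes integral rewrites it as $\pm\int_0^\tau H^\tau(t)\,dM_Q^\tau(t;G)$ with $H^\tau(t)=\bar H(\tau-t)$; the orientation sign is irrelevant for the mean-zero conclusion. The crux is then to verify that $H^\tau$ is a predictable integrand. I would show $\mathcal{G}_{t-}^\tau=\bigcup_{u<t}\mathcal{\bar F}_{\tau-u}=\mathcal{\bar F}_{(\tau-t)+}$, so that the hypothesis that $\bar H(s)$ is $\mathcal{\bar F}_{s+}$-measurable translates exactly into $H^\tau(t)$ being $\mathcal{G}_{t-}^\tau$-measurable. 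Given predictability, the integral of a bounded predictable process against the finite-variation martingale $M_Q^\tau$ is itself a mean-zero martingale started at $0$, whence $\E\{\int_0^\tau H^\tau\,dM_Q^\tau\mid\mathcal{G}_0^\tau\}=0$. Since $\sigma(Z)\subseteq\mathcal{\bar F}_\tau=\mathcal{G}_0^\tau$, the tower property delivers the conditional identity $\E\{\int_0^\infty\bar H\,d\bar M_Q\mid Z\}=0$, and a further expectation gives the unconditional one.

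The main obstacle is precisely this predictability verification: matching $\mathcal{\bar F}_{(\tau-t)+}$ with the left-limit filtration $\mathcal{G}_{t-}^\tau$ and confirming that $H^\tau(t)$ is $\mathcal{G}_{t-}^\tau$-measurable rather than merely $\mathcal{G}_t^\tau$-measurable. This is exactly where the ``$+$'' in the hypothesis $\mathcal{\bar F}_{t+}$ is used, and it must be handled carefully so that the martingale-transform theorem applies rather than being asserted.

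As a fully self-contained alternative that avoids stochastic-integration theory, I would instead compute $\E\{\int_0^\infty\bar H\,d\bar M_Q\mid Z\}$ directly, in the spirit of the proof of Lemma~\ref{lem:M_Q}. Splitting $d\bar M_Q=d\bar N_Q-d\bar A_Q$ gives the jump part $-\bar H(Q)\mathbbm{1}(Q<T)$ and the compensator part $\int_0^\infty\bar H(t)\mathbbm{1}(Q\le t<T)\alpha(t|Z)\,dt$. On the event $\{Q\le t<T\}$ the $\mathcal{\bar F}_{t+}$-measurability lets me write $\bar H(t)\mathbbm{1}(Q\le t<T)=\phi_t(T,Z)\mathbbm{1}(Q\le t<T)$ for a measurable $\phi_t$, and I would evaluate both pieces with the observed-data conditional density $p_{Q,T|Z}$ from \eqref{eq:p(q,t|z)}. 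Using $G(t|Z)\alpha(t|Z)=g(t|Z)$ and Fubini, both pieces reduce to the same iterated integral $\frac{1}{\beta(Z)}\int_0^\infty g(t|Z)\{\int_t^\infty\phi_t(u,Z)f(u|Z)\,du\}\,dt$ with opposite signs and cancel; the support of $g(\cdot|Z)$ on $[0,\tau_2]$ again confines everything to the region where $\bar H$ is bounded, securing integrability.
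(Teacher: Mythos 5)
Your primary argument is exactly the paper's proof: under Assumption \ref{ass:overlap} the integral is confined to $[0,\tau_2]$, the time reversal \eqref{eq:M^tau}--\eqref{eq:G_t} turns the backwards martingale of Lemma \ref{lem:M_Q} into a forward martingale $M_Q^{\tau_2}$ with respect to $\{\mathcal{G}^{\tau_2}_t\}$, the integral of the bounded integrand against it is a mean-zero martingale, and $\sigma(Z)\subseteq\mathcal{G}^{\tau_2}_t$ yields the conditional statement. The differences are in your favor rather than substantive: you explicitly verify the predictability identification $\mathcal{G}^{\tau}_{t-}=\mathcal{\bar F}_{(\tau-t)+}$ that the paper delegates to its citation of counting-process theory, and your direct-computation alternative is a sound backup, so the proposal is correct and essentially the same as the paper's.
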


\begin{proof}
	Let $H(t) = \bar H(\tau_2-t)$ for $t\in [0,\tau_2]$. By Assumption \ref{ass:overlap}, $Q\leq \tau_2$ a.s., so
	\begin{align}
	\int_0^\infty \bar H(t) d \bar M_Q(t;G) 
	& = \int_0^{\tau_2} \bar H(t) d \bar M_Q(t;G)\\
	&= -\int_0^{\tau_2} \bar H(\tau_2-u) d \bar M_Q(\tau_2-u;G)\\
	&= -\int_0^{\tau_2} H(u) d M_Q^{\tau_2}(u;G), 
	\end{align}
	where $M_Q^{\tau_2}$ is defined in Section \ref{sec:notation_preliminary} of the main paper. 
	Since $\bar H$ is bounded on $[0,\tau_2]$, $H$ is bounded on $[0,\tau_2]$. Besides, as shown in Lemma \ref{lem:M_Q},
	$\{\bar M_Q(t; G)\}_{t\geq 0}$ is a backwards martingale with respect to $\{\mathcal{\bar F}_t\}_{t\geq 0}$, so 
	$\{M_Q^{\tau_2}(t;G) \}_{0\leq t\leq \tau_2}$ is a martingale with respect to the filtration $\{\mathcal{G}_t^{\tau_2}\}_{0\leq t\leq\tau_2}$ (defined in Section \ref{sec:notation_preliminary} of the main paper). Therefore, $\{\int_0^t H(u) dM_Q^{\tau_2}(u;G)\}_{0\leq t\leq\tau_2}$ is a martingale with respect to the filtration $\{\mathcal{G}_t^{\tau_2}\}_{0\leq t\leq\tau_2}$ \citep{fleming2011counting}. 
 Thus,
 \begin{align}
	\E\left\{\int_0^\infty \bar H(t) d\bar M_Q(t;G)\right\} = -\E\left\{\int_0^{\tau_2} H(u) d M_Q^{\tau_2}(u;G)\right\} = 0.
\end{align}
 In addition, since $\sigma(Z) \subseteq \mathcal{G}_t^{\tau_2}$ for all $t\geq 0$, we have 
 \begin{align}
	\E\left\{\left.\int_0^\infty \bar H(t) d\bar M_Q(t;G)\right|Z\right\} = -\E\left\{\left.\int_0^{\tau_2} H(u) d M_Q^{\tau_2}(u;G)\right|Z\right\} = 0.
\end{align}
  
\end{proof}

%---------------------------------------------------------------------------------------------------------------------------%
\newpage
\section{Identification of $\theta$}\label{app:identification}

\begin{proof}[Proof of Lemma \ref{lem:psi_identification}]
	Under Assumptions \ref{ass:quasi-indpendent} and \ref{ass:beta*(z)>0},
	\begin{align}
	\E\left\{\frac{\nu(T)}{G(T|Z)}\right\} &= \int\int\int \frac{\nu(t)}{G(t|z)} \frac{ f(t|z)g(q|z)h(z) \mathbbm{1}(q<t)}{\beta}\ dq\ dt\ dz \\
	& = \frac{1}{\beta}\int\int \frac{\nu(t)}{G(t|z)} \left\{\int g(q|z) \mathbbm{1}(q<t) dq \right\} f(t|z)h(z)\ dt\ dz\\
	& =  \frac{1}{\beta}\int\int \nu(t)  f(t|z)h(z)\ dt\ dz \\
	& = \frac{\theta}{\beta}.
	\end{align}
	As a special case with $\nu\equiv 1$, we have $
	\E\left\{1/G(T|Z)\right\} = 1/\beta$. 
	Therefore,
	\begin{align}
	\theta = \left.\E\left\{\frac{\nu(T)}{G(T|Z)}\right\}\right/\E\left\{\frac{1}{G(T|Z)}\right\}.
	\end{align}

	We now show that $G$ can be characterized by the reverse time hazard function $\alpha$. 
	By definition, 
	\begin{align}
	\alpha(q|z) = \left.\left\{ \frac{\partial}{\partial q} G(q|z)\right\}\right/G(q|z).
	\end{align}
	Besides, $\lim_{q\to\infty} G(q|z)=1$. By solving the above differential equation, we get \begin{align}
	G(q|z) &= \exp\left\{-\int_q^\infty\alpha(v|z) dv\right\}. \label{eq:G_alpha}
	\end{align}
	
	%3) 
 Next, we will show that $\alpha$ is identifiable from observed data distribution. Specifically, we will show that under Assumptions \ref{ass:quasi-indpendent} and \ref{ass:beta*(z)>0}, 
	\begin{align}
	\alpha(q|z) &  = \frac{p_{Q|Z}(q|z)}{\pp{Q\leq  q < T|Z=z}}. \label{eq:identify_alpha}
	\end{align}
% \yuyao{removed `$= \frac{p_{Q|Z}(q|z)}{  K(q|z)  }$' from the above equation since $K$ is not defined yet. }
	Recall from \eqref{eq:p(q|z)} and \eqref{eq:p(q,t|z)},
	\begin{align}
	p_{Q|Z}(q|z) & = \frac{1-F(q|z)}{\beta(z)} g(q|z), \quad p_{Q,T|Z}(q,t|z)  = \frac{\mathbbm{1}(q<t)}{\beta(z)} f(t|z)g(q|z),
	\end{align}
	so 
	\begin{align}
	\pp{Q\leq q <T|Z=z} 
	& = \int_0^\infty \int_0^\infty \mathbbm{1}(v\leq q<u) p_{T,Q|Z}(u,v|z)\ du\ dv \\
	& = \int_0^\infty \int_0^\infty \mathbbm{1}(v\leq q<u) \frac{\mathbbm{1}(v<u)}{\beta(z)} f(u|z) g(v|z)\ du\ dv \\
	& = \frac{\{1-F(q|z)\}G(q|z)}{\beta(z)}.
	\end{align}
	Therefore, 
	\begin{align}
	\frac{ p_{Q|Z}(q|z)}{\pp{Q\leq q <T|Z=z} }  = \frac{g(q|z)}{G(q|z)} = \alpha(q|z).
	\end{align} 
  
\end{proof}

% -----------------------------------------------------------------------------------------------------%
\newpage
\section{Derivation of the EIC}

As previously defined, $\{P_\epsilon\}$ denotes a regular parametric submodel for observed data and $p_\epsilon(\cdot)$ the corresponding density. We use for instance, $p_{Q,Z}(\cdot,\cdot;\epsilon)$ and $p_{Q|Z}(\cdot|\cdot;\epsilon)$ to denote the density of $(Q,Z)$ and conditional density of $Q$ given $Z$ under $P_\epsilon$.
We have for example, 
\begin{align}
    \Sc_{Q,Z}(q,z) = \left.\frac{\partial}{\partial \epsilon} \log p_{Q,Z}(q,z;\epsilon)\right|_{\epsilon = 0},
    \quad \Sc_{Q|Z}(q|z) = \left. \frac{\partial}{\partial \epsilon} \log p_{Q|Z}(q|z;\epsilon) \right|_{\epsilon = 0}.
\end{align}
% $$\Sc_{Q,Z}(q,z) = \left.\partial\log p_{Q,Z}(q,z;\epsilon)/\partial \epsilon\right|_{\epsilon = 0},$$ 
% $$\Sc_{Q|Z}(q|z) = \left. \partial \log p_{Q|Z}(q|z;\epsilon)/\partial \epsilon\right|_{\epsilon = 0}.$$ 
The following properties of the scores  will be useful in the proofs. 
From the definition, 
\begin{align}
\Sc(\cdot) 
= \left.\frac{\partial}{\partial \epsilon} \log p_\epsilon(\cdot)\right|_{\epsilon = 0} = \left.\left.\left\{\frac{\partial}{\partial \epsilon} p_\epsilon(\cdot)\right|_{\epsilon = 0}\right\} \right/p(\cdot), \label{eq:app1}
\end{align}
so we have
\begin{align}
\left.\frac{\partial}{\partial \epsilon} p_\epsilon(\cdot)\right|_{\epsilon = 0}
= \Sc(\cdot) p(\cdot). \label{eq:dp=Sp}
\end{align}
In addition, from the definition, we can show that
$\ee{\Sc_{Q,T,Z}(Q,T,Z)} = 0$, and $ \ee{\Sc_{Q,T|Z}(Q,T,Z)|Z} = 0$, etc. % 

\bigskip
The quantities $K$, $L$ and $R$ below appear during the computation to derive the influence curve. Let
\begin{align}
K(v|z) &= \pp{Q\leq v<T |Z=z}
=  \frac{G(v|z)\{1-F(v|z)\}}{\beta(z)} ,\label{eq:eq2} \\
L(v|z) &= \ee{\left.\frac{\{\nu(T)-\theta\}\mathbbm{1}(T\leq v)}{G(T|z)}\right|Z=z} \nonumber \\
&= \frac{1}{\beta(z)} \int_0^v \{\nu(t)-\theta\}  f(t|z) dt = \frac{m(v,z;F) - \theta F(v|z)}{\beta(z)}, \label{eq:eq3}\\
R(y,u|z) &= \int_0^\infty \frac{ L(v|z) }{K(v|z)^2}  \mathbbm{1}( y\leq v<u)   p_{Q|Z}(v|z) dv \nonumber \\
& = \int_0^\infty \frac{m(v,z;F) - \theta F(v|z)}{G(v|z)^2\left\{1-F(v|z)\right\}} \mathbbm{1}( y\leq v<u) dG(v|z). \label{eq:eq4}
\end{align} 

We have immediately 
\begin{align}
&\quad \frac{L(Q|Z)}{K(Q|Z)} - R(Q,T|Z) \nonumber \\
&=  \frac{m(Q,Z;F) - \theta F(Q|Z)}{G(Q|Z)\{1-F(Q|Z)\}} - \int_0^\infty \frac{m(v,Z;F) - \theta F(v|Z)}{G(v|Z)\{1-F(v|Z)\}}  \mathbbm{1}(Q\leq v<T)  \frac{dG(v|Z)}{G(v|Z)} \nonumber \\
&= - \int_0^\infty \frac{m(v,Z;F) - \theta F(v|Z)}{G(v|Z)\{1-F(v|Z)\}} d\bar M_Q(v;G). \label{eq:LKR}
\end{align}

\newpage
\begin{proof}[Proof of Lemma \ref{lem:tangent_space}]
	
	(i) To prove that $\dot P = L_2^0(P_{T,Z}) + L_2^0(P_{Q,Z})$, we will first that %\lilyx{that} 
 $\dot P \subseteq L_2^0(P_{T,Z}) + L_2^0(P_{Q,Z})$ and then show that $\dot P \supseteq L_2^0(P_{T,Z}) + L_2^0(P_{Q,Z})$. 
 The proof follows the techniques used in \citet{bickel1993efficient} and \citet{tsiatis2006semiparametric}.

	Recall that under Assumption \ref{ass:quasi-indpendent}, the joint density for $(Q,T,Z)$ in observed data is
	\begin{align}
	p(q,t,z) 
	& = \frac{f(t|z)g(q|z)h(z) \mathbbm{1}(q<t)}{\beta}\\
	& = \frac{f(t|z)g(q|z)h(z) \mathbbm{1}(q<t)}{\int \mathbbm{1}(q<t) f(t|z)g(q|z)h(z)\ dt\ dq\ dz}.
	\end{align}
	For any regular parametric submodel $P_\epsilon$ satisfing Assumption \ref{ass:quasi-indpendent}, the corresponding observed data density is
	\begin{align}
	p_\epsilon(q,t,z) 
	&= \frac{f_\epsilon(t|z)g_\epsilon(q|z)h_\epsilon(z) \mathbbm{1}(q<t)}{\int \mathbbm{1}(q<t) f_\epsilon(t|z)g_\epsilon(q|z)h_\epsilon(z)\ dt\ dq\ dz}. 
	\end{align}
	%	To simplify the notation, we denote 
	%    \begin{align}
	%        p_\epsilon(q,t,z) = p(q,t,z;\epsilon), \quad f_\epsilon(t|z) = f(t|z;\epsilon),\quad g_\epsilon(q|z) = g(q|z;\epsilon), \quad h_\epsilon(z) = h(z;\epsilon).
	%    \end{align}
	Denote 
	\begin{align}
	a(t,z) & = \left.\frac{\partial}{\partial \epsilon} \log f_\epsilon(t|z)\right|_{\epsilon = 0}, \\
	b(q,z) & = \left.\frac{\partial}{\partial \epsilon} \log g_\epsilon(q|z)\right|_{\epsilon = 0}, \\
	c(z) & = \left.\frac{\partial}{\partial \epsilon} \log h_\epsilon(z)\right|_{\epsilon = 0}.
	\end{align}
	and denote 
	%\lily{is it a little abuse of notation for $c$ above and below?}\yuyao{changed to $\tilde c$}
	\begin{align}
	L_2^0(P_Z) = \left\{\tilde c\in L_2(P_{Z}):\ \int_\Z \tilde c(z)\ d P_{Z}(z) = 0 \right\}.
	\end{align}
	Then the score for observed data is
	\begin{align}
	\left.\frac{\partial}{\partial \epsilon} \log p_\epsilon(q,t,z)\right|_{\epsilon = 0}
	& = \left.\frac{\partial}{\partial \epsilon} \log f_\epsilon(t|z)\right|_{\epsilon = 0} + \left.\frac{\partial}{\partial \epsilon} \log g_\epsilon(q|z)\right|_{\epsilon = 0} + \left.\frac{\partial}{\partial \epsilon} \log h_\epsilon(z)\right|_{\epsilon = 0} \\
	& \quad\quad - \left.\frac{\partial}{\partial \epsilon} \log \int \mathbbm{1}(q<t) f_\epsilon(t|z)g_\epsilon(q|z)h_\epsilon(z)\ dq\ dt\ dz \right|_{\epsilon = 0} \\ 
	&= a(t,z) + b(q,z) + c(z)\\
	& \quad\quad -\frac{\int \mathbbm{1}(q<t) a(t,z)f(t|z)g(q|z)h(z)\ dt\ dq\ dz}{\int \mathbbm{1}(q<t) f(t|z)g(q|z)h(z)\ dq\ dt\ dz}\\
	& \quad\quad -\frac{\int \mathbbm{1}(q<t) b(q,z)f(t|z)g(q|z)h(z)\ dq\ dt\ dz}{\int \mathbbm{1}(q<t) f(t|z)g(q|z)h(z)\ dq\ dt\ dz}\\
	& \quad\quad -\frac{\int \mathbbm{1}(q<t) c(z)f(q|z)g(t|z)h(z)\ dq\ dt\ dz}{\int \mathbbm{1}(q<t) f(q|z)g(t|z)h(z)\ dq\ dt\ dz}\\
	& = a(t,z) + b(q,z)  + c(z) - \ee{a(T,Z)} - \ee{b(Q,Z)} - \ee{c(Z)}\\
	& = \left\{a(t,z)- \ee{a(T,Z)}\right\} + \left\{b(q,z)- \ee{b(Q,Z)}\right\}  + \left\{c(z) - \ee{c(Z)}\right\}\\
	& \in L_2^0(P_{T,Z}) + L_2^0(P_{Q,Z})+ L_2^0(P_Z).
	\end{align}
	Since $L_2^0(P_Z) \subseteq L_2^0(P_{T,Z})$, we have
	\begin{align}
	\dot P \subseteq L_2^0(P_{T,Z}) + L_2^0(P_{Q,Z})+ L_2^0(P_Z) \subseteq L_2^0(P_{T,Z}) + L_2^0(P_{Q,Z}).
	\end{align}
	
	On the other hand, we will show that $\dot P \subseteq L_2^0(P_{T,Z}) + L_2^0(P_{Q,Z})$. Since bounded functions of $(T,Z)$ and $(Q,Z)$ are dense in $L_2^0(P_{T,Z})$ and $L_2^0(P_{Q,Z})$, respectively, it suffices to show that for any bounded functions $a\in L_2^0(P_{T,Z})$, $b\in L_2^0(P_{Q,Z})$, there exists a parametric submodel whose score is $a(t,z) + b(q,z)$. For any bounded functions $a\in L_2^0(P_{T,Z})$, $b\in L_2^0(P_{Q,Z})$, let
	\begin{align}
	\tilde a(t,z) &= a(t,z) - \int a(t,z) f(t|z) dt, \\
	\tilde b(q,z) &= b(q,z) - \int b(q,z) g(q|z) dq, \\
	\tilde c(z) & =  \int a(t,z) f(t|z) dt +\int b(q,z) g(q|z) dq\\
	&\quad\quad\quad  - \int \left\{\int a(t,z) f(t|z) dt +\int b(q,z) g(q|z) dq \right\} h(z) \ dz.
	\end{align}
	Consider the parametric submodel 
	\begin{align}
	p_\epsilon(q,t,z) = \frac{f_\epsilon(t|z)g_\epsilon(q|z)h_\epsilon(z) \mathbbm{1}(q<t)}{\int \mathbbm{1}(q<t) f_\epsilon(t|z)g_\epsilon(q|z)h_\epsilon(z)\ dt\ dq\ dz} \label{eq:submodel}
	\end{align}
	with
	\begin{align}
	f_\epsilon(t|z) &= f(t|z)\{1+\epsilon \tilde a(t,z)\},\\
	g_\epsilon(q|z) &= g(t|z)\{1+\epsilon \tilde b(q,z)\},\\
	h_\epsilon(z) &= h(z)\{1+ \epsilon\tilde c(z)\},
	\end{align}
	and $\epsilon$ is the parameter chosen sufficiently small to guarantee that $f_\epsilon$, $g_\epsilon$ and $h_\epsilon$ are positive. Since 
	\begin{align}
	\int f_\epsilon(t|z) dt &= 1 + \epsilon \int \tilde a(t,z) f(t|z) dt = 1, \\
	\int g_\epsilon(q|z) dq &= 1 + \epsilon \int \tilde b(q,z) g(q|z) dq = 1,\\
	\int h_\epsilon(z) dz & = 1 + \epsilon \int \tilde c(z) h(z) dz = 1,
	\end{align}
	$f_\epsilon$, $g_\epsilon$, $h_\epsilon$ are density functions and, at $\epsilon=0$ they are (conditional) densities of
	of $T$ given $Z$, $Q$ given $Z$, and $Z$, respectively. Therefore the class of functions given in \eqref{eq:submodel} is a parametric submodel. Moreover, it can be verified that
	\begin{align}
	\frac{\partial}{\partial \epsilon} \log p_\epsilon(q,t,z) = a(t,z) + b(q,z).
	\end{align}
	Therefore, 
	\begin{align}
	L_2^0(P_{T,Z}) + L_2^0(P_{Q,Z}) \subseteq \dot P.
	\end{align}
	Thus,
	\begin{align}
	\dot P = L_2^0(P_{T,Z}) + L_2^0(P_{Q,Z}).
	\end{align}
 
	\medskip
	(ii) We now characterize $\dot P^{\perp}$. 
	Since the Hilbert space is $L_2^0(P_{Q,T,Z})$, 
	by the definition of orthogonal complement,
	\begin{equation}
	\dot P^{\perp} = \left\{ 
	\xi(q,t,z)\in L_2^0(P_{Q,T,Z}): \begin{array}{l}
	\ee{\xi(Q,T,Z)\{a(Q,Z)+b(T,Z)\}} = 0,\\
	\ \text{for all } a\in L_2^0(P_{Q,Z}),\  b\in L_2^0(P_{T,Z}).
	\end{array}
	\right\}. 
	\end{equation}
	We would like to find all functions $\xi(Q,T,Z)$ such that for any $a\in L_2^0(P_{T,Z})$ and $b\in L_2^0(P_{Q,Z})$, 
	\begin{align}
	\ee{\xi(Q,T,Z)\{a(Q,Z)+b(T,Z)\}} = 0. \label{eq:eq16}
	\end{align}
	Since $0\in L_2^0(P_{Q,Z})$ and $0\in L_2^0(P_{T,Z})$, it can be verified that  \eqref{eq:eq16} holds if and only if the following two conditions hold:
	\begin{align}
	&\E\left\{\xi(Q,T,Z)a(Q,Z)\right\} = 0, \quad \text{for all } a \in L_2^0(P_{Q,Z}), \\
	&\E\left\{\xi(Q,T,Z)b(T,Z)\right\} = 0, \quad  \text{for all }  b \in L_2^0(P_{T,Z}).
	\end{align}
	Therefore,
	\begin{align}
	(\dot P)^\perp  &= \left\{ \xi\in L_2^0(P_{Q,T,Z}):
	\begin{array}{l}
	\E\left\{\xi(Q,T,Z)a(Q,Z)\right\} = 0, \quad \text{for all } a \in L_2^0(P_{Q,Z}),\\
	\E\left\{\xi(Q,T,Z)b(T,Z)\right\} = 0, \quad  \text{for all } b \in L_2^0(P_{T,Z}).
	\end{array}
	\right\}.
	\end{align}
  
\end{proof}

\bigskip
\begin{proof}[Proof of Lemma \ref{lem:IF}]
	To derive an influence curve,  we consider a regular parametric submodel $P_\epsilon$ for the observed data $(Q, T, Z)$. 
	The proof contains much algebra related to the computation under $P_\epsilon$. 
	
	Let $\E_\epsilon$ denote the expectation taken with respect to the distribution $P_\epsilon$, and $G(\cdot|\cdot;\epsilon)$ the full data CDF of $Q$ given $Z$ under  $P_\epsilon$.
 Then
	\begin{align}
	\theta(P_\epsilon) = \frac{\E_\epsilon\left\{ {\nu(T)}/{G(T|Z;\epsilon)}\right\}}{\E_\epsilon\left\{{1}/{G(T|Z;\epsilon)}\right\}}.
	\end{align}
	We want to find a function $\varphi(O) = \varphi(Q,T,Z)$ such that %\yuyao{Changed `gradient' to `function'} 
	\begin{align}
	\left.\frac{\partial}{\partial \epsilon}\theta(P_\epsilon)\right|_{\epsilon=0} = \E\left\{\varphi(Q,T,Z)\Sc_{Q,T,Z}(Q,T,Z)\right\}, \label{eq:goal1}
	\end{align}
	and 
	$\E\left\{\varphi(O)\right\} = 0$.
	Then $\varphi$ will be an influence curve.
	We will first compute $\left.\partial\theta(P_\epsilon)/\partial \epsilon\right|_{\epsilon=0}$, and then try to write it in the form of the right hand side (RHS) of \eqref{eq:goal1}. 
 
	The derivative
	\begin{align}
	\frac{\partial }{\partial \epsilon}\theta(P_\epsilon)
	&= \left.\left[\frac{\partial }{\partial \epsilon}\E_\epsilon\left\{\frac{\nu(T)}{G(T|Z;\epsilon)}\right\}\right]\right/\E_\epsilon\left\{\frac{1}{G(T|Z;\epsilon)}\right\} \\
	&\quad\quad\quad  - \left.\E_\epsilon\left\{\frac{\nu(T)}{G(T|Z;\epsilon)}\right\} \left[\frac{\partial }{\partial \epsilon}\E_\epsilon\left\{\frac{1}{G(T|Z;\epsilon)}\right\}\right]\right/\left[\E_\epsilon\left\{\frac{1}{G(T|Z;\epsilon)}\right\}\right]^2.
	\end{align}
	Because in the proof of Lemma \ref{lem:psi_identification} we have shown that  $\E\left\{1/G(T|Z)\right\} = 1/\beta$ and $\E\left\{\nu(T)/G(T|Z)\right\} = \theta/\beta$, therefore
	\begin{align}
	\left.\frac{\partial }{\partial \epsilon}\theta(P_\epsilon)\right|_{\epsilon =0} 
	&= \beta\cdot \left.\frac{\partial }{\partial \epsilon}
	\E_\epsilon \left\{\frac{\nu(T)}{G(T|Z;\epsilon)}\right\}\right|_{\epsilon = 0} 
	-\beta\theta \cdot 
	\frac{\partial }{\partial \epsilon}
	\E_\epsilon \left.\left\{\frac{1}{G(T|Z;\epsilon)}\right\}\right|_{\epsilon = 0}
	\label{eq:lem2_1}
	\end{align}
	We will compute the two derivatives in \eqref{eq:lem2_1} next. 
	\begin{align}
	\frac{\partial }{\partial \epsilon}\E_\epsilon\left\{\frac{\nu(T)}{G(T|Z;\epsilon)}\right\}
	& = \frac{\partial }{\partial \epsilon} \int \frac{\nu(t)}{G(t|z;\epsilon)} p_{Q,T,Z}(q,t,z;\epsilon) \ dq\ dt\ dz \\
	& = \int \frac{\nu(t)}{G(t|z;\epsilon)} \left\{\frac{\partial }{\partial \epsilon} p_{Q,T,Z}(q,t,z;\epsilon) \right\} \ dq\ dt\ dz \\
	&\quad\quad\quad - \int \frac{\nu(t)}{G(t|z;\epsilon)^2} \left\{ \frac{\partial}{\partial \epsilon} G(t|z;\epsilon)\right\}  p_{Q,T,Z}(q,t,z;\epsilon) \ dq\ dt\ dz.
	\end{align}
	From \eqref{eq:dp=Sp} we have
	\begin{align}
	\left.\frac{\partial}{\partial \epsilon} p_{Q,T,Z}(q,t,z;\epsilon) \right|_{\epsilon = 0} = \Sc_{Q,T,Z}(q,t,z) p_{Q,T,Z}(q,t,z),
	\end{align}
	so
	\begin{align}
	\left.\frac{\partial}{\partial \epsilon}\E_\epsilon\left\{\frac{\nu(T)}{G(T|Z;\epsilon)}\right\} \right|_{\epsilon = 0}
	& = \int \frac{\nu(t)}{G(t|z)} \Sc_{Q,T,Z}(q,t,z) \ p_{Q,T,Z}(q,t,z) \ dq\ dt\ dz \\
	&\quad\quad\quad - \int \frac{\nu(t)}{G(t|z)^2} \left\{\left. \frac{\partial}{\partial \epsilon} G(t|z;\epsilon)\right|_{\epsilon = 0} \right\}  p_{Q,T,Z}(q,t,z) \ dq\ dt\ dz \\
	& = \E \left\{\frac{\nu(T) }{G(T|Z)} \Sc_{Q,T,Z}(Q,T,Z)\right\} -  \ee{\frac{\nu(T)}{G(T|Z)^2} \left\{ \left.\frac{\partial}{\partial \epsilon} G(T|Z;\epsilon)\right|_{\epsilon =0}\right\}}. \\
 \label{eq:lem2_2}
	\end{align}
	As a special case with $\nu\equiv 1$, we have
	\begin{align}
	\left.\frac{\partial}{\partial \epsilon}\E_\epsilon\left\{\frac{1}{G(T|Z;\epsilon)}\right\} \right|_{\epsilon = 0}
	& = \E \left\{\frac{1}{G(T|Z)} \Sc_{Q,T,Z}(Q,T,Z)\right\} -  \ee{\frac{1}{G(T|Z)^2} \left\{ \left.\frac{\partial}{\partial \epsilon} G(T|Z;\epsilon)\right|_{\epsilon =0}\right\}}. \\
 \label{eq:lem2_3}
	\end{align}
	From \eqref{eq:lem2_1}, \eqref{eq:lem2_2} and \eqref{eq:lem2_3} we have 
	\begin{align}
	\left.\frac{\partial }{\partial \epsilon}\theta(P_\epsilon)\right|_{\epsilon=0} 
	& = \beta  \E \left\{\frac{\nu(T) - \theta}{G(T|Z)} \Sc_{Q,T,Z}(Q,T,Z)\right\} - \beta \ee{\frac{\nu(T)-\theta}{G(T|Z)^2} \left\{\left.\frac{\partial}{\partial \epsilon} G(T|Z;\epsilon)\right|_{\epsilon =0}\right\}}. \label{eq:lem2_4}
	\end{align}
	If we can write the second term of the RHS in \eqref{eq:lem2_4} into $\beta  \E\left\{\varphi_2(Q,T,Z) \Sc_{Q,T,Z}(Q,T,Z)\right\}$ for some function $\varphi_2$, then 
	$\varphi(Q,T,Z) = \beta[\{\nu(T) - \theta\}/G(T|Z) -\varphi_2(Q,T,Z)]$
	satisfies \eqref{eq:goal1}. 
	\\
	
	We now focus on finding $\varphi_2$. 
	To compute $\left.\partial G(t|z;\epsilon)/\partial \epsilon\right|_{\epsilon =0}$
	recall from \eqref{eq:G_alpha}, 
	$G(t|z;\epsilon) = \exp\left\{-\int_t^\infty \alpha(v|z;\epsilon) dv\right\}$,
	so
	\begin{align}
	\frac{\partial}{\partial \epsilon} G(t|z;\epsilon)
	&= -\exp\left\{-\int_t^\infty \alpha(v|z;\epsilon) dv\right\} \frac{\partial}{\partial \epsilon}\int_t^\infty \alpha(v|z;\epsilon) dv\\
	& = -G(t|z)\int_t^\infty 
	\left\{\frac{\partial}{\partial \epsilon}\alpha(v|z;\epsilon) \right\} dv. \label{eq:IF_proof_1}
	\end{align}
	By \eqref{eq:identify_alpha}, we have 
	$$\alpha(v|z;\epsilon) = \frac{ p_{Q|Z}(v|z;\epsilon)}{\int_0^v\int_v^\infty 
	p_{Q,T|Z}(y,u|z;\epsilon)\ dy\ du} = \frac{p_{Q|Z}(v|z;\epsilon)}{ K(v|z;\epsilon)}.$$
 %\yuyao{Fixed the ordering of $p_{Q,T|Z}$ and $\Sc_{Q,T|Z}$ above and in the lines below}
	In addition, using \eqref{eq:dp=Sp} we have
	\begin{align}
	& \left.\frac{\partial}{\partial \epsilon}\alpha(v|z;\epsilon) \right|_{\epsilon=0} = \left.\frac{\partial}{\partial \epsilon} \left\{\frac{ p_{Q|Z}(v|z;\epsilon)}{K(v|z;\epsilon)}\right\}\right|_{\epsilon=0}  \\
	 = & \frac{ \partial p_{Q|Z}(v|z;\epsilon)/\partial \epsilon|_{\epsilon=0}}{K(v|z)} 
	- \frac{ p_{Q|Z}(v|z)  \int_0^v\int_v^\infty \left\{\partial p_{Q,T|Z}(y,u|z;\epsilon)/\partial\epsilon|_{\epsilon=0} \right\}\ dy\ du }{K(v|z)^2} \\
	 = & \frac{ \Sc_{Q|Z}(v|z)p_{Q|Z}(v|z)}{K(v|z)}   
	- \frac{ p_{Q|Z}(v|z)  \int_0^v\int_v^\infty \Sc_{Q,T|Z}(y,u|z) p_{Q,T|Z}(y,u|z)\ dy\ du }{K(v|z)^2}. 
	\end{align}
	Therefore, from \eqref{eq:IF_proof_1},
	\begin{align}
	\left.\frac{\partial}{\partial \epsilon} G(t|z;\epsilon)\right|_{\epsilon =0}
	&= -G(t|z)\int_t^\infty \frac{ \Sc_{Q|Z}(v|z)p_{Q|Z}(v|z)}{K(v|z)} dv \\
	&\quad\quad + G(t|z)\int_t^\infty \frac{ p_{Q|Z}(v|z)  \int_0^v\int_v^\infty \Sc_{Q,T|Z}(y,u|z) p_{Q,T|Z}(y,u|z)\ dy\ du }{K(v|z)^2} dv.
	\end{align}
	So we have
	\begin{align}
	\ee{\frac{\nu(T)-\theta}{G(T|Z)^2}\left\{ \left.\frac{\partial}{\partial \epsilon} G(T|Z;\epsilon)\right|_{\epsilon =0}\right\}}
	= E_1 + E_2
	\end{align}
	where $E_1,E_2$ are stated as follows. We have  
	\begin{align}
	E_1 & = -E\left[
	\int_0^\infty \frac{\nu(t)-\theta}{G(t|Z)} \left\{\int_t^\infty 
	\frac{\Sc_{Q|Z}(v|Z)p_{Q|Z}(v|Z)}{K(v|Z)}  dv \right\}  p_{T|Z}(t|Z) dt\right]  \\
	& = -E\left[ \int_0^\infty \left\{\int_0^v \frac{\nu(T)-\theta}{G(t|Z)}\ p_{T|Z}(t|Z) dt\right\}  \frac{1}{K(v|Z)} \ \Sc_{Q|Z}(v|Z) p_{Q|Z}(v|Z) dv 
	\right] \\
	&  = -E\left[ \int_0^\infty \frac{L(v|Z)}{K(v|Z)} \  \Sc_{Q|Z}(v|Z)p_{Q|Z}(v|Z) dv 
	\right] \\
	& = -\E\left\{\frac{L(Q|Z)}{K(Q|Z)}\ \Sc_{Q|Z}(Q|Z)\right\}
	\end{align}
	Because 
 %\yuyao{ removed `$\Sc_{Q,T|Z} = \Sc_{T|Q,Z} +\Sc_{Q|Z}$ and  $\E\left\{\Sc_{T|Q,Z}(T|Q,Z)|Q,Z\right\} = 0$,' because it is not used below. Also removed `$\Sc_{Q,T|Z}  = \Sc_{Q,T,Z} - \Sc_{Z}$, and' because it is the same as $\Sc_{Q,T,Z} = \Sc_{Q,T|Z} +\Sc_{Z}$}
    	$\Sc_{Q,T,Z} = \Sc_{Q,T|Z} +\Sc_{Z}$, we have
	\begin{align}	
	E_1 & = -\E\left\{\frac{L(Q|Z)}{K(Q|Z)}\Sc_{Q,T|Z}(Q,T|Z)\right\}\label{eq:lem2_5} \\
	& = -\E\left\{\frac{L(Q|Z)}{K(Q|Z)}\Sc_{Q,T,Z}(Q,T,Z)\right\} + \E\left\{\frac{L(Q|Z) }{K(Q|Z)}\Sc_{Z}(Z)\right\} \label{eq:lem2_6} \\
	& = -\E\left\{\frac{L(Q|Z) }{K(Q|Z)}\Sc_{Q,T,Z}(Q,T,Z)\right\} + \ee{\E\left\{\left.\frac{L(Q|Z)}{K(Q|Z)}\right|Z\right\}\ \Sc_{Z}(Z)} \\
    & = -\E\left(\left[\frac{L(Q|Z)}{K(Q|Z)} -\E\left\{\left.\frac{L(Q|Z) }{K(Q|Z)}\right|Z\right\} \right]\Sc_{Q,T,Z}(Q,T,Z)\right) \\
    &\quad\quad\quad - \ee{\E\left\{\left.\frac{L(Q|Z)}{K(Q|Z)}\right|Z\right\}\ \Sc_{Q,T|Z}(Q,T|Z)}.
	\end{align}
 %\yuyao{Added here some more steps for showing the second term above has mean zero.}
 Since $\E\left\{\Sc_{Q,T|Z}(Q,T|Z)|Z\right\} = 0$, by the law of total expectation, we have
 \begin{align}
     \ee{\E\left\{\left.\frac{L(Q|Z)}{K(Q|Z)}\right|Z\right\}\ \Sc_{Q,T|Z}(Q,T|Z)} 
     & = \E\left(\E\left[\left.\E\left\{\left.\frac{L(Q|Z)}{K(Q|Z)}\right|Z\right\}\ \Sc_{Q,T|Z}(Q,T|Z)\right| Z \right] \right) \\
     & = \E\left[\E\left\{\left.\frac{L(Q|Z)}{K(Q|Z)}\right|Z\right\}\ \E\left\{\left.\Sc_{Q,T|Z}(Q,T|Z) \right| Z\right\} \right] \\
     & = 0.
 \end{align}
 Therefore, 
 \begin{align}
     E_1 & = -\E\left(\left[\frac{L(Q|Z)}{K(Q|Z)} -\E\left\{\left.\frac{L(Q|Z) }{K(Q|Z)}\right|Z\right\} \right]\Sc_{Q,T,Z}(Q,T,Z)\right). \label{eq:lem2_7} 
 \end{align}

	In addition, 
	\begin{align}
	E_2 & = \E\left[ \int_0^\infty \frac{\nu(t)-\theta}{G(t|Z)} \left\{ \int_t^\infty \frac{ p_{Q|Z}(v|Z)}{K(v|Z)^2} \right.\right. \\
    &\quad\quad\quad\quad\quad\quad\quad\quad\quad  \left. \left. \cdot \int_0^v \int_v^\infty \Sc_{Q,T|Z}(y,u|Z)\ p_{Q,T|Z}(y,u|Z)\ dy\ du\ dv \right\} p_{T|Z}(t|Z)\ dt\right]\\
	& = \E\left[ \int_0^\infty \int_0^\infty \left\{\int_y^u \frac{ L(v|Z) }{K(v|Z)^2} p_{Q|Z}(v|Z) dv \right\}   \Sc_{Q,T|Z}(y,u|Z)p_{Q,T|Z}(y,u|Z)\ dy\ du \right]\\
	& = \E\left[\int_0^\infty \int_0^\infty R(y,u|Z) \Sc_{Q,T|Z}(y,u|Z)p_{Q,T|Z}(y,u|Z)\ dy\ du \right]\\
	& = \E\left\{R(Q,T|Z) \Sc_{Q,T|Z}(Q,T|Z)\right\}.
	\end{align}
    Because $\mathcal{S}_{Q,T,Z} = \mathcal{S}_{Q,T|Z} +\mathcal{S}_{Z}$ and $\E\left\{\mathcal{S}_{Q,T|Z}(Q,T|Z)|Z\right\} = 0$, we have 
    \begin{align}
        E_2 & = \E\left\{R(Q,T|Z) \Sc_{Q,T,Z}(Q,T,Z)\right\} - \E\left\{R(Q,T|Z) \Sc_{Z}(Z)\right\}\\
        & = \E\left\{R(Q,T|Z) \Sc_{Q,T,Z}(Q,T,Z)\right\} - \E\left[\E\left\{\left.R(Q,T|Z)\right|Z\right\} \Sc_{Z}(Z) \right]\\
	    & = \E\left(\left[R(Q,T|Z) - \E\left\{R(Q,T|Z)|Z\right\} \right]\Sc_{Q,T,Z}(Q,T,Z)\right).
    \end{align}
	Putting the above together, we have 
	\begin{align} \ee{\frac{\nu(T)}{G(T|Z)^2}\left\{ \left.\frac{\partial}{\partial \epsilon} G(T|Z;\epsilon)\right|_{\epsilon =0}\right\}} = -\E\left\{\varphi_2(Q,T,Z)\Sc_{Q,T,Z}(Q,T,Z)\right\},
	\end{align}
	where
	\begin{align}
	\varphi_2(Q,T,Z) &=\left[\frac{L(Q|Z)}{K(Q|Z)} -\E\left\{\left.\frac{L(Q|Z) }{K(Q|Z)}\right|Z\right\} \right]
	- \big[R(Q,T|Z) - \E\left\{R(Q,T|Z)|Z\right\} \big] \\
	& = \left\{\frac{L(Q|Z)}{K(Q|Z)} -R(Q,T|Z) \right\}- e(Z), 
	\end{align}
	with
	$$
	e(Z) = \E\left\{\left.\frac{L(Q|Z)}{K(Q|Z)} - R(Q,T|Z)\right|Z\right\}.
	$$
	From \eqref{eq:LKR}	
	and by Proposition \ref{prop:int_dM},
%	\yuyao{changed here} \lily{put proposition S1 in the first section, and refer to its proof on arxiv}\yuyao{changed}
	\begin{align}
	e(Z) 
	& = - \E\left\{\left. \int_0^\infty \frac{m(v,Z;F) - \theta F(v|Z)}{G(v|Z)\{1-F(v|Z)\}} d\bar M_Q(v;G) \right|Z\right\} =0. 
	\end{align}
	Therefore,
	\begin{align}
	\varphi_2(Q,T,Z) 
	&= - \int_0^\infty \frac{m(v,Z;F) - \theta F(v|Z)}{G(v|Z)\{1-F(v|Z)\}} d\bar M_Q(v;G).
	\end{align}
	
	From the above,
	\begin{align}
	\left.\frac{\partial }{\partial \epsilon}\theta(P_\epsilon)\right|_{\epsilon=0} 
	& = \ee{\beta \left\{\frac{\nu(T) - \theta}{G(T|Z)} + \varphi_2(Q,T,Z)\right\} \Sc_{Q,T,Z}(Q,T,Z)},
	\end{align}
 
	Finally recall that from the proof of Lemma \ref{lem:psi_identification}, we have shown that $\ee{\{\nu(T)-\theta\}/G(T|Z)} = 0$. In addition, 
	$\E\left\{\varphi_2(Q,T,Z)\right\} = 0$ by Proposition \ref{prop:int_dM}.
	Therefore,
	\begin{align}
	\ee{\beta\left\{\frac{\nu(T) - \theta}{G(T|Z)} + \varphi_2(Q,T,Z)\right\}} = 0.
	\end{align}
	Thus, 
	\begin{align}
	\varphi(O) &= \varphi(Q,T,Z)
	= \beta \left\{\frac{\nu(t)-\theta}{G(t|Z)} +
	\varphi_2(Q,T,Z) \right\}\\
	& = \beta \left\{\frac{\nu(t)-\theta}{G(t|Z)} - \int_0^\infty \frac{m(v,Z;F) - \theta F(v|Z)}{G(v|Z)\{1-F(v|Z)\}} d\bar M_Q(v;G) \right\}
	\end{align}
	is an influence curve.
  
\end{proof}

\bigskip 
 \begin{proof}[Proof of Proposition \ref{thm:EIF}]
	We will show that the influence curve $\varphi$ derived in Lemma \ref{lem:IF} lies in the tangent space $\dot P$.
	Therefore, $\varphi$ is the efficient influence curve.
	
	Recall that the influence curve is
	\begin{align}
	\varphi(O;\theta, F,G) 
	&= \beta \left [\frac{\nu(T)-\theta}{G(T|Z)} - \int_0^\infty \frac{m(v,Z;F) - \theta F(v|Z)}{G(v|Z)\{1-F(v|Z)\}} \{d\bar N_Q(v) - d\bar A_Q(v;G) \} \right] \\
	&= \beta \left[ \frac{\nu(T)-\theta}{G(T|Z)} + \frac{m(Q,Z;F) - \theta F(Q|Z)}{G(Q|Z)\{1-F(Q|Z)\}} \right.\\
	& \quad\quad\quad\quad  \left.  - \int_0^\infty \frac{ \{ m(v,Z;F) - \theta F(v|Z) \} \mathbbm{1}(Q\leq v<T)}{G(v|Z)^2\{1-F(v|Z)\}} d G(v|Z) \right] \\
    &= \beta \left[ \frac{\nu(T)-\theta}{G(T|Z)} + \frac{m(Q,Z;F) - \theta F(Q|Z)}{G(Q|Z)\{1-F(Q|Z)\}}  - \int_Q^T \frac{ m(v,Z;F) - \theta F(v|Z) }{G(v|Z)^2\{1-F(v|Z)\}} d G(v|Z) \right]. 
	\end{align}
	Denote 
	\begin{align}
	C(t, Z) = \int_0^t \frac{m(v,Z;F) - \theta F(v|Z)}{G(v|Z)^2\{1-F(v|Z)\}} d G(v|Z).
	\end{align}
	Then
	\begin{align}
	\varphi(O;\theta, F,G, H) 
	& = \beta \left\{\frac{\nu(T)-\theta}{G(T|Z)} - C(T,Z) \right\} + \beta\left\{\frac{m(Q,Z;F) - \theta F(Q|Z)}{G(Q|Z)\{1-F(Q|Z)\}} + C(Q,Z) \right\}\\
	&= a(T,Z) + b(Q,Z),
	\end{align}
	where
	\begin{align}
	a(T,Z) &= \beta \left\{\frac{\nu(T)-\theta}{G(T|Z)} - C(T,Z) \right\},\\
	b(Q,Z) & = \beta\left\{\frac{m(Q,Z;F) - \theta F(Q|Z)}{G(Q|Z)\{1-F(Q|Z)\}} + C(Q,Z) \right\}.
	\end{align}
	Since $\E\left\{\varphi(O;\theta, F,G,H)\right\} = 0$, we have $\E\{a(T,Z)\} + \E\{b(Q,Z)\} =0$, so
	\begin{align}
	\varphi(O;\theta, F,G,H) &= \left[ a(T,Z) - \E\left\{a(T,Z)\right\} \right] + \left[ b(Q,Z) -\E\left\{b(Q,Z)\right\} \right]\\
	& \in L_2^0(P_{T,Z})+L_2^0(P_{Q,Z}) = \dot P.
	\end{align}
	Therefore, by Theorem 4.3 on page 67 of \citet{tsiatis2006semiparametric},  the influence curve $\varphi$ given in \eqref{eq:EIC} is the efficient influence curve, and thus, the semiparametric efficiency bound for estimating $\theta$ is $\E(\varphi^2)$.
  
\end{proof}

\newpage
\section{Double robustness of the estimating function}

 \begin{proof}[Proof of Theorem \ref{thm:DR}]
	Recall that the estimating function is
	\begin{align}
	U(O; \theta, F, G)
	=& \frac{\nu(T)-\theta}{G(T|Z)} - \int_0^\infty \frac{m(v,Z;F) - \theta F(v|Z)}{G(v|Z)\{1-F(v|Z)\}} d\bar M_Q(v; G).  \label{eq:eq1}
	\end{align}
	1) We will first show that $\E\left\{U(O; \theta_0, F, G_0)\right\} = 0$.
	%By Lemma \ref{lem:M_Q}, 
 Since  %\yuyao{changed here} 
 $\{\bar M_Q(t;G_0)\}_{t\geq 0}$ is a backwards martingale with respect to the filtration $\{\mathcal{\bar F}_t\}_{t\geq 0}$. 
	By Assumption \ref{ass:overlap}, the integrand for the integral in \eqref{eq:eq1} is bounded on $[0, \tau_2]$. Therefore, by Proposition \ref{prop:int_dM}, 
 %\yuyao{Added the first `=' below that the integral is really on $[0,\tau_2]$. }
	\begin{align}
	\ee{\int_0^\infty \frac{m(v,Z;F) - \theta F(v|Z)}{G_0(v|Z)\{1-F(v|Z)\}} d\bar M_Q(v; G_0)} = \ee{\int_0^{\tau_2} \frac{m(v,Z;F) - \theta F(v|Z)}{G_0(v|Z)\{1-F(v|Z)\}} d\bar M_Q(v; G_0)} = 0.
	\end{align}
	In addition, we have shown that $\ee{\{\nu(T) - \theta_0\}/G_0(T|Z)} = 0$ in the proof of Lemma \ref{lem:psi_identification}. Thus, 
	\begin{align}
	\E\left\{U(O;\theta_0, F,G_0)\right\} = 0.
	\end{align}
	
	\bigskip
	2)	We will next show that $\E\left\{U (O; \theta_0, F_0, G)\right\} = 0$. 
	We begin by rewriting the estimating function:
	\begin{align}
	U(O; \theta, F, G) 
	& = \frac{\nu(T) - \theta}{G(T|Z)} -  \int_0^\infty \frac{m(v,Z; F) - \theta F(v|Z)}{G(v|Z)\{1-F(v|Z)\}} \{d\bar N_Q(v) - d\bar A_Q(v;G) \} \nonumber \\
	& = \frac{\nu(T) - \theta}{G(T|Z)} + \frac{m(Q,Z; F) - \theta F(Q|Z)}{G(Q|Z)\{1-F(Q|Z)\}} \nonumber \\
	&\quad\quad\quad - \int_Q^T \frac{m(v,Z; F) - \theta F(v|Z)}{1-F(v|Z)} 
	\cdot \frac{d G(v|Z)}{G(v|Z)^2}.  \label{eq:DR_proof_2}
	\end{align}
	Note that 
	\begin{align}
	\int_Q^T \frac{ dG(v|Z) }{G(v|Z)^2}
	= \frac{1}{G(Q|Z)} - \frac{1}{G(T|Z)}. 
	\end{align}
	Therefore 
	\begin{align}
	\frac{\nu(T)-\theta}{ G(T|Z)}  = \frac{\nu(T)-\theta}{ G(Q|Z)}  - \{\nu(T)-\theta\} \int_Q^T \frac{ dG(v|Z) }{G(v|Z)^2}
	\label{eq:2expr_10}.
	\end{align}
	Replacing the first term $\{\nu(T)-\theta\}/G(T|Z)$ in \eqref{eq:DR_proof_2} by the right hand side of \eqref{eq:2expr_10} and after some algebra, we have $ U(O;\theta,F,G) = \A_1-\A_2$, where 
	\begin{align}
	\A_1 & = \frac{\nu(T)-\theta}{G(Q|Z)} + \frac{m(Q,Z; F) - \theta F(Q|Z)}{G(Q|Z)\{1-F(Q|Z)\}}, \\
	\A_2 & = \int_Q^T \left[\{\nu(T)-\theta\}+ \frac{m(v,Z; F)- \theta F(v|Z)}{1-F(v|Z)} \right] \frac{dG(v|Z)}{G(v|Z)^2}.
	\end{align}
	
	In the following, we will first compute $\E\{U(O;\theta,F_0,G)|Q,Z\}$ and then show that 
 $\E\{U(O;\theta_0,F_0,G)\} = \ee{\E\{U(O;\theta_0,F_0,G)|Q,Z\}}=0$. 
	The key quantities involved are \\
	(a) $\E\{\nu(T)|Q,Z\}$, (b) $\E\{\mathbbm{1}(T>v)|Q,Z\}$ and (c) $\E\{\nu(T)\mathbbm{1}(T>v)|Q,Z\}$ for $v\geq Q$.
	
	Recall that
	\begin{align}
	\tilde m(v,Z;F) = \int_v^\infty \nu(t) dF(t|Z), \quad \mu(Z;F) = \int_0^\infty \nu(t) dF(t|Z).
	\end{align}
	So $m(v,Z; F) + \tilde m(v,Z; F) = \mu(Z;F)$ for all $v>0$. 
	
	Recall the observed data density of $T$ given $(Q,Z)$ from \eqref{eq:p(t|q,z)}:
	\begin{align}
	p_{T|Q,Z}(t|q,z) = \frac{\mathbbm{1}(q<t)}{1-F_0(q|z)} dF_0(t|z).
	\end{align}
	We have 
	\begin{align}
	&\quad \mbox{(a) } \E\left\{\left.\nu(T) \right| Q,Z\right\} 
	=\int_0^\infty \nu(t)   \frac{\mathbbm{1}(Q<t)}{1-F_0(Q|Z)} dF_0(t|Z)
	= \frac{\tilde m(Q,Z;F_0)}{1-F_0(Q|Z)}.
	\end{align}
	Moreover, for $v\geq Q$,
	\begin{align}
	\mbox{(b) } \E\left\{\left.\mathbbm{1}(T>v)\right| Q,Z\right\} 
	&= \int_0^\infty \mathbbm{1}(t>v) \frac{\mathbbm{1}(Q<t)}{1-F_0(Q|Z)}\ dF
	_0(t|Z) \\
	& = \frac{1}{1-F_0(Q|Z)} \int_v^\infty dF_0(t|Z) \\
	& = \frac{1-F_0(v|Z)}{1-F_0(Q|Z)},
	\end{align}
	and
	\begin{align}
	\quad \mbox{(c) } \E\left\{\left.\nu(T) \mathbbm{1}(T>v) \right| Q,Z\right\} 
	&=\int_0^\infty \nu(t) \mathbbm{1}(t>v)  \frac{\mathbbm{1}(Q<t)}{1-F_0(Q|Z)} dF_0(t|Z)\\
	&= \frac{1}{1-F_0(Q|Z)} \int_v^\infty \nu(t) dF_0(t|Z)\\
	&= \frac{\tilde m(v,Z,F_0)}{1-F_0(Q|Z)}.
	\end{align}
	
	Therefore, when $F = F_0$,
	\begin{align}
	\E(A_1|Q,Z) 
	& = \left\{ \frac{\tilde m(Q,Z;F_0)}{1-F_0(Q|Z)} - \theta \right\}  \frac{1}{G(Q|Z)}
	+ \frac{m(Q,Z; F_0) - \theta F_0(Q|Z)}{G(Q|Z)\{1-F_0(Q|Z)\}}\\
	& = \frac{\mu(Z;F_0) - \theta}{G(Q|Z)\{1-F_0(Q|Z)\}}, 
	\end{align}
	and 
	\begin{align}
	\E(A_2|Q,Z)
	& = \int_Q^\infty \left\{\frac{\tilde m(v,Z,F_0)}{1-F_0(Q|Z)}-\theta \frac{1-F_0(v|Z)}{1-F_0(Q|Z)}\right. \\
	& \quad\quad\quad\quad\quad\quad \left. + \frac{m(v,Z; F_0)- \theta F_0(v|Z)}{1-F_0(v|Z)}\cdot \frac{1-F_0(v|Z)}{1-F_0(Q|Z)}\right\} \frac{dG(v|Z)}{G(v|Z)^2} \\
	& = \int_Q^\infty \frac{\mu(Z;F_0) - \theta}{1-F_0(Q|Z)}  
	%\mathbbm{1}(Q\leq v) 
	\cdot \frac{dG(v|Z)}{G(v|Z)^2} \\
	& = \frac{\mu(Z;F_0) - \theta}{1-F_0(Q|Z)} \int_Q^\infty \frac{dG(v|Z)}{G(v|Z)^2} \\
	& = \frac{\mu(Z;F_0) - \theta}{1-F_0(Q|Z)} \left\{\frac{1}{G(Q|Z)} - 1 \right\} \label{eq:DR_proof_1},
	\end{align}
	where \eqref{eq:DR_proof_1} holds because $\lim_{v\to\infty} G(v|z) = 1$ for all $z$. 
	
	Combining the above we have
	\begin{align}
	\E\left\{U(O;\theta,F_0, G)|Q,Z\right\}
	= \frac{\mu(Z;F_0) - \theta}{1-F_0(Q|Z)}.
	\end{align}
	Recall the observed data density of $(Q,Z)$ from \eqref{eq:p(q,z)}:
	\begin{align}
	p_{Q,Z}(q,z) = \frac{1-F_0(q|z)}{\beta} g_0(q|z)h_0(z).
	\end{align}
	So we have
	\begin{align}
	\E\left\{\frac{\mu(Z;F_0)- \theta}{1-F_0(Q|Z)}\right\} 
	&= \int_{\cal Z} \int_0^\infty \frac{\mu(z;F_0)- \theta}{1-F_0(q|z)} \cdot \frac{1-F_0(q|z)}{\beta_0} \ dG_0(q|z)\ dH_0(z)
	= \frac{\theta_0- \theta}{\beta_0}.
	\end{align}
	Therefore, when $\theta = \theta_0$, we have that
	$\E\left\{U(O;\theta_0,F_0,G)\right\} = 0$.
  
\end{proof}

\newpage
\section{Asymptotics} \label{app:asymptotics}

\subsection{Proofs}\label{sec:asymptotic_proofs}

%\subsection{For asymptotics}\label{sec:prelim_asymptotics}

We assume without loss of generality (WLOG) that $\hat F(t|Z)\equiv 0$ for all $t<\tau_1$, and 
%; likewise since $G^\divideontimes(t|Z)\equiv 1$ for all $t>\tau_2$, we also assume WLOG that
$\hat G(t|Z)\equiv 1$ for all $t>\tau_2$. %\lily{write concisely Yuyao!}
%Similarly, for the proof of Theorem \ref{thm:rdr}, since $F_0(t|Z)\equiv 0$ for all $t<\tau_1$ and $G_0(t|Z)\equiv 1$ for all $t>\tau_2$, we also assume WLOG that the above conditions on $\hat F$ and $\hat G$.

Therefore $m(t,Z; \hat F) = m(t,Z; F^\divideontimes)\equiv 0$ for all $t<\tau_1$, 
since $F^\divideontimes$ and $G^\divideontimes$ %, $\hat F$ and $\hat G$ 
satisfy Assumption \ref{ass:overlap}. Furthermore, for $F=F^\divideontimes$ or $\hat F$, and $G=G^\divideontimes$ or $\hat G$, we have
\begin{align}
    &\int_0^\infty \frac{m(v,Z; F) - \theta_0 F(v|Z)}{1 - F(v|Z)}\mathbbm{1}(Q\leq v<T) \frac{d G(v|Z)}{G(v|Z)^2} \\
    =&  \int_{\tau_1}^{\tau_2} \frac{m(v,Z; F) - \theta_0 F(v|Z)}{1-F(v|Z)}\mathbbm{1}(Q\leq v<T) \frac{d G(v|Z)}{G(v|Z)^2}.\label{eq:asymp_int_mF} \\ 
\end{align}
We have also
%\begin{align}
%    m(Q,Z; F) = m(Q,Z; F)\mathbbm{1}(Q\geq \tau_1)\ \ a.s., \quad F(Q|Z) = F(Q|Z)\mathbbm{1}(Q\geq \tau_1)\ \ a.s., \label{eq:asymp_mF}
%\end{align}
%and \yuyao{Shall we use the connecting word `so' since the following are implied from the above? Or maybe remove the above since they are not directly used in the following proofs?} 
\begin{align}
    &\left|m(Q,Z; \hat F) - m(Q,Z; F^\divideontimes)\right| \leq \sup_{t\in[\tau_1,\tau_2]}\left|m(t,Z; \hat F) - m(t,Z; F^\divideontimes)\right|, \ \ a.s., \label{eq:asymp_mQ}\\
    &\left|\hat F(Q|Z) - F^\divideontimes(Q|Z)\right| \leq \sup_{t\in[\tau_1,\tau_2]}\left|\hat F(t|Z) -  F^\divideontimes(t|Z)\right|, \ \ a.s.. \label{eq:asymp_FQ}
\end{align}

\bigskip
The following asymptotic results of $U$-statistics from \citet{van2000asymptotic} are used in proof of Theorem \ref{thm:mdr} for the asymptotic normality of $\hat\theta_{dr}$.
Let $X_1,...,X_n$ be a random sample from an unknown distribution. Given a known function $h$, consider the estimation of the parameter $\theta = \E h(X_1,...,X_r)$. A $U$-statistic with kernel $h$ is defined as
\begin{align}
U = \frac{1}{{n \choose r}} \sum_{\gamma} h(X_{\gamma_1},...,X_{\gamma_r}),
\end{align}
where the sum is taken over the set of all unordered subsets $\gamma$ of $r$ different integers chosen from $\{1,...,n\}$. The projection of $U-\theta$ onto the set of all statistics of the form $\sum_{i=1}^n g_i(X_i)$ is given by
\begin{align}
\hat U = \sum_{i=1}^n \E(U-\theta|X_i) = \frac{r}{n}\sum_{i=1}^n h_1(X_i),
\end{align}
where the function $h_1$ is given by
\begin{align}
h_1(x) & = \E h(x,X_2,...,X_r) - \theta. 
\end{align}
\begin{theorem}[Theorem 12.3 in \citet{van2000asymptotic}]\label{lem:Ustats}
	If $\E h^2(X_1,...,X_r)<\infty$, then $n^{1/2} (U- \theta - \hat U)\convp 0$. Consequently, the sequence $n^{1/2}(U-\theta)$ is asymptotically normal with mean zero and covariance $r^2\xi_1$, where, with $X_1,...,X_r, X_1',...,X_r'$ denoting i.i.d. variables, 
 $$\xi_1 = \text{cov}\{h(X_1,X_2,...,X_r), h(X_1,X_2',...,X_r')\}.$$
\end{theorem}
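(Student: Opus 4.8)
The plan is to prove this via the \emph{Hájek projection} principle, which is precisely the construction already recorded in the statement: $\hat U = \sum_{i=1}^n \E(U-\theta\mid X_i)$ is (because $\E U = \theta$) the orthogonal projection in $L^2$ of the centered statistic $U-\theta$ onto the linear subspace of all sums $\sum_{i=1}^n g_i(X_i)$ of single-observation functions. Granting this, the argument splits into two essentially independent pieces: (a) a second-moment bound showing the projection remainder is negligible, i.e.\ $n^{1/2}(U-\theta-\hat U)\convp 0$, and (b) a classical i.i.d.\ central limit theorem applied to $\hat U$ itself. First I would record that $\E\hat U = 0$ and that $\hat U = (r/n)\sum_{i=1}^n h_1(X_i)$ is an average of i.i.d.\ mean-zero summands, so that $\var(\hat U) = r^2\xi_1/n$ where $\xi_1 = \var\{h_1(X_1)\}$; a short conditioning computation confirms $\var\{h_1(X_1)\} = \text{cov}\{h(X_1,X_2,\dots,X_r), h(X_1,X_2',\dots,X_r')\}$, matching the claimed covariance.

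The technical core is piece (a), and here the plan is to compute $\var(U)$ by the standard variance decomposition of a $U$-statistic. Writing
\begin{align}
\var(U) = \frac{1}{\binom{n}{r}^2}\sum_{\gamma,\gamma'}\text{cov}\{h(X_\gamma), h(X_{\gamma'})\},
\end{align}
I would group the pairs $(\gamma,\gamma')$ by the number $c = |\gamma\cap\gamma'|$ of shared indices. Terms with $c=0$ vanish by independence, and terms with $c\geq 1$ contribute a common value $\xi_c := \text{cov}\{h(X_1,\dots,X_c,\dots), h(X_1,\dots,X_c,\dots')\}$, with the number of such pairs equal to $\binom{n}{r}\binom{r}{c}\binom{n-r}{r-c}$. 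This yields
\begin{align}
\var(U) = \frac{1}{\binom{n}{r}}\sum_{c=1}^r \binom{r}{c}\binom{n-r}{r-c}\,\xi_c.
\end{align}
The combinatorial estimate $\binom{n-r}{r-c}/\binom{n}{r} = O(n^{-c})$ then shows that the $c=1$ term equals $(r^2/n)\xi_1\{1+o(1)\}$ and every $c\geq 2$ term is $O(n^{-2})$, so $\var(U) = r^2\xi_1/n + O(n^{-2})$.

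With these two variance computations in hand, the projection property closes piece (a) cleanly: since $U-\theta-\hat U$ is orthogonal to $\hat U$ and both are mean zero, the Pythagorean identity gives $\var(U-\theta-\hat U) = \var(U) - \var(\hat U) = O(n^{-2})$. Hence $\E\{n^{1/2}(U-\theta-\hat U)\}^2 = n\cdot O(n^{-2}) = O(n^{-1})\to 0$, and $n^{1/2}(U-\theta-\hat U)\convp 0$ by Chebyshev. Finally I would apply the ordinary Lindeberg--Lévy CLT to $n^{1/2}\hat U = (r/n^{1/2})\sum_{i=1}^n h_1(X_i)$, which (using $\xi_1<\infty$, itself guaranteed by $\E h^2<\infty$) converges in distribution to $N(0, r^2\xi_1)$; Slutsky's lemma then transfers this limit to $n^{1/2}(U-\theta)$.

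The main obstacle is piece (a), specifically the combinatorial bookkeeping that isolates the $c=1$ term as the sole $O(n^{-1})$ contribution and bounds the remainder by $O(n^{-2})$; this is where finiteness of the second moment $\E h^2(X_1,\dots,X_r)<\infty$ is genuinely used, since it guarantees all $\xi_c$ are finite so that the higher-order terms are merely dominated by their $O(n^{-2})$ prefactors. The CLT step and the identification of $\xi_1$ as the asymptotic variance are then routine.
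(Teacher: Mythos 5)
Your proof is correct, and it is essentially the canonical argument: the paper does not prove this statement at all---it imports it verbatim as Theorem 12.3 of \citet{van2000asymptotic}---and the proof given there is exactly your route, namely the H\'ajek projection identity $\hat U = (r/n)\sum_i h_1(X_i)$, the $U$-statistic variance decomposition $\var(U) = \binom{n}{r}^{-1}\sum_{c=1}^r \binom{r}{c}\binom{n-r}{r-c}\xi_c$ with the $\binom{r}{c}\binom{n-r}{r-c}/\binom{n}{r} = O(n^{-c})$ bookkeeping, the Pythagorean step $\var(U-\theta-\hat U)=\var(U)-\var(\hat U)=O(n^{-2})$, and then the Lindeberg--L\'evy CLT plus Slutsky. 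The only cosmetic gap is that you invoke the projection property by fiat ("granting this"); a complete writeup would verify the orthogonality $\E\{(U-\theta-\hat U)\,g(X_i)\}=0$ directly, which is a one-line conditioning computation, so nothing of substance is missing.
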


\medskip
The following  is a technical regularity assumption in order to apply Theorem \ref{lem:Ustats}. 
%We first introduce more notations.
From the asymptotic linearity of $\hat F$ and $\hat G$ in Assumption \ref{assump:if}, 
it can be shown that for each $(v,z)\in[\tau_1,\tau_2]\times \Z$, $\int_0^v\{\nu(t)-\theta\} d\hat F(t|Z) / \{1-\hat F(v|Z)\}$ is asymptotically linear for $\int_0^v\{\nu(t)-\theta\} dF^\divideontimes(t|Z)/\{1-F^\divideontimes(v|Z)\}$ with influence function $\phi_1(t,z,O;F^\divideontimes)$, 
and $1/\hat G(v|Z)$ is asymptotically linear for $1/G^\divideontimes$ with influence function $\phi_2(t,z,O;G^\divideontimes)$. In addition, denote 
\begin{align}
\tilde R_{1}(v,z) &= \frac{\int_0^v\{\nu(t)-\theta\} d\hat F(t|z) }{1-\hat F(v|z)} - \frac{\int_0^v\{\nu(t)-\theta\} dF^\divideontimes(t|z)}{1-F^\divideontimes(v|z)} -  \frac{1}{n}\sum_{i=1}^n\phi_1(v,z,O_i;F^\divideontimes),\label{eq:tilde_R1}\\
\tilde R_{2}(v,z) & = \frac{1}{\hat G(v|z)} - \frac{1}{G^\divideontimes(v|z)} - \frac{1}{n}\sum_{i=1}^n \phi_2(v,z,O_i; G^\divideontimes). \label{eq:tilde_R2}
\end{align}
We have that $\|\tilde R_1(\cdot,Z)\|_{\sup,2} = o(n^{-1/2})$, $\|\tilde R_2(\cdot,Z)\|_{\sup,2} = o(n^{-1/2})$, and either $\|\tilde R_1(\cdot,Z)\|_{\TV,2} = o(1)$ or $\|\tilde R_2(\cdot,Z)\|_{\TV,2} = o(1)$ (depending on whether $\|R_1(\cdot,Z)\|_{\TV,2} = o(1)$ or $\|R_2(\cdot,Z)\|_{\TV,2} = o(1)$ in Assumption \ref{assump:if}, respectively).

\bigskip
\begin{assumption}\label{ass:U_stat}
    Suppose that 
	\begin{align}
	&\E\left[\left\{\int_{0}^{\infty} \phi_1(v,Z_1,O_2; F^\divideontimes) \frac{d\bar M_{Q,1}(v; G^\divideontimes)}{G^\divideontimes(v|Z_1)}  
	+\int_{0}^{\infty} \phi_1(v,Z_2,O_1; F^\divideontimes) \frac{d\bar M_{Q,2}(v; G^\divideontimes)}{G^\divideontimes(v|Z_2)}\right\}^2\right] < \infty, \\
	&\E\left\{\left(\int_{0}^{\infty} \phi_2(v,Z_1,O_2; G^\divideontimes)\mathbbm{1}(Q_1\leq v<T_1) \ d\left[\frac{\int_0^v\{\nu(t)-\theta\} dF^\divideontimes(t|Z_1)}{1-F^\divideontimes(v|Z_1)} \right] \right.\right.\\
	&\quad\quad\quad\quad \left.\left.
	+\int_{0}^{\infty} \phi_2(v,Z_2,O_1; G^\divideontimes) \mathbbm{1}(Q_2\leq v<T_2) \ d\left[\frac{\int_0^v\{\nu(t)-\theta\} dF^\divideontimes(t|Z_2)}{1-F^\divideontimes(v|Z_2)} \right] \right)^2\right\} < \infty.
	\end{align}
\end{assumption}

\newpage
 \begin{proof}[Proof of Theorem \ref{thm:mdr}]
	(1) First, we show the consistency of $\hat\theta_{dr}$, i.e., $\hat\theta_{dr} - \theta_0 \convp 0$.
    
    We will use expression \eqref{eq:consistency_proof_2} of $\hat\theta_{dr} - \theta_0$ below: 
    \begin{align}
	\hat\theta_{dr} - \theta_0
	&=  \left[\sum_{i=1}^n \left\{\frac{1}{\hat G(T_i|Z_i)} - \int_0^\infty \frac{\hat F(v|Z_i)}{1-\hat F(v|Z_i)} \cdot \frac{d\bar M_{Q,i}(v; \hat G)}{\hat G(v|Z_i)} \right\} \right]^{-1}\\
	&\quad\quad\quad\quad \times \left[ \sum_{i=1}^n \left\{\frac{\nu(T_i) - \theta_0}{\hat G(T_i|Z_i)} - \int_0^\infty \frac{m(v,Z_i;\hat F) - \theta_0\hat F(v|Z_i)}{1-\hat F(v|Z_i)} \cdot \frac{d\bar M_{Q,i}(v; \hat G)}{\hat G(v|Z_i)} \right\} \right] 
	\\
	& = \left. \left\{\frac{1}{n}\sum_{i=1}^n U_{i}(\theta_0,\hat F,\hat G) \right\} \right/\left\{\frac{1}{n}\sum_{i=1}^n D_{i}(\hat F,\hat G) \right\}, \label{eq:consistency_proof_2}
	\end{align}
 where $U_i$'s are i.i.d.~copies of $U$ defined in \eqref{eq:score}, and 
 \begin{align}
	D_{i}(\hat F, \hat G) & = \frac{1}{\hat G(T_i|Z_i)} - \int_0^\infty \frac{\hat F(v|Z_i)}{1-\hat F(v|Z_i)} \frac{ d\bar M_{Q,i}(v; \hat G)}{\hat G(v|Z_i)}.  \label{eq:D_i}
	\end{align}
	We will show that (i) the numerator in \eqref{eq:consistency_proof_2} converges to 0 in probability and (ii) the denominator converges to $1/\beta_0$ in probability. 
	For (i), we consider the decomposition \eqref{eq:consistency_proof_3} and show that the leading term $\A_1$ converges to zero in probability by the law of large numbers, and the remainder term $ \A_2$ is $o_p(1)$. With the same proof techniques, we will show (ii). 
	
	(i) First consider $n^{-1}\sum_{i=1}^n U_{i}(\theta_0, \hat F, \hat G)$. 
	\begin{align}
	\frac{1}{n}\sum_{i=1}^n U_{i}(\theta_0, \hat F, \hat G) 
	& = \A_1+\A_2, \label{eq:consistency_proof_3}
	\end{align}
	where
	\begin{align}
	\A_1 & = \frac{1}{n}\sum_{i=1}^n U_{i}(\theta_0,F^\divideontimes, G^\divideontimes), \\
	\A_2 & = \frac{1}{n}\sum_{i=1}^n \left\{U_{i}(\theta_0,\hat F, \hat G) - U_{i}(\theta_0, F^\divideontimes, G^\divideontimes) \right\}.
	\end{align}
	By law of large numbers, the first term
	\begin{align}
	\A_1 = \frac{1}{n}\sum_{i=1}^n U_{i}(\theta_0, F^\divideontimes, G^\divideontimes) \convp \E\left\{U(\theta_0, F^\divideontimes, G^\divideontimes)\right\} =0 
	\end{align}
	when $G^\divideontimes = G_0$ or $F^\divideontimes = F_0$
	by Theorem \ref{thm:DR}. 
	
	Next, we will show that $\A_2 = o_p(1)$. We consider the decomposition 
	\begin{align}
	\A_2 & = \A_{21} + \A_{22} + \A_{23},
	\end{align}
	where
	\begin{align}
	\A_{21} & =  \frac{1}{n} \sum_{i=1}^n \left\{\frac{\nu(T_i) - \theta_0}{\hat G(T_i|Z_i)} - \frac{\nu(T_i) - \theta_0}{G^\divideontimes(T_i|Z_i)} \right\} ,\\
	\A_{22} & = \frac{1}{n} \sum_{i=1}^n \left[\frac{m(Q_i,Z_i;\hat F) - \theta_0\hat F(Q_i|Z_i)}{\hat G(Q_i|Z_i)\{1-\hat F(Q_i|Z_i)\}} - \frac{m(Q_i,Z_i;F^\divideontimes) - \theta_0 F^\divideontimes(Q_i|Z_i)}{G^\divideontimes(Q_i|Z_i)\{1-F^\divideontimes(Q_i|Z_i)\}} \right] ,\\
	\A_{23} & = \frac{1}{n} \sum_{i=1}^n \left[\int_0^\infty \frac{m(v,Z_i;\hat F) - \theta_0 \hat F(v|Z_i)}{1-\hat F(v|Z_i)}\mathbbm{1}(Q_i\leq v<T_i) \frac{d\hat G(v|Z_i)}{\hat G(v|Z_i)^2} \right. \\
	&\quad\quad\quad\quad\quad  \left.- \int_0^\infty \frac{m(v,Z_i;F^\divideontimes) - \theta_0 F^\divideontimes(v|Z_i)}{1-F^\divideontimes(v|Z_i)} \mathbbm{1}(Q_i\leq v<T_i) \frac{dG^\divideontimes(v|Z_i)}{G^\divideontimes(v|Z_i)^2} \right]. \\
	\end{align}
	We will show that each term is $o_p(1)$. First consider $\A_{21}$. 
	\begin{align}
	\left|\A_{21}\right| 
	& \leq\frac{1}{n}\sum_{i=1}^n\left|\frac{\nu(T_i)}{\hat G(T_i|Z_i)} - \frac{\nu(T_i)}{G^\divideontimes(T_i|Z_i)} \right| 
	= \frac{1}{n}\sum_{i=1}^n \left|\frac{\nu(T_i)}{G^\divideontimes(T_i|Z_i)\hat G(T_i|Z_i)}\right| \left|\hat G(T_i|Z_i) - G^\divideontimes(T_i|Z_i)\right|.
	\end{align}
	By Cauchy-Schwartz inequality, 
	\begin{align}
	\E\left\{|\A_{21}|\right\}
	& \leq \E\left\{\left|\frac{\nu(T)}{G^\divideontimes(T|Z) \hat G(T_i|Z_i)}\right| \left|G^\divideontimes(T_i|Z_i) -\hat G(T_i|Z_i)\right|\right\}\\
	&\leq  \E\left\{\left|\frac{\nu(T)}{G^\divideontimes(T|Z)\hat G(T_i|Z_i)}\right|^2\right\}^{1/2} \E\left\{\left|\hat G(T_i|Z_i) - G^\divideontimes(T_i|Z_i)\right|^2\right\}^{1/2} \\
	&\lesssim  \left\|\hat G(t|Z) - G^\divideontimes(t|Z)\right\|_{\sup, 2} \\
	& = o(1).
	\end{align}
	Therefore, by Markov's inequality, $\A_{21} = o_p(1)$.
	
	Now consider $\A_{22}$.
	\begin{align}
	|\A_{22}|
	& \leq \frac{1}{n} \sum_{i=1}^n \left|\frac{m(Q_i,Z_i;\hat F) - \theta_0\hat F(Q_i|Z_i)}{\hat G(Q_i|Z_i)\{1-\hat F(Q_i|Z_i)\}} - \frac{m(Q_i,Z_i;F^\divideontimes) - \theta_0 F^\divideontimes(Q_i|Z_i)}{G^\divideontimes(Q_i|Z_i)\{1-F^\divideontimes(Q_i|Z_i)\}} \right| \\
	&\leq \A_{221} + \A_{222}.
	\end{align}
 where
 \begin{align}
     \A_{221} & = \frac{1}{n} \sum_{i=1}^n \left|\frac{1}{\hat G(Q_i|Z_i)} \left\{\frac{m(Q_i,Z_i;\hat F) - \theta_0\hat F(Q_i|Z_i)}{1-\hat F(Q_i|Z_i)} - \frac{m(Q_i,Z_i;F^\divideontimes) - \theta_0 F^\divideontimes(Q_i|Z_i)}{1-F^\divideontimes(Q_i|Z_i)} \right\} \right|, \\
     \A_{222} & = \frac{1}{n} \sum_{i=1}^n \left|\frac{m(Q_i,Z_i;F^\divideontimes) - \theta_0 F^\divideontimes(Q_i|Z_i)}{1-F^\divideontimes(Q_i|Z_i)} \left\{\frac{1}{\hat G(Q_i|Z_i)} - \frac{1}{G^\divideontimes(Q_i|Z_i)} \right\} \right|.
 \end{align}
Integrating by parts we have
	\begin{align}
	m(v,Z; F) 
	= \int_0^v \nu(t) d F(t|Z) 
	= \nu(v)F(v|Z) - \int_0^v F(t|Z) d\nu(t)
	\end{align}
 because $F(0|Z) = 0$.
	Therefore, 
	\begin{align}
	&\quad \sup_{v\in[\tau_1,\tau_2]} \left| m(v,Z;\hat F)  - m(v,Z;F^\divideontimes) \right| \\
    & \leq \sup_{v\in[\tau_1,\tau_2]} \left| \nu(v)\hat F(v|Z) - \nu(v)F^\divideontimes(v|Z)\right| 
	+ \sup_{v\in[\tau_1,\tau_2]} \int_0^v \left|\hat F(t|Z) - F^\divideontimes(t|Z) \right| d\nu(t) \\
	& =\sup_{v\in[\tau_1,\tau_2]} \left| \nu(v)\hat F(v|Z) - \nu(v)F^\divideontimes(v|Z)\right| 
	+ \sup_{v\in[\tau_1,\tau_2]} \int_{\tau_1}^v \left|\hat F(t|Z) - F^\divideontimes(t|Z) \right| d\nu(t) \\
	& \lesssim \sup_{v\in[\tau_1,\tau_2]} \left|\hat F(v|Z)  - F^\divideontimes(v|Z)\right|. \label{eq:mF_hat-mF}
	\end{align}
	In addition, by \eqref{eq:asymp_mQ} and \eqref{eq:asymp_FQ}, 
 %\yuyao{added the eq. number}
\begin{align}
    &\quad \E\left(|\A_{221}|\right) \\
    & \leq \ee{\left|\frac{1}{\hat G(Q|Z)\{1- \hat F(Q|Z)\}} \right|  \left|\{m(Q,Z;\hat F) - \theta_0\hat F(Q|Z)\}- \{m(Q,Z;F^\divideontimes) - \theta_0 F^\divideontimes(Q|Z)\}\right|} \\
    &\quad + \ee{\left|\frac{m(Q,Z;F^\divideontimes) - \theta_0 F^\divideontimes(Q|Z)}{\hat G(Q|Z)\{1-\hat F(Q|Z)\}\{1-F^\divideontimes(Q|Z)\}} \right|  \left|F^\divideontimes(Q|Z) - \hat F(Q|Z) \right|} \\
    &\lesssim \E\left\{ \sup_{v\in[\tau_1,\tau_2]} \left|\hat F(v|Z)  - F^\divideontimes(v|Z)\right| \right\} + \E\left\{\sup_{v\in[\tau_1,\tau_2]} \left| m(v,Z;\hat F)  - m(v,Z;F^\divideontimes) \right|\right\} \\
     &\lesssim \E\left\{ \sup_{v\in[\tau_1,\tau_2]} \left|\hat F(v|Z)  - F^\divideontimes(v|Z)\right| \right\} 
\end{align}
By Cauchy-Schwartz inequality,
\begin{align}
    \E\left\{ \sup_{v\in[\tau_1,\tau_2]} \left|\hat F(v|Z)  - F^\divideontimes(v|Z)\right| \right\} 
    \leq \left\|\hat F(t|Z) - F^\divideontimes(t|Z)\right\|_{\sup, 2} = o(1), 
\end{align}
so $\E\left(|\A_{221}|\right) = o(1)$.
Therefore, by Markov's inequality, $\A_{221} = o_p(1)$. Likewise, we can show that $\A_{222} = o_p(1)$. Thus, $A_{22} = A_{221}+A_{222} = o_p(1)$.

Next, consider $\A_{23}= \A_{231} + \A_{232}$, where
 \begin{align}
     \A_{231} & = \frac{1}{n} \sum_{i=1}^n \int_{Q_i}^{T_i} \left\{\frac{m(v,Z_i;\hat F) - \theta_0\hat F(v|Z_i)}{1-\hat F(v|Z_i)} - \frac{m(v,Z_i;F^\divideontimes) - \theta_0 F^\divideontimes(v|Z_i)}{1-F^\divideontimes(v|Z_i)} \right\} \frac{d\hat G(v|Z_i)}{\hat G(v|Z_i)^2}, \\
     \A_{232} & = \frac{1}{n} \sum_{i=1}^n \int_{Q_i}^{T_i} \frac{m(v,Z_i;F^\divideontimes)- \theta_0 F^\divideontimes(v|Z_i)}{1-F^\divideontimes(v|Z_i)} \  d \left\{\frac{1}{\hat G(v|Z_i)} - \frac{1}{G^\divideontimes(v|Z_i)}\right\}.
 \end{align}
 Since $\hat G(t|Z)$ is a non-decreasing function a.s., by \eqref{eq:asymp_int_mF} \eqref{eq:mF_hat-mF}
 and Cauchy-Schwartz inequality, %\yuyao{The equation numbers below have changed because equation number \eqref{eq:mF_hat-mF} is added }
	\begin{align}
	\E(|\A_{231}|) 
    &\leq \E\left\{\int_{Q}^{T} \left|\frac{m(v,Z;\hat F) - \theta_0\hat F(v|Z)}{1-\hat F(v|Z)} - \frac{m(v,Z;F^\divideontimes) - \theta_0 F^\divideontimes(v|Z)}{1-F^\divideontimes(v|Z)} \right|  \frac{d\hat G(v|Z)}{\hat G(v|Z)^2}\right\}\\
    &= \E\left\{\int_{Q\vee\tau_1}^{T\wedge\tau_2} \left|\frac{m(v,Z;\hat F) - \theta_0\hat F(v|Z)}{1-\hat F(v|Z)} - \frac{m(v,Z;F^\divideontimes) - \theta_0 F^\divideontimes(v|Z)}{1-F^\divideontimes(v|Z)} \right|   \frac{d\hat G(v|Z)}{\hat G(v|Z)^2}\right\}\\
    &\lesssim \E\left\{ \sup_{v\in[\tau_1,\tau_2]} \left|\hat F(v|Z)  - F^\divideontimes(v|Z)\right| \right\} + \E\left\{ \sup_{v\in[\tau_1,\tau_2]} \left|m(v,Z;\hat F)  - m(v,Z;F^\divideontimes)\right| \right\} \\
    &\lesssim \E\left\{ \sup_{v\in[\tau_1,\tau_2]} \left|\hat F(v|Z)  - F^\divideontimes(v|Z)\right| \right\} \\
	& \leq \left\|\hat F(t|Z) - F^\divideontimes(t|Z)\right\|_{\sup, 2} \\
	& = o(1).
	\end{align}
    So $\A_{231} = o_p(1)$ by Markov's inequality.
In addition, with integration by parts, we can show that $\A_{232} = o_p(1)$ with a similar argument as above. Therefore, $\A_{23} = \A_{232} + \A_{232} = o_p(1)$.
 
	Thus, we have that $\A_{2} = o_p(1)$. 
	
	(ii) Likewise, we can show that
	\begin{align}
	\frac{1}{n}\sum_{i=1}^n D_{i}(\hat F, \hat G) = \frac{1}{n}\sum_{i=1}^n D_{i}(F^\divideontimes, G^\divideontimes) + o_p(1). 
	\end{align}
	
	Similarly to the proof of Theorem \ref{thm:DR}, we can show that $\E\{D_i(F_0, G^\divideontimes)\} = 1/\beta_0$ and $\E\{D_i(F^\divideontimes, G_0)\} = 1/\beta_0$. 
 Therefore, when $F^\divideontimes = F_0$ or $G^\divideontimes = G_0$, $n^{-1}\sum_{i=1}^n D_{i}(F^\divideontimes, G^\divideontimes)  \convp 1/\beta_0$ by the law of large numbers. Thus, $n^{-1}\sum_{i=1}^n D_{i}(\hat F, \hat G)  \convp 1/\beta_0$.
	
	Above all, if $F^\divideontimes = F_0$ or $G^\divideontimes = G_0$, we have that $\hat\theta_{dr} - \theta_0 \convp 0$, 
	that is, $\hat\theta_{dr} \convp \theta_0$. 
	
	\bigskip
	(2) We now show the asymptotic normality of $\hat\theta_{dr}$ with the additional Assumption \ref{assump:if}. 
	For the decomposition \eqref{eq:AN_err_decomp} below, we show that $\B_2$
    is $o_p(n^{-1/2})$ by `counting' small terms using combinatorics.
    The terms $\B_3$ and $\B_4$ can be shown to be asymptotically normal with the help of  $U$-statistics in Theorem \ref{lem:Ustats}.

    Without loss of generality, we consider the case where Assumption \ref{assump:if} holds with $\|R_2(\cdot,Z)\|_{\TV,2} = o(1)$. This implies that $\|\tilde R_2(\cdot,Z)\|_{\TV,2} = o(1)$, where $\tilde R_2$ is defined in \eqref{eq:tilde_R2}. 
	
	In part (1) of the proof, we have shown that 
	\begin{align}
	\hat\theta_{dr} - \theta_0
	& = \left. \left\{\frac{1}{n}\sum_{i=1}^n U_{i}(\theta_0,\hat F,\hat G) \right\} \right/\left\{\frac{1}{n}\sum_{i=1}^n D_{i}(\hat F,\hat G) \right\},
	\end{align}
	and 
	\begin{align}
	\frac{1}{n}\sum_{i=1}^n D_{i}(\hat F,\hat G) \convp \frac{1}{\beta_0}, 
	\end{align}
	where $D_i$ is defined in \eqref{eq:D_i}.
	By Slutsky's Theorem, it suffices to show that $n^{-1/2}\sum_{i=1}^n U_i(\theta_0,\hat F,\hat G)$ converges in distribution to a normal distribution with mean zero. 
	
	Write
	\begin{align}
	\frac{1}{n}\sum_{i=1}^n U_i(\theta_0, \hat F,\hat G)
	&= \B_1+\B_2 + \B_3 + \B_4, \label{eq:AN_err_decomp}
	\end{align}
	where
	\begin{align}
	\B_1 & = \frac{1}{n}\sum_{i=1}^n U_i(\theta_0, F^\divideontimes, G^\divideontimes),\\
	\B_2 &= \frac{1}{n}\sum_{i=1}^n \left[ \left\{ U_i(\theta_0, \hat F,\hat G) -U_i(\theta_0, F^\divideontimes, \hat G)\right\} 
	- \left\{ U_i(\theta_0, \hat F,G^\divideontimes) -U_i(\theta_0, F^\divideontimes, G^\divideontimes)\right\} \right],\\
	\B_3 &= \frac{1}{n}\sum_{i=1}^n \left\{ U_i(\theta_0, \hat F,G^\divideontimes) -U_i(\theta_0, F^\divideontimes, G^\divideontimes)\right\}, \\
	\B_4 & = \frac{1}{n}\sum_{i=1}^n  \left\{ U_i(\theta_0, F^\divideontimes,\hat G) -U_i(\theta_0, F^\divideontimes, G^\divideontimes)\right\}.
	\end{align}
	
	For $\B_1$, when  $F^\divideontimes = F_0$ or $G^\divideontimes = G_0$, we have  $\E\left\{U(\theta_0,F^\divideontimes, G^\divideontimes) \right\} = 0$ by Theorem \ref{thm:DR}. In addition, it can be shown that $\E\left\{U(\theta_0,F^\divideontimes, G^\divideontimes)^2  \right\}<\infty$, so by central limit theorem,
	\begin{align}
	n^{1/2} \B_1 = n^{-1/2}\sum_{i=1}^n U_{i}(\theta_0,F^\divideontimes,G^\divideontimes) \convd N\left(0, \E\left\{U(\theta_0,F^\divideontimes, G^\divideontimes)^2  \right\}\right). 
	\end{align}
 
	Now consider $\B_2$. 
	Recall  $\phi_1$, $\phi_2$, $\tilde R_1$, $\tilde R_2$ as defined earlier in this subsection.
	By \eqref{eq:asymp_int_mF}, 
    we have that 
	\begin{align}
	\B_2 
	& = \frac{1}{n}\sum_{i=1}^n \int_{\tau_1\vee Q_i}^{\tau_2\wedge T_i} \left\{\frac{m(v,Z_i;\hat F) - \theta_0\hat F(v|Z_i)}{1-\hat F(v|Z_i)} - \frac{m(v,Z_i;F^\divideontimes) - \theta_0 F^\divideontimes(v|Z_i)}{1-F^\divideontimes(v|Z_i)}\right\} d\left\{ \frac{1}{G^\divideontimes(v|Z_i)} - \frac{1}{\hat G(v|Z_i)} \right\} \\
    & =  \frac{1}{n}\sum_{i=1}^n \int_{\tau_1\vee Q_i}^{\tau_2\wedge T_i} \left\{\frac{1}{n}\sum_{j=1}^n\phi_1(v,Z_i,O_j; F^\divideontimes) + \tilde R_{1}(v,Z_i)\right\} d\left\{\frac{1}{n}\sum_{k=1}^n \phi_2(v,Z_i,O_k; G^\divideontimes) + \tilde R_{2}(v,Z_i) \right\} \\
	& = \B_{20}+\B_{21},
	\end{align}
	where 
	\begin{align}
	\B_{20} &=  \frac{1}{n}\sum_{i=1}^n \int_{\tau_1\vee Q_i}^{\tau_2\wedge T_i} \left\{\frac{1}{n}\sum_{j=1}^n\phi_1(v,Z_i,O_j; F^\divideontimes) \right\}d\left\{\frac{1}{n}\sum_{k=1}^n \phi_2(v,Z_i,O_k; G^\divideontimes) \right\},\\
	\B_{21} & = \frac{1}{n}\sum_{i=1}^n \int_{\tau_1\vee Q_i}^{\tau_2\wedge T_i} \left\{\frac{1}{n}\sum_{j=1}^n\phi_1(v,Z_i,O_j; F^\divideontimes) \right\}d\tilde R_{2}(v,Z_i) \\
	&\quad +\int_{\tau_1\vee Q_i}^{\tau_2\wedge T_i} \tilde R_{1}(v,Z_i)d\left\{\frac{1}{n}\sum_{k=1}^n \phi_2(v,Z_i,O_k; G^\divideontimes) + \tilde R_{2}(v,Z_i) \right\}.
	\end{align}
	Since we assume that Assumption \ref{assump:if} holds with $\|R_2(\cdot,Z)\|_{\TV,2} = o(1)$,
	we have  $\|\tilde R_1(\cdot,Z)\|_{\sup,2} = o(n^{-1/2})$, $\|\tilde R_2(\cdot,Z)\|_{\sup,2} = o(n^{-1/2})$, and $\|\tilde R_2(\cdot,Z)\|_{\TV,2} = o(1)$. 
 Therefore, 
 $\E\left( |\B_{21}| \right) = o(n^{-1/2})$, so $\B_{21} = o_p(n^{-1/2})$ by Markov's inequality. 
 
	We will now show that $\B_{20} = o_p(n^{-1/2})$. For the simplification of notations, we will suppress the dependency of $\phi_1$ on $F^\divideontimes$ and $\phi_2$ on $G^\divideontimes$.
	\begin{align}
	&\quad \E(|\B_{20}|^2)\\
	& = \frac{1}{n^6}\E\left[ \sum_{i,j,k,i',j',k'=1}^n \int_{\tau_1\vee Q_i}^{\tau_2\wedge T_i} \phi_1(v, Z_{i}, O_{j}) d\phi_2(v, Z_{i}, O_{k}) \int_{\tau_1\vee Q_{i'}}^{\tau_2\wedge T_{i'}} \phi_1(v, Z_{i'}, O_{j'}) d\phi_2(v, Z_{i'}, O_{k'}) \right] \\
	& = \frac{1}{n^6}\E\left[ \sum_{\mathcal{J}} \int_{\tau_1\vee Q_i}^{\tau_2\wedge T_i} \phi_1(v, Z_{i}, O_{j}) d\phi_2(v, Z_{i}, O_{k}) \int_{\tau_1\vee Q_{i'}}^{\tau_2\wedge T_{i'}} \phi_1(v, Z_{i'}, O_{j'}) d\phi_2(v, Z_{i'}, O_{k'}) \right] \\
	& \quad +\frac{1}{n^6}\E\left[ \sum_{\mathcal{K}} \int_{\tau_1\vee Q_i}^{\tau_2\wedge T_i} \phi_1(v, Z_{i}, O_{j}) d\phi_2(v, Z_{i}, O_{k}) \int_{\tau_1\vee Q_{i'}}^{\tau_2\wedge T_{i'}} \phi_1(v, Z_{i'}, O_{j'}) d\phi_2(v, Z_{i'}, O_{k'}) \right],
	\end{align}
	where
	\begin{align}
	\mathcal{J} &= \mathcal{J}_1 \cup \mathcal{J}_2 \cup \mathcal{J}_3 \cup \mathcal{J}_4,\\
	\mathcal{J}_1 &= \{(i,j,k,i',j',k'): j \text{ is different from the other indices} \},\\
	\mathcal{J}_2 &= \{(i,j,k,i',j',k'): k \text{ is different from the other indices} \},\\
	\mathcal{J}_3 &= \{(i,j,k,i',j',k'): j' \text{ is different from the other indices} \},\\
	\mathcal{J}_4 &= \{(i,j,k,i',j',k'): k' \text{ is different from the other indices} \},\\
	\mathcal{K} &= \mathcal{J}^c.
	\end{align}
	We will show that for any $(i,j,k,i',j',k')\in \mathcal{J}$, we have 
	\begin{align}
	\E\left\{\int_{\tau_1\vee Q_i}^{\tau_2\wedge T_i} \phi_1(v, Z_{i}, O_{j}) d\phi_2(v, Z_{i}, O_{k}) \int_{\tau_1\vee Q_{i'}}^{\tau_2\wedge T_{i'}} \phi_1(v, Z_{i'}, O_{j'}) d\phi_2(v, Z_{i'}, O_{k'}) \right\} = 0. % \label{eq:ANproof_1}
	\end{align}
	This is because for $(i,j,k,i',j',k')\in \mathcal{J}_1$, $O_j$ is independent of $(O_i, O_k,O_{i'}, O_{j'}, O_{k'})$, so 
 $$\E\left\{\left.\phi_1(v, Z_{i}, O_{j})\right|O_i, O_k,O_{i'}, O_{j'}, O_{k'} \right\} = \E\left\{\left.\phi_1(v, Z_{i}, O_{j})\right|O_i\right\} = 0$$ 
 by the property of influence functions. 
 Therefore, 
	\begin{align}
	&\quad \E\left\{\int_{\tau_1\vee Q_i}^{\tau_2\wedge T_i} \phi_1(v, Z_{i}, O_{j}) d\phi_2(v, Z_{i}, O_{k}) \int_{\tau_1\vee Q_{i'}}^{\tau_2\wedge T_{i'}} \phi_1(v, Z_{i'}, O_{j'}) d\phi_2(v, Z_{i'}, O_{k'}) \right\}\\
	& = \ee{\E\left\{\left.\int_{\tau_1\vee Q_i}^{\tau_2\wedge T_i} \phi_1(v, Z_{i}, O_{j}) d\phi_2(v, Z_{i}, O_{k}) \int_{\tau_1\vee Q_{i'}}^{\tau_2\wedge T_{i'}} \phi_1(v, Z_{i'}, O_{j'}) d\phi_2(v, Z_{i'}, O_{k'}) \right|O_i,O_k,O_{i'},O_{j'},O_{k'} \right\}} \\
	& = \ee{\int_{\tau_1\vee Q_i}^{\tau_2\wedge T_i} \E\left\{\left.\phi_1(v, Z_{i}, O_{j})\right|O_i,O_k,O_{i'},O_{j'},O_{k'} \right\} d\phi_2(v, Z_{i}, O_{k}) \int_{\tau_1\vee Q_{i'}}^{\tau_2\wedge T_{i'}} \phi_1(v, Z_{i'}, O_{j'}) d\phi_2(v, Z_{i'}, O_{k'}) }  \\
	& = 0. 
	\end{align}
	
	Likewise, we can show that it has expectation zero when $(i,j,k,i',j',k')\in \mathcal{J}_3$. In addition, by integration by parts, we can show that it has expectation zero when $(i,j,k,i',j',k')\in \mathcal{J}_2$ or $\mathcal{J}_4$.

	Denote $|\mathcal{J}|$ the total number of elements in set $\mathcal{J}$. 
	Then
	\begin{align}
	|\mathcal{J}| &= {4 \choose 1}n(n-1)^5 - {4\choose 2} n(n-1)(n-2)^4 + {4 \choose 3}n(n-1)(n-2)(n-3)^3 \\
	&\quad - {4 \choose 4} n(n-1)(n-2)(n-3)(n-4)^2\\
	& = \left\{4n^6-20n^5+ O(n^4)\right\} - \left\{6n^6 - 54n^5 + O(n^4)\right\} \\
	& \quad +\left\{4n^6-48n^5 + O(n^4)\right\} - \left\{n^6 + 14n^5+ O(n^4)\right\} \\
	& = n^6 + O(n^4). 
	\end{align}
	So
	\begin{align}
	|\mathcal{K}| = n^6 - |\mathcal{J}| = O(n^4).
	\end{align}
	Therefore, $\E(|\B_{20}|^2)  = O(n^{-2})$. 
	By Markov's inequality, $\B_{20} = O_p(n^{-1}) = o_p(n^{-1/2})$.    
 
 \bigskip 
	Now consider $\B_{3}$. By \eqref{eq:asymp_int_mF}, 
 %\yuyao{added eq. number}
	\begin{align}
	\B_3 & = \frac{1}{n}\sum_{i=1}^n \int_0^\infty \left\{\frac{m(v,Z_i;\hat F) - \theta_0\hat F(v|Z_i)}{1-\hat F(v|Z_i)} - \frac{m(v,Z_i;F^\divideontimes) - \theta_0 F^\divideontimes(v|Z_i)}{1-F^\divideontimes(v|Z_i)}\right\}  \frac{d\bar M_{Q,i}(v; G^\divideontimes)}{G^\divideontimes(v|Z_i)} \\
	& =  \frac{1}{n}\sum_{i=1}^n \int_{\tau_1}^{\tau_2} \left\{\frac{1}{n}\sum_{j=1}^n\phi_1(v,Z_i,O_j) + \tilde R_{1}(v,Z_i)\right\}\frac{d\bar M_{Q,i}(v; G^\divideontimes)}{G^\divideontimes(v|Z_i)} \\
	& = \B_{30}+ \B_{31},
	\end{align}
	where
	\begin{align}
	\B_{30} & = \frac{1}{n}\sum_{i=1}^n \int_{\tau_1}^{\tau_2} \left\{\frac{1}{n}\sum_{j=1}^n\phi_1(v,Z_i,O_j) \right\}\frac{d\bar M_{Q,i}(v; G^\divideontimes)}{G^\divideontimes(v|Z_i)}, \\
	\B_{31} & = \frac{1}{n}\sum_{i=1}^n \int_{\tau_1}^{\tau_2}  \tilde R_{1}(v,Z_i)\frac{d\bar M_{Q,i}(v; G^\divideontimes)}{G^\divideontimes(v|Z_i)} .
	\end{align}
	Since $G^\divideontimes$ satisfies the overlap assumption, $\left\|\bar M_{Q}(\cdot; G^\divideontimes)/G^\divideontimes(\cdot|Z)\right\|_{\TV,2} = O(1)$, so by Cauchy-Schwartz inequality, 
	\begin{align}
	\E(|\B_{31}|) &\lesssim \|\tilde R_1(\cdot,Z)\|_{\sup,2} 
 \cdot \left\|\frac{\bar M_{Q}(\cdot; G^\divideontimes)}{G^\divideontimes(\cdot|Z)}\right\|_{\TV,2}= o(n^{-1/2}) \cdot O(1) = o(n^{-1/2}), 
	\end{align}
	we have that $\B_{31} = o_p(n^{-1/2})$ by Markov's inequality. 
	
	Next, we utilize the asymptotic result for $U$-statistics stated in Theorem \ref{lem:Ustats} (Theorem 12.3 in \citet{van2000asymptotic}) to show that $\B_{30}$ is asymptotically linear.
 We consider the $U$-statistics
 \begin{align}
     Y &= \frac{1}{{n\choose 2}}\sum_{i<j} \left\{\int_{\tau_1}^{\tau_2} \phi_1(v,Z_j,O_i) \frac{d\bar M_{Q,j}(v; G^\divideontimes)}{G^\divideontimes(v|Z_j)}  
	+\int_{\tau_1}^{\tau_2} \phi_1(v,Z_i,O_j) \frac{d\bar M_{Q,i}(v; G^\divideontimes)}{G^\divideontimes(v|Z_i)}\right\}.
 \end{align}
 In the following, we will show that $\B_{30} = Y/2 + o_p(n^{-1/2})$. Note that
 \begin{align}
     \frac{1}{n^2} = \frac{1}{2} \frac{1}{{n\choose 2}} - \frac{1}{n^2(n-1)}.
 \end{align}
 We have
	\begin{align}
	\B_{30} 
	& = \frac{1}{n^2}\sum_{i=1}^n\sum_{j=1}^n \int_{\tau_1}^{\tau_2} \phi_1(v,Z_j,O_i) \frac{d\bar M_{Q,j}(v; G^\divideontimes)}{G^\divideontimes(v|Z_j)}\\
	& = \frac{1}{n^2}\sum_{i<j} \left\{\int_{\tau_1}^{\tau_2} \phi_1(v,Z_j,O_i) \frac{d\bar M_{Q,j}(v; G^\divideontimes)}{G^\divideontimes(v|Z_j)}  
	+\int_{\tau_1}^{\tau_2} \phi_1(v,Z_i,O_j) \frac{d\bar M_{Q,i}(v; G^\divideontimes)}{G^\divideontimes(v|Z_i)}\right\} \\
	&\quad 
	+ \frac{1}{n^2} \sum_{i=1}^n \int_{\tau_1}^{\tau_2} \phi_1(v,Z_i,O_i) \frac{d\bar M_{Q,i}(v; G^\divideontimes)}{G^\divideontimes(v|Z_i)} \\
	& = \left\{ \frac{1}{2} \frac{1}{{n\choose 2}} - \frac{1}{n^2(n-1)} \right\} \\
 &\quad\quad\quad \cdot \sum_{i<j} \left\{\int_{\tau_1}^{\tau_2} \phi_1(v,Z_j,O_i) \frac{d\bar M_{Q,j}(v; G^\divideontimes)}{G^\divideontimes(v|Z_j)} 
	+\int_{\tau_1}^{\tau_2} \phi_1(v,Z_i,O_j) \frac{d\bar M_{Q,i}(v; G^\divideontimes)}{G^\divideontimes(v|Z_i)}\right\} \\
	&\quad + O_p(n^{-1}) \\
	& = \frac{Y}{2} - \frac{1}{n^2(n-1)} \sum_{i<j} \left\{\int_{\tau_1}^{\tau_2} \phi_1(v,Z_j,O_i) \frac{d\bar M_{Q,j}(v; G^\divideontimes)}{G^\divideontimes(v|Z_j)} 
	+\int_{\tau_1}^{\tau_2} \phi_1(v,Z_i,O_j) \frac{d\bar M_{Q,i}(v; G^\divideontimes)}{G^\divideontimes(v|Z_i)}\right\} \\
 &\quad +O_p(n^{-1}) \\
 & = \frac{Y}{2} +O_p(n^{-1}) +O_p(n^{-1})\\
 & = \frac{Y}{2} +o_p(n^{-1/2}). \label{eq:B_30}
	\end{align}
 Denote $o = (q,t,z)$. 
 Define
 \begin{align}
	h_1(o;F^\divideontimes, G^\divideontimes) 
	& = \E\left\{ \frac{\phi_1(q,z,O; F^\divideontimes)}{G^\divideontimes(q|z)} -  \int_{\tau_1}^{\tau_2} \phi_1(v,z,O; F^\divideontimes) \mathbbm{1}(q\leq v<t) \frac{dG^\divideontimes(v|z)}{G^\divideontimes(v|z)^2} \right. \\
	&\quad\quad\quad \left.
	+ \frac{\phi_1(Q,Z,o;F^\divideontimes)}{G^\divideontimes(Q|Z)} - \int_{\tau_1}^{\tau_2} \phi_1(v,Z,o; F^\divideontimes) \mathbbm{1}(Q\leq v<T)\frac{dG^\divideontimes(v|Z)}{G^\divideontimes(v|Z)^2}\right\}.
\end{align}
Since $\E\{\phi_1(v,z,O; F^\divideontimes)\} = 0$ for all $v\geq 0$ and $z\in\Z$ by the property of influence functions, we have
\begin{align}
	h_1(o;F^\divideontimes, G^\divideontimes) 
	& = \frac{\E\left\{ \phi_1(q,z,O; F^\divideontimes) \right\} }{G^\divideontimes(q|z)} -  \int_{\tau_1}^{\tau_2} \E\left\{\phi_1(v,z,O; F^\divideontimes)\right\} \mathbbm{1}(q\leq v<t) \frac{dG^\divideontimes(v|z)}{G^\divideontimes(v|z)^2} \\
	&\quad + \E\left\{\frac{\phi_1(Q,Z,o;F^\divideontimes)}{G^\divideontimes(Q|Z)} - \int_{\tau_1}^{\tau_2} \phi_1(v,Z,o; F^\divideontimes) \mathbbm{1}(Q\leq v<T)\frac{dG^\divideontimes(v|Z)}{G^\divideontimes(v|Z)^2}\right\} \\
 & = \E\left\{\frac{\phi_1(Q,Z,o;F^\divideontimes)}{G^\divideontimes(Q|Z)} - \int_{\tau_1}^{\tau_2} \phi_1(v,Z,o; F^\divideontimes) \mathbbm{1}(Q\leq v<T)\frac{dG^\divideontimes(v|Z)}{G^\divideontimes(v|Z)^2}\right\},
\end{align}
Consider
 \begin{align}
	\hat Y = \frac{2}{n}\sum_{i=1}^n h_1(O_i, F^\divideontimes, G^\divideontimes). 
\end{align}
We will show that $Y = \hat Y + o_P(n^{-1/2})$ by Theorem \ref{lem:Ustats}. Before applying Theorem \ref{lem:Ustats}, we will first show in the following that $\E\left\{h_1(O;F^\divideontimes, G^\divideontimes)\right\} = 0$.

 Let $O' = (Q',T',Z')$ be an i.i.d copy of $O$. Let $\E'(\cdot|O)$ and $\E(\cdot|O')$ denote the conditional expectation taken with respect to $O'$ given $O$ and $O$ given $O'$, respectively. Then 
 \begin{align}
     &\quad \E\left\{h_1(O;F^\divideontimes, G^\divideontimes) \right\} \\
     & = \E\left[\E'\left\{\left.\frac{\phi_1(Q',Z',O;F^\divideontimes)}{G^\divideontimes(Q'|Z')} - \int_{\tau_1}^{\tau_2} \phi_1(v,Z',O; F^\divideontimes) \mathbbm{1}(Q'\leq v<T')\frac{dG^\divideontimes(v|Z')}{G^\divideontimes(v|Z')^2}\right|O\right\} \right]\\
     & = \E'\left[\E\left\{\left.\frac{\phi_1(Q',Z',O;F^\divideontimes)}{G^\divideontimes(Q'|Z')} - \int_{\tau_1}^{\tau_2} \phi_1(v,Z',O; F^\divideontimes) \mathbbm{1}(Q'\leq v<T')\frac{dG^\divideontimes(v|Z')}{G^\divideontimes(v|Z')^2}\right|O'\right\} \right] \\
     & = \E'\left[\frac{\E\left\{\left.\phi_1(Q',Z',O;F^\divideontimes)\right|O'\right\}}{G^\divideontimes(Q'|Z')} - \int_{\tau_1}^{\tau_2} \E\left\{\left.\phi_1(v,Z',O; F^\divideontimes)\right|O'\right\} \mathbbm{1}(Q'\leq v<T')\frac{dG^\divideontimes(v|Z')}{G^\divideontimes(v|Z')^2} \right] \\
 \end{align}
 Since $\E\left\{\left.\phi_1(Q',Z',O;F^\divideontimes)\right|O'\right\} =0$ and $\E\left\{\left.\phi_1(v,Z',O; F^\divideontimes)\right|O'\right\}$ for all $v\in[\tau_1,\tau_2]$  by the property of influence functions, we have 
$\E\left\{h_1(O;F^\divideontimes, G^\divideontimes) \right\}  = 0$. 
 In addition, it can be verified that $\E\left\{h_1(O;F^\divideontimes, G^\divideontimes)^2\right\}<\infty$. 
Therefore, by Theorem \ref{lem:Ustats},  
	\begin{align}
	Y = \hat Y + o_p(n^{-1/2}) = \frac{2}{n}\sum_{i=1}^n h_1(O_i, F^\divideontimes, G^\divideontimes) + o_p(n^{-1/2})  .
	\end{align}
	Therefore by \eqref{eq:B_30},
 %\lily{how does it relate to $ Y$?}\yuyao{added equation number}
	\begin{align}
	\B_{30}
	& = \frac{1}{n}\sum_{i=1}^n h_{1}(O_i, F^\divideontimes, G^\divideontimes) + o_p(n^{-1/2}),
	\end{align}
 
	Likewise, by integration by parts, we can show that
	\begin{align}
	\B_{4} & = \frac{1}{n}\sum_{i=1}^n h_{2}(O_i,F^\divideontimes, G^\divideontimes) + o_p(n^{-1/2})
	\end{align}
 for some function $h_2$ such that $h_{2}(O,F^\divideontimes, G^\divideontimes)$ has mean zero and finite second moment.
	
	Putting the above together, we have 
	\begin{align}
	\frac{1}{n}\sum_{i=1}^n U_i(\theta_0, \hat F,\hat G) 
	& = \frac{1}{n}\sum_{i=1}^n \left\{U_i(\theta_0, F^\divideontimes, G^\divideontimes) + h_{1}(O_i,F^\divideontimes, G^\divideontimes) + h_{2}(O_i,F^\divideontimes, G^\divideontimes) \right\} + o_p(n^{-1/2}). 
	\end{align}
	By central limit theorem,
	\begin{align}
	n^{-1/2}\sum_{i=1}^n \left\{U_i(\theta_0, F^\divideontimes, G^\divideontimes) + h_{1}(O_i, F^\divideontimes, G^\divideontimes) + h_{2}(O_i, F^\divideontimes, G^\divideontimes) \right\}  \convd N(0, \sigma_1^2),
	\end{align}
	where
	\begin{align}
	\sigma_1^2 = \ee{\left\{U_i(\theta_0, F^\divideontimes, G^\divideontimes) + h_{1}(O_i, F^\divideontimes, G^\divideontimes) + h_{2}(O_i, F^\divideontimes, G^\divideontimes) \right\}^2}.
	\end{align}

	\bigskip 
	Now consider the special case when $F^\divideontimes = F_0$ and $G^\divideontimes = G_0$. We will show that $\B_{3} = o_p(n^{-1/2})$ and $\B_{4} = o_p(n^{-1/2})$ using similar techniques as in proving $\B_{20} = o(n^{-1/2})$.
 %\lily{using techniques similar to $\B_2 $?}\yuyao{Yes, added.}
 First consider $\B_{3}$. It suffices to show that $\B_{30} = o_p(n^{-1/2})$. We have that 
	\begin{align}
	|\B_{30}|^2 & = \frac{1}{n^4}\sum_{i=1}^n \sum_{j=1}^n\sum_{i'=1}^n \sum_{j'=1}^n  \int_{\tau_1}^{\tau_2} \phi_1(v,Z_j,O_i) \frac{d\bar M_{Q,j}(v; G_0)}{G_0(v|Z_j)}
	\int_{\tau_1}^{\tau_2} \phi_1(v,Z_{j'},O_{i'}) \frac{d\bar M_{Q,j'}(v; G_0)}{G_0(v|Z_{j'})}.
	\end{align}
	%Since $\{\bar M_{Q}(t; G_0): t\geq 0\}$ is a backwards martingale with respect to the filtration $\{\mathcal{\bar F}_t\}_{t\geq 0}$, we have that 
 	%\lily{is $ \phi_1$ adaptable? why does the product have mean zero?}\yuyao{added more detailed proof below.}\lily{writing is disconnected}
%  \yuyao{Fixed the writing}\lily{it was not good}\yuyao{learned}
 In the following we show that if one of $i,j,i',j'$ is different from the other indices, then
	\begin{align}
	\E\left\{\int_{\tau_1}^{\tau_2} \phi_1(v,Z_j,O_i) \frac{d\bar M_{Q,j}(v; G_0)}{G_0(v|Z_j)}
	\int_{\tau_1}^{\tau_2} \phi_1(v,Z_{j'},O_{i'}) \frac{d\bar M_{Q,j'}(v; G_0)}{G_0(v|Z_{j'})} \right\} = 0 \label{eq:EB30}.
	\end{align}
%by the independence between $O_1,...,O_n$, the mean zero property of influence functions, and proposition \ref{prop:int_dM}.\yuyao{changed the writing for the above sentence}
Assume that $i$ is not the same as any of $\{j,i',j'\}$, then $O_i$ is independent of $(O_j,O_{i'},O_{j'})$, so $\E\{\phi_1(v,Z_j,O_i)|O_j,O_{i'},O_{j'} \} = \E\{\phi_1(v,Z_j,O_i)|O_j \} =0$ by the property of influence functions. Therefore, 
\begin{align}
    &\quad \E\left\{\int_{\tau_1}^{\tau_2} \phi_1(v,Z_j,O_i) \frac{d\bar M_{Q,j}(v; G_0)}{G_0(v|Z_j)}
	\int_{\tau_1}^{\tau_2} \phi_1(v,Z_{j'},O_{i'}) \frac{d\bar M_{Q,j'}(v; G_0)}{G_0(v|Z_{j'})} \right\}  \\
 & = \E\left[\E\left\{\left. \int_{\tau_1}^{\tau_2} \phi_1(v,Z_j,O_i) \frac{d\bar M_{Q,j}(v; G_0)}{G_0(v|Z_j)}
	\int_{\tau_1}^{\tau_2} \phi_1(v,Z_{j'},O_{i'}) \frac{d\bar M_{Q,j'}(v; G_0)}{G_0(v|Z_{j'})} \right|O_j,O_{i'},O_{j'} \right\} \right] \\
 & = \E\left[ \int_{\tau_1}^{\tau_2} \E\left\{\left.\phi_1(v,Z_j,O_i)\right|O_j,O_{i'},O_{j'} \right\}  \frac{d\bar M_{Q,j}(v; G_0)}{G_0(v|Z_j)}
	\int_{\tau_1}^{\tau_2} \phi_1(v,Z_{j'},O_{i'}) \frac{d\bar M_{Q,j'}(v; G_0)}{G_0(v|Z_{j'})} \right]\\
 & = 0.
\end{align}
On the other hand, if $j$ is not the same as any of $\{i,i',j'\}$, then $O_j$ is independent of $(O_i,O_{i'},O_{j'})$, and by Proposition \ref{prop:int_dM}
we have
\begin{align}
    \E\left\{\left.\int_{\tau_1}^{\tau_2} \phi_1(v,Z_j,O_i) \frac{d\bar M_{Q,j}(v; G_0)}{G_0(v|Z_j)} \right|O_i,O_{i'},O_{j'}\right\} = 0.
\end{align}
Therefore, 
\begin{align}
    &\quad \E\left\{\int_{\tau_1}^{\tau_2} \phi_1(v,Z_j,O_i) \frac{d\bar M_{Q,j}(v; G_0)}{G_0(v|Z_j)}
	\int_{\tau_1}^{\tau_2} \phi_1(v,Z_{j'},O_{i'}) \frac{d\bar M_{Q,j'}(v; G_0)}{G_0(v|Z_{j'})} \right\}  \\
 & = \E\left[\E\left\{\left. \int_{\tau_1}^{\tau_2} \phi_1(v,Z_j,O_i) \frac{d\bar M_{Q,j}(v; G_0)}{G_0(v|Z_j)}
	\int_{\tau_1}^{\tau_2} \phi_1(v,Z_{j'},O_{i'}) \frac{d\bar M_{Q,j'}(v; G_0)}{G_0(v|Z_{j'})} \right|O_i,O_{i'},O_{j'} \right\} \right] \\
 & = \E\left[\E\left\{\left. \int_{\tau_1}^{\tau_2} \phi_1(v,Z_j,O_i) \frac{d\bar M_{Q,j}(v; G_0)}{G_0(v|Z_j)}\right|O_i,O_{i'},O_{j'} \right\}
	\int_{\tau_1}^{\tau_2} \phi_1(v,Z_{j'},O_{i'}) \frac{d\bar M_{Q,j'}(v; G_0)}{G_0(v|Z_{j'})}  \right] \\
 & = 0.
\end{align}
%By the symmetry of $(i,j)$ and $(i'j')$ in the left hand side of \eqref{eq:EB30}, 
Similarly we have \eqref{eq:EB30} when $i'$ is not the same as any of $\{i,j,j'\}$ or $j'$ is not the same as any of $\{i,j,i'\}$.
 
	The total number of such $(i,j,i',j')$'s is
	\begin{align}
	& \quad {4\choose 1} n(n-1)^3 - {4\choose 2} n(n-1)(n-2)^2 +{4\choose 3} n(n-1)(n-2)(n-3) \\
    &\quad\quad\quad - {4\choose 4} n(n-1)(n-2)(n-3) \\
	& = 4\{n^4 -3n^3 +O(n^2)\} - 6\{n^4 - 5n^3+O(n^2)\} + 4\{n^4 - 6n^2+O(n^2)\} - \{n^4 - 6n^3+O(n^2)\} \\
	& = n^4 + O(n^2).
	\end{align}
	Therefore, 
	\begin{align}
	\E\{|\B_{30}|^2\} & = \frac{1}{n^4} \cdot O(n^2) = O(n^{-2}),
	\end{align}
	so by Markov's inequality, $\B_{30} = O_p(n^{-1}) = o_p(n^{-1/2})$, so $\B_{3} = o_p(n^{-1/2})$.
	Likewise, we can show that $\B_{4} = o_p(n^{-1/2})$. 
	Therefore, 
	\begin{align}
	\frac{1}{n}\sum_{i=1}^n U_{i}(\theta_0,\hat F,\hat G)  = \frac{1}{n}\sum_{i=1}^n U_{i}(\theta_0,F_0, G_0) + o_p(n^{-1/2}).  \label{eq:AN_proof_1}
	\end{align}
	
	(3) Finally, in the proof above we have shown that 
 $\sigma^2 = \beta_0^2 \cdot \E\{U(\theta_0,F_0, G_0)^2\}$. 
 Similar to the consistency proof in (1), we can show that 
 $\hat\sigma^2 \convp \beta_0^2\cdot \E\{U(\theta_0,F_0, G_0)^2\}$.

\end{proof}

\newpage
 \begin{proof}[Proof of Theorem \ref{thm:rdr}]
	(1) The consistency of $\hat\theta_{cf}$ can be shown similarly as the consistency proof of Theorem \ref{thm:mdr}. 
	
	(2) Now we show that the asymptotic normality. The conditional independence introduced by the cross-fitting procedure plays a central role for showing that the terms $\B_2'$, $\B_3'$ and $\B_4'$ in the error decomposition \eqref{eq:AN_cf_err_decomp} are all $o_p(n^{-1/2})$. Then the asymptotic normality follows from the central limit theorem. 

 Recall that $K$ is the total number of folds in Algorithm \ref{alg:cf}, which is  fixed.
 Without loss of generality, assume $n = Km$. Then $|\I_k| = m$ for all $k = 1,...,K$, and we have
	\begin{align}
	n^{1/2}(\hat\theta_{cf} - \theta_0 )
	& = \left. n^{1/2}\left[\frac{1}{K}\sum_{k=1}^K\frac{1}{m}\sum_{i\in\I_k} U_{i}\{\theta_0,\hat F^{(-k)},\hat G^{(-k)}\} \right] \right/\left[\frac{1}{K}\sum_{k=1}^K\frac{1}{m}\sum_{i\in\I_k} D_{i}\{\hat F^{(-k)},\hat G^{(-k)}\} \right]. \label{eq:cf_est_err}
	\end{align}
	Similar to the proof of Theorem \ref{thm:mdr}, we can show that the denominator of the right hand side (R.H.S.) %\yuyao{added R.H.S here} 
 of \eqref{eq:cf_est_err} converges to $1/\beta_0$ in probability. 
	
	In the following,  we will show that for any $k\in\{1,...,K\}$,
	\begin{align}
	\frac{1}{m}\sum_{i\in\I_k} U_{i}\{\theta_0, \hat F^{(-k)},\hat G^{(-k)}\}  
	&=  \frac{1}{m}\sum_{i\in\I_k} U_{i}(\theta_0, F_0, G_0) + o_p(m^{-1/2}) \label{eq:ANproof_4} \\
	&=  \frac{1}{m}\sum_{i\in\I_k} U_{i}(\theta_0, F_0, G_0) + o_p(n^{-1/2}). 
	\end{align} 
	This implies that the numerator of the R.H.S of \eqref{eq:cf_est_err} can be written as $n^{-1/2}\sum_{i=1}^n U_{i}(\theta_0, F_0, G_0) + o_p(1)$, which converges in distribution to $N\left(0, \E\left\{U(\theta_0, F_0, G_0)^2\right\}\right)$ by central limit theorem. 
 %\lily{can you shorten the rest before S26 into a sentence or two?}\yuyao{Shortened. Added another equation number so the previous (S26) is the current (S27)}\lily{I meant \ref{eq:AN_cf_err_decomp} below}\yuyao{editted}
	Therefore, by Slutsky's Theorem, 
	$
	n^{1/2}(\hat\theta_{cf} - \theta_0) \convd N \left(0, \beta_0^2 \E\{U(\theta_0, F_0, G_0)^2\}\right).
	$
	
	We now show \eqref{eq:ANproof_4}. Without loss of generality, consider $k=1$ and suppose that $\I_1 = \{1,...,m\}$.
	For simplification of the notation, we will denote $\hat F' = \hat F^{(-1)}$ and $\hat G' = \hat G^{(-1)}$ the estimates of $F$ and $G$ using the out-of-1-fold sample, i.e., data indexed by $\{m+1,...,n\}$.
	Like the asymptotic normality proof in Theorem \ref{thm:mdr}, we consider the decomposition,
	\begin{align}
	\frac{1}{m}\sum_{i=1}^m U_i(\theta_0, \hat F',\hat G')
	&= \B_1'+\B_2' + \B_3' + \B_4', \label{eq:AN_cf_err_decomp}
	\end{align}
	where
	\begin{align}
	\B_1' & = \frac{1}{m}\sum_{i=1}^m U_i(\theta_0, F_0, G_0), \\
	\B_2' & = \frac{1}{m}\sum_{i=1}^m \left[ \left\{ U_i(\theta_0, \hat F',\hat G') -U_i(\theta_0, F_0, \hat G')\right\}  - \left\{ U_i(\theta_0, \hat F',G_0) -U_i(\theta_0, F_0, G_0)\right\} \right], \\
	\B_3' & = \frac{1}{m}\sum_{i=1}^m \left\{ U_i(\theta_0, \hat F',G_0) -U_i(\theta_0, F_0, G_0)\right\}, \\
	\B_4' & =  \frac{1}{m}\sum_{i=1}^m  \left\{ U_i(\theta_0, F_0,\hat G') -U_i(\theta_0, F_0, G_0)\right\}.
	\end{align}
 
 We first consider $\B_2'$. By \eqref{eq:asymp_int_mF}, %\yuyao{added eq. number}
	\begin{align}
	\B_2'
	& = \frac{1}{m}\sum_{i=1}^m \int_{\tau_1}^{\tau_2} \left\{\frac{m(v,Z_i;\hat F') - \theta_0\hat F'(v|Z_i)}{1-\hat F'(v|Z_i)} - \frac{m(v,Z_i;F_0) - \theta_0 F_0(v|Z_i)}{1-F_0(v|Z_i)}\right\}\\
	&\quad\quad\quad\quad\quad\quad\quad  \cdot\mathbbm{1}(Q_i\leq v<T_i) \ d\left\{\frac{1}{\hat G'(v|Z_i)} - \frac{1}{G_0(v|Z_i)} \right\}. 
	\end{align}
 By Assumption \ref{ass:prodrate}, $\E(\B_2') = \D_{\dagger}(\hat F,\hat G; F_0,G_0)  = o(n^{-1/2})$, so $\B_2' = o_p(m^{-1/2})$ by Markov's inequality. 
	In addition,  by \eqref{eq:asymp_int_mF}, %\yuyao{added eq. number}
	\begin{align}
	\B_3' & = \frac{1}{m}\sum_{i=1}^m \int_{\tau_1}^{\tau_2} \left\{\frac{m(v,Z_i;\hat F') - \theta_0\hat F'(v|Z_i)}{1-\hat F'(v|Z_i)} - \frac{m(v,Z_i;F_0) - \theta_0 F_0(v|Z_i)}{1-F_0(v|Z_i)}\right\}  \frac{d\bar M_{Q,i}(v; G_0)}{G_0(v|Z_i)}. 
	\end{align}
	Let $O'$ denote the out-of-1-fold sample $\{O_i:\ i = {m+1},...,n\}$, where $O_i = (Q_i, T_i, Z_i)$. Recall that $\{O_1,...,O_m\}$ is the data in the first fold. 
	By the law of total variance, 
	\begin{align}
	\var(\B_3') = \E\{\var(\B_3'|O')\} + \var\{\E(\B_3'|O')\}.
	\end{align}
	When conditioning on $O'$, $\hat F'$ is fixed. In addition, $O_1$ and $O'$ are independent, so by Proposition \ref{prop:int_dM}, 
% \lily{do a universal check, if these should be proposition S1 everywhere}\yuyao{changed} 
 we have
%	\lily{is $ m(v,Z_1;\hat F')$ adapted to the filtration?}\yuyao{Yes if we conditional on $O'$ because $\hat F'$ is fixed and $O_1$ and $O'$ are independent. So the filtration does not change? Modified the writing above.}
	\begin{align}
	&\quad \E(\B_3'|O')  \\
	&= \E\left[\left.\int_{\tau_1}^{\tau_2} \left\{\frac{m(v,Z_1;\hat F') - \theta_0\hat F'(v|Z_1)}{1-\hat F'(v|Z_1)} - \frac{m(v,Z_1;F_0) - \theta_0 F_0(v|Z_1)}{1-F_0(v|Z_1)}\right\}  \frac{d\bar M_{Q,1}(v; G_0)}{G_0(v|Z_1)}\right|O'\right] \\
	& = 0.
	\end{align} 
	So 
	\begin{align}
	&\quad \var(\B_3'|O') \\
 & = \frac{1}{m} \var\left(\left.\int_{\tau_1}^{\tau_2}  \left\{\frac{m(v,Z_1;\hat F') - \theta_0\hat F'(v|Z_1)}{1-\hat F'(v|Z_1)} - \frac{m(v,Z_1;F_0) - \theta_0 F_0(v|Z_1)}{1-F_0(v|Z_1)}\right\}  \frac{d\bar M_{Q,1}(v; G_0)}{G_0(v|Z_1)}\right|O'\right)\\
	& = \frac{1}{m} \E\left(\left.\left|\int_{\tau_1}^{\tau_2}  \left\{\frac{m(v,Z_1;\hat F') - \theta_0\hat F'(v|Z_1)}{1-\hat F'(v|Z_1)} - \frac{m(v,Z_1;F_0) - \theta_0 F_0(v|Z_1)}{1-F_0(v|Z_1)}\right\}  \frac{d\bar M_{Q,1}(v; G_0)}{G_0(v|Z_1)}\right|^2\right|O'\right) \\
 &\leq \frac{1}{m} \E\left(\left.\left[\sup_{t\in[\tau_1, \tau_2]}\left|\frac{m(v,Z_1;\hat F') - \theta_0\hat F'(v|Z_1)}{1-\hat F'(v|Z_1)} - \frac{m(v,Z_1;F_0) - \theta_0 F_0(v|Z_1)}{1-F_0(v|Z_1)}\right| \right.\right.\right.\\ 
 &\qquad\qquad\qquad\qquad\qquad\qquad \cdot \left.\left.\left.\TV\left\{\frac{\bar M_{Q,1}(t; G_0)}{G_0(t|Z_1)}; \tau_1, \tau_2\right\}\right]^2\right|O'\right).
\end{align}
Since $G_0$ satisfied Assumption \ref{ass:overlap}, we have
\begin{align}
    \TV\left\{\frac{\bar M_{Q,1}(t; G_0)}{G_0(t|Z_1)}; \tau_1, \tau_2\right\} 
    & \leq \frac{1}{\delta_2}\cdot \TV\left\{\bar M_{Q,1}(t; G_0); \tau_1, \tau_2\right\}
    \leq \frac{1}{\delta_2}\left(1+\frac{1}{\delta_2}\right) <\infty.
\end{align}
So
\begin{align}
    \var(\B_3'|O') \lesssim \frac{1}{m} \E\left\{\left.\sup_{t\in[\tau_1, \tau_2]}\left|\hat F'(t|Z_1) - F_0(t|Z_1)\right|^2\right|O'\right\}. \label{eq:ANproof_2}
\end{align}
	Therefore,  
	\begin{align}
	\var(\B_3')
	& =  \E\{\var(\B_3'|O')\} + 0 \\
	&\lesssim  \frac{1}{m} \E\left(\sup_{t\in[\tau_1, \tau_2]}\left|\hat F'(t|Z_1) - F_0(t|Z_1)\right|^2\right)\\
	& =\frac{1}{m}  \|\hat F' - F_0\|_{\dagger,\sup, 2}^2 \\
	& =  o(m^{-1}).
	\end{align}
	By Chebyshev's inequality, for any $\epsilon>0$, we have 
 \begin{align}
     \PP\left(m^{1/2}|\B_3'| > \epsilon \right) = \PP(|\B_3'| > m^{-1/2}\epsilon) \leq \frac{\var(\B_3')}{m^{-1}\epsilon^2}  = \frac{o(m^{-1})}{m^{-1}\epsilon^2}\to 0,
 \end{align}
 as $m\to 0$. Therefore, 
	$\B_3' = o_p(m^{-1/2})$.
	
	Likewise,  using integration by parts, we can show that 
	$\B_4' = o_p(m^{-1/2})$.
	
	Putting the above together, we have
	\begin{align}
	\frac{1}{m}\sum_{i=1}^m U_i(\theta_0, \hat F',\hat G') 
	= \frac{1}{m}\sum_{i=1}^m U_i(\theta_0, F_0, G_0) + o_p(m^{-1/2}).
	\end{align}
	Therefore, the conclusion follows.
	
	(3) For the variance estimator, it can be shown that $\hat\sigma_{cf}^2 \convp \beta_0^2 \cdot \E\{U(\theta_0, F_0, G_0)\}$ in a similar way to part (3) in the proof of Theorem \ref{thm:mdr}.
\end{proof}

\newpage
\subsection{Verifying Assumption \ref{assump:if}  under the Cox model}

%\lily{how about write down the model first}\yuyao{added}

We consider estimating $F$ by 
the \citet{cox1972regression}  proportional hazards model:
\begin{align}
    \lambda(t|Z) = \lambda_0(t) \exp\{\psi' Z\},
\end{align}
where $\psi$ is the {log} hazard ratio, and $\lambda_0$  the baseline hazard. Denote %\lily{change below also}\yuyao{changed}
$\Lambda(t) = \int_0^t \lambda_0(u) du$  the cumulative baseline hazard. 
Suppose that the above model is correctly specified for $F$.  
Let $\psi_0$ denote the true value of $\psi$, and $\Lambda_0$  the true value of $\Lambda$. We make the following assumption.

% We assume that the true $F$ and $G$ satisfy Assumption \ref{ass:overlap}. 
%Under Assumption \ref{ass:quasi-indpendent} and Assumption \ref{ass:overlap}, assuming also the following regularity conditions.
 
\begin{assumption} \label{ass:regularity_cox}
(i) The covariates $Z$ is bounded a.s.; \
(ii) The parameter space for $\psi$ is compact.
\end{assumption}

Following \citet{tsiatis1981large}, let $N(t) = \mathbbm{1}(Q<T\leq t)$ denote the counting process for $T$. Denote $Y(t) = \mathbbm{1}(Q\leq t \leq T)$, and denote 
\begin{align}
    &B(t) = \E\left\{N(t)\right\} , \quad \hat B(t) = \frac{1}{n} \sum_{i=1}^n dN_i(t),\\
    &\E(e^{\psi'z}, t)   = \E\left\{e^{\psi'Z}Y(t)\right\}, \quad 
    \hat \E(e^{\psi'z}, t)  = \frac{1}{n}\sum_{j=1}^n Y_j(t) \exp\{\psi'Z_j\}.
\end{align}
%\lily{drop `We' in the next two sentences} \yuyao{dropped}
Note that $B(t)$ and $\E(e^{\psi'z}, t)$ are deterministic functions of $t$, while $\hat B(t)$ and $\hat \E(e^{\psi'z}, t)$ are random functions of $t$. 
Consider the partial likelihood \citep{cox1972regression, cox1975partial} 
%\lily{why don't you write it out or give reference} \yuyao{added reference}
estimator $\hat \psi$ for $\psi$ with risk set adjustment for left truncation, and 
the Breslow's estimator for $ \Lambda(t)$ which is defined as: 
\begin{align}
    \tilde \Lambda(\hat\psi, t) = \int_0^t \frac{\sum_{i=1}^n dN_i(x)}{\sum_{j=1}^n Y_i(x) \exp\{\hat \psi'Z_j\}} = \int_0^t \frac{d \hat B(t)}{\hat \E(e^{\hat\psi'z}, t)}. \label{eq:brewslow}
\end{align}
We consider the plug-in estimator for $F$: %\lily{wrong below}\yuyao{corrected}
\begin{align}
    \hat F(t|Z) = 1-\exp\{-\tilde \Lambda(\hat\psi, t)e^{\hat\psi' Z}\}
\end{align}

The partial likelihood estimator $\hat\psi$ is known to be asymptotically linear. Denote $\phi(O)$ its influence function. Then 
%\lily{why tilde on top}\yuyao{Changed the IF to $\phi_0(O_i)$.}\lily{why not just $\phi$}\yuyao{changed to $\phi$}
\begin{align}
    \hat \psi - \psi_0 = \frac{1}{n}\sum_{i=1}^n \phi(O_i) + o_p(n^{-1/2}). \label{eq:beta_AL}
\end{align}
In order to verify Assumption \ref{assump:if}, under Assumption \ref{ass:regularity_cox} it suffices to show that $\tilde \Lambda(\hat\psi, t)$ is asymptotically linear, and {its} residual term is $o(n^{-1/2})$ under $\|\cdot\|_{\sup,2}$ and is  $o(1)$ under $\|\cdot\|_{\TV,2}$.

%\yuyao{Rewrote the following paragraph to fix the logic.}
The estimator $\tilde \Lambda(\hat\psi, t)$ has been shown to have an asymptotic linear representation and the corresponding residual term is uniformly $o_p(n^{-1/2})$ on $[\tau_1,\tau_2]$ \citep{kosorok2008introduction, lopuhaa2013asymptotic}. 
However, their asymptotic linear representation  is slightly different from what is  in Assumption \ref{assump:if}. 
In the following, we build upon their conclusions and show that $\tilde \Lambda(\hat\psi, t)$ has an asymptotic linear representation that is aligned with Assumption \ref{assump:if}, and the corresponding remainder term is uniformly $o_p(n^{-1/2})$ on $[\tau_1,\tau_2]$. Moreover, we will show that total variation of the remainder term is $o_p(1)$. 
Then under the regularity assumption that the remainder term is bounded almost surely, we have that it is  $o(n^{-1/2})$ in terms of $\|\cdot\|_{\sup,2}$ and  $o(1)$ in terms of $\|\cdot\|_{\TV,2}$.

We first derive the asymptotic linear representation of $\tilde \Lambda(\hat\psi, t)$ by looking at the decomposition of $\tilde \Lambda(\hat\psi, t) - \Lambda_0(t)$. 
From \eqref{eq:brewslow}, we have 
%\lily{is $\psi $ the true value?}\yuyao{yes. Added a sentence after the Cox model. }
\begin{align}
    \tilde \Lambda(\hat\psi, t) - \Lambda_0(t) =  A_1(t) + A_2(t), \label{eq:Lambda_A1_A2}
\end{align}
where
\begin{align}
    A_1(t) = \tilde \Lambda(\hat\psi, t) - \tilde \Lambda(\psi_0, t),\quad
    A_2(t) = \tilde \Lambda(\psi_0, t) - \Lambda_0(t).
\end{align}
Following \citet{tsiatis1981large}, it can be shown that
\begin{align}
    \Lambda_0(t) & = \int_0^t \frac{dB(x)}{\E(e^{{\psi_0}'z}, x)}. \label{eq:expr_breslow}
\end{align}

%\yuyao{Edited the following proof. Part of $A_1$ should go into the IF, which is now denoted by $\zeta(t)$. Although the TV of that part is $o_p(1)$, the sup is not $o_p(n^{-1/2})$.}

To apply the conclusions in \citet{lopuhaa2013asymptotic}, we consider the following decompositions of $A_1$: 
\begin{align}
    A_1(t) = (\hat\psi - \psi_0)'A_0(t) + R_{10}(t), \label{eq:A_1}
\end{align}
where $A_0(t)$ is the deterministic function given in \citet{lopuhaa2013asymptotic} equation (8), and
$$R_{10}(t) = A_1(t) - (\hat\psi - \psi_0)'A_0(t)$$ 
It can be verified that $\sup_{t\in[\tau_1,\tau_2]} A_0(t) = O(1)$ and $\TV(A_0) = O(1)$.
Denote 
\begin{align}
    R_{11}(t) = \left\{\hat \psi - \psi_0 - \frac{1}{n}\sum_{i=1}^n \phi(O_i)\right\}'A_0(t). \label{eq:R_11}
\end{align}
Then we have
\begin{align}
     A_1(t) = \frac{1}{n}\sum_{i=1}^n \phi(O_i)'A_0(t) + R_{11}(t)  + R_{10}(t). \label{eq:A1_IF}
\end{align}

On the other hand,
\begin{align}
    A_2(t)
    &= \int_0^t \frac{d\hat B(x)}{\hat \E(e^{\psi_0'z}, x)} - \int_0^t \frac{dB(x)}{\E(e^{\psi_0'z}, x)} \\
    &= \int_0^t \frac{d\hat B(x)}{\hat \E(e^{\psi_0'z}, x)} - \int_0^t \frac{d\hat B(x)}{\E(e^{\psi_0'z}, x)} +\int_0^t \frac{d\hat B(x)}{\E(e^{\psi_0'z}, x)} - \int_0^t \frac{dB(x)}{\E(e^{\psi_0'z}, x)}\\
    &= \int_0^t \frac{d\hat B(x)}{\E(e^{\psi_0'z}, x)}\left\{ \frac{\E(e^{\psi_0'z}, x)}{\hat \E(e^{\psi_0'z}, x)} - 1\right\} + \int_0^t \frac{d\left\{\hat B(x) - B(x)\right\}}{\E(e^{\psi_0'z}, x)} \\
    & = A_{21}(t) + A_{22}(t)+ A_{23}(t), \label{eq:A_2}
\end{align}
where 
\begin{align}
    A_{21}(t) & = \int_0^t \frac{d\left\{\hat B(x) - B(x)\right\}}{\E(e^{\psi_0'z}, x)} ,\\
    A_{22}(t) & = \int_0^t \frac{dB(x)}{\E(e^{\psi_0'z}, x)^2}\left\{ \hat \E(e^{\psi_0'z}, x)- \E(e^{\psi_0'z}, x)\right\},\\
    A_{23}(t) & = \int_0^t \frac{d\hat B(x)}{\E(e^{\psi_0'z}, x)\hat \E(e^{\psi_0'z}, x)}\left\{ \hat \E(e^{\psi_0'z}, x)- \E(e^{\psi_0'z}, x)\right\} \\
    &\quad \quad - \int_0^t \frac{dB(x)}{\E(e^{\psi_0'z}, x)^2}\left\{ \hat \E(e^{\psi_0'z}, x)- \E(e^{\psi_0'z}, x)\right\}.
\end{align}
Denote 
\begin{align}
    \zeta_i(t) &= \phi(O_i)'A_0(t) +  \int_0^t \frac{d\left\{N_i(x) - B(x)\right\}}{\E(e^{\psi_0'z}, x)} \\
    &\quad\quad  + \int_0^t \frac{dB(x)}{\E(e^{\psi_0'z}, x)^2}\left\{Y_i (x)\exp\left(\psi_0'Z_i\right)- \E(e^{\psi_0'z}, x)\right\}, \\
    R(t) &= R_{11}(t) + R_{10}(t) + A_{23}(t).
\end{align}
Then combining \eqref{eq:Lambda_A1_A2} \eqref{eq:A1_IF} \eqref{eq:A_2}, we have 
\begin{align}
    \tilde \Lambda(\hat\psi, t) - \Lambda_0(t) = \frac{1}{n}\sum_{i=1}^n \zeta_i(t) + R(t),
\end{align}
and $\zeta_i(t)$'s are i.i.d random functions of $t$ with mean zero and finite second moment for each $t$.

Next we show that (a) $\sup_{t\in[\tau_1,\tau_2]} |R(t)|= o_p(n^{-1/2})$ and (b) $\TV(R) = o_p(1)$. 
Before proving these, we will first bound the supremum of $1/\hat\E(e^{\hat\psi'z},t)$ and $1/\E(e^{\hat\psi'z},t)$ on $[\tau_1,\tau_2]$, which will be useful in the proof of (a) and (b).
Recall from \eqref{eq:p(q,t|z)} that the conditional density of $(Q,T)$ given $Z$ is
\begin{align}
    p_{Q,T|Z}(q,t|z) & = \frac{\mathbbm{1}(q<t)}{\beta(z)} f(t|z)g(q|z),
\end{align}
where $\beta(z)$ is defined in \eqref{eq:beta(z)}. By definition, $\beta(z)\in[0,1]$ almost surely.
So by Assumption \ref{ass:overlap}, we have
\begin{align}
    \E\{Y(t)\} &= \E\left[\E\left\{\left. \mathbbm{1}(Q\leq t\leq T) \right| Z \right\}\right] = \E\left[ \frac{G(t|Z)\{1-F(t|Z)\}}{\beta(Z)} \right] \\
    &\geq G(\tau_1|Z)\{1-F(\tau_2|Z)\} \geq \delta_1\delta_2>0,
\end{align}
for all $t\in[\tau_1,\tau_2]$.
This together with Assumption \ref{ass:regularity_cox} imply that there exist $\delta_0>0$ such that $\inf_{t\in[\tau_1,\tau_2]}\E(e^{\psi_0'z},t) \geq \delta_0$.
%\lily{why} \yuyao{added the reason below}
%$\sup_{t\in[\tau_1,\tau_2]}|\hat\E(e^{\hat\psi'z},t) - \E(e^{\psi'z},t)| = o_p(1)$ and  $\sup_{t\in[\tau_1,\tau_2]}|\E(e^{\hat\psi'z},t) - \E(e^{\psi'z},t)| = o_p(1)$, 
In addition, by Assumption \ref{ass:regularity_cox} and the central limit theorem, we have
\begin{align}
    \sup_{t\in[\tau_1,\tau_2]}|\hat\E(e^{\psi_0'z},t) - \E(e^{\psi_0'z},t)| 
    = O_p(n^{-1/2})
    = o_p(1). \label{eq:E_hat-E}
\end{align}
Also $\hat\psi - \psi = O_p(n^{-1/2})$, so 
\begin{align}
    &\quad \sup_{t\in[\tau_1,\tau_2]}|\hat\E(e^{\hat\psi'z},t) - \E(e^{\psi_0'z},t)|\\
    & \leq \sup_{t\in[\tau_1,\tau_2]}|\hat\E(e^{\hat\psi'z},t) - \hat E(e^{\psi_0'z},t)| + \sup_{t\in[\tau_1,\tau_2]}|\hat\E(e^{\psi_0'z},t) - \E(e^{\psi_0'z},t)| \\
    & \leq \frac{1}{n}\sum_{j=1}^n \left|\exp(\hat \psi'Z_j) - \exp(\psi_0'Z_j)\right| + O_p(n^{-1/2}) \\
    & = O_p(n^{-1/2}) + O_p(n^{-1/2}) \\
    &= o_p(1).
\end{align}
Therefore, we have
\begin{align}
    \sup_{t\in[\tau_1,\tau_2]}\left|\frac{1}{\hat \E(e^{\psi_0'z},t)}\right| = O_p(1), \quad \sup_{t\in[\tau_1,\tau_2]}\left|\frac{1}{\hat\E(e^{\hat\psi'z},t)}\right| = O_p(1). \label{eq:Et_bdd}
\end{align}

(a) We now show that $\sup_{t\in[\tau_1,\tau_2]} |R(t)|= o_p(n^{-1/2})$. We have
\begin{align}
    \sup_{t\in[\tau_1,\tau_2]} |R(t)| \leq \sup_{t\in[\tau_1,\tau_2]} |R_{10}(t)+A_{23}(t)| + \sup_{t\in[\tau_1,\tau_2]} |R_{11}(t)|.
\end{align}
\citet{lopuhaa2013asymptotic} showed that 
$$\sup_{t\in[\tau_1,\tau_2]} |R_{10}(t)+A_{23}(t)| = o_p(n^{-1/2}).$$ 
In addition, we have $\hat \psi - \psi_0 - n^{-1}\sum_{i=1}^n \phi(O_i) = o_p(n^{-1/2})$ by \eqref{eq:beta_AL}, and $\sup_{t\in[\tau_1,\tau_2]} |A_0(t)| = O(1)$. So from \eqref{eq:R_11},
$$\sup_{t\in[\tau_1,\tau_2]} |R_{11}(t)| = o_p(n^{-1/2}).$$ 
Therefore, 
$\sup_{t\in[\tau_1,\tau_2]} |R(t)| = o_p(n^{-1/2})$.

\medskip
(b) We next show that $\TV(R) = o_p(1)$. We have
\begin{align}
    \TV(R) \leq \TV(R_{11}+R_{10}) + \TV(A_{23}).
\end{align}
We first consider $R_{11}+R_{10}$. Denote $\tilde A_1(t) = n^{-1}\sum_{i=1}^n\phi(O_i)' A_0(t)$. Then from  \eqref{eq:A1_IF}, $R_{11}+R_{10} = A_1(t) - \tilde A_1(t)$, so
\begin{align}
    \TV(R_{11}+R_{10}) \leq \TV(A_1) + \TV(\tilde A_1).
\end{align}
For $A_1(t)$, we have
\begin{align}
    A_1(t) & =  \int_0^t \frac{d\hat B(x)}{\hat \E(e^{\hat\psi'z},x)}- \int_0^t \frac{d\hat B(x)}{\hat \E(e^{\psi_0'z},x)} \\
    & = \int_0^t \frac{d\hat B(x)}{\hat \E(e^{\hat\psi'z},x)\hat \E(e^{\psi_0'z},x)} \left\{ \hat \E(e^{\hat\psi'z},x) - \hat \E(e^{\psi_0'z},x) \right\} \\
    & = \int_0^t \frac{d\hat B(x)}{\hat \E(e^{\hat\psi'z},x)\hat \E(e^{\psi_0'z},x)} \left[ \frac{1}{n}\sum_{j=1}^n Y_j(x) \left\{\exp(\hat\psi'Z_j) - \exp(\psi_0'Z_j)\right\} \right].
\end{align}
Recall that $\hat\psi - \psi_0 = O_p(n^{-1/2})$, so by Assumption \ref{ass:regularity_cox}, 
\begin{align}
    \sup_{j=1,...,n} \left|\exp(\hat\psi'Z_j) - \exp(\psi_0'Z_j)\right| = O_p(n^{-1/2}) = o_p(1).
\end{align}
%\yuyao{Edit the inequality below for bounding $\TV(A_1)$}
Since $\hat B$ and $B$ are both non-decreasing functions taking values in $[0,1]$, we have $\TV(\hat B)\leq 1$ and $\TV(B)\leq 1$.
So by \eqref{eq:Et_bdd}, 
\begin{align}
    \TV(A_1) 
    & \leq \left[\sup_{x\in[\tau_1,\tau_2]} \left|\frac{ n^{-1} \sum_{j=1}^n Y_j(x) \left\{\exp(\hat\psi'Z_j) - \exp(\psi_0'Z_j) \right\}}{\hat \E(e^{\hat\psi'z},x)\hat \E(e^{\psi_0'z},x)} \right| \right] \TV(\hat B)\\
    &\lesssim \left\{ \sup_{j=1,...,n} \left|\exp(\hat\psi'Z_j) - \exp(\psi_0'Z_j)\right|  + o_p(1)\right\} \\
    &= o_p(1).
\end{align}
In addition, since $n^{-1}\sum_{i=1}^n \phi(O_i) = O_p(n^{-1/2})$ by central limit theorem and $\TV(A_0) = O_p(1)$, we have
\begin{align}
    \TV(\tilde A_1) & = \left\{\frac{1}{n}\sum_{i=1}^n \phi(O_i) \right\}' \TV(A_0) = O_p(n^{-1/2}) = o_p(1).
\end{align}
Putting the above together, we have $\TV(R_{11}+R_{10}) = o_p(1)$.

For $A_{23}$, we consider the following decomposition: 
\begin{align}
    A_{23}(t) & = A_{231}(t) - A_{232}(t),
\end{align}
where
\begin{align}
    A_{231}(t) & = \int_0^t \frac{d\left\{\hat B(x)-B(x)\right\}}{\E(e^{\psi_0'z}, x)\hat \E(e^{\psi_0'z}, x)}\left\{ \hat \E(e^{\psi_0'z}, x)- \E(e^{\psi_0'z}, x)\right\} ,
    \\
    A_{232}(t) &=  \int_0^t \frac{dB(x)}{\E(e^{\psi_0'Z}, x)^2 \hat \E(e^{\psi_0'z}, x)}\left\{ \hat \E(e^{\psi_0'z}, x)- \E(e^{\psi_0'z}, x)\right\}^2 .
\end{align}
%\yuyao{Edit the first `$\leq$' for the two inequalities below to avoid things like $\int |dB(x)|$. }
By \eqref{eq:Et_bdd} and the fact that $\TV(\hat B)\leq 1$ and $\TV(B)\leq 1$, we have 
\begin{align}
    \TV(A_{231}) & \leq \left\{\sup_{x\in[\tau_1,\tau_2]} \left|\frac{\hat \E(e^{\psi_0'z}, x)- \E(e^{\psi_0'z}, x)}{\E(e^{\psi_0'z}, x)\hat \E(e^{\psi_0'z}, x)} \right|\right\}  \TV(\hat B-B) \\
    & \lesssim \left\{\sup_{x\in[\tau_1,\tau_2]} \left|\hat \E(e^{\psi_0'z}, x)- \E(e^{\psi_0'z}, x)\right|+ o_p(1)\right\} \left\{ \TV(\hat B) + \TV(B) \right\} \\
    & = o_p(1), \\
    \TV(A_{232}) & \leq \left[\sup_{x\in[\tau_1,\tau_2]}\left|\frac{\left\{ \hat \E(e^{\psi_0'z}, x)- \E(e^{\psi_0'z}, x)\right\}^2}{\E(e^{\psi_0'z}, x)^2 \hat \E(e^{\psi_0'z}, x)} \right| \right]  \TV(B) \\
    & \lesssim \left\{\sup_{x\in[\tau_1,\tau_2]} \left| \hat \E(e^{\psi_0'z}, x)- \E(e^{\psi_0'z}, x) \right|^2 + o_p(1)\right\} \  \TV(B)  \\
    & = o_p(1).
\end{align}
Therefore, $\TV(A_{23})= o_p(1)$.

Putting the above together, we have $\TV(R)= o_p(1)$.

\newpage
\section{Estimation under right censoring}

This section contains proofs of double robustness of the  estimating functions $U_{c1}$ and $U_{c2}$, and expressions of the doubly robust estimators as well as the extended `IPW.Q' and regression-based estimators.
%\lily{everything in the supplement needs to be referred to in the main paper. Please check}\yuyao{checked}

\subsection{Censoring before truncation}

We first show that $U_{c1}$ is doubly robust in the sense that $\E\left\{U_{c1}(\theta_0; F_x,G, S_{0c})\right\} = 0$ if $F_x = F_{0x}$ or $G = G_0$.
%\begin{proof}
By Assumption \ref{ass:quasi-indpendent} and $C\bigCI(Q,T,Z)$, we have that $Q$ and $X$ are conditionally quasi-independent given $(Z,C)$. Recall that $\Delta = \mathbbm{1}(T<C)$. With a very similar proof as that of Theorem \ref{thm:DR}, we can show that when $F_x = F_{0x}$ or $G = G_0$, 
\begin{align}
    \E\left\{U_{c1}(\theta; F_x,G, S_{0c})\right\} 
    = \E^*\left[\frac{\Delta\{\nu(X) - \theta\}}{S_{0c}(X)}\right].\label{eq:c1_DR_1}
\end{align}
Since $\Delta = \mathbbm{1}(T<C)$, $\Delta \neq 0$ implies $X = T$. So we have
\begin{align}
    \E^*\left[\frac{\Delta\{\nu(X) - \theta\}}{S_{0c}(X)}\right]
    = \E^*\left[\frac{\Delta\{\nu(T) - \theta\}}{S_{0c}(T)}\right]
    = \E^*\left[\E^*\left\{\left.\frac{\Delta}{S_{0c}(T)}\right|T\right\}\cdot \{\nu(T) - \theta\}\right]. \label{eq:c1_DR_2}
\end{align}
Since $C\bigCI (Q,T,Z)$ in the full data, we have 
\begin{align}
    \E^*\left\{\left.\frac{\Delta}{S_{0c}(T)}\right|T\right\} 
     = \E^*\left\{\left.\frac{\mathbbm{1}(C>T)}{S_{0c}(T)}\right|T\right\} = 1. \label{eq:c1_DR_3}
\end{align}
Combining \eqref{eq:c1_DR_1} \eqref{eq:c1_DR_2} \eqref{eq:c1_DR_3}, we have
\begin{align}
    \E\left\{U_{c1}(\theta_0; F_x,G, S_{0c})\right\} 
    = \E^*\{\nu(T) - \theta_0\} = 0,
\end{align}
which ends the proof.
%\end{proof}

\bigskip
For this censoring scenario, estimators of $\theta$ can be constructed by first estimating the nuisance parameters $F_x$, $G$ and $S_c$, and then solve 
\begin{align}
    \sum_{i=1}^n U_{c1,i}(\theta; \hat F_x, \hat G, \hat S_c) = 0 \label{eq:ee_c1}
\end{align}
for $\theta$. Since \eqref{eq:ee_c1} is linear in $\theta$, we have an explicit solution
\begin{align}
    \hat\theta 
    &= \left[\sum_{i=1}^n \left\{\frac{\Delta_i}{\hat S_c(X_i) \hat G(X_i|Z_i)} - \int_0^\infty \frac{\int_0^v \Delta_i/\hat S_c(x) d \hat F_x(x|Z_i)}{ 1-\hat F_x(v|Z_i) }\cdot \frac{d\tilde M_{Q,i}(v;\hat G)}{\hat G(v|Z_i)}\right\}\right]^{-1} \\
    &\quad\quad\quad \times \left[\sum_{i=1}^n \left\{\frac{\Delta_i\nu(X_i)}{\hat S_c(X_i)\hat G(X_i|Z_i)} - \int_0^\infty \frac{\int_0^v \Delta_i\nu(x)/\hat S_c(x) d\hat F_x(x|Z_i)}{ 1-\hat F_x(v|Z_i) } \cdot \frac{d\tilde M_{Q,i}(v;\hat G)}{G(v|Z_i)} \right\} \right]
\end{align}

%\yuyao{May want to avoid `extension'?}
When plugging constant estimates $\hat F_x\equiv 0$ or $\hat G \equiv 1$ into $\hat\theta$, we get extensions for the `IPW.Q' and `Reg.T1' estimators:
\begin{align}
    \hat\theta_{\text{IPW.Q}} &= \left.\left\{\sum_{i=1}^n\frac{\Delta_i\nu(X_i)}{\hat S_c(X_i|Z_i) \hat G(X_i|Z_i)}\right\} \right/ \left\{\sum_{i=1}^n\frac{\Delta_i}{\hat S_c(X_i|Z_i)\hat G(X_i|Z_i)}\right\}, \label{eq:IPW.Q_c1}\\
    \hat\theta_{\text{Reg.T1}}
& = \left[\sum_{i=1}^n \Delta_i \left\{\frac{1}{\hat S_c(X_i)}+\frac{\int_0^{Q_i} 1/\hat S_c(t) d\hat F_x(t|Z_i)}{1-\hat F_x(Q_i|Z_i)}\right\}\right]^{-1} \\
&\quad\quad\quad \times\left[\sum_{i=1}^n \Delta_i \left\{\frac{\nu(X_i)}{\hat S_c(X_i)}+\frac{\int_0^{Q_i} \nu(t)/\hat S_c(t) d\hat F_x(t|Z_i)}{1-\hat F_x(Q_i|Z_i)}\right\}\right] . 
\end{align}
Likewise, we have an extension for the `Reg.T2' estimator:
\begin{align}
\hat\theta_{\text{Reg.T2}}
& = \left. \left\{\sum_{i=1}^n \Delta_i\frac{\int_0^{\infty} \nu(t)/\hat S_c(t) d\hat F_x(t|Z_i)}{1-\hat F_x(Q_i|Z_i)}\right\} \right/ \left\{\sum_{i=1}^n \Delta_i \frac{\int_0^{\infty} 1/\hat S_c(t) d\hat F_x(t|Z_i)}{1-\hat F_x(Q_i|Z_i)}\right\} . 
\end{align}

\vskip .15in
\subsection{Censoring after truncation}

We first show that $U_{c2}$ is doubly robust in the sense that 
$\E\left\{U_{c2}(\theta_0;F,G,S_{0D})\right\} = 0$ if $F = F_0$ or $G = G_0$.
Since $\Delta \neq 0$ implies $X = T$, we have
\begin{align}
    U_{c2}(\theta;F,G,S_{0D}) 
    &= \frac{\Delta}{S_{0D}(X-Q)} \left[\frac{\nu(X)-\theta}{G(X|Z)} - \int_0^\infty \frac{\int_0^v \{\nu(t) -\theta\} dF(t|Z) }{ 1-F(v|Z) } \cdot \frac{d\tilde M_Q(v;G)}{G(v|Z)} \right]\\
    & = \frac{\Delta}{S_{0D}(X-Q)} \left[\frac{\nu(T)-\theta}{G(T|Z)} - \int_0^\infty \frac{\int_0^v \{\nu(t) -\theta\} dF(t|Z) }{ 1-F(v|Z) } \cdot \frac{d\bar M_Q(v;G)}{G(v|Z)} \right].
\end{align}
Therefore, 
\begin{align}
    &\quad \E\{U_{c2}(\theta;F,G,S_{0D})\} \\
    & = \E\left[\E\left\{U_{c2}(\theta;F,G,S_{0D})|Q,T,Z\right\}\right] \\
    & = \E\left[\E\left\{\left.\frac{\Delta}{S_{0D}(X-Q)} \right|Q,T,Z\right\} \cdot \left[\frac{\nu(T)-\theta}{G(T|Z)} - \int_0^\infty \frac{\int_0^v \{\nu(t) -\theta\} dF(t|Z) }{ 1-F(v|Z) } \cdot \frac{d\bar M_Q(v;G)}{G(v|Z)} \right]\right] \\
\end{align}
Since $D\bigCI(Q,T,Z)$ in the observed data, we have
\begin{align}
    \E\left\{\left.\frac{\Delta}{S_{0D}(X-Q)} \right|Q,T,Z\right\} 
    & = \E\left\{\left.\frac{\mathbbm{1}(D>X-Q)}{S_{D}(X-Q)} \right|Q,T,Z\right\} = 1.
\end{align}
So
\begin{align}
    \E\{U_{c2}(\theta;F,G,S_{0D})\} 
    &= \E\left[\frac{\nu(T)-\theta}{G(T|Z)} - \int_0^\infty \frac{\int_0^v \{\nu(t) -\theta\} dF(t|Z) }{ 1-F(v|Z) } \cdot \frac{d\bar M_Q(v;G)}{G(v|Z)} \right] \\
    & = \E\{U(\theta;F,G)\}.
\end{align}
Therefore, the conclusion follows from Theorem \ref{thm:DR}.

\bigskip
For this censoring scenario, estimators of $\theta$ can be constructed by first estimating the nuisance parameters $F$, $G$ and $S_D$, and then solve 
\begin{align}
    \sum_{i=1}^n U_{c2,i}(\theta; \hat F,\hat G, \hat S_D) = 0 \label{eq:ee_c2}
\end{align}
for $\theta$. Since \eqref{eq:ee_c2} is linear in $\theta$, we have an explicit solution
\begin{align}
    \hat\theta
    &= \left[ \sum_{i=1}^n \frac{\Delta_i}{\hat S_{D}(X_i-Q_i)} \left\{\frac{1}{\hat G(X_i|Z_i)} - \int_0^\infty \frac{\hat F(v|Z_i)}{1-\hat F(v|Z_i)} \cdot \frac{d\tilde M_{Q,i}(v;\hat G)}{\hat G(v|Z_i)} \right\} \right]^{-1} \\
    &\quad\quad\quad \times \left[\sum_{i=1}^n \frac{\Delta_i}{\hat S_{D}(X_i-Q_i)} \left\{\frac{\nu(X_i)}{\hat G(X_i|Z_i)} - \int_0^\infty \frac{\int_0^v \nu(t)  d\hat F(t|Z_i) }{ 1-\hat F(v|Z_i) } \cdot \frac{d\tilde M_{Q,i}(v;\hat G)}{\hat G(v|Z_i)} \right\} \right]
\end{align}

Again, when plugging constant estimates $\hat F \equiv 0$ or $\hat G\equiv 1$ into $\hat\theta$, we get extensions of the IPW.Q and Reg.T1 estimator for this censoring scenario:
\begin{align}
    \hat\theta_{\text{IPW.Q}} 
    &=  \left. \left\{\sum_{i=1}^n \frac{\Delta_i \ \nu(X_i)}{\hat S_{D}(X_i-Q_i)\hat G(X_i|Z_i)}\right\} \right/ \left\{ \sum_{i=1}^n \frac{\Delta_i}{\hat S_{D}(X_i-Q_i)\hat G(X_i|Z_i)} \right\}, \\
\hat\theta_{\text{Reg.T1}}
& = \left\{\sum_{i=1}^n \frac{\Delta_i}{\hat S_c(X_i)\{1-\hat F(Q_i|Z_i)\}}\right\}^{-1} \\
&\quad\quad\quad \times \left[\frac{1}{n}\sum_{i=1}^n \Delta_i \frac{\nu(T_i)\{1-\hat F(Q_i|Z_i)\} + \int_0^{Q_i} \nu(t) d\hat F(t|Z_i) }{\hat S_c(X_i)\{1-\hat F(Q_i|Z_i)\}}\right].  
\end{align}
Likewise, we have an extension for the Reg.T2 estimator
\begin{align}
\hat\theta_{\text{Reg.T2}} = \left.\left[ \sum_{i=1}^n \frac{\Delta_i \int_0^{\infty} \nu(t) d\hat F(t|Z_i) }{\hat S_c(X_i)\{1-\hat F(Q_i|Z_i)\}} \right] \right/ \left[\sum_{i=1}^n \frac{\Delta_{i}}{\hat S_c(X_i)\{1- \hat F(Q_i|Z_i)\}}\right].
\end{align}

% \yuyao{Same from the last subsection. }
% Model doubly robust and rate doubly and rate doubly robust estimators can be constructed in the same way as in the previous subsection, but now the nuisance parameters involved are $F$, $G$, and $S_D$. 

\newpage
\section{Additional materials for simulations}

This section contains details of the data generating mechanisms for the seven scenarios in Section \ref{sec:simu_mdr}, plots for visualizing the simulation results in the main paper, as well as additional simulation results under right censoring.

\subsection{Details of the data generating mechanisms in Scenarios 1 - 7}

\begin{itemize}
    \item Scenario 1: the full data distribution of $T$ given $Z$ and $(\tau-Q)$ given $Z$ follow the following Cox models, respectively.
    \begin{align}
\lambda_1(t|Z_1,Z_2) & = \lambda_{01}(t) e^{0.3 Z_1+0.5 Z_2}, \label{eq:T_cox1}\\
\lambda_2(t|Z_1,Z_2) &= \lambda_{02}(t) e^{0.3 Z_1+0.5 Z_2},\label{eq:Q_cox1}
\end{align}
where the baseline hazards
\begin{align}
    \lambda_{01}(t) &= \left\{
    \begin{array}{ll}
         0, & \text{if  }0<t<\tau_1, \\
         2e^{-1}(t-\tau_1), &  \text{if  } t\geq \tau_1,
    \end{array}\right. \label{eq:T_baseline1} \\
    \lambda_{02}(t) &= \left\{
    \begin{array}{ll}
         0, & \text{if  } 0\leq t < \tau-\tau_2, \\
         2e^{-1}\{t-(\tau-\tau_2)\}, & \text{if  } t\geq \tau-\tau_2.
    \end{array}\right. \label{eq:Q_baseline1}
\end{align}
We note that in this case, $(T-\tau_1)$ follows a Weibull distribution with shape parameter $2$ and scale parameter $e^{(1-0.3Z_1-0.5Z_2)/2}$, and $(\tau_2-Q)$ follows a Weibull distribution with shape parameter $2$ and scale parameter $e^{(1-0.3Z_1-0.5Z_2)/2}$.
    
    \item Scenario 2: $T$ given Z follows the Cox model in \eqref{eq:T_cox1} and \eqref{eq:T_baseline1}, and $(\tau-Q)$ given $Z$ follows the following Cox model with quadratic and interaction terms: 
    \begin{align}
        \lambda_2(t|Z_1,Z_2) 
        &= \lambda_{02}(t) e^{0.3Z_1+0.5Z_2 +0.6(Z_1^2-1/3)+0.5Z_1Z_2},\label{eq:Q_cox2}
    \end{align}
    where the baseline hazard $\lambda_{02}(t)$ is given in \eqref{eq:Q_baseline1}. Under this model, $(\tau_2-Q)$ follows a Weibull distribution with shape parameter $2$ and scale parameter $e^{\{1-0.3Z_1-0.5Z_2 -0.6(Z_1^2-1/3)-0.5Z_1Z_2\}/2}$.
    
    \item Scenario 3: $T$ given Z follows the Cox model in \eqref{eq:T_cox1} and \eqref{eq:T_baseline1}, and $(\tau-Q)$ follows a mixture model of a Cox model with quadratic and interaction terms and an AFT model with quadratic and interaction terms. Specifically, if $0.3Z_1+0.5Z_2 <0$, $(\tau-Q)$ given $Z$ follows the Cox proportional hazards model with quadratic and interaction terms given in \eqref{eq:Q_cox2} and \eqref{eq:Q_baseline1}, otherwise, $(\tau_2-Q)$ follows the following AFT model:
    \begin{equation}
    \begin{split}
        \log(\tau_2-Q) &= -1 +  0.3Z_1+0.5Z_2 +0.6(Z_1^2-1/3)+0.5Z_1Z_2 + \epsilon_2, \\
        \epsilon_2 &\sim \text{Weibull}(1.5,1) - \Gamma(5/3). \label{eq:Q_AFT}
    \end{split}
    \end{equation}
    
    \item Scenario 4: $(\tau-Q)$ given Z follows the Cox model in \eqref{eq:Q_cox1} and \eqref{eq:Q_baseline1}, and $T$ given $Z$ follows a Cox model with quadratic and interaction terms stated as follows. 
    \begin{align}
        \lambda_{1}(t|Z_1,Z_2) 
        &= \lambda_{01}(t) e^{0.3Z_1+0.5Z_2 +0.6(Z_1^2-1/3)+0.5Z_1Z_2},\label{eq:T_cox2}
    \end{align}
    where the baseline hazard $\lambda_{01}(t)$ is given in \eqref{eq:T_baseline1}. Under this model, $T$ follows a Weibull distribution with shape parameter $2$ and scale parameter $e^{(1-0.3Z_1-0.5Z_2 -0.6(Z_1^2-1/3)-0.5Z_1Z_2)/2}$.
    
    \item Scenario 5: $(\tau-Q)$ given Z follows the Cox model in \eqref{eq:Q_cox1} and \eqref{eq:Q_baseline1}, and $T$ given $Z$ follows a mixture model of a Cox model with quadratic and interaction terms and an AFT model with quadratic and interaction terms. Specifically, if $0.3Z_1+0.5Z_2 \geq 0$, $T$ given $Z$ follows the Cox model with quadratic and interaction terms given in \eqref{eq:T_cox2} and \eqref{eq:T_baseline1}; otherwise, $(T-\tau_1)$ follows the following AFT model: 
    \begin{align}
        \log(T-\tau_1) = -1 + 0.3Z_1+0.5Z_2 +0.6(Z_1^2-1/3)+0.5Z_1Z_2 + \epsilon_1, \quad
        \epsilon_1 \sim N(0,1). \label{eq:T_AFT}
    \end{align}
    
    \item Scenario 6: $T$ given Z follows the Cox model with quadratic and interaction terms in \eqref{eq:T_cox2} and \eqref{eq:T_baseline1}, and $(\tau-Q)$ given $Z$ follows the Cox model with quadratic and interaction terms in \eqref{eq:Q_cox2} and \eqref{eq:Q_baseline1}.
    
    \item Scenario 7: $T$ given Z follows the mixture model stated in Scenario 5, and $(\tau-Q)$ follows the mixture model stated in Scenario 3.
    
\end{itemize}

\subsection{Visualization of simulation results in the main paper}

\clearpage
\begin{figure}[H]
	\centering
		\includegraphics[width=1\textwidth]{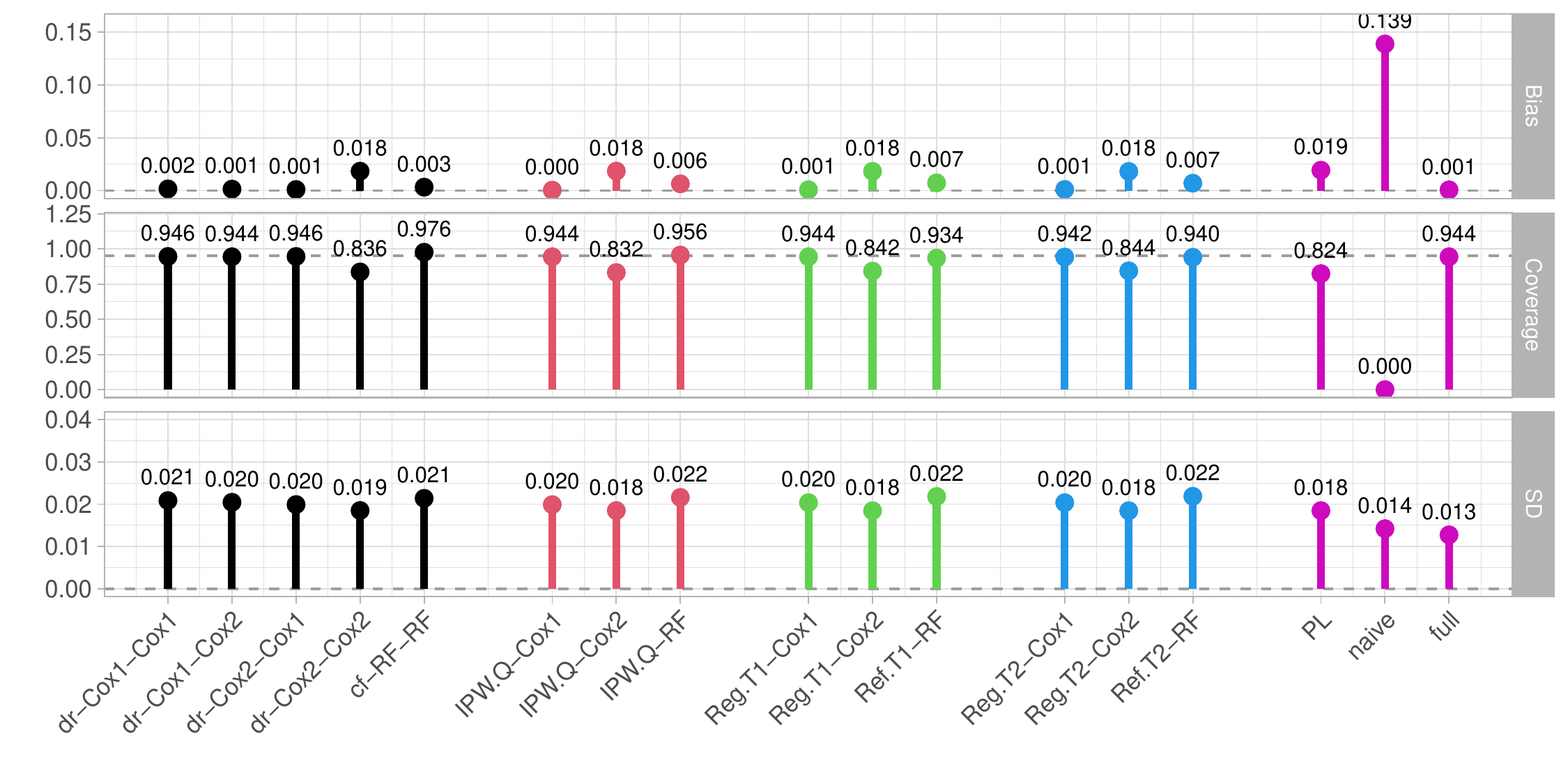}
	\caption{Absolute bias, coverage rate of  95\% confidence intervals  with bootstrap standard errors, and empirical standard deviation for different estimator, corresponding to the simulation results in Section \ref{sec:simu_rdr} of the main paper.}
	\label{fig:simu}
\end{figure}

\subsection{Simulation under right censoring}

\vskip .2in
\noindent \underline{\it Censoring before truncation}
\medskip

We generate the full data $(Q,T,C,Z)$, where the variables $Z= (Z_1,Z_2)$ and $Q$ are generated in the same way as in the simulation section of the main paper. 
The event time variable $T$ is generated such that $(T -1)$ follows Weibull distribution with shape parameter 2 and scale parameter $\{\exp(-2+0.3Z_1+0.5Z_2) - 7^{-2}\}^{-1/2}$, 
and the censoring variable $C$ is generated such that $(C-1)$ follows Weibull distribution with shape parameter $2$ and scale parameter $7$. The censored event time $X = \min(T,C)$, and the event indicator $\Delta = \mathbbm{1}(T<C)$. 
Then subjects with $Q < X$ are included in the observed sample. 
The resulting left truncation rate is 29.5\%, censoring rate is 16.5\% in the observed data, and $\PP^*(C<Q) = 0.061$.

Our estimand is $\theta = \PP^*(T>3)=0.624$, computed from a simulated full data sample with sample size $10^7$. The nuisance parameter $F_x$ and $G$ are estimated from models `Cox1', `Cox2' or `RF' with details given in the main paper. 
It can be shown that under this data generating mechanism, $X$ in the full data follows a Cox proportional hazards model
\begin{align}
    \lambda_{X}(t) = \lambda_{0X}(t) e^{0.3Z_1+0.5Z_2},
\end{align}
where the baseline hazard $\lambda_{0X}(t) = 2e^{-1}(t-1)$ if $t\geq 1$, and 0 otherwise. So `Cox1' is the correct model for $F$ and $G$ while `Cox2' is misspecified for $F$ and $G$. We use the same notation for different estimators and report the same quantities for simulation results as in the main paper. 
We consider $\hat\theta_{dr}$, $\hat\theta_{cf}$, as well as the IPW and the regression-based estimators.
We also consider the product-limit (PL) estimator \citep{wang1991nonparametric} that assumes random left truncation, the naive Kaplan-Meier estimator that ignores left truncation, and the oracle full data estimator. 

The simulation results are summarized in Table \ref{tab:c1_Xcox} and visualized in Figure \ref{fig:c1_X_simu}. 
As expected, $\hat\theta_{dr}$ has small bias and close to nominal coverage rate with boot SE when the model for $G$ is `Cox1'. We also see that `cf-RF-RF' has a relatively small bias and close to nominal coverage rates with boot SE. In addition, $\hat\theta_{\text{IPW.Q}}$ has a large bias when the model for $G$ is misspecified, and the regression-based estimators have large biases when the model for $F_x$ is misspecified. Furthermore, the PL estimator that assumes random left truncation has a large bias, and the naive estimator that ignores left truncation substantially over-estimates the parameter of interest.

\begin{table}[H]
 \centering
 \renewcommand{\arraystretch}{0.6}
	\caption{Simulation results for different estimators under the  `truncation before censoring' scenario. Each observed data set has sample size 1000, and 500 data sets are simulated. SD: standard deviation, SE: standard error, CP: coverage probability.}\label{tab:c1_Xcox}
		\begin{tabular}{lrcrr}
        \\
        \toprule
		Estimator & bias  & SD & SE/boot SE & CP/boot CP \\ 
		\midrule
        dr-Cox1-Cox1 & -0.0004  & 0.022 & 0.021/0.021 & 0.954/0.952 \\ 
        dr-Cox1-{\color{red}Cox2} & -0.0002  & 0.021 & 0.020/0.020 & 0.934/0.950 \\
        dr-{\color{red}Cox2}-Cox1 & -0.0002 & 0.021 & 0.020/0.020 & 0.946/0.950 \\ 
        dr-{\color{red}Cox2}-{\color{red}Cox2} & 0.0194  & 0.019 & 0.019/0.019 & 0.812/0.806 \\ 
        cf-RF-RF & 0.0060  & 0.023 & 0.025/0.026 & 0.952/0.956 \\
        \midrule
    IPW.Q-Cox1 & 0.0006  & 0.021 & 0.019/0.020 & 0.932/0.948 \\ 
    IPW.Q-{\color{red}Cox2} & 0.0212  & 0.019 & 0.018/0.018 & 0.758/0.768 \\ 
    IPW.Q-RF & -0.0053  & 0.023 & 0.020/0.023 & 0.902/0.952 \\ 
  \midrule
    Reg.T1-Cox1 & 0.0002  & 0.020 & - /0.020 & - /0.948 \\ 
    Reg.T1-{\color{red}Cox2} & 0.0208  & 0.019 &  - /0.018 & - /0.762 \\
    Reg.T1-RF & -0.0083  & 0.022 & - /0.022 & - /0.926 \\ 
  \midrule
    Reg.T2-Cox1 & 0.0015 & 0.020 & - /0.019 & - /0.950 \\ 
    Reg.T2-{\color{red}Cox2} & 0.0252 & 0.018 & - /0.017 & - /0.658 \\
    Reg.T2-RF & -0.0054 & 0.022 & - /0.022 & - /0.940 \\
  \midrule
  PL & 0.0223  & 0.019 & - /0.018 & - /0.756 \\ 
  naive & 0.1293  & 0.014 & - /0.014 & - /0.000 \\  
  % naive & 0.1339 & 0.014 & - /0.014 & - /0.000 \\ 
  full data & 0.0002 & 0.013 & 0.013/0.013 & 0.958/0.952 \\ 
  \bottomrule
  \end{tabular}
\end{table}

\begin{figure}[H]
	\centering
		\includegraphics[width=1\textwidth]{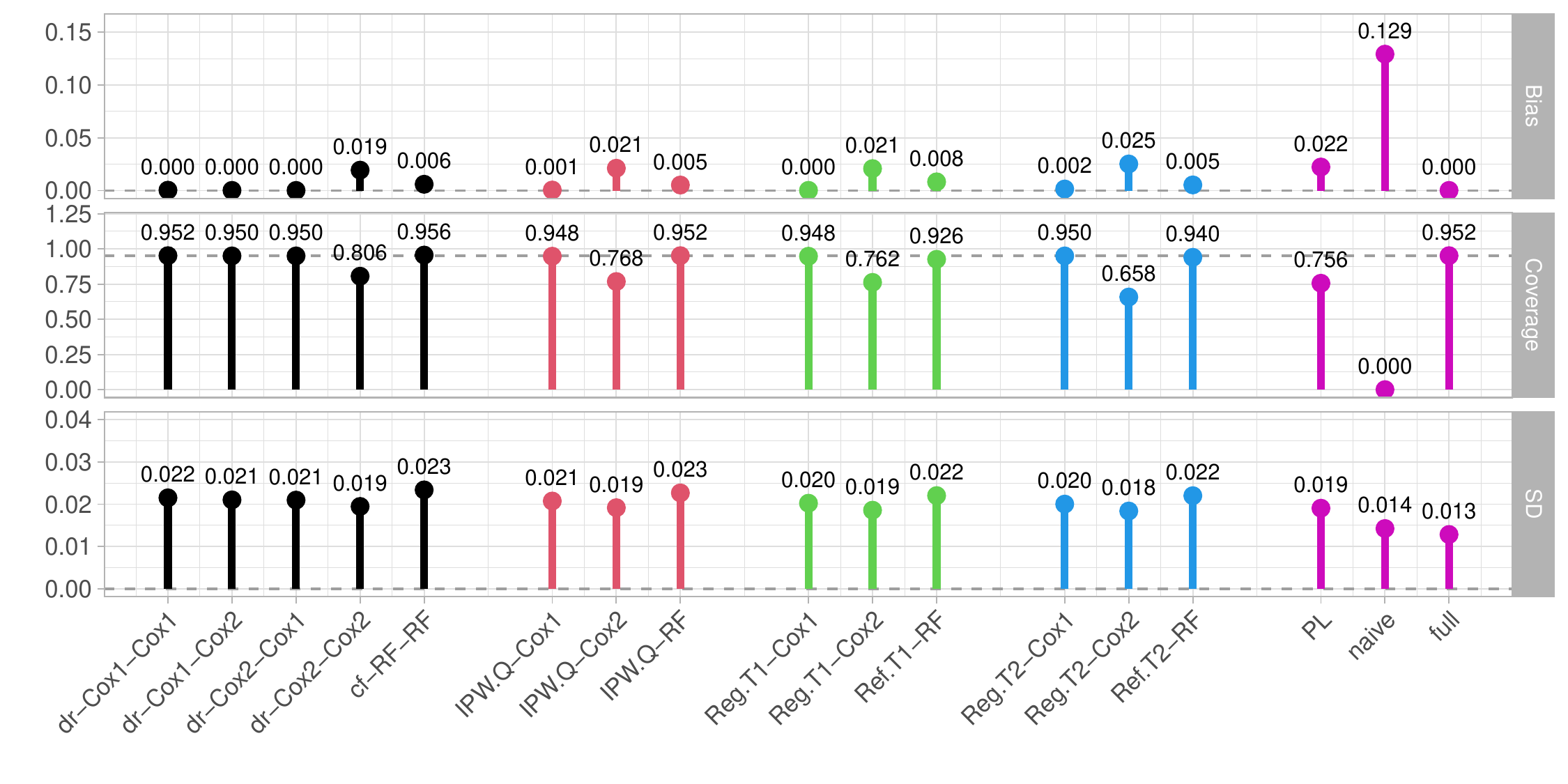}
	\caption{Absolute bias, coverage rate of  95\% confidence intervals  with bootstrap standard errors, and empirical standard deviation for different estimator, corresponding to the `censoring before truncation' scenario.}
	\label{fig:c1_X_simu}
\end{figure}

\vskip .2in
\noindent \underline{\it Censoring after truncation}
\medskip

% As in the simulation in the main paper and the previous subsection, we also consider the PL estimator that assumes random left truncation, the naive Kaplan-Meier estimator that ignores left truncation, and the oracle full data estimator. 

We generate the full data $(Q,T,C,Z)$, where the variables $Z= (Z_1,Z_2)$, $Q$ and $T$ are generated in the same way as in the simulation section of the main paper. The residual censoring variable $D$ is generated from a Weibull distribution with shape parameter 2 and scale parameter 4. 
The subjects are included in the observed sample only if $Q < T$, and then $T$ is right censored by $(Q+D)$, so the censored event time $X = \min(T,Q+D)$ and the event indicator $\Delta = \mathbbm{1}(T<Q+D)$.
Under this data generating mechanism, the left truncation rate is 29.5\%, and the censoring rate in the observed data is 27.1\%. 

Our estimand is $\theta = \PP^*(T>3)=0.577$, computed from a simulated full data sample with sample size $10^7$. The nuisance parameter $F$ and $G$ are estimated from models `Cox1' or `Cox2' with details given in the main paper.
It is straight forward to see that `Cox1' is the correct model while `Cox2' is the wrong model for both $F$ and $G$ under the data generating mechanism. We use the same notation for the different estimators and report the same quantities for simulation results as in the main paper.
We consider $\hat\theta_{dr}$ as well as the IPW and the regression-based estimators.
We do not implement $\hat\theta_{cf}$ because there is no existing R package, to the best of our knowledge, that incorporates case weights for random forests that handle left truncated data.
We also consider the PL estimator that assumes random left truncation, the naive Kaplan-Meier estimator that ignores left truncation, and the oracle full data estimator. 

The simulation results are summarized in Table \ref{tab:c2} and visualized in Figure \ref{fig:c2_simu}. 
As expected, $\hat\theta_{dr}$ has small bias and close to nominal coverage rate with boot SE when at least one of the models for $F$ and $G$ is correctly specified. In addition, $\hat\theta_{\text{IPW.Q}}$ has a large bias when the model for $G$ is misspecified, and the regression-based estimators have large biases when the model for $F$ is misspecified. Furthermore, the PL estimator that assumes random left truncation has a large bias, and the naive estimator that ignores left truncation substantially over-estimates the parameter of interest.

\begin{table}[H]
    \centering
    \renewcommand{\arraystretch}{0.6}
	\caption{Simulation results for different estimators under the `censoring after truncation' scenario. Each observed data set has sample size 1000, and 500 data sets are simulated. SD: standard deviation, SE: standard error, CP: coverage probability.}
 \label{tab:c2}
		\begin{tabular}{lrcrr}
  \\
  \toprule
		Estimator & bias  & SD & SE/boot SE & CP/boot CP \\ 
		\midrule
        dr-Cox1-Cox1 & -0.0057 & 0.023 & 0.025/0.023 & 0.964/0.946 \\ 
        dr-Cox1-{\color{red}Cox2} & -0.0050 & 0.022 & 0.024/0.022 & 0.956/0.948 \\
        dr-{\color{red}Cox2}-Cox1 & -0.0054 & 0.022 & 0.024/0.022 & 0.960/0.952 \\
        dr-{\color{red}Cox2}-{\color{red}Cox2} & 0.0140  & 0.021 & 0.024/0.021 & 0.938/0.882 \\ 
        \midrule
		IPW.Q-Cox1 & -0.0048 & 0.022 & 0.021/0.022 & 0.934/0.952 \\ 
		IPW.Q-{\color{red}Cox2} & 0.0142 & 0.021 & 0.020/0.021 & 0.872/0.882 \\
		\midrule
        Reg.T1-Cox1 & -0.0030 & 0.022 & - /0.022 & - /0.944 \\
        Reg.T1-{\color{red}Cox2} & 0.0154 & 0.020 & - /0.020 & - /0.876 \\
        \midrule
        Reg.T2-Cox1 & -0.0053 & 0.022 & - /0.021 & - /0.942 \\ 
        Reg.T2-{\color{red}Cox2} & 0.0125 & 0.020 & - /0.020 & - /0.898 \\
		\midrule
        PL & 0.0165 & 0.020 & - /0.020 & - /0.852 \\ 
        naive & 0.1466  & 0.015 & - /0.015 & - /0.000 \\
		% naive & 0.2487 & 0.020 & - /0.020 & - /0.000 \\ 
		full data & -0.0008 & 0.013 & 0.013/0.013 & 0.936/0.934 \\
  \bottomrule
	\end{tabular}
\end{table}

\begin{figure}[H]
	\centering
    \includegraphics[width=1\textwidth]{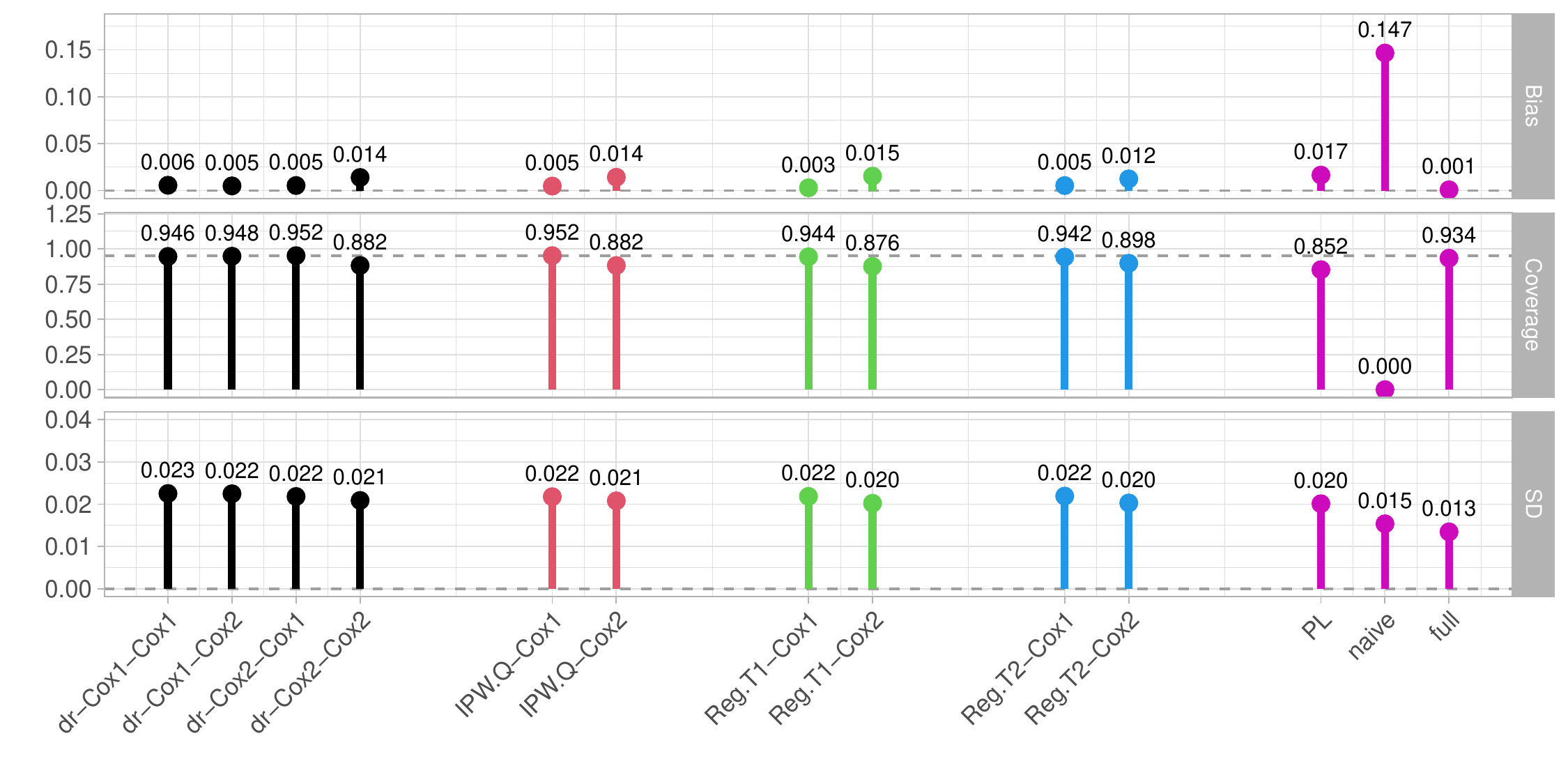}
	\caption{Absolute bias, coverage rate of  95\% confidence intervals  with bootstrap standard errors, and empirical standard deviation for different estimator, corresponding to the `censoring after truncation' scenario.}
	\label{fig:c2_simu}
\end{figure}

\clearpage
\section{Applications}

This section contains  plots for examining the overlap assumption, %\yuyao{added above}
confidence interval plots and tables of the data analysis results, as well as model diagnostics for the two data applications. 
For CNS lymphoma data percentile method with 2000 bootstrap replications is applied due to the small sample size of the data.  

To examine the overlap assumption for the CNS lymphoma data, we look at $\hat F_x$ and $\hat G$ estimated from the Cox models. We do not use the estimates from the relative risk forest \citep{yao2020ensemble} because they may be unstable due to the small sample size of the data. We take $\tau_1 = \min_{i=1}^n X_i$ and $\tau_2 = \max_{i=1}^n Q_i$, which are the minimum jump time of $\hat F_x$ and the maximum jump time of $\hat G$, respectively.
For this data, we denote $\eta_{1}$ the probability $1-\hat F_x(\tau_2|Z)$ or $1-\hat F_x(X|Z)$, whichever is larger; denote $\eta_{2}$ the probability $\hat G(\tau_1|Z)$ or $\hat G(Q|Z)$, whichever is larger; and denote $\eta_3 = \hat S_c(X)$. 
The boxplots of $\eta_1$, $\eta_2$ and $\eta_3$ and their minimum values are shown.
We see that $\eta_1$, $\eta_2$ and $\eta_3$ are bounded away from zero among the uncensored subjects, confirming the overlap condition needed in practice. 
%The minimum values of the above 3 quantities are 0.034, 0.018, 0.058, respectively. 

%To examine the overlap assumption 
For the HAAS data, we look at $\hat F$ and $\hat G$ estimated from the Cox models. We take $\tau_1 = \min_{i:\Delta_i=1} X_i$and $\tau_2 = \max_{i:\Delta_i=1} Q_i$, which are the minimum jump time of $\hat F$ and the maximum jump time of $\hat G$, respectively. 
For this data, again denote $\eta_{1}$ the probability $1-\hat F(\tau_2|Z)$ or $1-\hat F(X|Z)$, whichever is larger; denote $\eta_{2}$ the probability $\hat G(\tau_1|Z)$ or $\hat G(Q|Z)$, whichever is larger; and denote $\eta_3 = \hat S_D(X-Q)$. % with a little abuse of notation.\lily{what do you mean?} 
The boxplots of $\eta_1$, $\eta_2$ and $\eta_3$ and their minimum values are shown. 
We see that $\eta_1$, $\eta_2$ and $\eta_3$ are bounded away from zero among the uncensored subjects, confirming the overlap condition needed in practice. 
%The minimum values of the above 3 quantities are 0.009, 0.013, 0.285, respectively. 

\begin{figure}[H]
	\centering
	\includegraphics[width=0.7\textwidth]{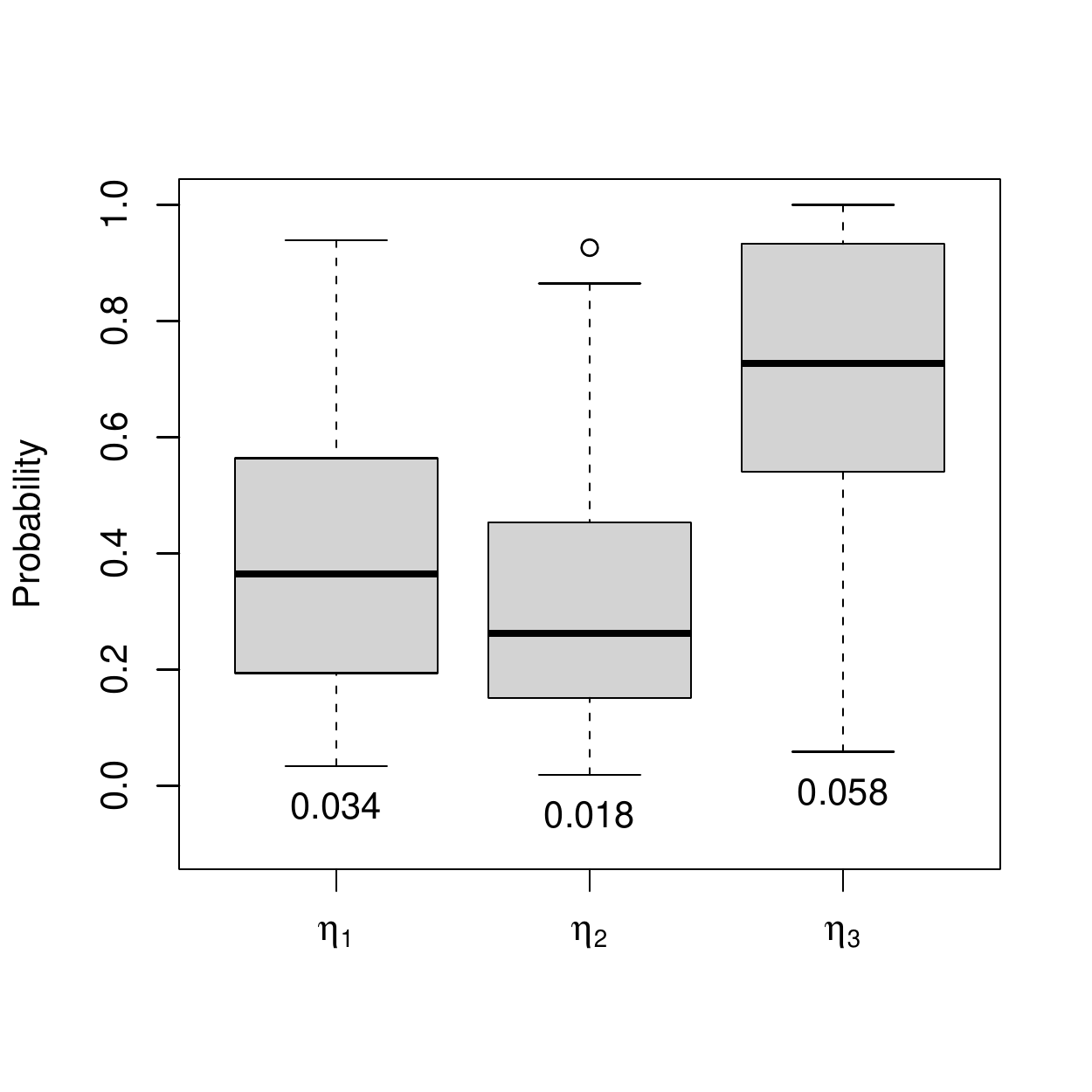}
	\caption{Boxplot of $\eta_1$, $\eta_2$, and $\eta_3$ among the uncensored subjects for the CNS lymphoma data; the minimum value of each quantity is marked on the plot.}
	\label{fig:CNS_overlap_practice.}
\end{figure}

\begin{figure}[H]
	\centering
	\includegraphics[width=1\textwidth]{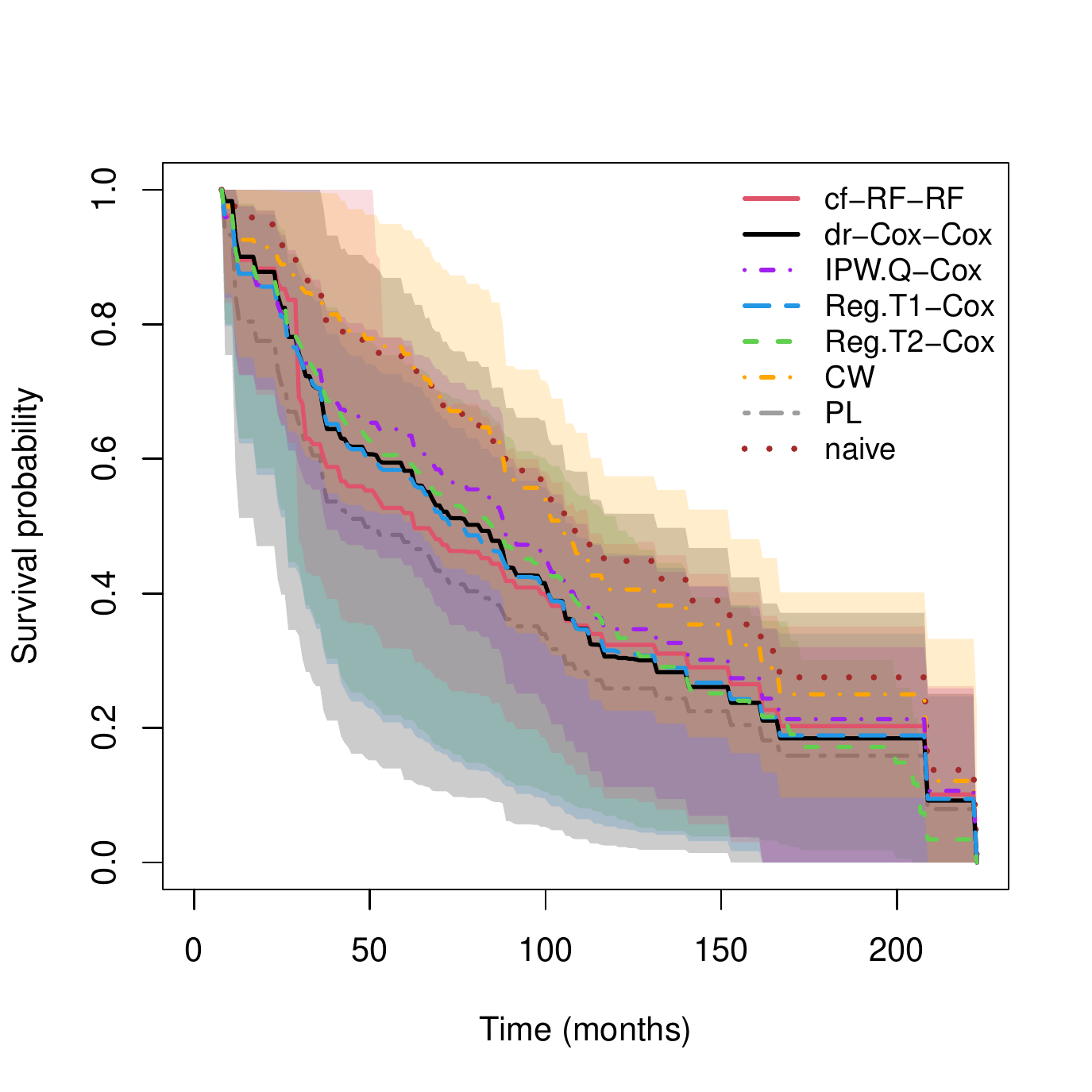}
	\caption{Estimated  overall survival probabilities and their 95\% bootstrap confidence intervals (shaded, except for PL and naive) for the CNS lymphoma data.}
	\label{fig:CNS_survCI_3}
\end{figure}

\begin{table}[H]
	\centering
    \renewcommand{\arraystretch}{0.6}
	\caption{
 Estimates of the overall survival probabilities at 36, 60, 120, and 180 months and their 95\% bootstrap confidence intervals for the CNS lymphoma data. }
 \resizebox{\textwidth}{!}{%
		\begin{tabular}{lcccc}
  \\
  \toprule
    & \multicolumn{4}{c}{Estimate (95\% bootstrap CI)}\\
   \midrule
    Time (months) & 36 & 60 & 120 & 180 \\ 
			\midrule
  cf-RF-RF & 0.616 (0.426, 1.000) & 0.524 (0.318, 0.767) & 0.314 (0.095, 0.473) & 0.203 (0.000, 0.351) \\ 
  dr-Cox-Cox & 0.705 (0.262, 1.000) & 0.582 (0.123, 0.846) & 0.304 (0.023, 0.519) & 0.185 (0.000, 0.371) \\ 
  IPW.Q & 0.731 (0.554, 0.866) & 0.635 (0.442, 0.776) & 0.347 (0.112, 0.455) & 0.213 (0.000, 0.320) \\ 
  Reg.T1 & 0.705 (0.347, 0.834) & 0.570 (0.199, 0.713) & 0.312 (0.050, 0.459) & 0.189 (0.000, 0.340) \\ 
  Reg.T2 & 0.726 (0.355, 0.854) & 0.593 (0.212, 0.741) & 0.334 (0.064, 0.493) & 0.172 (0.018, 0.301) \\ 
  CW & 0.841 (0.584, 0.997) & 0.756 (0.490, 0.943) & 0.406 (0.230, 0.570) & 0.250 (0.088, 0.414) \\ 
  PL & 0.605 (0.433, 0.775) & 0.476 (0.327, 0.628) & 0.259 (0.157, 0.380) & 0.159 (0.072, 0.270) \\ 
  naive & 0.854 (0.781, 0.926) & 0.743 (0.654, 0.830) & 0.448 (0.335, 0.568) & 0.275 (0.136, 0.424) \\ 
  \bottomrule
	\end{tabular}}
	\label{tab:CNS}
\end{table}

\begin{figure}
\centering
 \begin{subfigure}[H]{0.45\textwidth}
		\includegraphics[width=1\textwidth]{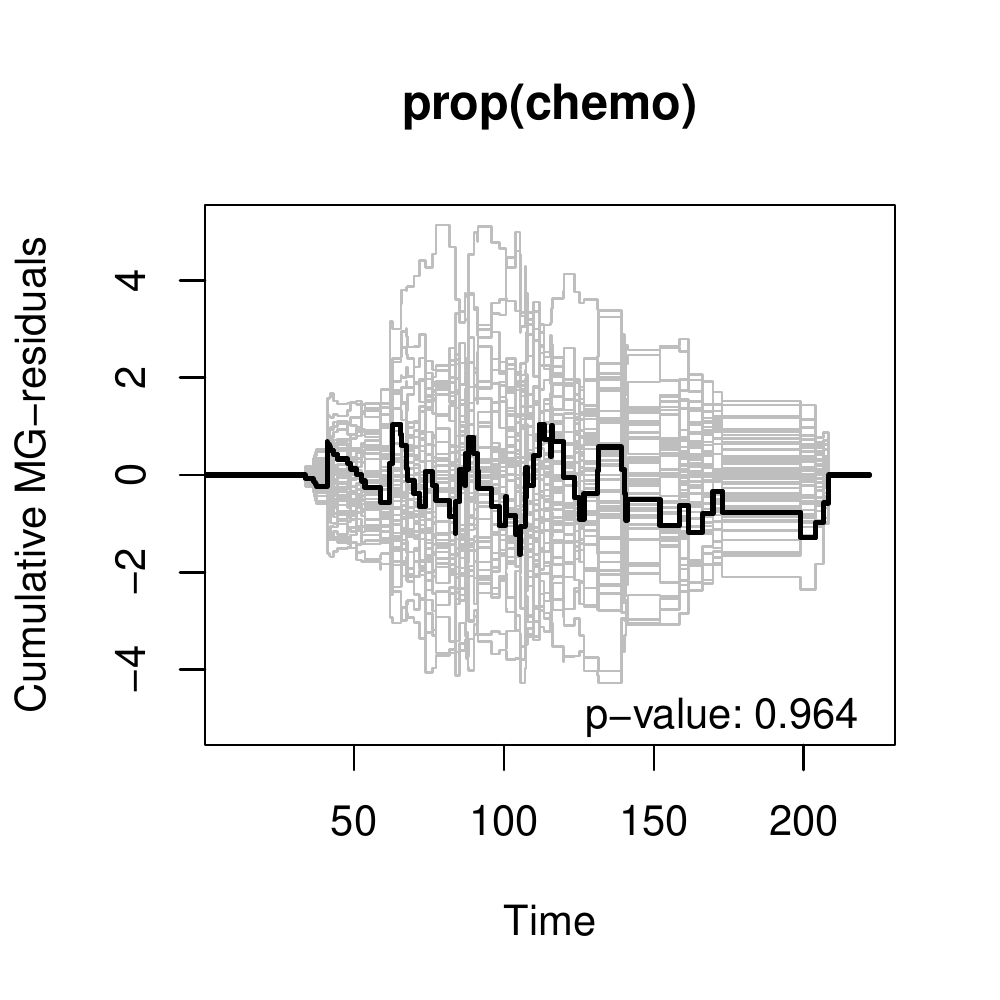}
	\end{subfigure}
 \begin{subfigure}[H]{0.45\textwidth}
		\includegraphics[width=1\textwidth]{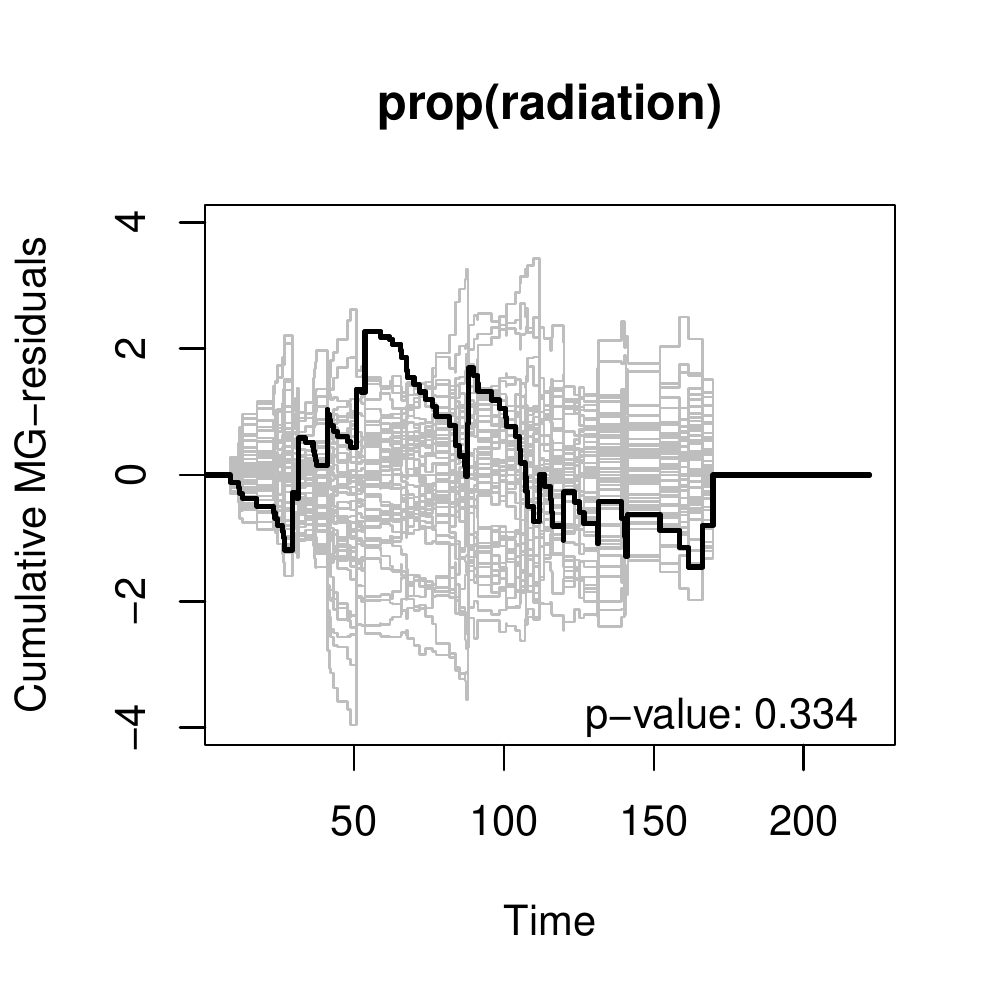}
\end{subfigure}
	\hfill
 \begin{subfigure}[H]{0.45\textwidth}
		\includegraphics[width=1\textwidth]{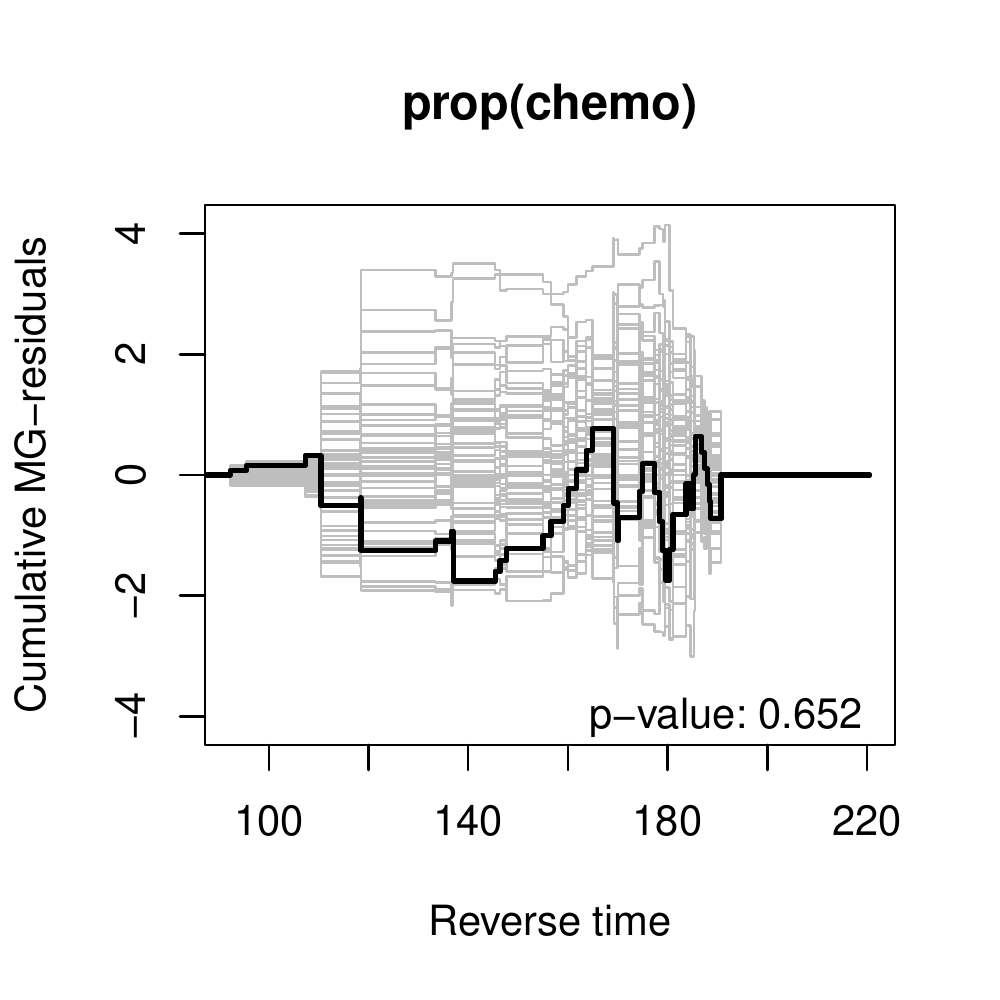}
	\end{subfigure}
 \begin{subfigure}[H]{0.45\textwidth}
		\includegraphics[width=1\textwidth]{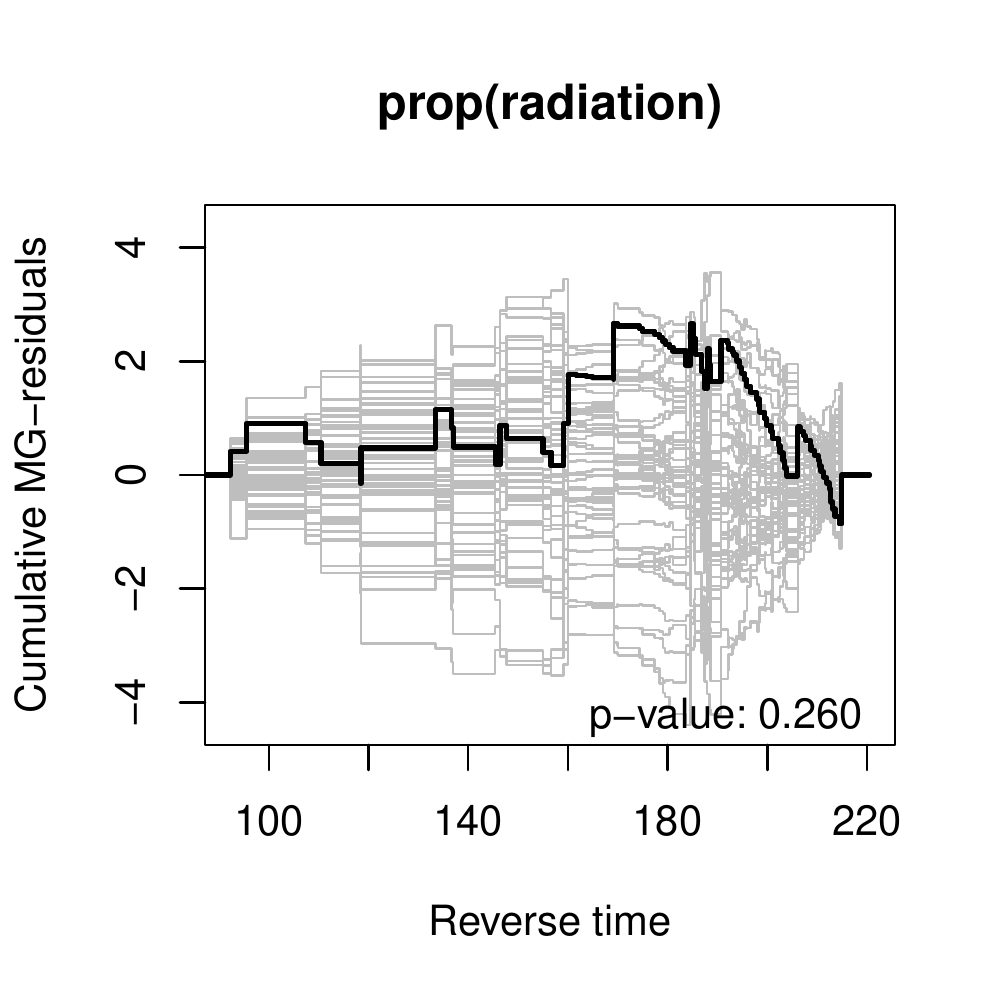}
\end{subfigure}
	\caption{Cumulative martingale residual plots for checking the proportional hazards assumption in the Cox model for estimating $F_x$ (top) and $G$ (bottom), respectively, for the CNS lymphoma data.}
	\label{fig:c1.CNS_Mresidual}
\end{figure}

\clearpage

\begin{figure}[H]
	\centering
	\includegraphics[width=0.7\textwidth]{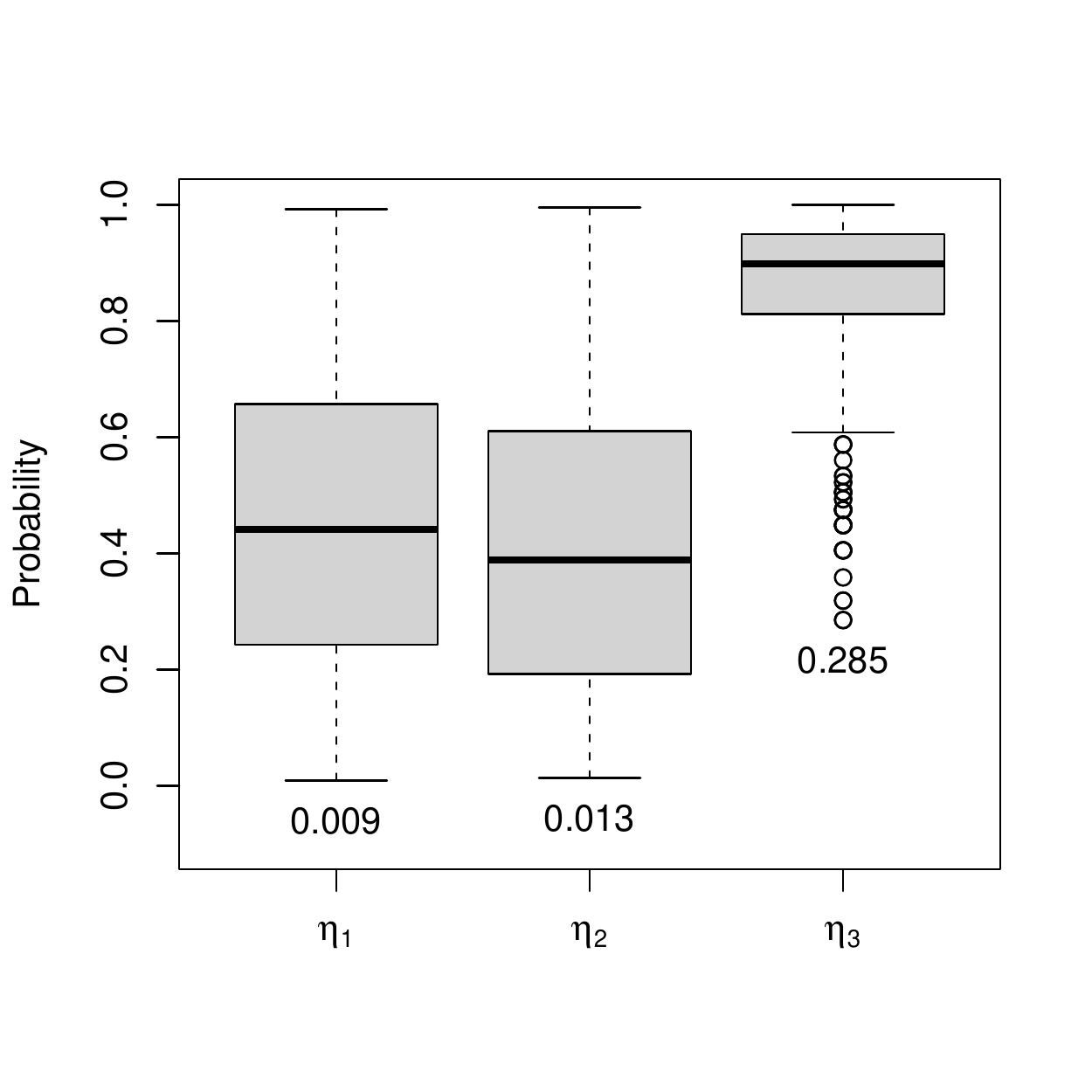}
	\caption{Boxplot of $\eta_1$, $\eta_2$, and $\eta_3$ among the uncensored subjects for the HAAS data; the minimum value of each quantity is marked on the plot.}
	\label{fig:HAAS_overlap_practice.}
\end{figure}

\begin{figure}[H]
	\centering
	\includegraphics[width=01\textwidth]{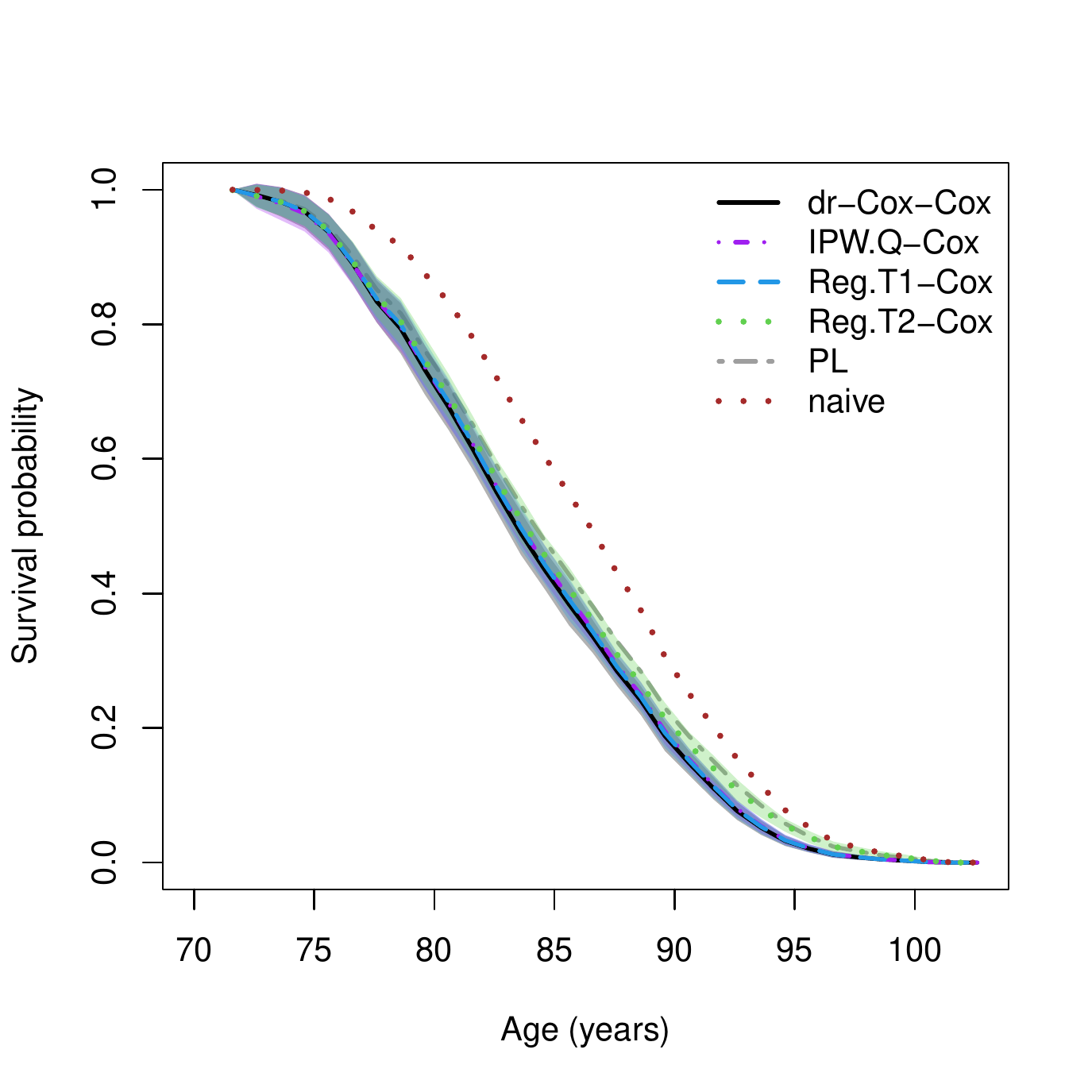}
	\caption{Estimates of the cognitive impairment-free survival and their 95\% bootstrap confidence intervals (shaded, except for PL and naive) for the HAAS data}
	\label{fig:HAAS_survCI}
\end{figure}

\begin{table}[H]
	\centering
    \renewcommand{\arraystretch}{0.6}
	\caption{Estimates of the impairment-free survival at different ages and their 95\% bootstrap confidence intervals for the HAAS data. }
 \resizebox{\textwidth}{!}{%
		\begin{tabular}{lcccc}
  \\
  \toprule
    & \multicolumn{4}{c}{Estimates (95\% bootstrap CIs)}\\
   \midrule
    Age (years) & 80 & 85 & 90 & 95 \\
	\midrule
 dr-Cox-Cox & 0.710 (0.678, 0.742) & 0.418 (0.388, 0.447) & 0.172 (0.155, 0.189) & 0.026 (0.020, 0.033) \\ 
  IPW.Q-Cox & 0.715 (0.683, 0.746) & 0.427 (0.401, 0.454) & 0.179 (0.163, 0.194) & 0.028 (0.021, 0.034) \\ 
  Reg.T1-Cox & 0.717 (0.686, 0.749) & 0.425 (0.396, 0.455) & 0.176 (0.159, 0.193) & 0.027 (0.021, 0.034) \\ 
  Reg.T2-Cox & 0.725 (0.694, 0.756) & 0.440 (0.410, 0.470) & 0.198 (0.180, 0.217) & 0.047 (0.038, 0.056) \\ 
  PL & 0.740 (0.715, 0.765) & 0.462 (0.440, 0.483) & 0.213 (0.197, 0.229) & 0.049 (0.041, 0.056) \\ 
  naive & 0.858 (0.845, 0.872) & 0.585 (0.565, 0.606) & 0.285 (0.264, 0.306) & 0.065 (0.053, 0.078) \\ 
  \bottomrule
	\end{tabular}}
	\label{tab:HAAS}
\end{table}

\begin{figure}
	\centering
 \begin{subfigure}[H]{0.4\textwidth}
		\includegraphics[width=1\textwidth]{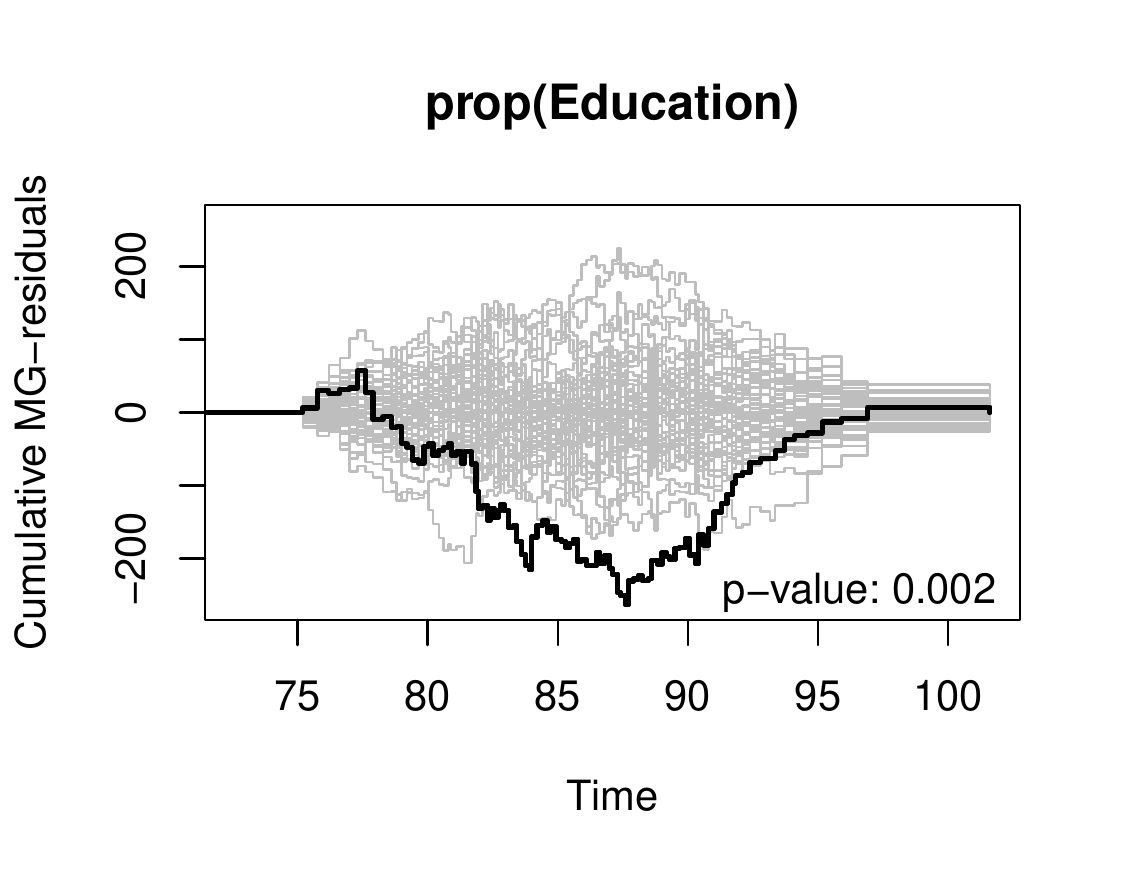}
	\end{subfigure}
 \begin{subfigure}[H]{0.4\textwidth}
		\includegraphics[width=1\textwidth]{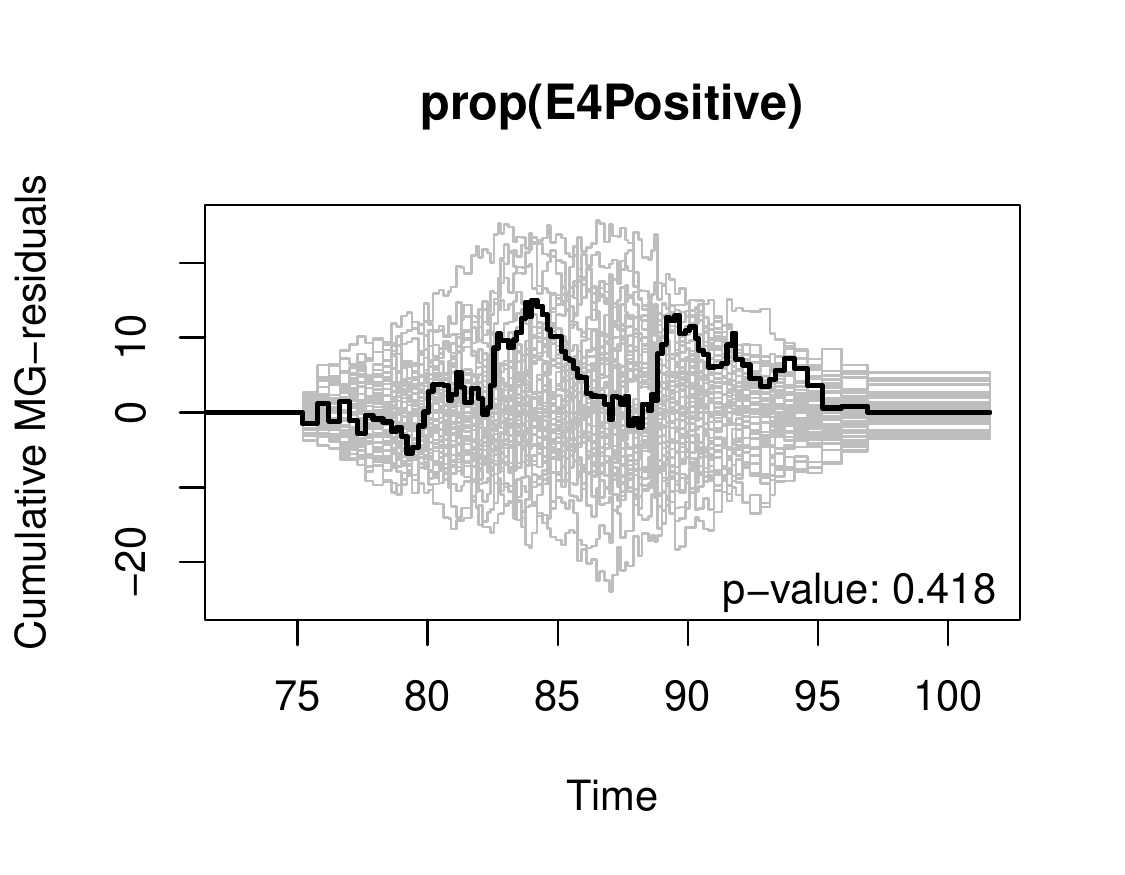}
\end{subfigure}
 \begin{subfigure}[H]{0.4\textwidth}
		\includegraphics[width=1\textwidth]{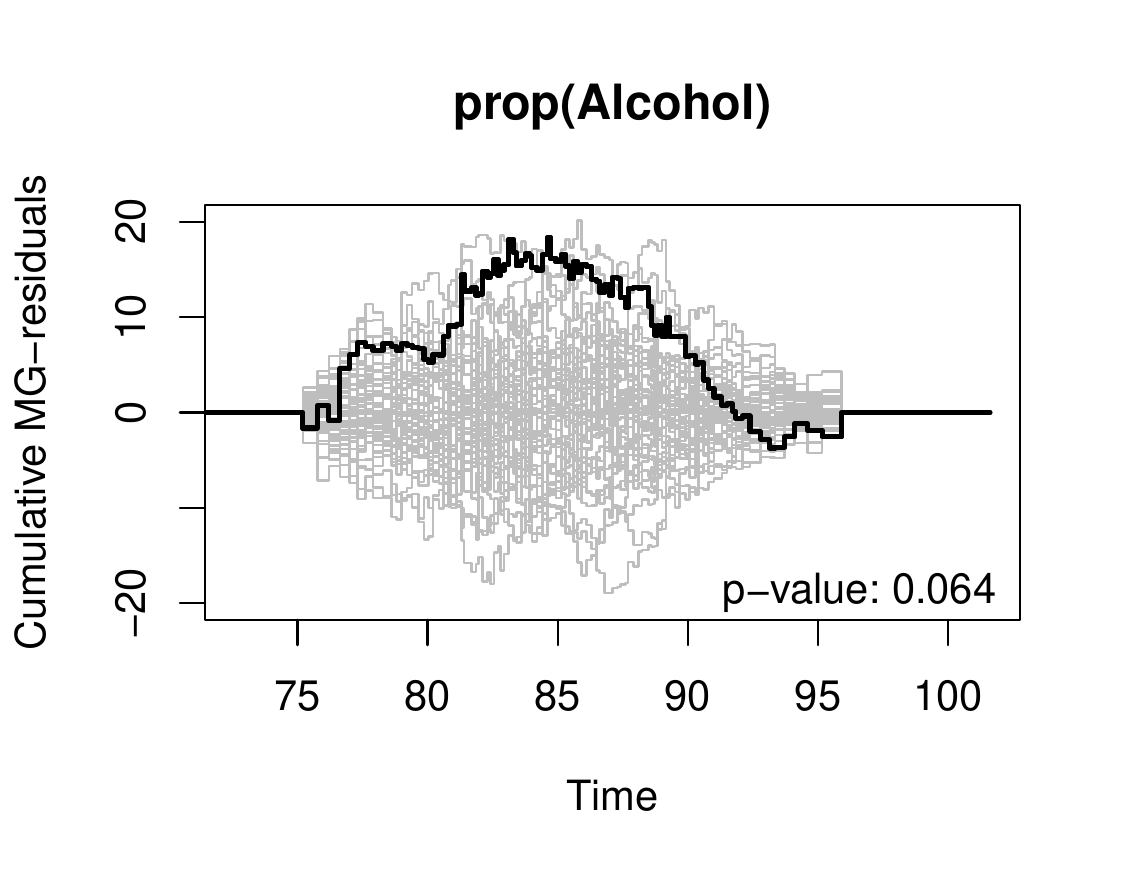}
	\end{subfigure}
 \begin{subfigure}[H]{0.4\textwidth}
		\includegraphics[width=1\textwidth]{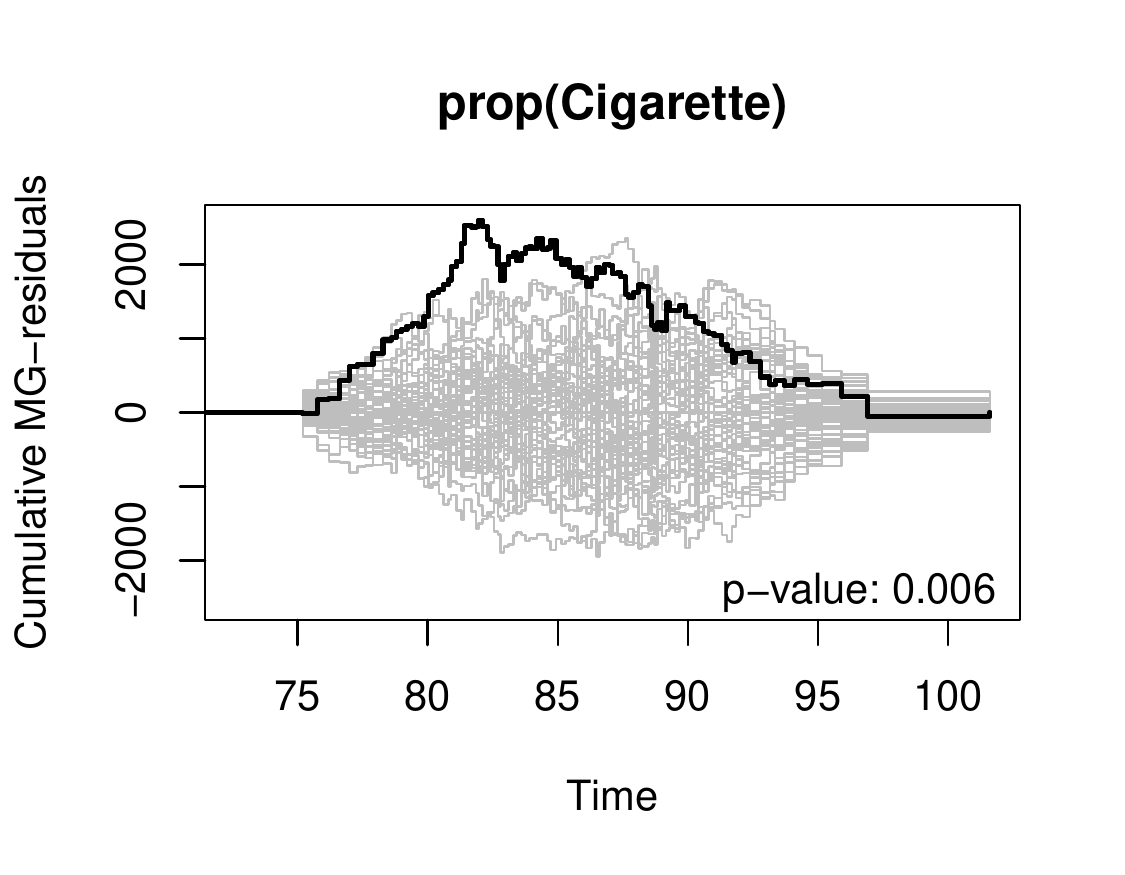}
\end{subfigure}
	\hfill
  \begin{subfigure}[H]{0.4\textwidth}
		\includegraphics[width=1\textwidth]{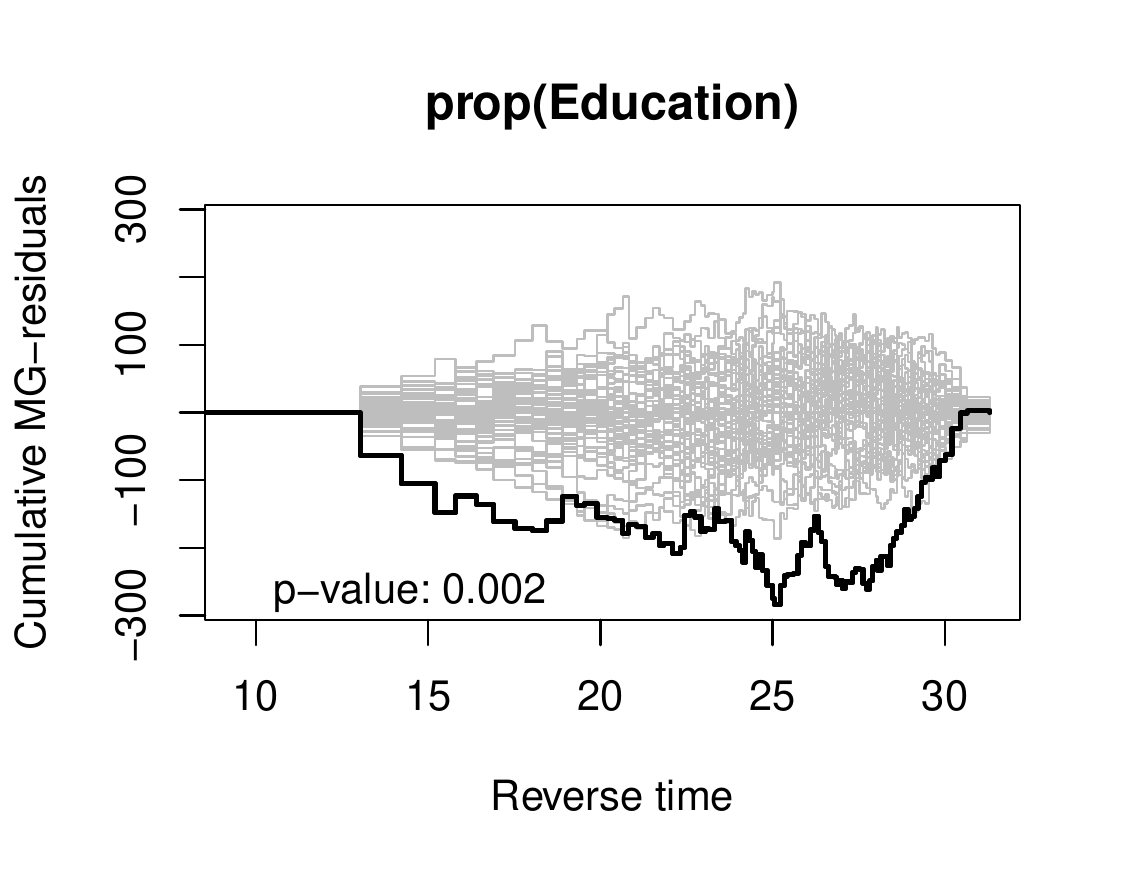}
	\end{subfigure}
 \begin{subfigure}[H]{0.4\textwidth}
		\includegraphics[width=1\textwidth]{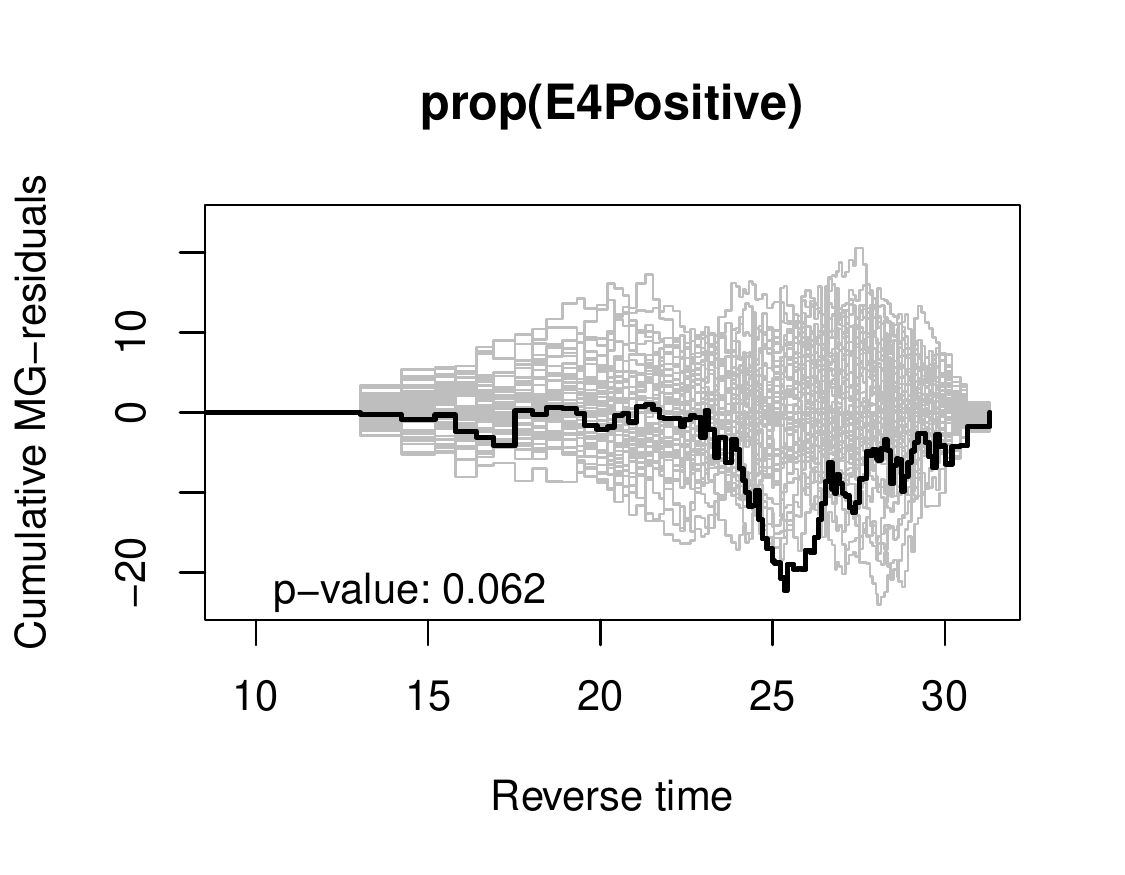}
\end{subfigure}
 \begin{subfigure}[H]{0.4\textwidth}
		\includegraphics[width=1\textwidth]{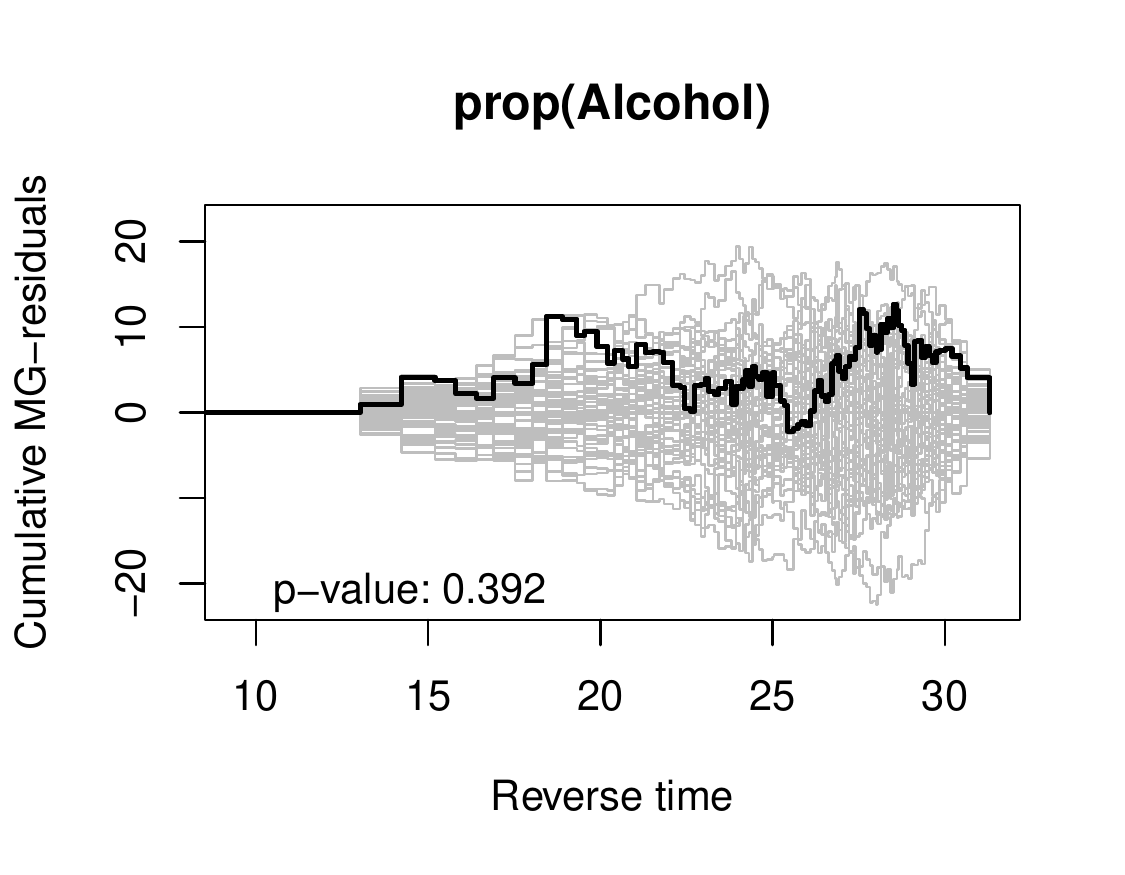}
	\end{subfigure}
 \begin{subfigure}[H]{0.4\textwidth}
		\includegraphics[width=1\textwidth]{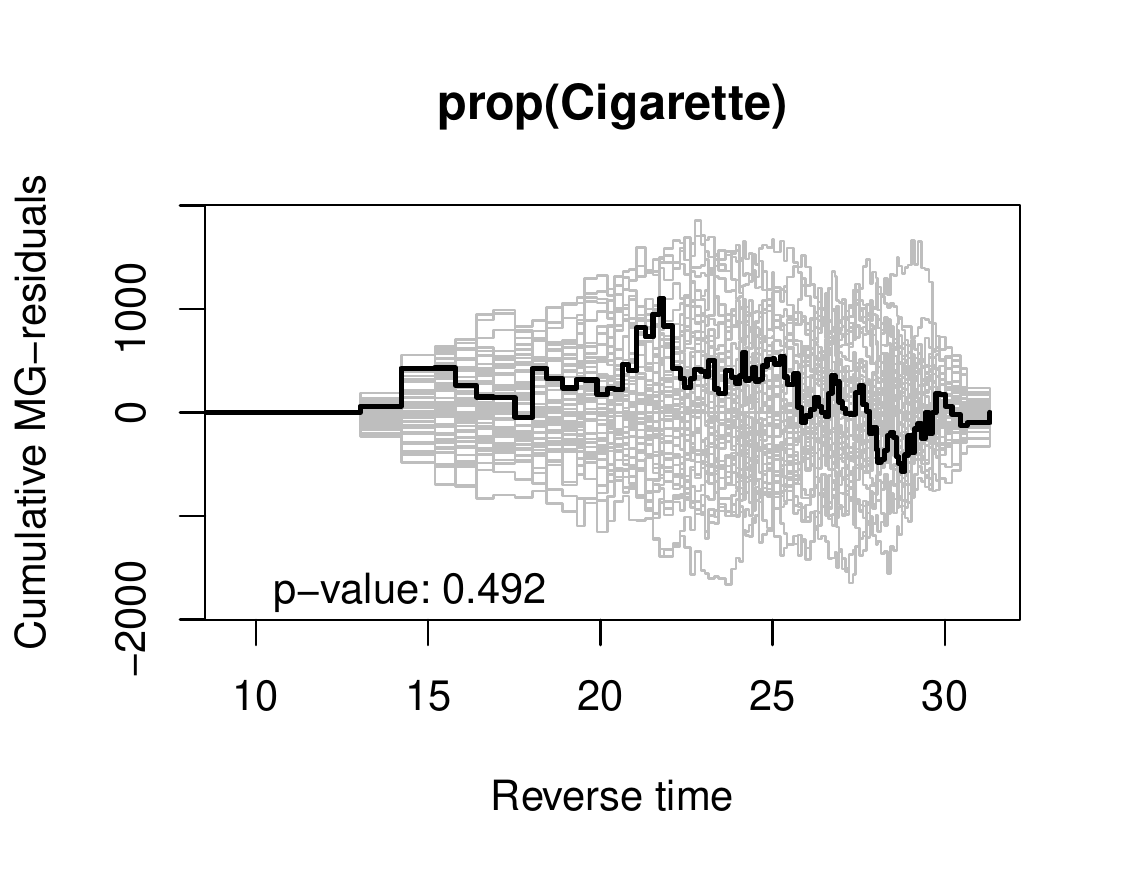}
 \end{subfigure}
	\caption{Cumulative martingale residual plots for checking the proportional hazards assumption in the Cox model for estimating $F$ (top four) and $G$ (bottom four), respectively, for the HAAS data.}
	\label{fig:c2.HAAS_Mresidual}
\end{figure}

\newpage
\section{Equivalence with the IF in Chao (1987)}\label{sec:chao1987}

In this section, we compare the influence function in \cite{chao1987influence} with the EIC derived in this paper. \citet{chao1987influence} considered the setting without covariates, i.e., $Z=\emptyset$ in our notation; and they assumed the random left truncation assumption that $Q$ and $T$ are independent in full data. They derived the influence function for the product-limit estimator of the survival function.
We will show in this section that in their setting, our EIC coincides with the influence function in their paper when estimating the probability of $T$ greater than $t_0$ in full data for some $t_0>0$. This is the parameter of interest $\theta$ defined in \eqref{eq:theta} with $\nu(t) = \mathbbm{1}(t<t_0)$. We note that the random left truncation implies the quasi-independence assumption \citep{tsai1990testing}, which is our Assunmption \ref{ass:quasi-indpendent} when $Z = \emptyset$. We follow the notation in this paper, and denote $F$ and $G$ be the CDF's of $T$ and $Q$ respectively. 

By equation (1.2) in \citet{chao1987influence}, the influence function for the product-limit estimator of $\theta$ is 
\begin{align}
\tilde \varphi(Q,T) & =  \{1-F(t_0)\}\left[ \beta \int_0^{t_0} \frac{\mathbbm{1}(Q\geq s) - \mathbbm{1}(T\geq s)}{G(s) \{1-F(s)\}^2} dF(s) + \frac{\beta  \mathbbm{1}(T\leq t_0)}{G(T)\{1-F(T)\}}\right]\\
& =  \beta \{1-F(t_0)\}\left[ \int_0^\infty \mathbbm{1}(s\leq t_0)  \frac{\mathbbm{1}(Q\geq s) - \mathbbm{1}(T\geq s)}{G(s) \{1-F(s)\}^2} dF(s) + \frac{\mathbbm{1}(T\leq t_0)}{G(T)\{1-F(T)\}}\right].
\end{align}
For observed data, we have $Q<T$, so
\begin{align}
\tilde \varphi(Q,T) 
& =  \beta\{1-F(t_0)\}\left[ - \int_0^\infty \mathbbm{1}(s\leq t_0)  \frac{\mathbbm{1}(Q<s\leq T)}{G(s) \{1-F(s)\}^2} dF(s) + \frac{\mathbbm{1}(T\leq t_0)}{G(T)\{1-F(T)\}}\right]\\
& =  \beta(1-\theta)\left[ - \int_0^\infty \frac{\mathbbm{1}(s\leq t_0) \mathbbm{1}(Q<s\leq T)}{G(s) \{1-F(s)\}^2} dF(s) + \frac{\mathbbm{1}(T\leq t_0)}{G(T)\{1-F(T)\}}\right].
\end{align}

On the other hand, by Lemma \ref{lem:IF}, the influence function derived in this paper under the setting without covariates is

\begin{align}
\varphi(Q,T) & = \beta\left[\frac{\mathbbm{1}(T\leq t_0) - \theta}{G(T)} - \int_0^\infty \frac{m(v;F)-\theta F(v)}{G(v)\{1-F(v)\}} d\bar M_Q(v;G) \right] \\
& = \beta\left[\frac{\mathbbm{1}(T\leq t_0) - \theta}{G(T)} - \int_0^\infty \frac{m(v;F)-\theta F(v)}{G(v)\{1-F(v)\}} \left\{d\bar N_Q(v;G) + \mathbbm{1}(Q\leq v<T) \frac{dG(v)}{G(v)} \right\}\right] \\
& = \beta\left[\frac{\mathbbm{1}(T\leq t_0) - \theta}{G(T)} 
+ \frac{m(Q;F)-\theta F(Q)}{G(Q)\{1-F(Q)\}} 
- \int_0^\infty \frac{m(v;F)-\theta F(v)}{G(v)\{1-F(v)\}} \mathbbm{1}(Q\leq v<T) \frac{dG(v)}{G(v)} \right],
\end{align}
where 
\begin{align}
\beta & = \int_0^\infty \mathbbm{1}(q<t)\ dF(t)\ dG(q), \\
m(v;F) & = \int_0^v \mathbbm{1}(t\leq t_0)\ dF(t) = F(t_0\wedge v),
\end{align}
so we have 
\begin{align}
\varphi(Q,T) & = \beta\left[\frac{\mathbbm{1}(T\leq t_0) - \theta}{G(T)} 
+ \frac{F(t_0\wedge Q)-\theta F(Q)}{G(Q)\{1-F(Q)\}} 
- \int_0^\infty \frac{F(t_0\wedge v)-\theta F(v)}{G(v)\{1-F(v)\}} \mathbbm{1}(Q\leq v<T) \frac{dG(v)}{G(v)} \right] \\
& = \beta\left[\frac{\mathbbm{1}(T\leq t_0) - \theta}{G(T)} 
+ \frac{F(t_0\wedge Q)-\theta F(Q)}{G(Q)\{1-F(Q)\}} 
- \int_Q^T \frac{F(t_0\wedge v)-\theta F(v)}{G(v)^2\{1-F(v)\}} dG(v) \right].
\end{align}

Since $Q<T$ for observed data, there are three possible situations: $t_0<Q<T$, $Q\leq t_0 \leq T$, and $Q < T <t_0$.
We will verify that $\varphi = \tilde \varphi$ in all three situations. We note that $F(t_0) = \theta$.

When $t_0<Q<T$, 
\begin{align}
\tilde \varphi(Q,T) &= 0,\\
\varphi(Q,T) &= \beta\left[ \frac{-\theta}{G(T)} 
+ \frac{F(t_0)-\theta F(Q)}{G(Q)\{1-F(Q)\}} 
- \int_Q^T \frac{F(t_0)-\theta F(v)}{G(v)^2\{1-F(v)\}} dG(v)  \right] \\
&= \beta\left[ -\frac{\theta}{G(T)} 
+ \frac{\theta\{1- F(Q)\}}{G(Q)\{1-F(Q)\}} 
- \int_Q^T \frac{\theta \{1-F(v)\}}{G(v)^2\{1-F(v)\}} dG(v)  \right]\\
&= \beta\left[ -\frac{\theta}{G(T)} 
+ \frac{\theta}{G(Q)} 
- \int_Q^T \frac{\theta }{G(v)^2} dG(v) \right] = 0.
\end{align}
When $Q\leq t_0 \leq T$,
\begin{align}
\tilde \varphi(Q,T) = - \beta(1-\theta)  \int_Q^{t_0} \frac{d F(v)}{G(v)\{1-F(v)\}^2},
\end{align}
and
\begin{align}
\varphi(Q,T) 
& = \beta\left[\frac{ - \theta}{G(T)} 
+ \frac{F(Q)-\theta F(Q)}{G(Q)\{1-F(Q)\}} 
- \int_Q^T\frac{F(t_0\wedge v)-\theta F(v)}{G(v)^2\{1-F(v)\}} dG(v) \right] \\
& = \beta\left[-\frac{\theta}{G(T)} 
+ \frac{F(Q)-\theta F(Q)}{G(Q)\{1-F(Q)\}} \right. \\
&\quad\quad\quad \left. - \int_Q^{t_0}\frac{F(v)-\theta F(v)}{G(v)^2\{1-F(v)\}} dG(v)
- \int_{t_0}^T\frac{F(t_0)-\theta F(v)}{G(v)^2\{1-F(v)\}} dG(v)\right] \\
& = \beta\left[-\frac{\theta}{G(T)} 
+ \frac{F(Q)-\theta F(Q)}{G(Q)\{1-F(Q)\}} \right. \\
&\quad\quad\quad \left. - \int_Q^{t_0}\frac{ (1-\theta) F(v)}{G(v)^2\{1-F(v)\}} dG(v)
- \int_{t_0}^T\frac{\theta \{1- F(v)\}}{G(v)^2\{1-F(v)\}} dG(v)\right] \\
& = \beta\left[- \frac{ \theta}{G(T)} 
+ \frac{(1-\theta) F(Q)}{G(Q)\{1-F(Q)\}} \right. \\
&\quad\quad\quad \left. - (1-\theta) \int_Q^{t_0}\frac{F(v)}{G(v)^2\{1-F(v)\}} dG(v)
- \int_{t_0}^T\frac{\theta}{G(v)^2} dG(v)\right]. \label{eq:chao_1}
\end{align}
By integration by parts, we have
\begin{align}
&\quad -\int_Q^{t_0} \frac{F(v)}{G(v)^2\{1-F(v)\}}\ dG(v) \\
&  =  \int_Q^{t_0}\frac{F(v)}{1-F(v)}\ d\left\{\frac{1}{G(v)}\right\}\\
& = \left.\frac{F(v)}{G(v)\{1-F(v)\}}\right|_{v=Q}^{v=t_0} - \int_Q^{t_0} \frac{1}{G(v)} \left[\frac{dF(v)}{1-F(v)} + \frac{F(v) dF(v)}{\left\{1-F(v) \right\}^2}\right] \\
& = \frac{F(t_0)}{G(t_0)\left\{1-F(t_0)\right\}} - \frac{F(Q)}{G(Q)\{1-F(Q)\}} - \int_Q^{t_0} \frac{dF(v)}{G(v)\{1-F(v)\}^2}. \\
& = \frac{\theta}{G(t_0)(1-\theta)} - \frac{F(Q)}{G(Q)\{1-F(Q)\}} - \int_Q^{t_0} \frac{dF(v)}{G(v)\{1-F(v)\}^2}. \label{eq:chao_2}
\end{align}
Besides, 
\begin{align}
- \int_{t_0}^T\frac{\theta}{G(v)^2} dG(v)
&  = \frac{\theta}{G(T)} - \frac{\theta}{G(t_0)}. \label{eq:chao_3}
\end{align}
Plugging \eqref{eq:chao_2} and \eqref{eq:chao_3} into \eqref{eq:chao_1}, we have
\begin{align}
\varphi(Q,T) 
& = \beta\left[-\frac{\theta}{G(T)} 
+ \frac{(1-\theta) F(Q)}{G(Q)\{1-F(Q)\}} \right. \\
&\quad\quad\quad  + \frac{\theta}{G(t_0)} - \frac{(1-\theta)F(Q)}{G(Q)\{1-F(Q)\}} - (1-\theta)\int_Q^{t_0} \frac{dF(v)}{G(v)\{1-F(v)\}^2} \\
&\quad\quad\quad \left. +\frac{\theta}{G(T)} - \frac{\theta}{G(t_0)} \right] \\
& = \beta\left[-\frac{\theta}{G(T)} 
+ \frac{(1-\theta) F(Q)}{G(Q)\{1-F(Q)\}} \right. \\
&\quad\quad\quad  + \frac{\theta}{G(t_0)} - \frac{(1-\theta)F(Q)}{G(Q)\{1-F(Q)\}} - (1-\theta)\int_Q^{t_0} \frac{dF(v)}{G(v)\{1-F(v)\}^2} \\
&\quad\quad\quad \left. +\frac{\theta}{G(T)} - \frac{\theta}{G(t_0)} \right] \\
& = - \beta(1-\theta)  \int_Q^{t_0} \frac{dF(v)}{G(v)\{1-F(v)\}^2}
\end{align}
When $Q<T<t_0$, we have
\begin{align}
\tilde \varphi(Q,T) =  \beta(1-\theta)  \left\{ - \int_Q^T \frac{dF(v)}{G(v)\{1-F(v)\}^2} + \frac{1}{G(T)\{1-F(T)\}}  \right\},
\end{align}
and
\begin{align}
\varphi(Q,T) 
&=\beta\left[\frac{1 - \theta}{G(T)} 
+ \frac{F(Q)-\theta F(Q)}{G(Q)\{1-F(Q)\}} 
- \int_Q^T \frac{F(v) - \theta F(v)}{G(v)^2\{1-F(v)\}} dG(v) \right] \\
&=\beta\left[\frac{1 - \theta}{G(T)} 
+ \frac{(1-\theta) F(Q)}{G(Q)\{1-F(Q)\}} 
- (1-\theta)\int_Q^T \frac{F(v)}{G(v)^2\{1-F(v)\}} dG(v) \right]. \label{eq:chao_4}
\end{align}
Again, by integration by parts, we have
\begin{align}
-\int_Q^{T} \frac{F(v)}{G(v)^2\{1-F(v)\}}\ dG(v) 
= \frac{F(T)}{G(T)\{1-F(T)\}} - \frac{F(Q)}{G(Q)\{1-F(Q)\}} - \int_Q^{T} \frac{dF(v)}{G(v)\{1-F(v)\}^2}. \\
\label{eq:chao_5}
\end{align}
Plugging \eqref{eq:chao_5} into \eqref{eq:chao_4}, we have
\begin{align}
\varphi(Q,T) 
&=\beta\left[\frac{1 - \theta}{G(T)} 
+ \frac{(1-\theta) F(Q)}{G(Q)\{1-F(Q)\}} \right.\\
&\quad\quad\quad \left. + \frac{(1-\theta)F(T)}{G(T)\{1-F(T)\}} - \frac{(1-\theta)F(Q)}{G(Q)\{1-F(Q)\}} - (1-\theta)\int_Q^{T} \frac{dF(v)}{G(v)\{1-F(v)\}^2}
\right] \\
& = \beta(1-\theta) \left[ \frac{1}{G(Q)\{1-F(Q)\}} - \int_Q^{T} \frac{dF(v)}{G(v)\{1-F(v)\}^2} \right]
\end{align}

\medskip
Therefore, combining all three cases, we have $\varphi = \tilde \varphi$.

\end{document}